\newtheorem{theorem}{Theorem}[section]
\newtheorem{definition}[theorem]{Definition}
\newtheorem{lemma}[theorem]{Lemma}
\newtheorem{fact}[theorem]{Fact}
\newtheorem{example}[theorem]{Example}
\newtheorem{corollary}[theorem]{Corollary}
\newtheorem{notation}[theorem]{Notation}
\newtheorem{remark}[theorem]{Remark}
\newtheorem*{notation*}{Notation}
\newtheorem*{lemma*}{Lemma}
\newtheorem*{note*}{Note}
\DeclareMathOperator*{\argmin}{arg\,min}
\newcommand{\alignset}[2]{\left\{ #1 \vphantom{#2} \suchthat \right. & \left. \vphantom{#1} #2 \right\}}
\newcommand{\multilinenorm}[2]{\left\| #1 \vphantom{#2} \right. \\ & \quad \left. \vphantom{#1} #2 \right\|}
\newcommand{\parens}[1]{\left( #1 \right)}
\newcommand{\sparens}[1]{\left[ #1 \right]}
\newcommand{\abs}[1]{\left| #1 \right|}
\newcommand{\norm}[1]{\left\| #1 \right\|}
\DeclareMathOperator*{\im}{Im}
\DeclareMathOperator*{\vectorspan}{span}
\newcommand{\set}[1]{\left\{ #1 \right\}}
\newcommand{\suchthat}[0]{\middle|}
\newcommand{\pr}[1]{\Pr\left[ #1 \right]}
\newcommand{\prex}[2]{\Pr_{#1}\left[ #2 \right]}
\DeclareMathOperator{\weightletter}{w}
\newcommand{\weight}[1]{\weightletter \parens{ #1 }}
\newcommand{\weightcplx}[2]{\weightletter_{#1}\parens{#2}}
\newcommand{\complex}[1]{\MakeUppercase{#1}}
\newcommand{\stdcomplex}{X}
\newcommand{\dimension}[1]{\MakeLowercase{#1}}
\DeclareMathOperator{\containment}{\Gamma}
\newcommand{\assignment}[1]{{\mathcal{\MakeUppercase{#1}}}}
\newcommand{\assignmentset}[1]{\pmb{\assignment{#1}}}
\newcommand{\lassignment}[1]{{\mathscr{\MakeUppercase{#1}}}}
\newcommand{\lassignmentset}[1]{{\pmb{\lassignment{#1}}}}
\newcommand{\face}[1]{\MakeLowercase{#1}}
\newcommand{\vertex}[1]{\MakeLowercase{#1}}
\newcommand{\stdvertex}{\vertex{v}}
\newcommand{\genvertex}{\vertex{u}}
\newcommand{\stdface}{\face{\sigma}}
\newcommand{\genface}{\face{\tau}}
\newcommand{\boundaryoperator}{\partial}
\newcommand{\cochain}[1]{\MakeUppercase{#1}}
\newcommand{\cochainset}[2]{C^{#1}\parens{#2}}
\newcommand{\stdcochain}{\cochain{f}}
\newcommand{\stdcocycle}{\cochain{Z}}
\newcommand{\stdcycle}{\gamma}
\newcommand{\cocycleset}[2]{Z^{#1}\parens{#2}}
\newcommand{\cycleset}[2]{Z_{#1}\parens{#2}}
\newcommand{\stdcoboundary}{\cochain{\phi}}
\newcommand{\coboundaryset}[2]{B^{#1}\parens{#2}}
\newcommand{\boundaryset}[2]{B_{#1}\parens{#2}}
\newcommand{\homologyset}[2]{H_{#1}\parens{#2}}
\newcommand{\agreementexpansionconst}{\alpha}
\newcommand{\core}[1]{core\parens{#1}}
\DeclareMathOperator{\dist}{dist}
\newcommand{\distribution}[1]{\mathcal{#1}}
\newcommand{\rep}[2]{r^{#1}_{#2}}
\newcommand{\repex}[3]{r^{#1, #2}_{#3}}
\newcommand{\repcplx}[2][k]{\hat{R}^{#1}(#2)}
\newcommand{\repcplxcore}[3][k]{\hat{R}^{#1}_{#2}(#3)}
\newcommand{\cobdrtestconst}{\eta}
\newcommand{\trianglesnorm}[2][k]{\varepsilon_{\blacktriangle}^{#1}\parens{\cochain{#2}}}
\newcommand{\emptytrianglesnorm}[2][k]{\varepsilon_{\triangle}^{#1}\parens{\cochain{#2}}}
\newcommand{\sphericalbuilding}[2]{\mathbb{S}_{#1}^{#2}}
\newcommand{\intersectionsize}{\xi}
\author{
Roy Gotlib
\footnote{Department of Computer Science, Bar-Ilan University, roy.gotlib@gmail.com, research supported by ERC. }
\and
Tali Kaufman
\footnote{Department of Computer Science, Bar-Ilan University, kaufmant@mit.edu, research supported by ERC and BSF.}
}
\title{List Agreement Expansion from Coboundary Expansion}
\begin{document}
    \maketitle
    \begin{abstract}
        One of the key components in PCP constructions are agreement tests.
        In agreement test the tester is given access to subsets of fixed size of some set, each equipped with an assignment.
        The tester is then tasked with testing whether these local assignments agree with some global assignment over the entire set.
        One natural generalization of this concept is the case where, instead of a single assignment to each local view, the tester is given access to $l$ different assignments for every subset.
        The tester is then tasked with testing whether there exist $l$ global functions that agree with all of the assignments of all of the local views.
        In this work we present sufficient condition for a set system to exhibit this generalized definition of list agreement expansion.
        This is, to our knowledge, the first work to consider this natural generalization of agreement testing.

        Despite initially appearing very similar to agreement expansion in definition, proving that a set system exhibits list agreement expansion seem to require a different set of techniques.
        This is due to the fact that the natural extension of agreement testing (i.e.\ that there exists a pairing of the lists such that each pair agrees with each other) does not suffice when testing for list agreement as list agreement crucially relies on a global structure.
        It follows that if a local assignments satisfy list agreement they must not only agree locally but also exhibit some additional structure.
        In order to test for the existence of this additional structure we use the connection between covering spaces of a high dimensional complex and its coboundaries.
        Specifically, we use this connection as a form of ``decoupling''.

        Moreover, we show that any set system that exhibits list agreement expansion also supports direct sum testing.
        This is the first scheme for direct sum testing  that works regardless of the parity of the sizes of the local sets.
        Prior to our work the schemes for direct sum testing were based on the parity of the sizes of the local tests.
    \end{abstract}

\section{Introduction}\label{sec:introduction}

\paragraph{Agreement testing} is an important tool that is central to many PCP constructions.
In agreement testing one is given access to a set of functions $\set{f_s}_{s \in S}$ that are thought of as ``local views'' of some global function $F: \bigcup_{s \in S}{s} \rightarrow \set{0,1}$.
These are local views in the sense that every $f_s$ is a function $f_s:s \rightarrow \set{0,1}$  and for every $s$: $F|_s = f_s$.
If a set of local functions $\set{f_s}_{s \in S}$ meets the above criterion we call that set an agreeing set of functions.
An agreement test is a probabilistic algorithm that picks two sets $s_1, s_2$ of size $k$ that intersect each other on $\intersectionsize$ elements and checks whether $f_{s_1}|_{s_1 \cap s_2} = f_{s_2}|_{s_1 \cap s_2}$.
A structure is said to support agreement testing (alternatively, a structure is an agreement expander) if the following two properties hold:
\begin{enumerate}
    \item The agreement test always accepts agreeing sets of functions.
    \item If the agreement test rejects a set with probability $\varepsilon$ then there is a global function that disagrees with at most $O\parens{\frac{1}{1-\frac{\intersectionsize}{k}}\varepsilon}$ of the local functions.
\end{enumerate}
Most works that pertain to agreement testing are interested in the case where the intersection between sets is small, specifically $\intersectionsize = \frac{k}{2}$.
This is because in those cases the proportion between $k$ and $\intersectionsize$ is constant and any local assignment that is rejected with probability $\epsilon$ has a global assignment which disagrees with $O(\epsilon)$ of the local assignments.
In this work, however, we are interested in a slightly different definition of agreement expansion called \emph{$1$-up agreement expansion}.
In $1$-up agreement the intersection between pairs of sets contain all but one element from each set (i.e.\ $\intersectionsize=k-1$).
Also note that many complexes (and even some sparse complexes~\cite{dinur2017high}) exhibit $1$-up agreement expansion (as well complexes such as the complete complex).

In~\cite{dinur2017high} Dinur and Kaufman proved that some high dimensional expanders support agreement testing and term these \emph{agreement expanders}.
Since then agreement expanders have proven to be extremely useful in various contexts:
From derandomization for direct product testing~\cite{dinur2017high} to conversion of local tests to robust tests~\cite{dinur2018local} and others (for more examples see~\cite{dikstein2020locally, dikstein2018boolean}).

\paragraph{List agreement expansion} In this paper we present a new, natural generalization of agreement testing in which, instead of being given a single function for each set $s$, we are given a list of $l$ functions and we want to test whether these functions are local views of $l$ global functions (where $l$ is some constant).
By that we mean that there exist $l$ global functions $F_1, \cdots , F_l: \bigcup_{s \in S}{s} \rightarrow \set{0,1}$ and a permutation $\pi_s$ for every $s$ such that every local view $f_s^i:s \rightarrow \set{0,1}$ agrees with the global function
$F_{\pi(i)}$ (i.e. $F_{\pi(i)}|_s = f_s^i$).
In this paper we ask whether there are structures that support list agreement tests.
We term such structures \emph{list agreement expanders}.

\begin{definition}[List Agreement expansion, informal. For formal see~\ref{def:list-agreement-expansion}]
    We say that a set system exhibits list agreement expansion if there exists a tester that, given access to a set of assignments $\lassignment{f} = \set{\lassignment{f}^s_i}_{s \in S, i \in \sparens{l}}$ such that $\lassignment{f}^s_i:s \rightarrow \set{0,1}$, queries $Q$ of the local assignments\footnote{The algorithm queries one of the local assignments in one of the lists.} $\lassignment{f}^{s_1}_{i_1},\cdots,\lassignment{f}^{s_Q}_{i_Q}$ and accepts or rejects such that the following holds:
    \begin{enumerate}
        \item Always accepts if there exists $l$ functions $F_1, \cdots , F_l: \bigcup_{s \in S}{s} \rightarrow \set{0,1}$ and a permutation $\pi_s$ for every $s$ such that every local view $f_s^i:s \rightarrow \set{0,1}$ agrees with the global function $F_{\pi_s(i)}$ (i.e. $F_{\pi_s(i)}|_s = f_s^i$).
        \item If the tester rejects with probability $\epsilon$ then $O(\epsilon)$ of the assignment in $\lassignment{f}$ can be changed such that the property stated above will hold.
    \end{enumerate}
\end{definition}

\paragraph{List agreement testing compared to agreement testing} At first list agreement testing might seem fairly reminiscent of agreement testing and, while the definitions are similar, list agreement testing seem to require different tools altogether.
In the list agreement testing paradigm one is not only concerned with having agreement between local assignments, but also that these agreements are structured in the right way.
One key example of agreements that are not well structured is the following:
Consider a cycle of odd length with the following $2$-assignments on every edge: $\lassignment{F}^{\set{u,v}} = \set{\sparens{u=1, v=0},\sparens{u=0, v=1}}$.
Note that any two edges that share a vertex agree on the intersection\footnote{In the sense that there is a permutation $\pi$ such that the $i$-th assignment of one of the faces agrees with the $\pi(i)$-th assignment of the other.}.
That being said, $\lassignment{F}$ is not agreeing (since if it was an agreeing assignment the graph would be bi-partite\footnote{One can interpret the local lists as ``$u$ and $v$ are on different sides of the graph''.}).

\paragraph{Conditions under which list agreement testing is possible}
As we discussed, list agreement not only requires agreement but also some additional structure.
We will show that this additional structure comes in the form of \emph{coboundary expansion} - A topological notion of expansion.
Our construction of a list agreement tester will rely on coboundary expansion as a way to decouple the $l$ instances of agreement testing and then use agreement expansion in order to achieve local agreement.
We found that the global structure required in order to have $l$ global cochains is, in a sense, equivalent to a coboundary of a different complex.
That complex is induced by the original complex and the agreement test.
We show that that the coboundaries on the induced complex are testable using the fact that the original complex is \emph{both} a coboundary expander and a $1$-up agreement expander.
Effectively, we use the coboundary expansion in order to derive the \emph{global} structure (i.e.\ that the local agreements can indeed be ``glued together'' into $l$ global functions) while using the agreement expansion in order to derive the \emph{local} structure (i.e.\ that the local agreements agree with each other).

As we previously hinted at, list agreement offers a very descriptive language which can, at times, be considerably richer than regular agreement.
This richer structure allows us to, for example, describe the question of whether a subgraph\footnote{Here a subgraph is determined by picking a subset of the vertices and all the edges that connect them.} of the complex's underlying graph is two sided or not.
One such subgraph of particular interest is a cycle as list agreement allows us to describe the question of whether a cycle is of odd length merely by knowing which vertices are a part of the cycle.
This, in effect, yields a non-constant lower bound on the number of queries required in order to test list agreement in the general case as testing whether a cycle is odd cannot be done locally (More details on this can be found in Section~\ref{sec:on-the-2-differing-assumption}).

In order to overcome this limitation we introduce a restriction on list agreement.
Namely, we require that the local assignments given to each set have some small distance separating them.
Using this small distance we show that testing list agreement is possible with a constant number of queries.

Now that we have an understanding of list agreement expansion we can present our main theorem:
\begin{theorem}[Main Theorem, informal. For formal see~\ref{main-theorem}]\label{thm:main-informal}
Any simplicial complex that has sufficient expansion properties (namely coboundary expansion and a $1$-up agreement expansion) supports list agreement testing using $3l$ queries\footnote{The test queries all the local assignments of three faces.} (where $l$ is the length of the list) under small distance assumptions on the local assignments\footnote{More specifically we assume that the local assignments differ on at least two vertices. This assumption cannot be removed in the domain of complexes that we examine, see Section~\ref{sec:on-the-2-differing-assumption}.}.
\end{theorem}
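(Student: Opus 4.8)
The plan is to reduce list agreement testing to a combination of two testing tasks we already understand: coboundary testing (used to recover the \emph{global} gluing structure) and $1$-up agreement testing (used to recover the \emph{local} consistency), with the $2$-differing assumption serving as the ``decoupling'' device that makes the reduction go through. Concretely, given the local lists $\lassignment{f} = \set{\lassignment{f}^s_i}$, I would first build an auxiliary simplicial complex — call it the \emph{representation complex} — whose faces record, for each original face $s$ together with a labeling of its list, the local assignment data, and whose coboundary maps encode exactly the requirement that two overlapping faces' lists can be consistently matched up by a permutation. The key structural claim to isolate and prove first is that a choice of $l$ global functions $F_1,\dots,F_l$ together with the per-face permutations $\pi_s$ corresponds precisely to a coboundary (a trivial element of the appropriate cochain group) on this auxiliary complex, while a collection of local lists that merely agrees pairwise corresponds to a cocycle. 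The odd-cycle example in the introduction is exactly the obstruction showing cocycle $\ne$ coboundary in general, which is why plain agreement expansion is not enough.

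Next I would run the list agreement tester of the theorem — pick three faces, query all $l$ local assignments on each — and split its rejection probability into two contributions. The first contribution is an agreement-type check: for each pair among the three faces, verify that there is a matching of the two lists consistent on the overlap; by the $1$-up agreement expansion hypothesis applied $l$ times (once per ``slot''), small rejection here lets me correct $O(\epsilon)$ of the assignments so that locally the lists agree in the permutation sense. The second contribution is a coboundary-type check on the representation complex: once local matchings exist, I test that the induced cochain is a coboundary, and here I invoke the coboundary expansion of the original complex — transferred to the auxiliary complex — to conclude that an $\epsilon$-far-from-coboundary cochain is rejected with probability $\Omega(\epsilon)$, hence small rejection lets me further correct $O(\epsilon)$ assignments to obtain an actual gluing into $l$ global $F_i$'s. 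Combining, changing $O(\epsilon)$ assignments total yields the list agreement property, and the query count is $3l$ by construction. Completeness (always accept a genuine list-agreeing input) is immediate since such an input maps to a coboundary and to pairwise-agreeing local lists.

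The main obstacle I expect is the \emph{transfer of coboundary expansion from the original complex to the auxiliary representation complex}, and the role the $2$-differing assumption plays there. The auxiliary complex is not a standard high dimensional expander; its cochain groups have coefficients in something like the symmetric group $S_l$ (matchings of lists), so one must either work with non-abelian coboundary expansion or carefully linearize using the $2$-differing hypothesis to reduce the symmetric-group-valued problem to $\binom{l}{2}$ (or $l$) independent $\mathbb{F}_2$-valued coboundary problems — this is the ``decoupling'' the abstract alludes to. Without the $2$-differing assumption the decoupling fails (an adversary can encode an odd cycle, matching the non-constant query lower bound of Section~\ref{sec:on-the-2-differing-assumption}), so the proof must use it precisely at the step where a local matching between two lists is shown to be \emph{unique}, which is what pins the $S_l$-cochain down to a well-defined element and lets the $\mathbb{F}_2$ coboundary expander bound apply. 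A secondary, more technical obstacle is bookkeeping the constants: the $O(1/(1-\intersectionsize/k))$ loss in agreement testing is harmless for fixed $\intersectionsize = k-1$ only if one is careful that the three-face test's overlaps are genuinely $1$-up, so I would verify that the faces chosen by the tester intersect in all but one vertex and that the coboundary test on the auxiliary complex inherits a comparable locality.
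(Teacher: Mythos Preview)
Your high-level architecture is correct and matches the paper: build a representation complex whose $1$-cochains with $S_l$ coefficients encode the local matchings, so that list agreement corresponds to a coboundary; then combine a coboundary-type test with an agreement-type test. You also correctly identify that the $2$-differing assumption is what makes the local matching permutation \emph{unique}, hence makes the $S_l$-valued cochain well-defined (this is exactly Lemma~\ref{singular-pi}).

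There are, however, two genuine gaps.

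\textbf{The representation complex is not a coboundary expander.} You write that one should ``invoke the coboundary expansion of the original complex --- transferred to the auxiliary complex''. This transfer does not go through directly: the representation complex has nontrivial first cohomology (its cocycles are coboundaries only \emph{around each core}, not globally), so the usual expansion inequality only bounds distance to the cocycles, not to the coboundaries. The paper devotes all of Appendix~\ref{sec:on-the-testability-of-co-boundaries-in-the-representation-complex} to overcoming this. The mechanism is an \emph{attachment map}: a cocycle on $\repcplx[k]{\stdcomplex}$ restricts to a coboundary around each core, and the way these per-core coboundaries glue together is recorded by a $1$-cochain on $\repcplx[k-1]{\stdcomplex}$; one then recurses down in $k$. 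Concretely the test must check not only ordinary triangles but also ``empty triangles'' (triples of $k$-faces whose pairwise intersections form a triangle in $\repcplx[k-1]{\stdcomplex}$), and the soundness analysis is an induction on $k$ with a constant that compounds as $(6/\gamma)^{k}$. Your proposal to ``linearize to $\binom{l}{2}$ independent $\mathbb{F}_2$ problems'' is not what happens and would not help here: the paper works with non-abelian $S_l$ coefficients throughout, and the obstruction is the topology of the representation complex, not the coefficient group.

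\textbf{The order of the two corrections is reversed.} You propose to first apply $1$-up agreement expansion ``$l$ times, once per slot'', and only afterward test the coboundary condition. But before the coboundary correction there are no slots: the $l$ local lists at different faces are not yet aligned into $l$ global tracks, so there is nothing to run the ordinary agreement test on. The paper's order is: (i) test and correct the $S_l$-cochain to a coboundary, which by Lemma~\ref{cover-decomposition} decomposes the near cover into $l$ disjoint copies of $\repcplx{\stdcomplex}$; (ii) \emph{now} each copy carries a single assignment per $k$-face, and one bounds the $1$-up agreement test's rejection on each copy by the weight of edges that were either inadequately covered to begin with or got changed in step (i) (Lemma~\ref{lm:characterising-defective-edges} and Lemma~\ref{distance-bound}). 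The ``adequacy'' check you describe is part of the test, but its role is to bound $\norm{I}$ in that last estimate, not to invoke agreement expansion directly.
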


It is important to note that there are simplicial complexes who meet the Theorem's criteria.
For example, the spherical buildings and the complete complex have sufficient expanding conditions for Theorem~\ref{thm:main-informal} (See~\cite{dinur2019near} and~\cite{kaufman2020local}).

\paragraph{Direct sum testing} Another natural (and extremely useful) construction in hardness amplification is the direct sum.
\begin{definition}[Direct sum]
Given a function $f:S \rightarrow \set{0,1}$ (where $S$ is an arbitrary set) its $k$-fold direct sum is a function $F:\binom{S}{k} \rightarrow \set{0,1}$ such that: $F(A) = \sum_{a \in A}{f(a)}$.
\end{definition}
Direct sums are useful in a variety of contexts, from Yao's XOR Lemma~\cite{4568378} which states that if a function is hard to approximate then its direct sum is exponentially harder, to the hardness of approximation of problems in $P^{NP}\parallel$\footnote{$P^{NP}\parallel$ is the set of problems that can be solved in polynomial time with oracle access to a problem in $NP$ such that all the queries to the oracle are performed in parallel.}~\cite{impagliazzo2009approximate}.

\paragraph{A unified framework for direct sum testing} It is natural to ask how, given a function, can one make sure that it is indeed a direct sum in a derandomized fashion.
There have been several works on derandomizing direct sum testing~\cite{david2017direct, gotlib2019testing} but the tests presented in them for constant values of $k$ were heavily dependent on the parity of $k$.
In this work, we show how to use list agreement expanders in order to provide a new natural test for whether a function $F$ is a direct-sum (while having stronger assumptions on the expansion of the complex).
Our testing framework is the first that can be applied to any value of $k$ regardless of its parity.
In addition, our framework shaves off an $O(k)$ factor in the query complexity for the case when $k$ is odd (compared to~\cite{gotlib2019testing}).
Specifically we show that:
\begin{theorem}[Testability of Direct Sums, informal. For formal see Theorem~\ref{thm:direct-sum}]
    Any simplicial complex that supports list agreement testing supports direct sum testing, regardless of the parity of $k$.
\end{theorem}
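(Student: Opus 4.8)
**Proof proposal for: "Any simplicial complex that supports list agreement testing supports direct sum testing, regardless of the parity of $k$."**

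The plan is to reduce direct sum testing to list agreement testing by showing that a function $F:\binom{S}{k}\to\{0,1\}$ and its purported "decomposition" into a local function can be encoded as a collection of list assignments on the faces of the complex. Concretely, given a candidate direct sum $F$, I would build a list assignment $\lassignment{f}$ as follows: for each $k$-face $s$, the defining identity $F(s)=\sum_{a\in s}f(a)$ together with a base point pins down $f|_s$ only up to an additive shift, so the natural list to put on $s$ is the \emph{pair} of functions $\{g_s, \mathbbm{1}-g_s\}$, i.e. $l=2$, where $g_s:s\to\{0,1\}$ is any local assignment consistent with the parity constraint $\sum_{a\in s} g_s(a) \equiv F(s)$. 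The key observation is that $F$ is a genuine $k$-fold direct sum of some global $f:S\to\{0,1\}$ if and only if the list assignment $\{g_s,\mathbbm{1}-g_s\}_{s}$ has a global list solution: the two global functions are $f$ and $\mathbbm{1}-f$, and the per-face permutation $\pi_s$ records whether $g_s$ "reads" $f$ or its complement on $s$. (This sign ambiguity is exactly the $\mathbb{Z}/2$ coset structure that makes the problem genuinely a list problem of size $2$ rather than an ordinary agreement problem, and it is what lets the scheme work for both parities of $k$.)

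Next I would verify the two halves of the tester guarantee. For completeness: if $F$ is a direct sum of $f$, then choosing $g_s$ compatibly (e.g. $g_s = f|_s$ for every $s$) gives list assignments that are local views of $\{f,\mathbbm{1}-f\}$, so the list agreement tester always accepts; moreover the local assignments $g_s$ and $\mathbbm{1}-g_s$ differ on all $k \ge 2$ vertices of $s$, so the "$2$-differing" small-distance hypothesis required by the list agreement test (Theorem~\ref{thm:main-informal}) is satisfied automatically — this is where it is crucial that the two list elements are a function and its complement. For soundness: suppose the direct sum test (which I define to be: run the list agreement test on $\lassignment{f}$, which by the hypothesis uses $O(1)$ — in fact $3l=6$ — queries to the $g_s$'s) rejects with probability $\epsilon$. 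By the list agreement guarantee, changing $O(\epsilon)$ of the local assignments $g_s$ yields a list assignment with a global solution $\{F_1,F_2\}$ and permutations $\pi_s$. I then need to argue (a) that $F_2 = \mathbbm{1}-F_1$ necessarily — this follows because on each face the two list elements were complementary, so any global pair agreeing with all of them must be globally complementary as long as the complex's underlying graph is connected — and (b) that $F_1$ witnesses $F'$ (the corrected $F$, differing from $F$ in $O(\epsilon)$ of the $k$-faces) being a direct sum: indeed on each uncorrected face $F'(s) = \sum_{a\in s} g_s(a) \equiv \sum_{a \in s} F_{\pi_s(\cdot)}(a)$, and since $F_{\pi_s(\cdot)}$ is either $F_1$ or $\mathbbm{1}-F_1$ the parity $\sum_{a\in s}F_1(a)$ is determined modulo the known shift, so $F'(s)=\sum_{a\in s}F_1(a)$ up to a fixed global correction that can be absorbed. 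The query complexity claim ($O(k)$ savings for odd $k$ compared to~\cite{gotlib2019testing}) is immediate since the list agreement test makes only $3l=6$ queries into the $g_s$, each of which is determined by $k$ bits of $F$-data, versus the $\Theta(k)$-query schemes of prior work.

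The main obstacle I anticipate is the bookkeeping around the additive/sign shifts: the map $F \mapsto \{g_s, \mathbbm{1}-g_s\}_s$ is not canonical (the choice of $g_s$ among the parity-consistent local functions is free), and I must ensure that (i) the list agreement tester's queries can be \emph{simulated} by queries to $F$ alone — equivalently, that querying $g_s^i$ costs only $O(k)$ queries into $F$ and a local computation, so the direct sum tester is well-defined without oracle access to a decomposition — and (ii) the soundness correction "pulls back" correctly, i.e. an $O(\epsilon)$-fraction of corrected faces in the list picture really does correspond to an $O(\epsilon)$-fraction of corrected values of $F$, and the leftover global shift between $F_1$-direct-sum and $F'$ is consistent across \emph{all} uncorrected faces simultaneously (not just face-by-face). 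Handling (ii) requires propagating the per-face sign $\pi_s$ to a single global sign, which again uses connectivity of the $1$-skeleton; I expect this to be the delicate point and would isolate it as a lemma stating that a globally consistent list solution with complementary lists on every face forces a single global function up to one overall complementation. Everything else is routine translation between the direct-sum language and the list-agreement language.
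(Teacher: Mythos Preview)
Your high-level strategy—encode the direct-sum problem as a list agreement instance whose local lists are a function and its complement—is the paper's approach for even $k$, but the reconstruction step as you describe it does not work, and this is the load-bearing technical point.

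You define $g_s$ as ``any local assignment consistent with the parity constraint $\sum_{a\in s} g_s(a) \equiv F(s)$.'' A single parity bit on $|s|=k$ variables has $2^{k-1}$ solutions, not two; an arbitrary one will almost never equal $f|_s$ or $\mathbbm{1}-f|_s$, so the list $\{g_s,\mathbbm{1}-g_s\}$ need not contain the restriction of any global origin function, and completeness fails outright. Your completeness paragraph quietly patches this by ``choosing $g_s=f|_s$,'' but the tester has access only to $F$, not to $f$. You do flag this as obstacle (i), but you do not supply the fix, and the fix is the entire content of the reduction. The paper's resolution is to work one dimension up: the list assignment lives on faces $\tau$ of size $k+1$, and the local origin function on $\tau$ is reconstructed deterministically from the $k+1$ values of $F$ on the size-$k$ subsets of $\tau$ via an explicit linear formula (for odd $k$: $f(v)=\sum_{a\in\binom{\tau\setminus\{v\}}{k-1}}F(a\cup\{v\})$; for even $k$ the analogous formula after fixing one base vertex). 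That this recovers $f|_\tau$ exactly—up to global complement in the even case—whenever $F$ is a genuine direct sum is Lemma~\ref{reconstructing-k-direct-sum} and Corollary~\ref{querying-origin-functions}, and it is precisely the missing ingredient in your proposal.

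A secondary difference: the paper does not use $l=2$ uniformly. When $k$ is odd, $\sum_a(\mathbbm{1}-f)(a)=k-F(s)\ne F(s)$, so the origin function is unique and the reduction is to ordinary ($l=1$) agreement; only for even $k$ does the complement ambiguity force $l=2$. Your uniform-$l=2$ scheme could be salvaged once the reconstruction is corrected, but it obscures why parity of $k$ was the obstruction in prior work and why list agreement is the right tool specifically in the even case.
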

\subsection{High Dimensional Expansion Toolset}\label{subsec:high-dimensional-expansion-toolset}

We will now move on to presenting the main toolset we use in the proof:
\paragraph{Simplicial complexes} are generalizations of graphs to higher dimensions.
A simplicial complex is a hyper-graph with closure property, i.e.\ if $\stdface$ is a hyper-edge then any subset of $\stdface$ is also an hyper-edge in the hyper-graph.
We term the hyper-edges of a simplicial complex as its faces and define the dimension of a face $\stdface$ to be $\abs{\stdface}-1$.
We denote the set of $i$ dimensional faces by $\stdcomplex(i)$.
We also define the dimension of a simplicial complex as the dimension of its maximal face.
For example, any non-empty graph is a $1$-dimensional simplicial complex\footnote{We add the empty face to the graph.}, its vertices are the $0$-dimensional faces and edges are its $1$-dimensional faces.
Note that in connected graphs all of the maximal faces are of the same dimension as every vertex is part of an edge (otherwise there exists an isolated edge and the graph is not connected), when this holds we say that complex is \emph{pure}.
We limit our discussion to pure simplicial complexes.
It is often convenient to think of high dimensional faces as geometrical shapes, for example we think of a $2$-dimensional face as a triangle, a $3$-dimensional face as a pyramid etc.
In many cases we will be interested in weighted simplicial complexes.
In weighted simplicial complexes a weight function is given to the top dimensioanl faces.
This weight function is positive and sums up to $1$.
The weight of lower dimensional faces is determined by the weight of the top dimensional faces in which they are contained.
It is important to note that even if the top dimensioanl faces all have the same weight, the same does not necessarily hold for faces of a lower dimension.

\paragraph{High dimensional expanders} are generalizations of one-dimensional expanders (i.e.\ graph expanders) to higher dimensions.
Unlike the one dimensional case, there are multiple definitions for the notion of expansion in higher dimensions.
Moreover, the connections between the various generalizations of expansion is unknown.
In this work we will use two of these definitions: \emph{coboundary expansion} and \emph{agreement expansion}.
We are going to think of both definitions of expansion as measures of how well certain properties can be tested on the complex.

\paragraph{Coboundary expansion} is a natural generalization of the combinatorial expansion to higher dimensions.
Specifically, one dimensional expansion can be thought of how well the following test tests whether a cut in the graph is trivial: Pick a random edge and accept iff it does not cross the cut.
Note that this corresponds directly with the graph's cheeger constant as the cut can be thought of as a set and the number of edges that cross the cut are exactly the outgoing edges of the set.
Therefore the cheeger constant of the graph determines \emph{exactly} how well the test performs.
Coboundary expansion is a generalization of the cheeger constant that includes the second dimension as well.
In the second dimension the property being tested is whether a given subset of edges represents a cut (i.e.\ whether there is an underlying cut such that an edge is in the set iff it crosses the cut).
The test being measured is the test that picks a triangle and accepts if the number of edges that are both in the set and the triangle is even.
In a coboundary expander both tests (i.e.\ the test in the first and second dimensions) have a large distance (larger than some constant).

\paragraph{Agreement expanders} are (possibly sparse) simplicial complexes and a distribution $\mathcal{D}$ on the $k$-dimensional faces such that,
for any set of assignments to the $k$-dimensional faces of the simplicial complex,
if most pairs $\stdface_1,\stdface_2 \sim \distribution{D}$ agree on their intersection then the local assignments are close to agreeing with some global assignment (for any $k$).
In this work we are going to use a strengthening of agreement expansion, namely \emph{$1$-up agreement expanders},
$1$-up agreement expanders are agreement expanders where the distribution $\mathcal{D}$ matches the procedure of picking a $\parens{k+1}$-dimensional face and then randomly selecting two $k$-faces that are contained in it.
In~\cite{dinur2017high} Dinur and Kaufman show that there exists a family of bounded degree $1$-up agreement expanders.

\paragraph{Covers} of a simplicial complex $\stdcomplex$ are simplicial complexes $\complex{y}$ that, for every vertex $\stdvertex$ of $\stdcomplex$ contain $l$ vertices $[\stdvertex,1],\dots,[\stdvertex,l]$,
In addition, $\stdface=\parens{\stdvertex_1, \dots, \stdvertex_i}$ is a face in $\stdcomplex$ iff $\complex{y}$ has $l$ disjoint faces of the form $\parens{[\stdvertex_1, j_1], \dots, [\stdvertex_i, j_i]}$.
For example: a $2$-cover of a $1$-dimensional complex (i.e.\ a graph) has, for every vertex $\stdvertex$, two vertices $[\stdvertex, 0],[\stdvertex, 1]$.
In addition, for every edge $\parens{\stdvertex,\genvertex}$ in the graph the cover has two edges, either $\parens{[\stdvertex,0],[\genvertex,0]}, \parens{[\stdvertex, 1],[\genvertex, 1]}$ or $\parens{[\stdvertex, 0],[\genvertex, 1]},\parens{[\stdvertex, 1],[\genvertex, 0]}$.
In this work we are going to discuss a special subset of the covering spaces of a simplicial complex.
Specifically we are going to be interested in the trivial covering spaces that are comprised of $l$ disjoint copies of the original complex.
We end our discussion of covering spaces by noting that the definition of a cover applies to general topological spaces and not necessarily to simplicial complexes.

\paragraph{Near Covers} are spaces that are close to being covering spaces of a given simplicial complex.
Near covers are a relaxation of covers where only the first two dimensions are required to be covered properly.
More specifically, every vertex $\stdvertex$ is covered by $l$ vertices and every edge $\set{\stdvertex, \genvertex}$ is covered by $l$ edges.
The higher dimensional faces are only covered if the lower dimensioanl faces are structured in a way that allows them to be covered.
For example consider a simplicial complex comprised of a single two dimensional face $\set{\stdvertex_1, \stdvertex_2, \stdvertex_3}$.
Consider the near cover of said complex whose $1$-dimensioanl faces are the following cycle of length $6$:
\[
    \set{
        \set{\sparens{\stdvertex_1,0}, \sparens{\stdvertex_2,0}},
        \set{\sparens{\stdvertex_2,0}, \sparens{\stdvertex_3,0}},
        \set{\sparens{\stdvertex_3,0}, \sparens{\stdvertex_1,1}},
        \set{\sparens{\stdvertex_1,1}, \sparens{\stdvertex_2,1}},
        \set{\sparens{\stdvertex_2,1}, \sparens{\stdvertex_3,1}},
        \set{\sparens{\stdvertex_3,1}, \sparens{\stdvertex_1,0}}
    }
\]
Note that every vertex and every edge are covered and yet the single $2$-dimensional face is not and thus this is a near cover that is not a cover.
It is important to note that every cover is a near cover and that sometimes \emph{genuine cover} is used instead of cover to emphasize that a near cover is indeed a cover.

\subsection{Proof Strategy}\label{subsec:proof-stategy}

In this work we are interested in testing whether a set system $S$ exhibits list agreement expansion.
Agreement expansion stems directly from the rapid convergence of random walk that spectral expanders exhibit.
While we still test for agreement (and therefore require the same spectral expansion assumptions on our set system) we also recall that list agreement requires some additional structure over agreement.
Specifically, not only does list agreement require local agreement it also requires the local agreements to exist in such a way that $l$ different global functions are formed.

Let us now characterize this additional structure further.
Consider a set of assignments $\lassignment{F}$ that gives every set $s$ in the set system a set of $l$ assignments $\lassignment{F}^s_{1}, \cdots, \lassignment{F}^s_{l}$.
For ease of presentation, allow us also to assume that for every two sets in the set system $s, s'$ there exists a unique permutation $\pi_{s,s'}$ such that $\lassignment{F}^s_{i}|_{s \cap s'} = \lassignment{F}^{s'}_{\pi_{s,s'}(i)}|_{s \cap s'}$.
Consider the following definition:
\begin{definition}[Coboundary structure]\label{def:coboundary-struct}
    Let $A \subseteq \binom{S}{2}$ and let $G$ be a group.
    We say that a function $f:A \rightarrow G$ has coboundary structure over $A$ with coefficients in $G$ if there exists $g: S \rightarrow G$ such that:
    \[
        f(a,b) = g(b)\parens{g(a)}^{-1}
    \]
\end{definition}
We will show that, regardless of the set $A$, $\lassignment{f}$ exhibits list agreement iff $f(s,s')=\pi_{s,s'}$ has a coboundary structure with coefficients in $S_l$ (the symmetric group with $l$ elements).
Recall $\lassignment{f}$ exhibits list agreement if there exists $l$ global functions $\assignment{f}_1, \cdots, \assignment{f}_l$ and, for every set $s$ there exists a permutation $\pi_s$ such that $\lassignment{F}^s_{i} = \assignment{f}_{\pi_s(i)}|_s$.
It is easy to see that in this case $\pi_{s,s'} = \pi_{s'} \parens{\pi_{s}}^{-1}$.
In addition, if $f(s,s')$ has a coboundary structure then there exists $g$ such that $f(a,b) = g(b)\parens{g(a)}^{-1}$.
In that case, the global functions $\assignment{F}_i(\stdvertex)$ can be calculated by picking any set $s$ that contains $\stdvertex$ and setting $\assignment{F}_i(\stdvertex) = \lassignment{f}^s_{g(s)(i)}(\stdvertex)$.
We are therefore interested in finding out if the permutations $\pi_{s,s'}$ exhibit a coboundary structure.

Alas the fact that two local lists of assignments agree with each other under some permutation does not necessarily mean that the permutation remains the same when correcting $f$ so that it exhibits coboundary structure (even if there is only one permutation that causes agreement).
It is therefore natural to ask ``\emph{how far are $\pi_{s,s'}$ from exhibiting a coboundary structure?}''.
This is \emph{exactly} what coboundary expansion with coefficients in $S_l$ measures.

Then, after we have corrected the function $f$ so that it exhibits a coboundary structure, we might have ruined some of the agreement we started with.
We will note, however, that now we have $l$ \emph{independent} instances of an agreement problem (One for each $\set{\lassignment{f}^s_{g(s)(i)}}_{s \in S}$, where $g: S \rightarrow S_l$ such that $f(s,s')=g(s')\parens{g(s)}^{-1}$).
We use the agreement expansion to resolve those.

To conclude, we use spectral expansion in order to derive the agreement and topological expansion in order to derive this additional structure.
\subsection{Proof Layout}\label{subsec:proof-layout}
We start by assuming that the complex is a $1$-up agreement expander and therefore the $1$-up test is a good agreement test.
We then model the choices done by the $1$-up agreement test as a simplicial complex which we dub ``the representation complex''.
The modeling is done in the following way:
The vertices of the new complex will correspond to the $k$-dimensional faces of the original complex.
The edges of the new complex will correspond to the choices done by the $1$-up agreement test (i.e.\ if a pair of faces $\stdface_1, \stdface_2$ are chosen with some probability then there is an edge between $\stdface_1$ and $\stdface_2$).
We will construct the higher dimensional faces of the representation complex so that the weight of the an edge $(\stdface_1,\stdface_2)$ in the representation complex will correspond with the probability that the pair $\stdface_1,\stdface_2$ is chosen by the $1$-up agreement tester (see Definition~\ref{the-representation-complex} for a formal definition).
Note that the edges of the representation complex can be thought of as the set $A$ from Definition~\ref{def:coboundary-struct}.
We will therefore be interested in examining the expansion properties of the second dimension of the representation complex.
Alas, the representation complex is not a coboundary expander but it does have a structure that facilitates bounding the distance of a cochain from being a coboundary using a local property (see Appendix~\ref{sec:on-the-testability-of-co-boundaries-in-the-representation-complex} for more details).

Before we move on we note that the function $f$ from Definition~\ref{def:coboundary-struct} can be thought of as a description of a near cover of the representation complex in the following way:
Every vertex $\stdvertex$ is covered by $\abs{G}$ elements - $\set{\sparens{\stdvertex, 1}, \cdots, \sparens{\stdvertex, \abs{G}}}$.
The edges are described by the function $f$ in the following way: If $\stdvertex$ and $\genvertex$ are connected in the original complex then $\sparens{\stdvertex, i}$ and $\sparens{\genvertex,f\parens{\set{\stdvertex, \genvertex}}(i)}$ are connected in the near cover.
In order to simplify the presentation of the rest of the proof we are going to use the language of near covers.
It is important to note that it is equivalent to the presentation in the proof strategy.

We start by showing how to relate any $l$-assignment whose local assignments are sufficiently differing to an $l$-near-cover of the representation complex.
Specifically, every one of the local assignments to $\stdface$ in the $l$-assignment will correspond to a vertex that covers $\stdface$ and two vertices in the cover are connected if the assignments they correspond to agree on their intersection (and, of course, if the edge that they cover exists in the representation complex).
Unlike the assumption in the proof strategy, we cannot assume that every pair of lists agree with each other.
We therefore cover every edge $(\stdvertex_1, \stdvertex_2)$ whose assignments do not agree with each other using some fixed matching between the vertices that cover $\stdface_1$ and $\stdface_2$.
Full details of this construction can be found in Section~\ref{sec:local-assignments-in-the-original-complex-imply-a-near-cover-in-the-representation-complex}.

Now that we have constructed the near cover we want it to have the additional structure that we presented in the proof strategy.
We note that a near cover exhibits the additional structure we are interested in iff it corresponds to a coboundary.
We also note that the near cover corresponds to a coboundary iff it is a genuine cover that is comprised of $l$ distinct copies of the original complex.
We use the connection between cochains and near covering spaces as well as the result from Appendix~\ref{sec:on-the-testability-of-co-boundaries-in-the-representation-complex} to show that it is possible to bound the distance of the near cover we constructed from being comprised of $l$ distinct copies of the original complex.
Consider now a corrected version in the near cover.
While now we have the additional structure we are interested in, we still have not gotten the agreement we are interested in.

Assume for now that we can query this genuine cover directly (which we cannot as we only have access to a near cover that is close to it).
In the genuine cover, each of the copies of the original cover has an associated assignment to its $k$-faces (since every vertex in the cover is associated with a single local assignment to the face it represents).
Consider running the $1$-up agreement test on each of the copies of the representation complex with its associated assignments.
The test picks a $(k+1)$-face $\stdface$ and two of its $k$-sub-faces $\stdface_1, \stdface_2$, it then queries the local assignment of these $k$-faces and accepts iff they agree on their intersection.
Another way of looking at the test is that it samples an edge $e$ in the representation complex and accepts iff the faces represented by the vertices in the edge agree on their intersection.
We can therefore bound the distance of the set of assignments from agreeing using the agreement expansion of the original complex.

Alas we cannot query the genuine cover directly.
We can, however, bound from above the probability that the agreement test would reject had we ran it on that genuine cover.
Recall the construction of the near cover and note that if $e$ was covered by edges that represent agreement in the original near cover and, in addition, the cover of $e$ was not changed then the assignments to both sides of $e$ agree on their intersection.
Therefore the agreement test would only reject if it picked an edge that had no agreement in the first place or an edge that was changed when the near cover was corrected.
Note that the norm of the edges that satisfy these properties can be derived locally by querying the lists of assignments - the former by observing whether there is an agreement and the latter due to the testability of coboundaries in the representation complex.
We use this in order to bound the distance of any $l$-assignment from being an agreeing $l$-assignment.

Before we move on we wish to emphasise how the covering spaces allow us to decouple dependencies.
Consider the near covering space of the representation complex.
In an ideal case, the near cover induced by the $l$-assignment is a genuine cover that corresponds to a coboundary.
In such cases the cover is comprised of $l$ independent copies of the original complex and we can run the agreement test $l$ times independently and get the distance of the $l$-assignment from agreeing.
In most cases, however, the near cover induced by the $l$-assignment is not a genuine cover that corresponds to a coboundary.
In these cases the copies of the complex are dependent on each other in the sense that what would have been copies of the original complex are now connected via various edges.
A crucial step in the test we propose involves ``decoupling'' these dependencies.
We will show that, despite the fact that the representation complex's cohomology does not vanish, the coboundaries of the representation complex are still locally testable.
Using this test we can bound the distance of the $l$-assignment from inducing independent instances of agreement testing.
We believe that the technique of modeling objects of interest as near covers of a coboundary expander could be a useful measure of dependency as not only does it bound the distance of the object from being independent but it is also locally testable.
\subsection{Related Work}\label{subsec:related-work}
Agreement testing is an inherit part of most PCP constructions as well as various other applications and were extensively studied in the past few years (for examples, see~\cite{dinur2017high,dikstein2019agreement,dinur2019analyzing,dinur2018towards,barak2018small,khot2017independent, dinur2017agreement}).
In recent years a connection between high dimensional expanders and agreement testing was established:
In~\cite{dinur2017high} Dinur and Kaufman defined the notion of agreement expanders described above.
Later Kaufman and Mass presented in~\cite{kaufman2020local} a new method for constructing agreement expanders.
In this work we generalize the notion of agreement expanders to the case where each face has $l$ local assignments.
The goal in this new setting is to check whether there exists $l$ global functions that match all the local assignments (for formal definition see~\ref{def:l-assignment}).

In order to provide said generalization we use the connection between cocycles and cover spaces of simplicial complexes.
The connection between cocycles and cover spaces of a simplicial complex was first introduced in~\cite{surowski1984covers}.
Then in~\cite{dinur2019near} Dinur and Meshulam defined the notion of cover stable complexes which are complexes where if a near cover satisfies most local conditions then it is close to a genuine cover.
They then show that a simplicial complex is cover stable if and only if it is an expanding with respect to non-abelian cohomology.
We use the connection between covers of high dimensional expanders and their cocycles in a new way:
Specifically, we use the structure of covers that correspond with coboundaries in order to untangle the near cover into disjoint copies of the original complex.

Direct sums are natural construction in hardness of approximation.
A function $F:\binom{[n]}{k} \rightarrow \set{0,1}$ is a $k$-direct-sum of $f:[n]\rightarrow\set{0,1}$ if for every $\stdface \in \binom{[n]}{k}$ it holds that $F(\stdface) = \sum_{\stdvertex \in \stdface}{\stdvertex}$.
Testing direct sums is a problem that was extensively studied in recent years.
The first connections between testing direct sums and high dimensional expanders was presented in~\cite{kaufman2014high} in which Kaufman and Lubotzky presented a test for the case of $k=2$.
Later in~\cite{david2017direct} (and generalized to high dimensional expanders in~\cite{dinur2017high}) a test for constant even values of $k$ was found.
This test, however, could not deal with the case where $k$ is odd due to inherit limitations.
After that, another test was proposed by Gotlib and Kaufman in~\cite{gotlib2019testing}.
Their test could deal with the case where $k$ is an odd constant, however there seem to be no simple way of extending their setting to the case where $k$ is even.
In this work we use list agreement expanders in order to present the first test for direct sums for constant values of $k$ regardless of its parity.

\subsection{Paper Organisation}\label{subsec:paper-organisation}
We will start by defining the notions of high dimensional expansion we use throughout the paper in Section~\ref{sec:preliminaries}.
In addition to that Section~\ref{sec:preliminaries} includes a formal definition of list agreement expansion and the distance measure we are going to use.
Then, in section~\ref{sec:the-representation-complex}, we present the representation complex in depth - we characterize its structure and expansion properties as well as its cover stability.
In section~\ref{sec:on-the-covers-of-complexes-that-represent-co-boundaries} we present properties of covers that correspond to coboundaries\footnote{This section is completed by Appendix~\ref{sec:on-the-testability-of-co-boundaries-in-the-representation-complex} in which we show that the coboundaries are testable in the representation complex.}.
We then move on to show how assignments to the original complex imply a near cover for the representation complex in section~\ref{sec:local-assignments-in-the-original-complex-imply-a-near-cover-in-the-representation-complex}.
After that we show how to use the cover stability of the representation complex in order to show that the complex is a list agreement expander in section~\ref{sec:presenting-a-test-for-list-agreement}.
We then show how to use list agreement expanders in order to provide a test for $k$-direct-sum that is independent of the oddity of $k$ (in section~\ref{sec:testing-direct-sums-using-list-agreement-expansion}).
In Section~\ref{sec:on-the-2-differing-assumption} we discuss the $2$-differing assumption further and show that without the $2$-differing assumption there are some complexes of interest (for example, the spherical buildings) in which there are no tests for list agreement that perform a constant number to queries.
Then we show a lower bound to the number of queries required for list agreement in Appendix~\ref{sec:lower-bound-on-the-number-of-queries}.

\section{Preliminaries}\label{sec:preliminaries}

\subsection{Simplicial Complexes}\label{subsec:simplicial-complexes}
As we stated before, the generalization of graphs we will be using are simplicial complexes,
which we will now present more thoroughly:
\begin{definition}[Simplicial complex]
    A simplicial complex $\stdcomplex$ is set of sets such that if $\stdface \in \stdcomplex$ then every $\stdface' \subseteq \stdface$ is also in $\stdcomplex$.
    Each of the sets in $\stdcomplex$ are termed the faces of $\stdcomplex$.
\end{definition}
\begin{note*}
    The faces of a simplicial complex have an orientation.
    The orientation must be consistent in the sense that there is an ordering of the $0$-dimensional faces that determines the orientation of all the faces in higher dimensions.
    In the vast majority of this paper the orientation is irrelevant to the arguments and therefore we will ignore it for the most part.
    When we are forced to consider the orientation of a face $\stdface = \set{\stdvertex_0,\dots,\stdvertex_i}$ we will denote the face with round brackets (i.e.~$\stdface = \parens{\stdvertex_0,\dots,\stdvertex_i}$)
\end{note*}
Let us also define the dimension of a face and the dimension of the complex
\begin{definition}[Dimensions]
    Let $\stdcomplex$ be a simplicial complex.
    Define the dimension of a face $\stdface \in \stdcomplex$ to be $\dim(\stdface) = \abs{\stdface}-1$.
    Also, denote the set of faces of dimension $i$ by $\stdcomplex(i)$
    In addition, define the dimension of the simplicial complex $\stdcomplex$ to be:
    \[
        \dim(\stdcomplex) = \max_{\stdface \in \stdcomplex}\parens{\dim\parens{\stdface}}
    \]
\end{definition}
\begin{definition}[Pure simplicial complex]
    A simplicial complex is called pure if all of its maximal faces are of the same dimension.
\end{definition}
From this point onwards we limit our discussion to pure simplicial complexes (and whenever we refer to a simplicial complex we will actually refer to a pure simplical complex).
A standard weight function is defined over the various faces of a simplicial complex which we will present below:
\begin{definition}[Weight function]
    Let $\stdcomplex$ be a $d$ dimensional simplicial complex.
    Define the weight of a face $\stdface \in \stdcomplex$ as the fraction of maximal faces in $\stdcomplex$ that contain $\stdface$.
    Formally:
    \[
        \weightcplx{\stdcomplex}{\stdface}=\frac{\abs{\set{\genface \in \stdcomplex(d) \suchthat \stdface \subseteq \genface}}}{\binom{d+1}{\abs{\stdface}} \abs{\stdcomplex(d)}}
    \]
    When the complex is clear from context we will omit it from the notation.
\end{definition}
Note that this weight function can be thought of as a probability distribution over the faces of each dimension (and indeed we will think of it as such).
Moreover there is a way to sample a face with the probability distribution defined by the weight.
We can also use this weight function to define the following norm over sets of faces:
\begin{definition}[Norm on faces]\label{def:norm-on-faces}
    Let $\stdcomplex$ be a $d$-dimensional simplicial complex, let $-1 \le i \le d$ and also let $S \subseteq \stdcomplex(i)$.
    Define the norm of $S$ in $\stdcomplex$ to be:
    \[
        \norm{S}_{\stdcomplex} = \sum_{\stdface \in S}{\weightcplx{\stdcomplex}{\stdface}}
    \]
    When the complex is clear from context we will omit it from the notation.
\end{definition}
In many cases it is helpful to consider all the faces of higher dimension that contain a face from a set of faces $S$.
In order to do that we are going to use the containment operator defined below:
\begin{definition}[Containment operator]
    Let $\stdcomplex$ be a $d$-dimensional simplicial complex, let $-1 \le i \le j \le d$ and also let $S \subseteq \stdcomplex(i)$.
    Define the following:
    \[
        \containment^j(S) = \set{\genface \in \stdcomplex(j) \suchthat \exists \stdface \in S: \stdface \subseteq \genface}
    \]
\end{definition}
There is also a connection between $\norm{S}$ and $\norm{\containment^k(S)}$ depicted in the following fact:
\begin{fact}
    Let $\stdcomplex$ be a $d$-dimensional simplicial complex, let $-1 \le i \le j \le d$ and also let $S \subseteq \stdcomplex(i)$ then:
    \[
        \norm{S} \le \norm{\containment^j(S)} \le \binom{j+1}{i+1}\norm{S}
    \]
\end{fact}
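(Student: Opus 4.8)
The plan is to interpret the norm $\norm{\cdot}$ probabilistically and to realize both $\norm{S}$ and $\norm{\containment^j(S)}$ as hitting probabilities of a single layered sampling experiment, after which both inequalities fall out of elementary counting. First I would record the probabilistic meaning of the weight function: for a face $\stdface$ of dimension $i$, the definition gives $\weightcplx{\stdcomplex}{\stdface} = \frac{\abs{\set{\genface \in \stdcomplex(d) \suchthat \stdface \subseteq \genface}}}{\binom{d+1}{i+1}\abs{\stdcomplex(d)}}$, which is exactly the probability of obtaining $\stdface$ by picking a top face $\genface \in \stdcomplex(d)$ uniformly at random and then picking one of its $\binom{d+1}{i+1}$ $i$-dimensional subfaces uniformly at random. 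Hence $\norm{S}$ is the probability that such a random $i$-face lands in $S$, and likewise $\norm{\containment^j(S)}$ is the probability that a uniformly random $j$-subface of a uniformly random top face lands in $\containment^j(S)$.

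Next I would couple the two experiments. Draw $\genface \in \stdcomplex(d)$ uniformly, then a uniformly random $j$-subface $\genface' \subseteq \genface$, then a uniformly random $i$-subface $\stdface \subseteq \genface'$. By the standard fact that a uniform $(i+1)$-subset of a uniform $(j+1)$-subset of a $(d+1)$-set is distributed as a uniform $(i+1)$-subset of the $(d+1)$-set, the marginal of $\genface'$ is the weight distribution on $\stdcomplex(j)$ and the marginal of $\stdface$ is the weight distribution on $\stdcomplex(i)$. Thus in this one experiment $\norm{S} = \pr{\stdface \in S}$ and $\norm{\containment^j(S)} = \pr{\genface' \in \containment^j(S)}$.

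Both bounds are then immediate. For the lower bound, whenever $\stdface \in S$ we have $\stdface \subseteq \genface'$, hence $\genface' \in \containment^j(S)$, so $\pr{\stdface \in S} \le \pr{\genface' \in \containment^j(S)}$, i.e.\ $\norm{S} \le \norm{\containment^j(S)}$. For the upper bound, condition on $\genface'$: if $\genface' \in \containment^j(S)$ then $\genface'$ contains at least one $i$-face of $S$ among its $\binom{j+1}{i+1}$ $i$-subfaces, so $\pr{\stdface \in S \suchthat \genface'} \ge \binom{j+1}{i+1}^{-1}$; summing over $\genface' \in \containment^j(S)$ gives $\norm{S} \ge \binom{j+1}{i+1}^{-1}\norm{\containment^j(S)}$, which rearranges to $\norm{\containment^j(S)} \le \binom{j+1}{i+1}\norm{S}$.

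I do not expect a genuine obstacle here; the only points requiring care are the marginal-consistency of the layered sampling (the ``uniform subset of a uniform subset'' identity) and the degenerate ranges $i = -1$ and $i = j$, where $\binom{j+1}{i+1} = 1$ and the chain of inequalities collapses to an equality that one checks directly (in the first case $\containment^j(S)$ is either empty or all of $\stdcomplex(j)$, and $\norm{\set{\emptyset}} = 1 = \norm{\stdcomplex(j)}$; in the second case $\containment^i(S) = S$).
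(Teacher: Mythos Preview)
Your argument is correct. The paper states this result as a \emph{Fact} without proof, so there is no approach in the paper to compare against; your probabilistic coupling is a clean and standard way to establish both inequalities, and the edge cases $i=-1$ and $i=j$ are handled correctly.
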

We will now introduce the notions of cochains, cocycles and coboundaries which are natural spaces of functions over any given simplicial complex:
\begin{definition}[Cochains]
    Let $\stdcomplex$ be a simplicial complex and let $G$ be a group.
    Define the set of $0$-cochains with coefficients in $G$ to be the set of functions from the vertices of $\stdcomplex$ to $G$.
    In addition, denote the set of $0$-cochains with coefficients in $G$ by $\cochainset{0}{\stdcomplex;G}$.
    Also, define the set of $1$-cochains with coefficients in $G$ as the following set:
    \[
        \cochainset{1}{\stdcomplex;G} = \set{\stdcochain:\stdcomplex(1) \rightarrow G \suchthat \stdcochain(\stdvertex,\genvertex) = \parens{\stdcochain(\genvertex, \stdvertex)}^{-1}}
    \]
\end{definition}
We also define the following operators:
\begin{definition}[Coboundary operators]
    Define the operator $d_{-1}:\cochainset{-1}{\stdcomplex;G} \rightarrow \cochainset{0}{\stdcomplex;G}$ to be:
    \[
        d_{-1}\cochain{f}(v) = \cochain{f}(\emptyset)
    \]
    Define the operator $d_0:\cochainset{0}{\stdcomplex;G} \rightarrow \cochainset{1}{\stdcomplex;G}$ to be:
    \[
        d_0\cochain{f}(u,v) = \cochain{f}(u)\parens{\cochain{f}(v)}^{-1}
    \]
    Also define the operator $d_1$ such that for every cochain $\stdcochain \in \cochainset{1}{\stdcomplex;G}$ and every $(u,v,w)\in \stdcomplex(2)$:
    \[
        d_1\cochain{f}(u,v,w) = \cochain{f}(u,v)\cochain{f}(v,w)\cochain{f}(w,u)
    \]
\end{definition}
These operators define the following spaces over the first three dimensions of the simplicial complex:
\begin{definition}[Cocycles and coboundaries]
    Let $\stdcomplex$ be a simplicial complex and $G$ a group, define the following spaces:
    \begin{itemize}
        \item For $i \in \set{0,1,2}$ define the $i$ dimensional coboundaries:
        \[
            \coboundaryset{i}{\stdcomplex;G} = \set{d_{i-1}\cochain{f}  \suchthat \cochain{f} \in \cochainset{i-1}{\stdcomplex;G}}
        \]
        \item For $i \in \set{0,1}$ define the $i$ dimensional cocycles:
        \[
            \cocycleset{i}{\stdcomplex;G} = \set{\cochain{f} \in \cochainset{i}{\stdcomplex;G} \suchthat d_i\cochain{f} = \mathbbm{1}}
        \]
    \end{itemize}
    And, as with the cochains, when $G=\mathbbm{F}_2$ we omit $G$ from the notation.
\end{definition}
\begin{note*}
    We only define these spaces in the first two dimensions because we are working with a general group $G$ rather then an abelian group.
    If we assume that $G$ is an abelian group one can generalize the definition of cochains, cocycles and coboundaries as well as the coboundary operators to higher dimensions.
\end{note*}
Consider also the following:
\begin{fact}
    For every simplicial complex $\stdcomplex$ and group $G$: $\coboundaryset{i}{\stdcomplex; G} \subseteq \cocycleset{i}{\stdcomplex; G} \subseteq \cochainset{i}{\stdcomplex; G}$
\end{fact}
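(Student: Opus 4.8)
The final claim is the chain of inclusions
$\coboundaryset{i}{\stdcomplex; G} \subseteq \cocycleset{i}{\stdcomplex; G} \subseteq \cochainset{i}{\stdcomplex; G}$
for every simplicial complex $\stdcomplex$, every group $G$, and $i \in \set{0,1}$.

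The second inclusion is immediate from the definitions: $\cocycleset{i}{\stdcomplex;G}$ is defined as a subset of $\cochainset{i}{\stdcomplex;G}$ cut out by the condition $d_i\cochain{f} = \mathbbm{1}$, so there is nothing to check. So the real content is the first inclusion, and the plan is simply to verify that $d_i \circ d_{i-1} = \mathbbm{1}$ (the constant map to the identity of $G$) for $i \in \set{0,1}$, which gives $\coboundaryset{i}{\stdcomplex;G} = \im(d_{i-1}) \subseteq \ker(d_i) = \cocycleset{i}{\stdcomplex;G}$.

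\begin{proof}
The inclusion $\cocycleset{i}{\stdcomplex; G} \subseteq \cochainset{i}{\stdcomplex; G}$ holds by definition, since $\cocycleset{i}{\stdcomplex;G} = \set{\cochain{f} \in \cochainset{i}{\stdcomplex;G} \suchthat d_i\cochain{f} = \mathbbm{1}}$. For the inclusion $\coboundaryset{i}{\stdcomplex; G} \subseteq \cocycleset{i}{\stdcomplex; G}$ it suffices to show $d_i d_{i-1} \cochain{f} = \mathbbm{1}$ for every $\cochain{f} \in \cochainset{i-1}{\stdcomplex;G}$, since then every element $d_{i-1}\cochain{f}$ of $\coboundaryset{i}{\stdcomplex;G}$ lies in $\cocycleset{i}{\stdcomplex;G}$ (and also visibly lies in $\cochainset{i}{\stdcomplex;G}$, as the operators are defined with that codomain).

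For $i = 0$: let $\cochain{f} \in \cochainset{-1}{\stdcomplex;G}$, so $\cochain{f}(\emptyset) \in G$. For any edge $(u,v) \in \stdcomplex(1)$ we have $d_0 d_{-1}\cochain{f}(u,v) = \parens{d_{-1}\cochain{f}}(u)\parens{\parens{d_{-1}\cochain{f}}(v)}^{-1} = \cochain{f}(\emptyset)\parens{\cochain{f}(\emptyset)}^{-1} = \mathbbm{1}$, so $d_{-1}\cochain{f} \in \cocycleset{0}{\stdcomplex;G}$.

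For $i = 1$: let $\cochain{f} \in \cochainset{0}{\stdcomplex;G}$. For any triangle $(u,v,w) \in \stdcomplex(2)$,
\[
    d_1 d_0 \cochain{f}(u,v,w) = \parens{d_0\cochain{f}}(u,v)\parens{d_0\cochain{f}}(v,w)\parens{d_0\cochain{f}}(w,u) = \cochain{f}(u)\cochain{f}(v)^{-1}\cdot\cochain{f}(v)\cochain{f}(w)^{-1}\cdot\cochain{f}(w)\cochain{f}(u)^{-1} = \mathbbm{1},
\]
where the middle factors cancel in pairs. Hence $d_0\cochain{f} \in \cocycleset{1}{\stdcomplex;G}$. (One should also note $d_0\cochain{f} \in \cochainset{1}{\stdcomplex;G}$: indeed $\parens{d_0\cochain{f}}(v,u) = \cochain{f}(v)\cochain{f}(u)^{-1} = \parens{\cochain{f}(u)\cochain{f}(v)^{-1}}^{-1} = \parens{\parens{d_0\cochain{f}}(u,v)}^{-1}$, as required.) This establishes both inclusions for $i \in \set{0,1}$.
\end{proof}

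There is essentially no obstacle here: the only thing to be careful about is that $G$ is non-abelian, so the cancellation must be done in the correct left-to-right order, which is exactly why the operators $d_0$ and $d_1$ are defined with the particular orderings $\cochain{f}(u)\cochain{f}(v)^{-1}$ and $\cochain{f}(u,v)\cochain{f}(v,w)\cochain{f}(w,u)$ that make the telescoping work. No properties of the simplicial complex beyond the containment closure (so that the faces appearing in the formulas exist) are needed.
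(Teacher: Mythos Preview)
Your proof is correct. The paper states this as a \emph{Fact} without proof, so there is no argument to compare against; your verification that $d_i \circ d_{i-1} = \mathbbm{1}$ for $i \in \set{0,1}$ via direct telescoping is exactly the standard (and essentially only) way to check this.
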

Of particular interest to our case are coboundaries with coefficients in the symmetric group with $l$ elements that we denote $S_l$.

We also extend the norm defined in Definition~\ref{def:norm-on-faces} to a norm over the cochains:
\begin{definition}[Norm of a cochain]
    Let $\stdcomplex$ be a simplicial complex.
    For any $\stdcochain \in \cochainset{i}{\stdcomplex;G}$, define the following norm:
    \[
        \norm{\stdcochain}_{\stdcomplex} = \norm{\set{\stdface \in \stdcomplex(i) \suchthat \cochain{f}(\stdface) \ne 1}}
    \]
    When the complex is clear from context we will omit it from the notation.
\end{definition}
This norm also defines a natural distance function between any two cochains:
\begin{definition}[Distance between cochains]
    Let $\stdcochain_1, \stdcochain_2$ be two cochains in some simplicial complex $\stdcomplex$.
    Define the distance between $\stdcochain_1$ and $\stdcochain_2$ to be:
    \[
        \dist\parens{\stdcochain_1, \stdcochain_2} = \norm{\stdcochain_1\parens{\stdcochain_2}^{-1}}
    \]
\end{definition}
It is also natural, given a simplicial complex, to describe the simplicial complex that is constructed using only faces whose dimension is at most some $i$.
This is called the \emph{skeleton} of the simplicial complex and is formally defined below:
\begin{definition}[Skeleton]
    Let $\stdcomplex$ be a simplicial complex.
    Define its $i$-th skeleton to be the following simplicial complex:
    \begin{equation*}
        \set{\stdface \in \stdcomplex \suchthat \dimension{\stdface} \le i}
    \end{equation*}

\end{definition}
It would also be useful to define local views of faces in a simplicial complex.
We call these local views \emph{links} and think of them as the faces seen by a certain face.
\begin{definition}[Link]
    Let $\stdcomplex$ be a $d$-dimensional simplicial complex and let $\stdface \in \stdcomplex(i)$.
    Define the link of $\stdface$ in $\stdcomplex$ as the following $\parens{d-i}$-dimensional simplical complex:
    \[
        \stdcomplex_{\stdface} = \set{\genface \setminus \stdface \suchthat \stdface \subseteq \genface \text{ and } \stdface \in \stdcomplex}
    \]
\end{definition}
Note that the weight function of the faces in the link of a face $\stdface$ is strongly connected to the weight of $\stdface$ and the weight function of the original complex:
\begin{lemma}
    Let $\stdcomplex$ be a $d$-dimensional simplicial complex and let $\stdface$ be an $i$-dimensional face.
    The weight of a $j$-dimensional face in the link of $\stdface$ is:
    \[
        \weightcplx{\stdcomplex_\stdface}{\genface} = \frac{\weightcplx{\stdcomplex}{\genface \cup \stdface}}{\binom{\abs{\stdface}+\abs{\genface}}{\abs{\genface}}\weightcplx{\stdcomplex}{\stdface}} = \frac{\weightcplx{\stdcomplex}{\genface \cup \stdface}}{\binom{i+j+2}{j+1}\weightcplx{\stdcomplex}{\stdface}}
    \]
\end{lemma}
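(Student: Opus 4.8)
The plan is to compute both sides directly from the definition of the weight function, unwinding the combinatorial coefficients. First I would set up notation: let $\stdcomplex$ be $d$-dimensional, let $\stdface$ be an $i$-dimensional face (so $\abs{\stdface} = i+1$), and let $\genface$ be a $j$-dimensional face of the link $\stdcomplex_\stdface$ (so $\abs{\genface} = j+1$ and $\genface \cup \stdface \in \stdcomplex$ with $\genface \cap \stdface = \emptyset$). Note that $\stdcomplex_\stdface$ is a $(d-i)$-dimensional complex, hence its top faces have dimension $d-i$, i.e.\ cardinality $d-i$. A top face of $\stdcomplex_\stdface$ is exactly a set of the form $\genface' \setminus \stdface$ where $\genface' \in \stdcomplex(d)$ and $\stdface \subseteq \genface'$; this gives a bijection between $\stdcomplex_\stdface(d-i)$ and $\set{\genface' \in \stdcomplex(d) \suchthat \stdface \subseteq \genface'}$.

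Next I would write out $\weightcplx{\stdcomplex_\stdface}{\genface}$ using the definition of the weight function applied to the complex $\stdcomplex_\stdface$:
\[
    \weightcplx{\stdcomplex_\stdface}{\genface} = \frac{\abs{\set{\eta \in \stdcomplex_\stdface(d-i) \suchthat \genface \subseteq \eta}}}{\binom{d-i+1}{\abs{\genface}}\abs{\stdcomplex_\stdface(d-i)}}.
\]
Using the bijection above, the denominator's face count becomes $\abs{\set{\genface' \in \stdcomplex(d) \suchthat \stdface \subseteq \genface'}}$, and the numerator's face count becomes $\abs{\set{\genface' \in \stdcomplex(d) \suchthat \stdface \cup \genface \subseteq \genface'}}$, since $\genface \subseteq \genface' \setminus \stdface$ together with $\stdface \subseteq \genface'$ and $\stdface \cap \genface = \emptyset$ is equivalent to $\stdface \cup \genface \subseteq \genface'$. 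Then I would substitute the definition of $\weightcplx{\stdcomplex}{\stdface}$ and $\weightcplx{\stdcomplex}{\genface \cup \stdface}$ to express both those face counts in terms of weights in $\stdcomplex$: the count of top faces of $\stdcomplex$ containing $\stdface$ equals $\binom{d+1}{i+1}\abs{\stdcomplex(d)}\weightcplx{\stdcomplex}{\stdface}$, and the count of top faces containing $\genface \cup \stdface$ equals $\binom{d+1}{i+j+2}\abs{\stdcomplex(d)}\weightcplx{\stdcomplex}{\genface \cup \stdface}$. Dividing, the $\abs{\stdcomplex(d)}$ cancels and I am left with
\[
    \weightcplx{\stdcomplex_\stdface}{\genface} = \frac{\binom{d+1}{i+j+2}\weightcplx{\stdcomplex}{\genface \cup \stdface}}{\binom{d-i+1}{j+1}\binom{d+1}{i+1}\weightcplx{\stdcomplex}{\stdface}}.
\]

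The remaining work is a binomial-coefficient identity: I must check that $\binom{d+1}{i+j+2} / \left(\binom{d-i+1}{j+1}\binom{d+1}{i+1}\right) = 1 / \binom{i+j+2}{j+1}$, equivalently $\binom{d+1}{i+j+2}\binom{i+j+2}{j+1} = \binom{d+1}{i+1}\binom{d-i+1}{j+1}$. Both sides count the number of ways to choose an ordered pair of disjoint subsets, one of size $i+1$ and one of size $j+1$, from a ground set of size $d+1$: the left side picks the union of size $i+j+2$ first and then splits off the $(j+1)$-part, while the right side picks the $(i+1)$-part first and then the $(j+1)$-part from the remaining $d-i$... wait, from the remaining $d+1-(i+1) = d-i$ elements, so it should be $\binom{d-i}{j+1}$ — I would need to recheck the index, as the lemma writes $\binom{d-i+1}{j+1}$, which suggests the top dimension of the link is taken as $d-i$ meaning cardinality $d-i+1$; I would reconcile this by being careful that the link $\stdcomplex_\stdface$ has dimension $d - i - 1$ if one insists top faces have cardinality $d-i$, or dimension $d-i$ with cardinality $d-i+1$ if faces of $\stdcomplex_\stdface$ are $\genface' \setminus \stdface$ with $\abs{\genface'} = d+1$ giving $\abs{\genface' \setminus \stdface} = d - i$. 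The cleanest route, and the one I expect to be the only real subtlety, is to fix this bookkeeping convention once at the start and then the binomial identity is the standard subset-of-a-subset identity $\binom{n}{a+b}\binom{a+b}{b} = \binom{n}{a}\binom{n-a}{b}$ applied with $n = d+1$, $a = i+1$, $b = j+1$, after which the second equality in the lemma statement (replacing $\abs{\stdface}+\abs{\genface}$ by $i+j+2$ and so $\binom{\abs{\stdface}+\abs{\genface}}{\abs{\genface}}$ by $\binom{i+j+2}{j+1}$) is immediate from $\abs{\stdface} = i+1$, $\abs{\genface} = j+1$.
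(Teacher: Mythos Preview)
The paper states this lemma without proof, so there is nothing to compare against; your direct computation is the natural argument and is correct in outline: use the bijection between top faces of $X_\sigma$ and top faces of $X$ containing $\sigma$, rewrite both face counts in terms of $\weightcplx{\stdcomplex}{\sigma}$ and $\weightcplx{\stdcomplex}{\sigma\cup\tau}$, and finish with the subset-of-a-subset identity $\binom{n}{a+b}\binom{a+b}{b}=\binom{n}{a}\binom{n-a}{b}$ with $n=d+1$, $a=i+1$, $b=j+1$.

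The index wobble you flag is a genuine off-by-one in the paper, not in the lemma formula. The paper's link definition calls $X_\sigma$ a $(d-i)$-dimensional complex, but with $\abs{\sigma}=i+1$ and $\abs{\rho}=d+1$ for $\rho\in X(d)$, the top faces $\rho\setminus\sigma$ of $X_\sigma$ have cardinality $d-i$, hence dimension $d-i-1$. With $D=d-i-1$ the weight definition in the link gives the binomial $\binom{d-i}{j+1}$, and then the identity above yields exactly $1/\binom{i+j+2}{j+1}$. Your own opening line carries the same slip (``$(d-i)$-dimensional \dots\ i.e.\ cardinality $d-i$''); once you fix the link dimension to $d-i-1$ at the start, the rest of your computation goes through cleanly with no residual discrepancy.
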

We finish our presentation of simplicial complexes by defining isomorphic simplicial complexes:
\begin{definition}[Isomorphic simplicial complexes]
    Let $\stdcomplex, \complex{y}$ be two simplicial complexes.
    We say that $\stdcomplex$ is isomorphic to $\complex{y}$ and denote $\stdcomplex \cong \complex{y}$ if there exists an invertible function $f: \complex{y}\rightarrow\stdcomplex$ such that of every $\stdface, \genface \in \complex{y}$ it holds that:
    \[
        \stdface \subseteq \genface \Leftrightarrow f(\stdface) \subseteq f(\genface)
    \]
\end{definition}
\subsection{On Assignments and $l$-Assignments}\label{subsec:on-assignments-and-l-assignments}

We will now present the notions of assignments and $l$-assignments we will then move on to define list agreement expanders.

We will start by defining assignments and agreeing assignments.
Assignments are essentially a set of local functions from every $k$-dimensional face of the complex to $\set{0,1}$, while agreeing assignments can be thought of as ``snippets'' of some global function.
We will end the discussion of assignments by defining the distance between two assignments as the number of faces on which they differ.
\begin{definition}[Assignment]
    Define an assignment to the $k$-faces of a simplicial complex $\stdcomplex$ to be $\assignment{F} = \set{\assignment{F}^\stdface}_{\stdface \in \stdcomplex(k)}$ such that $\assignment{F}^\stdface: \stdface \rightarrow \set{0,1}$.
    We also denote the set of assignments by $\assignmentset{S}$.
\end{definition}
\begin{definition}[Agreeing assignment]
    Define the set of agreeing assignments to the $k$-faces to be:
    \begin{equation*}
        \assignmentset{A} = \set{\assignment{F} \suchthat \exists \cochain{f} \in \cochainset{0}{\stdcomplex} \forall \stdface \in \stdcomplex(k): \cochain{f}|_\stdface = \assignment{F}^\stdface}
    \end{equation*}
    We also say that $\assignment{F}$ agrees with $\cochain{f}$.
\end{definition}
\begin{definition}[Distance function for assignments]
    Define the distance between two \\ $k$-assignments as:
    \begin{equation*}
        \dist{(\assignment{F}, \assignment{G})} = \norm{\set{\stdface \in \stdcomplex(k) \suchthat \assignment{F}^\stdface \ne \assignment{G}^\stdface}}
    \end{equation*}
    In addition, given a set of assignments $\assignmentset{S}$ define the distance of an assignment $\assignment{F}$ from $\assignmentset{S}$ to be:
    \[
        \dist{\parens{\assignment{F}, \assignmentset{S}}} = \min_{\assignment{s} \in \assignmentset{s}}{\set{\dist{(\assignment{F}, \assignment{S})}}}
    \]
\end{definition}
We are now ready to define $l$-assignments and agreeing $l$-assignments.
$l$-assignments are $l$ parallel assignments, i.e.\ every face in the complex has $l$ local functions associated with it.
\begin{definition}[$l$-assignments]\label{def:l-assignment}
    Given a simplicial complex $\stdcomplex$ define a $k$-dimensional $l$-assignment to be:
    \begin{equation*}
        \lassignment{F} = \set{\lassignment{F}^\stdface_i}_{\stdface \in \stdcomplex(k), i \in [l]}
    \end{equation*}
    Such that:
    \begin{equation*}
        \lassignment{F}^\stdface_i:\stdface \rightarrow \set{0,1}
    \end{equation*}
    An assignment of $l$ local function to each $k$-face of $\stdcomplex$.
\end{definition}
We define the distance between two $l$-assignments as the number of local functions on which they differ (normalised by the weights of the faces and the length of the list).
\begin{definition}[Distance between $l$-assignments]
    Define the distance between two $k$-dimensional $l$-assignments as:
    \begin{equation*}
        \dist{(\lassignment{F}, \lassignment{G})} = \sum_{\stdface \in \stdcomplex(k)}{\weight{\stdface}\frac{\abs{\set{i \in [l] \suchthat \lassignment{F}_i^\stdface \ne \lassignment{G}_i^\stdface}}}{l}}
    \end{equation*}
    In addition, given a set of $l$-assignments $\lassignmentset{S}$ define the distance of an $l$-assignment $\lassignment{F}$ from $\lassignmentset{S}$ to be:
    \[
        \dist{\parens{\lassignment{F}, \lassignmentset{S}}} = \min_{\lassignment{s} \in \lassignmentset{s}}{\set{\dist{(\lassignment{F}, \lassignment{S})}}}
    \]
\end{definition}
The notion of agreement in this case is more complicated then in the non-list case.
In regular assignments an agreeing assignment is an assignment that is consistent with some global function.
In the $l$-assignment agreement case we are interested whether there are $l$ global functions such that the local assignments of each vertex are a list of ``snippets'' of the $l$ global functions.
\begin{definition}[Agreeing $l$-assignment]
    Define an agreeing $k$-dimensional $l$-assignments to be a $k$-dimensional $l$-assignment $\lassignment{F}$ such that there are $l$ cochains $\cochain{F}_1,\dots,\cochain{F}_l \in \cochainset{0}{\stdcomplex}$ and for every face $\stdface \in \stdcomplex(k)$ there exists a permutation $\pi_\stdface$ such that the assignment $\assignment{F}_i = \set{\lassignment{F}^\stdface_{\pi_\stdface(i)}}_{\stdface \in \stdcomplex(k)}$ agrees with $\cochain{F}_i$.
    Denote the set of agreeing assignments as $\lassignmentset{A}$.
\end{definition}
We are now ready to define list agreement expansion:
\begin{definition}[List agreement expander]\label{def:list-agreement-expansion}
    Let $\stdcomplex$ be a $d$-dimensional pure simplical complex.
    We say that $\stdcomplex$ is a $(\beta, l)$-agreement-expander if there is a probabilistic algorithm $A$ such that for every dimension $k$ and every $k$-dimensional $l$-assignment $\lassignment{F}$ it holds that:
    \begin{equation*}
        \frac{\pr{A^\lassignment{F} \text{ rejects}}}{\dist\parens{\lassignment{F}, \lassignmentset{A}}} \ge \beta
    \end{equation*}
\end{definition}
In this paper we will present a weaker notion of list agreement expander.
Specifically we are going to assume that the $l$-assignments are $2$-locally-different defined below:
\begin{definition}[Locally differing $l$-assignment]
    Let $\lassignment{f}$ be an $l$-assignment.
    We say that $\lassignment{f}$ is $2$-locally-differing if for every $i \ne j$, every $\stdface \in \stdcomplex(k)$ there exists $x^{\stdface,i,j}_{1} \ne x^{\stdface,i,j}_{2}$ such that:
    \[
        \lassignment{f}^{\stdface}_i(x^{\stdface,i,j}_{1}) \ne \lassignment{f}^{\stdface}_j(x^{\stdface,i,j}_{1}) \text{ and } \lassignment{f}^{\stdface}_i(x^{\stdface,i,j}_{2}) \ne \lassignment{f}^{\stdface}_j(x^{\stdface,i,j}_{2})
    \]
\end{definition}
Before we conclude this section it will be helpful to consider the following property of the distance between an $l$-assignment and the agreeing $l$-assignments:
\begin{lemma}\label{distance-decomposition}
    For every set of permutations $\set{\pi_\stdface}_{\stdface \in \stdcomplex(k)}$ it holds that:
    \begin{equation*}
        \dist{(\lassignment{F}, \lassignmentset{A})} \le \frac{1}{l}\sum_{i=1}^l{\dist\parens{\set{\lassignment{F}^\stdface_{\pi_\stdface(i)}}_{\stdface \in \stdcomplex(k)}, \assignmentset{A}}}
    \end{equation*}
\end{lemma}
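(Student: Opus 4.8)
The plan is to exhibit, for any fixed family of permutations $\set{\pi_\stdface}_{\stdface \in \stdcomplex(k)}$, a single agreeing $l$-assignment $\lassignment{G} \in \lassignmentset{A}$ whose distance from $\lassignment{F}$ is at most the right-hand side, and then invoke the definition of $\dist(\lassignment{F}, \lassignmentset{A})$ as a minimum over $\lassignmentset{A}$. The natural candidate is built slot-by-slot: for each $i \in [l]$, let $\assignment{G}_i \in \assignmentset{A}$ be an agreeing ordinary assignment achieving $\dist(\set{\lassignment{F}^\stdface_{\pi_\stdface(i)}}_{\stdface}, \assignmentset{A})$, say $\assignment{G}_i$ agrees with the cochain $\cochain{G}_i \in \cochainset{0}{\stdcomplex}$. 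Now define $\lassignment{G}$ by placing, on face $\stdface$ and slot $j$, the local function $\cochain{G}_{\pi_\stdface^{-1}(j)}|_\stdface$; equivalently $\lassignment{G}^\stdface_{\pi_\stdface(i)} = \cochain{G}_i|_\stdface$. By construction the permutation $\pi_\stdface$ witnesses that $\lassignment{G}$ is an agreeing $l$-assignment with global cochains $\cochain{G}_1,\dots,\cochain{G}_l$, so $\lassignment{G} \in \lassignmentset{A}$.

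The remaining step is to estimate $\dist(\lassignment{F}, \lassignment{G})$. Using the definition of distance between $l$-assignments,
\begin{equation*}
    \dist(\lassignment{F}, \lassignment{G}) = \sum_{\stdface \in \stdcomplex(k)} \weight{\stdface} \frac{\abs{\set{j \in [l] \suchthat \lassignment{F}^\stdface_j \ne \lassignment{G}^\stdface_j}}}{l}.
\end{equation*}
Reindexing the inner set through the bijection $j = \pi_\stdface(i)$ — which is legitimate since $\pi_\stdface$ is a permutation of $[l]$ — the condition $\lassignment{F}^\stdface_{\pi_\stdface(i)} \ne \lassignment{G}^\stdface_{\pi_\stdface(i)} = \cochain{G}_i|_\stdface$ is exactly the condition that defines disagreement between the ordinary assignment $\set{\lassignment{F}^\stdface_{\pi_\stdface(i)}}_{\stdface}$ and $\assignment{G}_i$ on the face $\stdface$. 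Swapping the order of summation over $\stdface$ and $i$ then yields
\begin{equation*}
    \dist(\lassignment{F}, \lassignment{G}) = \frac{1}{l}\sum_{i=1}^{l}\dist\parens{\set{\lassignment{F}^\stdface_{\pi_\stdface(i)}}_{\stdface \in \stdcomplex(k)}, \assignment{G}_i} = \frac{1}{l}\sum_{i=1}^{l}\dist\parens{\set{\lassignment{F}^\stdface_{\pi_\stdface(i)}}_{\stdface \in \stdcomplex(k)}, \assignmentset{A}},
\end{equation*}
where the last equality is by the choice of $\assignment{G}_i$ as a minimizer. Since $\dist(\lassignment{F}, \lassignmentset{A}) \le \dist(\lassignment{F}, \lassignment{G})$, this proves the lemma, in fact with equality in the displayed bound for this particular $\lassignment{G}$.

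I expect no serious obstacle here; the only point requiring a moment's care is the bookkeeping in the reindexing $j = \pi_\stdface(i)$, making sure that the per-face permutation used to build $\lassignment{G}$ is consistently the same one appearing in the statement, so that the disagreement sets genuinely match up face by face. Everything else is a routine interchange of finite sums and the definition of distance to a set as an infimum.
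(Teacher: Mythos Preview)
Your proposal is correct and follows essentially the same approach as the paper's proof: both construct the agreeing $l$-assignment $\lassignment{G}$ by choosing, for each $i$, an optimal agreeing assignment $\assignment{G}_i$ for $\set{\lassignment{F}^\stdface_{\pi_\stdface(i)}}_\stdface$ and setting $\lassignment{G}^\stdface_j = \assignment{G}_{\pi_\stdface^{-1}(j)}^\stdface$, then computing $\dist(\lassignment{F},\lassignment{G})$ via the reindexing $j=\pi_\stdface(i)$ and a swap of summation order. The paper's write-up is slightly more verbose in spelling out the intermediate indicator-function steps, but the argument is the same.
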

\begin{proof}
    Let $\set{\pi_\stdface}_{\stdface \in \stdcomplex(k)}$ be a set of permutations.
    Consider the assignments $\assignment{F}_i = \set{\lassignment{F}^\stdface_{\pi_\stdface(i)}}_{\stdface \in \stdcomplex(k)}$.
    For every $\assignment{F}_i$ let $\assignment{G}_i$ be an agreeing $k$-assignment such that $\dist\parens{\assignment{F}_i, \assignment{G}_i} = \dist\parens{\assignment{F}_i, \assignmentset{A}}$.
    Consider the $l$-assignment $\lassignment{G}=\set{\lassignment{G}^\stdface_i}_{\stdface \in \stdcomplex(k)}$ such that $\lassignment{G}^\stdface_{i} = \assignment{G}_{\pi_\stdface^{-1}(i)}^\stdface$.
    Note that $\lassignment{G}$ is an agreeing $k$-dimensional $l$-assignment since $\set{\lassignment{G}^\stdface_{\pi_\stdface(i)}}$ is an agreeing $k$-dimensional assignment for all $i$.
    We finish the proof by noting that:
    \begin{align*}
        \dist{(\lassignment{F}, \lassignmentset{A})}
        &\le \dist{(\lassignment{F}, \lassignment{G})}
        = \sum_{\stdface \in \stdcomplex(k)}{w(\stdface)\frac{\abs{\set{i \in [l] \suchthat \lassignment{F}_i^\stdface \ne \lassignment{G}_i^\stdface}}}{l}} \\
        & = \frac{1}{l}\sum_{\stdface \in \stdcomplex(k)}{w(\stdface)\abs{\set{i \in [l] \suchthat \lassignment{F}_{\pi_\stdface(i)}^\stdface \ne \lassignment{G}_{\pi_\stdface(i)}^\stdface}}} \\
        & = \frac{1}{l}\sum_{\stdface \in \stdcomplex(k)}{w(\stdface)\sum_{i=1}^l{\mathbbm{1}_{\set{\lassignment{F}_{\pi_\stdface(i)}^\stdface \ne \lassignment{G}_{\pi_\stdface(i)}^\stdface}}(i,\stdface)}} \\
        & = \frac{1}{l}\sum_{i=1}^l{\sum_{\stdface \in \stdcomplex(k)}{w(\stdface)\mathbbm{1}_{\set{\lassignment{F}_{\pi_\stdface(i)}^\stdface \ne \lassignment{G}_{\pi_\stdface(i)}^\stdface}}(i,\stdface)}} \\
        & = \frac{1}{l}\sum_{i=1}^l{\sum_{\stdface \in \set{\genface \in \stdcomplex(k) \suchthat \lassignment{F}_{\pi_\genface(i)}^\genface \ne \lassignment{G}_{\pi_\genface(i)}^\genface}}{w(\stdface)}} \\
        & = \frac{1}{l}\sum_{i=1}^l{\norm{\set{\stdface \in \stdcomplex(k) \suchthat \lassignment{F}_{\pi_\stdface(i)}^\stdface \ne \lassignment{G}_{\pi_\stdface(i)}^\stdface}}} \\
        & = \frac{1}{l}\sum_{i=1}^l{\norm{\set{\stdface \in \stdcomplex(k) \suchthat \assignment{F}_{i}^\stdface \ne \assignment{G}_{i}^\stdface}}}
        = \frac{1}{l}\sum_{i=1}^l{\dist\parens{\assignment{F}_i, \assignment{G}_i}}
        = \frac{1}{l}\sum_{i=1}^l{\dist\parens{\assignment{F}_i, \assignmentset{A}}}
    \end{align*}
\end{proof}
\begin{corollary}
    Let $\lassignment{F}$ be an $l$-assignment and let $\tilde{\lassignment{F}}$ be an agreeing $l$-assignment such that $\dist{(\lassignment{F}, \lassignmentset{A})} = \dist{(\lassignment{F}, \tilde{\lassignment{F}})}$.
    In addition let $\set{\pi_\stdface}_{\stdface \in \stdcomplex(k)}$ be a permutation such that for every $i$: $\set{\tilde{\lassignment{F}}^\stdface_{\pi_\stdface(i)}}_{\stdface \in \stdcomplex}$ is an agreeing assignment then:
    \begin{equation*}
        \dist{(\lassignment{F}, \lassignmentset{A})} = \frac{1}{l}\sum_{i=1}^l{\dist\parens{\set{\lassignment{F}^\stdface_{\pi_\stdface(i)}}_{\stdface \in \stdcomplex(k)}, \assignmentset{A}}}
    \end{equation*}
\end{corollary}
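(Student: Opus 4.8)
The plan is to bootstrap from Lemma~\ref{distance-decomposition}, which already supplies the inequality $\le$ for \emph{every} family of per-face permutations, and to prove the matching reverse inequality for the particular family $\set{\pi_\stdface}_{\stdface \in \stdcomplex(k)}$ coming from the closest agreeing $l$-assignment $\tilde{\lassignment{F}}$. Applying Lemma~\ref{distance-decomposition} to $\set{\pi_\stdface}$ gives
\[
    \dist{(\lassignment{F}, \lassignmentset{A})} \le \frac{1}{l}\sum_{i=1}^l{\dist\parens{\set{\lassignment{F}^\stdface_{\pi_\stdface(i)}}_{\stdface \in \stdcomplex(k)}, \assignmentset{A}}},
\]
so it remains only to show that the right-hand side is itself at most $\dist{(\lassignment{F}, \lassignmentset{A})}$, which will pin the chain to equality.

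For the reverse direction I would first bound each summand using $\tilde{\lassignment{F}}$ as a witness. By hypothesis, for every $i$ the assignment $\set{\tilde{\lassignment{F}}^\stdface_{\pi_\stdface(i)}}_{\stdface \in \stdcomplex(k)}$ belongs to $\assignmentset{A}$, so
\[
    \dist\parens{\set{\lassignment{F}^\stdface_{\pi_\stdface(i)}}_{\stdface}, \assignmentset{A}}
    \le \dist\parens{\set{\lassignment{F}^\stdface_{\pi_\stdface(i)}}_{\stdface}, \set{\tilde{\lassignment{F}}^\stdface_{\pi_\stdface(i)}}_{\stdface}}
    = \norm{\set{\stdface \in \stdcomplex(k) \suchthat \lassignment{F}^\stdface_{\pi_\stdface(i)} \ne \tilde{\lassignment{F}}^\stdface_{\pi_\stdface(i)}}}.
\]
Then I would carry out the same bookkeeping as in the proof of Lemma~\ref{distance-decomposition}, but in the opposite order: summing over $i$, exchanging the order of summation, and regrouping by faces rewrites $\frac{1}{l}\sum_{i=1}^l \norm{\set{\stdface \suchthat \lassignment{F}^\stdface_{\pi_\stdface(i)} \ne \tilde{\lassignment{F}}^\stdface_{\pi_\stdface(i)}}}$ as $\sum_{\stdface \in \stdcomplex(k)} \weight{\stdface}\, l^{-1}\abs{\set{i \in [l] \suchthat \lassignment{F}^\stdface_{\pi_\stdface(i)} \ne \tilde{\lassignment{F}}^\stdface_{\pi_\stdface(i)}}}$. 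The only place the structure of $\pi_\stdface$ enters is the observation that, for each fixed $\stdface$, since $\pi_\stdface$ is a bijection of $[l]$ the substitution $j = \pi_\stdface(i)$ gives $\abs{\set{i \suchthat \lassignment{F}^\stdface_{\pi_\stdface(i)} \ne \tilde{\lassignment{F}}^\stdface_{\pi_\stdface(i)}}} = \abs{\set{j \suchthat \lassignment{F}^\stdface_j \ne \tilde{\lassignment{F}}^\stdface_j}}$. Hence the whole expression equals $\dist{(\lassignment{F}, \tilde{\lassignment{F}})}$, which by the choice of $\tilde{\lassignment{F}}$ equals $\dist{(\lassignment{F}, \lassignmentset{A})}$; chaining the three displays forces equality throughout, giving the corollary.

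I do not anticipate a genuine obstacle: the statement is really just the assertion that Lemma~\ref{distance-decomposition}'s bound is tight at the optimal choice of permutation family, and the argument is the reverse of that lemma's computation combined with the trivial invariance of disagreement counts under a per-face relabelling. The one point to be slightly careful about is that the triangle-type step $\dist(\cdot, \assignmentset{A}) \le \dist(\cdot, \set{\tilde{\lassignment{F}}^\stdface_{\pi_\stdface(i)}}_{\stdface})$ must be applied with a genuine member of $\assignmentset{A}$ — which is precisely what the hypothesis on $\set{\pi_\stdface}$ guarantees — so no additional expansion or structural input is needed.
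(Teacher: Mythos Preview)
Your proposal is correct and is essentially the same argument as the paper's: both obtain equality by sandwiching the quantity $\frac{1}{l}\sum_i \dist(\{\lassignment{F}^\stdface_{\pi_\stdface(i)}\}_\stdface,\assignmentset{A})$ between $\dist(\lassignment{F},\lassignmentset{A})$ (from Lemma~\ref{distance-decomposition}) and $\frac{1}{l}\sum_i \dist(\{\lassignment{F}^\stdface_{\pi_\stdface(i)}\}_\stdface,\{\tilde{\lassignment{F}}^\stdface_{\pi_\stdface(i)}\}_\stdface)=\dist(\lassignment{F},\tilde{\lassignment{F}})$, then reusing the index-swap bookkeeping from the lemma. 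Your write-up is in fact more explicit than the paper's, which compresses the reverse inequality into ``note that $\lassignment{G}=\tilde{\lassignment{F}}$ \dots\ from here the proof follows the proof of Lemma~\ref{distance-decomposition}.''
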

\begin{proof}
    Define $\lassignment{G}$ as in Lemma~\ref{distance-decomposition} and note that $\lassignment{G}=\tilde{\lassignment{F}}$.
    Also note that \\
    $\dist{(\lassignment{F}, \lassignmentset{A})} = \dist{(\lassignment{F}, \tilde{\lassignment{F}})} = \dist{(\lassignment{F}, \lassignment{G})}$.
    From here the proof follows the proof of Lemma~\ref{distance-decomposition}.
\end{proof}
\subsection{Coboundary Expansion}\label{subsec:co-boundary-expansion}
The first form of high dimensional expansion we will present is \emph{coboundary expansion}.
Coboundary expansion was first defined by Linial and Meshulam~\cite{linial2006homological} and independently by Gomov~\cite{gromov2010singularities}.
This form of expansion generalizes the notion of $1$-dimensional expansion naturally.
Before we define the coboundary expansion to higher dimensions, we will re-define the $0$-dimensional expansion in the language that will allow for easier generalization.
Consider the standard definition of Cheeger's constant:
\begin{definition}[Cheeger's constant]
    Let $X=(\stdcomplex,E)$ be a graph, define the Cheeger constant of a graph to be:
    \[
        \min_{A \ne V,\emptyset}{\set{\frac{\abs{E(A,V \setminus A)}}{\min{\abs{A}, \abs{V \setminus A}}}}}
    \]
\end{definition}
In our redefinition, instead of subsets of the vertices we will consider $0$-dimensional cochains with coefficients in $\mathbbm{F}_2$.
We will represent a set of vertices $A \subseteq V$ be the following cochain:
\[
    \stdcochain_A(\stdvertex) = \begin{cases}
                                    1    &\stdvertex \in A\\
                                    0    &\stdvertex \notin A
    \end{cases}
\]
Consider how, in our new setting, we can describe an edge between $A$ and $\bar{A}$ and note that the set of edges $\set{\stdvertex, \genvertex}$ that are between $A$ and $\bar{A}$ are exactly the edges for which $d_0\stdcochain_A(\stdvertex,\genvertex) \ne 0$.
Lastly, consider the following reformulation of the Cheeger's constant:
\[
    \min_{\cochain{f} \in \cochainset{0}{\stdcomplex} \setminus \coboundaryset{0}{\stdcomplex}}{\set{\frac{\norm{d_0 \cochain{f}}}{\dist\parens{\cochain{f}, \coboundaryset{0}{\stdcomplex}}}}}
\]
Note that $\coboundaryset{0}{\stdcomplex}$ contains the cochains that correspond to $V$ and $\emptyset$.
Now, consider the following generalization of that reformulation of Cheeger's constant:
\begin{definition}[$i$-dimensional expansion]
    Define the $i$\textsuperscript{th} dimensional Cheeger constant to be:
    \[
        h_i(\stdcomplex;G)= \min_{\cochain{f} \in \cochainset{i}{\stdcomplex;G} \setminus \coboundaryset{i}{\stdcomplex;G}}{\set{\frac{\norm{d_i \cochain{f}}}{\dist\parens{\cochain{f}, \coboundaryset{i}{\stdcomplex; G}}}}}
    \]
    If $h_i(\stdcomplex; G) \ge \epsilon$ we say that the $i$\textsuperscript{th} dimension of $\stdcomplex$ $\epsilon$-expands with $G$-coefficients.
\end{definition}
And the following generalization of expansion:
\begin{definition}[Coboundary expansion]
    Let $\stdcomplex$ be a $d$-dimensional simplicial complex.
    We say that $\stdcomplex$ is an $\epsilon$-coboundary expander with coefficients in $G$ if for every face $\stdface \in \stdcomplex$ with dimension smaller than $d-2$ it holds that $h_0(\stdcomplex_\stdface; G) \ge \epsilon$ and $h_1(\stdcomplex_\stdface; G) \ge \epsilon$.
\end{definition}
In~\cite{kaufman2021unique} Kaufman and Mass showed how to construct coboundary expanders independently of the underlying group.
\subsection{Covers of Simplicial Complexes}\label{subsec:covers-of-simplicial-complexes}
We will now move on to formally present the concept of covers of simplicial complexes and their connection to the cocycles of the simplicial complex:
\begin{definition}[Cover space]
    Let $\stdcomplex, \complex{y}$ be two $d$-dimensional simplicial complexes.
    We say that $\parens{\complex{y}, f_\complex{y}}$ $l$-fold evenly covers $\stdcomplex$ if $f_\complex{y}$ is a surjective map from $\complex{y}$ to $\stdcomplex$ such that:
    \begin{itemize}
        \item $f_\complex{y}$ is a translation function: for every $\stdface \in \complex{y}$ it holds that $\abs{\stdface} = \abs{f_\complex{Y}{(\stdface)}}$ and \\ $\stdface \subseteq \genface \Leftrightarrow f_\complex{y}(\stdface) \subseteq f_\complex{y}(\genface)$
        \item Locally $\stdcomplex$ and $\complex{y}$ look the same: for every face $\stdface \in \complex{y}$, $f_\complex{y}$ is an isomorphism between $\set{\genface \in \complex{y} \suchthat \stdface \subseteq \genface}$ and $\set{\genface \in \stdcomplex \suchthat f_\complex{y}(\stdface) \subseteq \genface}$.
        \item Every non-empty face of $\stdcomplex$ is covered by exactly $l$-faces from $\complex{y}$: $\forall \stdface \in \stdcomplex \setminus \set{\emptyset}: \abs{f_\complex{y}^{-1}(\stdface)} = l$
    \end{itemize}
    We will refer to $f_\complex{y}$ as the covering map from $\complex{y}$ to $\stdcomplex$.
    In addition we are going to say that $\complex{y}$ is an $l$-cover of $\stdcomplex$ if there exists a covering map from $\complex{y}$ to $\stdcomplex$.
    Lastly we would sometimes refer to a cover as defined here as a \emph{genuine} cover (as opposed to a near cover, defined below).
\end{definition}
\begin{definition}[Lift]
    Let $\stdcomplex$ be a simplicial complex, $\complex{Y}$ be a cover of $\stdcomplex$ with the covering map $f_\complex{y}$.
    Also let $\gamma = \parens{\gamma_1, \cdots, \gamma_n}$ be a path in $\stdcomplex$.
    A lift of $\gamma$ to $\complex{y}$ is a path $\gamma' = \parens{\gamma_1', \cdots, \gamma_n'}$ in $\complex{y}$ such that for all $i$ it holds that $f_\complex{y}(\gamma_i') = \gamma_i$.
\end{definition}
We also adopt the definition of near covers defined by~\cite{dinur2019near}:
\begin{definition}[Near cover]
    Let $\stdcomplex$ be a simplicial complex and let $\complex{y}$ be another simplicial complex.
    We say that $\complex{y}$ is an $l$ near cover of $\stdcomplex$ if there exists a function $f_\complex{y}:\complex{y}\rightarrow\stdcomplex$ such that:
    \begin{itemize}
        \item For each vertex $\stdvertex \in \stdcomplex(0)$ it holds that $f_\complex{y}^{-1}$ can be identified with $\sparens{l}$. 
        We will therefore use the notation $\sparens{u,i}$ to denote the vertex in $f_\complex{y}^{-1}\sparens{\genvertex}$ that corresponds to $i$.
        \item For every edge $(\stdvertex, \genvertex) \in \stdcomplex(1)$ there exists $\pi \in S_l$ such that if $f_\complex{y}(\sparens{\stdvertex, i}, \sparens{\genvertex, j}) = (\stdvertex, \genvertex)$ then $i=\pi\parens{j}$.
    \end{itemize}
    Where $S_l$ is the symmetric group of order $l$.
\end{definition}
Note that the difference between near covers and genuine covers is that in near covers the faces of dimension larger or equal to $2$ might not be properly covered.
Consider, for example, a complex that contains a single triangle $T=\set{\set{\stdvertex_1, \stdvertex_2, \stdvertex_3}}$ and consider the near cover of $T$ whose $1$-dimensional faces are (where $G=\mathbbm{F}_2$):
\[
    \complex{y}(1) = \set{\substack{\set{[\stdvertex_1,0], [\stdvertex_2,0]},\set{[\stdvertex_2,0], [\stdvertex_3,1]},\set{[\stdvertex_3,1], [\stdvertex_1,1]},\\ \set{[\stdvertex_1,1], [\stdvertex_2,1]},\set{[\stdvertex_2,1], [\stdvertex_3,0]},\set{[\stdvertex_3,0], [\stdvertex_1,0]}}}
\]
Note that it is indeed a near cover since every vertex $\stdvertex_i$ is covered by $[\stdvertex_1,0], [\stdvertex_1,1]$ which can be identified with $0$ and $1$ respectively.
In addition, since $\mathbbm{F}_2$ is the symmetric group the second condition holds trivially.
Also note that $\complex{y}$ is not a cover of $\complex{t}$ since the second and third conditions of being a cover fail:
The first condition fails for any vertex in the complex (since every vertex is a member of the triangle therefore $\set{\genface \in \stdcomplex \suchthat f_\complex{y}(\stdface) \subseteq \genface}$ contains a set of size 3 while $\set{\genface \in \complex{y} \suchthat \stdface \subseteq \genface}$ contains no such sets\footnote{$\complex{y}$'s $1$-dimensional faces form a cycle of length 6 and therefore no triangle can be formed using the edges of $\complex{y}$.}).
Let us now note that every $1$-dimensional cochain in $\stdcomplex$ implies a near cover by the following:
\begin{definition}
    Let $G$ be a group acting on the left of a set $S$ and let $\stdcochain \in \cochainset{1}{\stdcomplex;G}$.
    Define the complex $\complex{y}_\stdcochain$ as the complex whose $0$-dimensional faces are $\complex{y}_\stdcochain(0) = \set{[\stdvertex, s] \suchthat \stdvertex \in \stdcomplex \text{ and } s \in S}$ and its higher dimensional faces are:
    \[
        \complex{y}_\stdcochain(i) = \set{\set{[\stdvertex_0, s_0], \dots [\stdvertex_i, s_i]} \suchthat \set{\stdvertex_0,\dots,\stdvertex_i} \in \stdcomplex \text{ and } \forall i,j: s_i=\stdcochain(\stdvertex_i, \stdvertex_j)s_j}
    \]
\end{definition}
Surowski showed in~\cite[Proposition 3.2]{surowski1984covers} a characterization of when $\complex{y}_\stdcochain$ is a genuine cover of $\stdcomplex$.
Specifically:
\begin{lemma}[Proposition 3.2 in~\cite{surowski1984covers}, rephrased]
    $\complex{y}_\stdcochain$ is a genuine cover of $\stdcomplex$ with the covering map $f_{\complex{y}_\stdcochain}(\set{[\stdvertex_0, s_0], \dots [\stdvertex_i, s_i]}) = \set{\stdvertex_0,\dots,\stdvertex_i}$ iff $\stdcochain$ is a cocycle.
\end{lemma}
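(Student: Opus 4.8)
The plan is to reduce both implications to a single structural fact about the construction of $\complex{y}_\stdcochain$. Given a face $\genface = \set{\stdvertex_0,\dots,\stdvertex_i} \in \stdcomplex$ and a label $s_0 \in S$ on $\stdvertex_0$, the defining relation $s_p = \stdcochain(\stdvertex_p,\stdvertex_q)s_q$ forces $s_j = \stdcochain(\stdvertex_j,\stdvertex_0)s_0$ for every $j$; hence there is \emph{at most one} face of $\complex{y}_\stdcochain$ lying over $\genface$ with a prescribed label on $\stdvertex_0$, and this candidate face actually belongs to $\complex{y}_\stdcochain$ iff the forced labels satisfy \emph{every} relation $s_p = \stdcochain(\stdvertex_p,\stdvertex_q)s_q$, not just the ones involving $\stdvertex_0$. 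Since $\stdcochain(\stdvertex_p,\stdvertex_q)\stdcochain(\stdvertex_q,\stdvertex_0) = \stdcochain(\stdvertex_p,\stdvertex_0)$ precisely when $d_1\stdcochain(\stdvertex_p,\stdvertex_q,\stdvertex_0) = \mathbbm{1}$ (using the antisymmetry $\stdcochain(u,v) = \stdcochain(v,u)^{-1}$ of $1$-cochains), and since $\set{\stdvertex_0,\stdvertex_p,\stdvertex_q}$ is a $2$-subface of $\genface$, this local consistency is controlled entirely by whether $d_1\stdcochain$ vanishes on the $2$-faces of $\stdcomplex$.

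\textbf{Cocycle $\Rightarrow$ genuine cover.} Suppose $\stdcochain \in \cocycleset{1}{\stdcomplex;G}$. By the observation, for every $\genface = \set{\stdvertex_0,\dots,\stdvertex_i}$ and every $s_0 \in S$ the labels $s_j := \stdcochain(\stdvertex_j,\stdvertex_0)s_0$ define a genuine face of $\complex{y}_\stdcochain$ over $\genface$, and every face over $\genface$ arises in this way for a unique $s_0$. The cover axioms can then be read off: $f_{\complex{y}_\stdcochain}$ is dimension- and inclusion-preserving by construction and surjective since each $\genface$ has a lift; each non-empty $\genface$ is covered by exactly the $\abs{S} = l$ lifts indexed by $s_0 \in S$, which are pairwise distinct since they differ already on the vertex over $\stdvertex_0$; and for the local-isomorphism axiom at a face $\stdface$, any $\genface \supseteq f_{\complex{y}_\stdcochain}(\stdface)$ in $\stdcomplex$ has a unique lift extending $\stdface$ (the labels on the new vertices being forced, and consistent by the cocycle property), so $f_{\complex{y}_\stdcochain}$ restricts to an inclusion-preserving bijection between the set of faces containing $\stdface$ and the set of faces containing $f_{\complex{y}_\stdcochain}(\stdface)$, i.e.\ an isomorphism.

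\textbf{Genuine cover $\Rightarrow$ cocycle.} Suppose $\complex{y}_\stdcochain$ is a genuine cover via the stated map and fix a $2$-face $\set{\stdvertex_0,\stdvertex_1,\stdvertex_2} \in \stdcomplex(2)$ and an arbitrary $s \in S$. Since $[\stdvertex_0,s] \in \complex{y}_\stdcochain(0)$, the local-isomorphism axiom sends the set of faces containing $[\stdvertex_0,s]$ isomorphically onto the set of faces containing $\stdvertex_0$, so there is a $2$-face $\set{[\stdvertex_0,s],[\stdvertex_1,t_1],[\stdvertex_2,t_2]} \in \complex{y}_\stdcochain(2)$ over $\set{\stdvertex_0,\stdvertex_1,\stdvertex_2}$. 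Chasing the defining relations around the triangle yields $s = \stdcochain(\stdvertex_0,\stdvertex_1)t_1$, $t_1 = \stdcochain(\stdvertex_1,\stdvertex_2)t_2$, and $t_2 = \stdcochain(\stdvertex_2,\stdvertex_0)s$, so $\stdcochain(\stdvertex_0,\stdvertex_1)\stdcochain(\stdvertex_1,\stdvertex_2)\stdcochain(\stdvertex_2,\stdvertex_0)\cdot s = s$. As $s$ was arbitrary, $d_1\stdcochain(\stdvertex_0,\stdvertex_1,\stdvertex_2)$ fixes every point of $S$; since the $G$-action on $S$ is faithful (the regular action $S=G$ in Surowski's original, and $S = [l]$ with $G \le S_l$ in the instances we use), this forces $d_1\stdcochain(\stdvertex_0,\stdvertex_1,\stdvertex_2) = \mathbbm{1}$. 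As this holds on every $2$-face, $\stdcochain \in \cocycleset{1}{\stdcomplex;G}$.

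\textbf{Where the work is.} The substantive point is the consistency step of the structural observation: recognizing that the forced labels automatically respect the relations through $\stdvertex_0$ and that the \emph{only} further obstructions are the cocycle equations on the triangles $\set{\stdvertex_0,\stdvertex_p,\stdvertex_q}$ inside a face. Once that is in hand, the rest of the forward direction is matching the unique-lift property against the clauses in the definition of a cover, and the reverse direction is the short relation-chase above plus faithfulness of the action for the final cancellation.
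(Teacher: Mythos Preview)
The paper does not actually supply a proof of this lemma: it is stated as a rephrasing of Proposition~3.2 in Surowski's paper and is simply cited, with no argument given in the text. So there is no ``paper's own proof'' to compare against.

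Your argument is correct and is the standard one. The structural observation that once a single label $s_0$ is fixed the remaining labels $s_j = \stdcochain(\stdvertex_j,\stdvertex_0)s_0$ are forced, and that consistency of the full set of relations $s_p = \stdcochain(\stdvertex_p,\stdvertex_q)s_q$ reduces precisely to $d_1\stdcochain$ vanishing on the triangles $\set{\stdvertex_0,\stdvertex_p,\stdvertex_q}$ contained in the face, is exactly the right engine for both directions. The verification of the three clauses in the cover definition from the unique-lift property is routine, and the triangle-chase in the reverse direction is the expected argument.

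One small point worth flagging explicitly (which you already handled in passing): the reverse direction yields only that $d_1\stdcochain(\stdvertex_0,\stdvertex_1,\stdvertex_2)$ acts trivially on $S$, and concluding $d_1\stdcochain = \mathbbm{1}$ in $G$ requires the action to be faithful. The paper's general Definition preceding the lemma does not impose faithfulness, so strictly speaking the ``only if'' direction as stated needs that hypothesis. You are right that in every instance the paper actually uses ($G = S_l$ acting on $[l]$, or the regular action) this holds, so the issue is cosmetic rather than substantive.
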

\subsection{Agreement Expansion}\label{subsec:agreement-expansion}
Another type of high dimensional expansion is agreement expansion.
This type of expansion is concerned with the relation between agreement of local assignments and the distance of an assignment from agreeing.
It was first introduced by Dinur and Kaufman in~\cite{dinur2017high}.
In this subsection we will introduce the concept of agreement testing which is a crucial part of our construction:
\begin{definition}[Agreement expansion in the $i$-th dimension]
    Let $\stdcomplex$ be a $d$-dimensional simplicial complex and let $\distribution{D}$ be a distribution over pairs of $i$ dimensional faces that intersect each other on $\intersectionsize$ vertices\footnote{$p$ may be dependent on $i$.}.
    Define:
    \[
        a_{i,\intersectionsize}(\stdcomplex, \distribution{D}) = \min_{\assignment{a} \in \assignmentset{S}}{\set{\frac{\prex{(\stdface, \genface) \sim \distribution{D}}{\assignment{a}^\stdface|_{\stdface \cap \genface} = \assignment{a}^\genface|_{\stdface \cap \genface}}}{\dist{\parens{\assignment{a}, \assignmentset{a}}}}}}
    \]
\end{definition}
In addition let us define an agreement expander:
\begin{definition}[Agreement expander]
    Let $\stdcomplex$ be a $d$-dimensional simplicial complex and let $\distribution{D}$ be a distribution.
    We say that $\stdcomplex$ is an $\agreementexpansionconst$-agreement-expander if:
    \[
        \forall 0 \le i \le d: a_{i,\intersectionsize}(\stdcomplex, \distribution{D}) \ge \agreementexpansionconst \parens{1-\frac{\intersectionsize}{i}}
    \]
\end{definition}
Note that the dependency on the $\parens{1-\frac{i}{\intersectionsize}}$ stems from the second eigenvalue of the random walk that walks between two $i$ dimensional faces via a $i+\intersectionsize$ dimensional face.

In this paper we will be interested in a special case of agreement expander, specifically $1$-up agreement expander.
\begin{definition}[$1$-up agreement expander]
    Define the distribution $\distribution{D}_{\uparrow}$ as the distribution that, in order to sample two $i$-dimensional faces, samples an $(i+1)$-dimensional face and then sample two of its $i$-dimensional sub-faces.
    Note that this distribution guarantees an intersection of size $i-1$.
    We say that a simplicial complex is a $1$-up agreement expansion if it is an agreement expander with the distribution $\distribution{D}_{\uparrow}$.
\end{definition}

\section{The Representation Complex}\label{sec:the-representation-complex}

In this section we are going to present a new complex that represents the agreement test over the complex we are interested in.
We will do that by constructing an edge for each of the possible choices of the 1-up agreement test.
We will add higher dimensional faces so that the norm of the faces in the representation complex would correspond with the weight of the faces they represent.
Then we will analyze the structure of this new complex.
We finish the discussion of the representation complex by discussing its expansion properties.

\begin{definition}[Representation Function]
    Define the representation function of a set of $k$-faces to be $R:P(\stdcomplex(k)) \rightarrow P(\stdcomplex(0))$ to be $R(s) = \bigcup_{s' \in s}{s'}$.
\end{definition}
\begin{definition}[Representation Complex]\label{the-representation-complex}
    Given a simplicial complex $\stdcomplex$ define the representation complex of $\stdcomplex$ to be $\repcplx{\stdcomplex}$ such that:
    \begin{itemize}
        \item $\repcplx{\stdcomplex}(-1) = \set{\emptyset}$
        \item $\repcplx{\stdcomplex}(0) = \set{\set{\stdface} \suchthat \stdface \in \stdcomplex(k)}$
        \item $\forall 1 \le i \le d-k: \repcplx{\stdcomplex}(i) = \set{\stdface \in \binom{\stdcomplex(k)}{i+1} \suchthat R(\stdface) \in \stdcomplex(i+k) \text{ and } \abs{\bigcap_{\stdvertex \in \stdface}{\stdvertex}} = k}$
    \end{itemize}
\end{definition}
\begin{note*}
    For clarity of notation we will treat $\repcplx{\stdcomplex}(0)$ as if it equals $\stdcomplex(k)$ (i.e.\ we will consider $\set{\stdface}$ and $\stdface$ to be equivalent when discussing the representation complex's $0$-dimensional faces).
    Moreover, we will sometimes treat $R$ as a function whose origin is the faces in $\repcplx[k]{\stdcomplex}$ (which are subsets of $\stdcomplex(k)$) and range is the faces of $\stdcomplex$.
\end{note*}
\begin{definition}
    We say that a face $\stdface \in \stdcomplex$ is represented by $r_\stdface \in \repcplx{\stdcomplex}$ if $R(r_\stdface) = \stdface$.
    Conversely we say that $r_\stdface$ represents $\stdface$.
\end{definition}
Now that we have defined the notion of representation in the new structure it is time to start analyzing it.
We will show that every face whose dimension is larger then $1$ has sunflower-like structure i.e.\ there is a core that is a subset of each of the vertices in the face. And, in addition, the core is the intersection of any two vertices in the face.
\begin{definition}
    Define the core of a face $\stdface \in \repcplx{\stdcomplex}$ to be: $\core{\stdface} = \bigcap_{\stdvertex \in \stdface}{\stdvertex}$
\end{definition}
\begin{lemma}
    \label{representation-has-a-core}
    Let $d \ge 1$ and let $\stdface \in \hat{R}(\stdcomplex(d))$, then every $ \stdvertex_1,\stdvertex_2 \in \stdface$ such that $\stdvertex_1 \ne \stdvertex_2$ it holds that: $\stdvertex_1 \cap \stdvertex_2 = \core{\stdface}$
\end{lemma}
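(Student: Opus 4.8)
The plan is to unpack the definition of the representation complex directly, using the two defining conditions on higher-dimensional faces: that $R(\stdface) \in \stdcomplex(i+k)$ and that $\abs{\bigcap_{\stdvertex \in \stdface}{\stdvertex}} = k$. Fix $\stdface \in \repcplx{\stdcomplex}(d)$ with $d \ge 1$, so $\stdface = \set{\stdvertex_0, \dots, \stdvertex_d}$ where each $\stdvertex_j \in \stdcomplex(k)$, and write $c = \core{\stdface} = \bigcap_{j} \stdvertex_j$, which by the defining condition satisfies $\abs{c} = k$ (i.e.\ $c$ is a $(k-1)$-dimensional face of $\stdcomplex$). Clearly $c \subseteq \stdvertex_1 \cap \stdvertex_2$ for any two distinct $\stdvertex_1, \stdvertex_2 \in \stdface$, so the only thing to prove is the reverse inclusion $\stdvertex_1 \cap \stdvertex_2 \subseteq c$.

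First I would observe that each $\stdvertex_j$ has exactly $k+1$ vertices, while $\abs{c} = k$, so $\stdvertex_j = c \cup \set{a_j}$ for a single extra vertex $a_j \notin c$. The key claim is that the $a_j$ are pairwise distinct: if $a_1 = a_2$ for $\stdvertex_1 \ne \stdvertex_2$ then $\stdvertex_1 = \stdvertex_2$, a contradiction. Granting pairwise distinctness, we get $\stdvertex_1 \cap \stdvertex_2 = (c \cup \set{a_1}) \cap (c \cup \set{a_2}) = c$ since $a_1 \ne a_2$ and neither lies in $c$; this is exactly the desired equality. The remaining point is to verify that the extra vertices are indeed all distinct, and here I would use $R(\stdface) \in \stdcomplex(d+k)$: we have $R(\stdface) = \bigcup_j \stdvertex_j = c \cup \set{a_0, \dots, a_d}$, and its size is $d+k+1$ by definition of dimension. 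Since $\abs{c} = k$, this forces $\abs{\set{a_0, \dots, a_d}} = d+1$, i.e.\ the $a_j$ are pairwise distinct (and each lies outside $c$). Combining with the previous paragraph gives $\stdvertex_1 \cap \stdvertex_2 = c = \core{\stdface}$ for all distinct $\stdvertex_1, \stdvertex_2 \in \stdface$.

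I do not anticipate a serious obstacle here — the statement is essentially a bookkeeping consequence of the two numerical constraints ($\abs{R(\stdface)} = d+k+1$ and $\abs{\core{\stdface}} = k$). The one place that requires a little care is ensuring that I am using $d \ge 1$ correctly: when $d = 1$ the face has two vertices and the claim reduces to $\stdvertex_1 \cap \stdvertex_2 = \bigcap_{\stdvertex \in \stdface} \stdvertex$, which is a tautology given that the defining condition already guarantees $\abs{\stdvertex_1 \cap \stdvertex_2} = k$; for $d \ge 2$ the counting argument above is what does the work. An alternative route, which I would mention but not pursue, is an inclusion–exclusion / induction on $d$: peel off one vertex $\stdvertex_d$, apply the statement to the sub-face $\set{\stdvertex_0, \dots, \stdvertex_{d-1}}$ (which lies in $\repcplx{\stdcomplex}(d-1)$ by the closure property of simplicial complexes), and then reconcile the two cores; but the direct counting argument is cleaner and avoids having to re-derive that the sub-face actually satisfies the representation-complex conditions.
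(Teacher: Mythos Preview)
Your proposal is correct and follows essentially the same idea as the paper's proof: both arguments hinge on the size constraints $\abs{\stdvertex_j} = k+1$ and $\abs{\core{\stdface}} = k$, which force each $\stdvertex_j$ to be $\core{\stdface}$ together with a single extra vertex, and distinctness of the $\stdvertex_j$'s then forces the extras to differ. The paper phrases this purely as a cardinality comparison (since $\core{\stdface} \subseteq \stdvertex_1 \cap \stdvertex_2$ and $\abs{\stdvertex_1 \cap \stdvertex_2} \le k$ when $\stdvertex_1 \ne \stdvertex_2$, the inclusion is an equality), whereas you parametrize $\stdvertex_j = c \cup \set{a_j}$ explicitly; these are the same argument in slightly different clothing.

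One small remark: your appeal to $\abs{R(\stdface)} = d+k+1$ to show the $a_j$ are pairwise distinct is unnecessary, since you already established this in the preceding sentence (if $a_1 = a_2$ then $\stdvertex_1 = \stdvertex_2$). The $R(\stdface)$ condition is not actually used in the paper's proof either; the core-size condition alone suffices.
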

\begin{proof}
    Using the counter positive argument assume that there exists $\stdvertex_1,\stdvertex_2$ such that $\stdvertex_1 \cap \stdvertex_2 \ne \core{\stdface}$. Since $\abs{\bigcap_{\stdvertex \in \stdface}{\stdvertex}} = k$ we know that $\stdvertex_1 \cap \stdvertex_2 \supseteq \bigcap_{\stdvertex \in \stdface}{\stdvertex} = \core{\stdface}$.
    We also know that $\abs{\stdvertex_1} = \abs{\stdvertex_2} = k+1$ therefore if $\abs{\stdvertex_1 \cap \stdvertex_2} = k+1$ then $\stdvertex_1 = \stdvertex_2$ which contradicts our assumption.
    In the case where $\abs{\stdvertex_1 \cap \stdvertex_2} = k$ it holds that $\abs{\stdvertex_1 \cap \stdvertex_2} = \abs{\core{\stdface}}$ which proves the lemma.
\end{proof}
We can now prove that the representation complex is indeed a simplicial complex:
\begin{lemma}[The representation complex is a simplicial complex]
    The representation complex is a simplicial complex.
\end{lemma}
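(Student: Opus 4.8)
The plan is to verify the two defining properties of a simplicial complex for $\repcplx{\stdcomplex}$: that it is a set of sets, which is immediate from the definition, and that it is downward closed, i.e.\ if $\stdface \in \repcplx{\stdcomplex}$ and $\stdface' \subseteq \stdface$ then $\stdface' \in \repcplx{\stdcomplex}$. The only real content is the downward closure, and it suffices to check it for faces $\stdface \in \repcplx{\stdcomplex}(i)$ with $1 \le i \le d-k$ and subsets $\stdface'$ with $\abs{\stdface'} \ge 1$ (the cases $\abs{\stdface'} \in \{0, 1\}$ land in $\repcplx{\stdcomplex}(-1)$ or $\repcplx{\stdcomplex}(0)$ and are handled separately and trivially, since any singleton $\set{\stdface}$ with $\stdface \in \stdcomplex(k)$ is in $\repcplx{\stdcomplex}(0)$ and $\emptyset \in \repcplx{\stdcomplex}(-1)$).

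So fix $\stdface \in \repcplx{\stdcomplex}(i)$ and a nonempty subset $\stdface' \subseteq \stdface$ with $\abs{\stdface'} = j+1$, where $1 \le j \le i$; I want to show $\stdface' \in \repcplx{\stdcomplex}(j)$, which requires (a) $R(\stdface') \in \stdcomplex(j+k)$ and (b) $\abs{\bigcap_{\stdvertex \in \stdface'} \stdvertex} = k$. For (b): by Lemma~\ref{representation-has-a-core}, since $\stdface \in \repcplx{\stdcomplex}(i)$ with $i \ge 1$, any two distinct $\stdvertex_1, \stdvertex_2 \in \stdface$ satisfy $\stdvertex_1 \cap \stdvertex_2 = \core{\stdface}$, and $\abs{\core{\stdface}} = k$ by definition of $\repcplx{\stdcomplex}(i)$. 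Since $\stdface'$ has at least two elements (as $j \ge 1$), picking two distinct vertices of $\stdface'$ gives $\bigcap_{\stdvertex \in \stdface'} \stdvertex \subseteq \stdvertex_1 \cap \stdvertex_2 = \core{\stdface}$, and conversely $\core{\stdface} \subseteq \stdvertex$ for every $\stdvertex \in \stdface' \subseteq \stdface$, so $\bigcap_{\stdvertex \in \stdface'} \stdvertex = \core{\stdface}$, which has size exactly $k$. For (a): note $R(\stdface') = \bigcup_{\stdvertex \in \stdface'} \stdvertex \subseteq \bigcup_{\stdvertex \in \stdface} \stdvertex = R(\stdface) \in \stdcomplex(j+k)$... wait — $R(\stdface) \in \stdcomplex(i+k)$, so $R(\stdface')$ is a subset of a face of $\stdcomplex$, hence $R(\stdface') \in \stdcomplex$ by downward closure of $\stdcomplex$; it remains to check $\dim R(\stdface') = j+k$, i.e.\ $\abs{R(\stdface')} = j+k+1$. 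This follows from the sunflower structure: each $\stdvertex \in \stdface'$ has the form $\core{\stdface} \sqcup \{a_\stdvertex\}$ for a single extra vertex $a_\stdvertex \notin \core{\stdface}$ (since $\abs{\stdvertex} = k+1$ and $\abs{\core{\stdface}} = k$), and the $a_\stdvertex$ are pairwise distinct (if $a_{\stdvertex_1} = a_{\stdvertex_2}$ then $\stdvertex_1 = \stdvertex_2$), so $\abs{R(\stdface')} = \abs{\core{\stdface}} + \abs{\stdface'} = k + (j+1) = j+k+1$, as needed.

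I do not expect a genuine obstacle here: this is a routine downward-closure verification. The one point requiring a little care is making sure the edge case $j = 1$ (subsets of size $2$) is correctly placed in $\repcplx{\stdcomplex}(1)$ — here the condition ``$R(\stdface) \in \stdcomplex(i+k)$ and $\abs{\bigcap} = k$'' from the $i \ge 1$ clause of the definition must be checked, and the argument above does exactly that. The other mild subtlety is that Lemma~\ref{representation-has-a-core} is stated for $\stdface \in \hat R(\stdcomplex(d))$ but is really about membership in $\repcplx{\stdcomplex}(d')$ for any $d' \ge 1$; I would invoke it (or its obvious re-reading) for the dimension $i$ of the face $\stdface$ at hand.
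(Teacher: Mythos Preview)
Your proposal is correct and follows essentially the same approach as the paper: handle the low-cardinality cases trivially, then for subsets of size at least two invoke the sunflower structure (Lemma~\ref{representation-has-a-core}) to verify the intersection condition and use downward closure of $\stdcomplex$ for the $R(\stdface') \in \stdcomplex$ condition. In fact your argument is a bit more careful than the paper's, which omits the explicit verification that $\abs{R(\stdface')} = j+k+1$ (i.e.\ that $R(\stdface')$ lands in the correct dimension $\stdcomplex(j+k)$); your use of the sunflower decomposition $\stdvertex = \core{\stdface} \sqcup \{a_\stdvertex\}$ fills that gap cleanly.
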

\begin{proof}
    Let $\stdface \in \repcplx{\stdcomplex}$ be a face and let $\tilde{\stdface}$ be a subset of $\stdface$.
    If $\abs{\tilde{\stdface}}=1$ then $\tilde{\stdface}$ is exactly one of the vertices and therefore in $\repcplx{\stdcomplex}$.
    In any other case $R(\tilde{\stdface}) \subseteq R(\stdface) \in \stdcomplex$ and therefore $R(\tilde{\stdface}) \in \stdcomplex$.
    In addition it is easy to see that due to Lemma~\ref{representation-has-a-core} it holds that $\abs{\bigcap_{\genface \in \stdface}{\genface}} = k$.
    Therefore $\bar{\stdface} \in \repcplx{\stdcomplex}$ and the representation complex is a simplicial complex.
\end{proof}

\subsection{On the Structure of the Representation Complex}\label{subsec:on-the-structre-of-the-representation-complex}
We will now discuss the structure of the representation complex including how it represents the original complex.
We will begin by considering how each face is represented in the representation complex.
Specifically we will show that every face is represented in the representation complex once for every one of its cores (which are the $\parens{k-1}$-faces contained in it).
\begin{lemma}
    \label{singular-representation-per-core}
    Let $\stdface$ be a face in $\stdcomplex(k+i)$ (such that $i \ge 1$).
    And let $c\in \binom{\stdface}{k}$ be a possible core of $\stdface$ (i.e.\ a $(k-1)$-dimensional sub-face of $\stdface$).
    Then there exists a single face $\rep{c}{\stdface}$ such that $\stdface$ is represented by $\rep{c}{\stdface}$ and $\core{\rep{c}{\stdface}} = c$.
\end{lemma}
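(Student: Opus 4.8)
The plan is to produce the representing face explicitly and then argue that it is forced. Concretely, I would set
\[
    \rep{c}{\stdface} := \set{c \cup \set{\stdvertex} \suchthat \stdvertex \in \stdface \setminus c},
\]
a collection of $i+1$ distinct $k$-faces of $\stdcomplex$: each $c \cup \set{\stdvertex}$ is a subset of $\stdface \in \stdcomplex(k+i)$, hence lies in $\stdcomplex$ by closure, and has exactly $k+1$ vertices since $\stdvertex \notin c$. So $\rep{c}{\stdface}$ is a legitimate set of vertices of $\repcplx{\stdcomplex}$.

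First I would verify that $\rep{c}{\stdface}$ is actually a face of $\repcplx{\stdcomplex}(i)$ representing $\stdface$. Its union is $c \cup (\stdface \setminus c) = \stdface$, which lies in $\stdcomplex(i+k)$, so $R(\rep{c}{\stdface}) = \stdface$; and since $i \ge 1$ there are at least two vertices in $\stdface \setminus c$, whence $\bigcap_{\stdvertex \in \stdface \setminus c}(c \cup \set{\stdvertex}) = c$, so $\abs{\core{\rep{c}{\stdface}}} = \abs{c} = k$ and $\core{\rep{c}{\stdface}} = c$. These are exactly the conditions in Definition~\ref{the-representation-complex} for membership in $\repcplx{\stdcomplex}(i)$, so existence is done.

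Next I would prove uniqueness. Suppose $\genface \in \repcplx{\stdcomplex}(i)$ also represents $\stdface$ and has $\core{\genface} = c$. By Lemma~\ref{representation-has-a-core}, any two distinct vertices $\stdvertex_1, \stdvertex_2 \in \genface$ satisfy $\stdvertex_1 \cap \stdvertex_2 = \core{\genface} = c$; since $\abs{\genface} = i+1 \ge 2$, every $\stdvertex \in \genface$ therefore contains $c$, and being a $k$-face it equals $c \cup \set{w_\stdvertex}$ for a unique $w_\stdvertex \in \stdvertex \setminus c \subseteq \stdface \setminus c$. The map $\stdvertex \mapsto w_\stdvertex$ is injective (equality of the $w$'s forces equality of the $k$-faces), and $\bigcup_{\stdvertex \in \genface}\stdvertex = R(\genface) = \stdface$ forces $\set{w_\stdvertex \suchthat \stdvertex \in \genface} = \stdface \setminus c$, so the map is a bijection onto $\stdface \setminus c$. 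Hence $\genface = \set{c \cup \set{w} \suchthat w \in \stdface \setminus c} = \rep{c}{\stdface}$.

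I do not expect a genuine obstacle here: the argument is purely combinatorial. The one point that needs care is checking that the common intersection of the chosen $k$-faces collapses to exactly $c$ and not to something strictly larger, which is precisely where the hypothesis $i \ge 1$ (equivalently $\abs{\genface} \ge 2$) is used, together with the sunflower structure already supplied by Lemma~\ref{representation-has-a-core}.
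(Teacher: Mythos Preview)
Your proposal is correct and follows essentially the same approach as the paper: you give the identical explicit construction $\rep{c}{\stdface} = \set{c \cup \set{\stdvertex} \suchthat \stdvertex \in \stdface \setminus c}$, verify the defining conditions of Definition~\ref{the-representation-complex}, and then establish uniqueness. The only stylistic difference is that the paper argues uniqueness by contradiction (assuming two distinct representatives and deriving that an element of one must lie in the other), whereas you argue directly that any representative with core $c$ is forced to equal the construction via the bijection $\stdvertex \mapsto w_\stdvertex$; both rely on the same observation that each vertex of such a representative has the form $c \cup \set{w}$.
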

\begin{proof}
    Consider $\rep{c}{\stdface} = \set{c \cup \set{\stdvertex} \suchthat \stdvertex \in \stdface \setminus c}$.
    We will start by proving that this is indeed a face in the representation complex:
    \begin{itemize}
        \item $c \cup \set{\stdvertex} \subseteq \stdface$ and, in addition, $\abs{c \cup \set{\stdvertex}} = k+i+1$ therefore $c \cup \set{\stdvertex} \in \stdcomplex(k)$.
        \item $\abs{\rep{c}{\stdface}} = \abs{\stdface \setminus c} = i+1$.
        \item It is easy to see that $R(\rep{c}{\stdface}) = \stdface \in \stdcomplex(k+i)$.
        \item $\abs{\bigcap_{\stdvertex \in \stdface}{\stdvertex}} = \abs{c} = k$.
    \end{itemize}
    Therefore $\rep{c}{\stdface}$ is a face in the representation complex which represents $\stdface$.
    We will use the counter-positive argument in order to prove that $\rep{c}{\stdface}$ is singular up to the core $c$.
    Assume that there are two different faces $\tilde{\rep{c}{\stdface}}, \hat{\rep{c}{\stdface}}$ that represent $\stdface$ and have the same core $c$.
    WLOG assume that $\tilde{\rep{c}{\stdface}} {\displaystyle \not \subseteq } \hat{\rep{c}{\stdface}}$  and let $\genface \in \hat{\rep{c}{\stdface}} \setminus \tilde{\rep{c}{\stdface}}$.
    Let $\stdvertex$ be the vertex such that $\stdvertex \in \genface \setminus c$.
    Because $\tilde{\rep{c}{\stdface}}$ and $\hat{\rep{c}{\stdface}}$ represent the same face there must be some $\genface' \in \tilde{\rep{c}{\stdface}}$ such that $\stdvertex \in \genface'$.
    Since $\stdvertex \notin c$ the only candidate for $\genface'$ is $c \cup \set{\stdvertex}$ (since $\abs{\genface'}$ must be $k+1$).
    Therefore $\genface=\genface'\in \tilde{\rep{c}{\stdface}}$ which contradicts our assumption on $\genface$.
\end{proof}
\begin{notation}
    Denote by $\repex{k}{c}{\stdface}$ the face in the representation complex of the $k$-dimensional faces whose core is $c$ and represents $\stdface$.
    In most cases the dimension will be clear from context we would omit the dimension from the notation and denote the face by $\rep{c}{\stdface}$.
\end{notation}
The following three corollaries follow directly from the characterization of the representation of a $k$-dimensional face:
\begin{corollary}
    \begin{equation*}
        R^{-1}(\stdface) = \set{\rep{c}{\stdface} \suchthat c \in \binom{\stdface}{k}}
    \end{equation*}
\end{corollary}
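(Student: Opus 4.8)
The plan is to prove the two set inclusions separately; both are immediate consequences of Lemma~\ref{singular-representation-per-core}, with each $(k-1)$-dimensional sub-face of $\stdface$ serving as a possible core. First I would note that the statement is to be read for $\stdface$ of dimension at least $k+1$, since then $R^{-1}(\stdface)$ consists of faces of $\repcplx{\stdcomplex}$ of positive dimension, for which the core is well defined and, by the definition of the representation complex, has size exactly $k$.

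For the inclusion $\supseteq$, I would fix an arbitrary $c \in \binom{\stdface}{k}$ and invoke Lemma~\ref{singular-representation-per-core}: it exhibits the face $\rep{c}{\stdface} = \set{c \cup \set{\stdvertex} \suchthat \stdvertex \in \stdface \setminus c}$ of $\repcplx{\stdcomplex}$, checks that it really is a face of the representation complex, and shows $R(\rep{c}{\stdface}) = \stdface$. Hence $\rep{c}{\stdface} \in R^{-1}(\stdface)$, and this holds for every $c$, giving one inclusion.

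For the inclusion $\subseteq$, I would take any $r \in R^{-1}(\stdface)$, i.e.\ a face of $\repcplx{\stdcomplex}$ with $R(r) = \stdface$, and set $c := \core{r} = \bigcap_{\stdvertex \in r}{\stdvertex}$. By the definition of the representation complex $\abs{c} = k$, and since every vertex $\stdvertex \in r$ is a $k$-face with $\stdvertex \subseteq R(r) = \stdface$, we get $c \subseteq \stdface$, so $c \in \binom{\stdface}{k}$. Now $r$ represents $\stdface$ and has core $c$, so by the uniqueness clause of Lemma~\ref{singular-representation-per-core} it must equal $\rep{c}{\stdface}$. Thus $R^{-1}(\stdface) \subseteq \set{\rep{c}{\stdface} \suchthat c \in \binom{\stdface}{k}}$, and the corollary follows.

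I do not expect any real obstacle: all the work was already done in Lemma~\ref{singular-representation-per-core}, and the corollary merely repackages its existence half (which yields $\supseteq$) and its uniqueness half (which yields $\subseteq$). The only place demanding a word of care is the degenerate dimension $\dim \stdface = k$, where $R^{-1}(\stdface)$ is the single vertex $\set{\stdface}$ but the core of a vertex has size $k+1$, so $\rep{c}{\stdface}$ is undefined there and the statement is understood only for $\dim\stdface \ge k+1$.
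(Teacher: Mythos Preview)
Your proposal is correct and takes essentially the same approach as the paper: the paper states this corollary without proof, simply noting that it follows directly from the characterization established in Lemma~\ref{singular-representation-per-core}, and your argument is precisely the natural unpacking of that lemma's existence and uniqueness halves into the two inclusions. Your remark restricting to $\dim\stdface \ge k+1$ is appropriate, matching the hypothesis $i\ge 1$ in Lemma~\ref{singular-representation-per-core} and the surrounding corollaries.
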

\begin{corollary}
    For every face $\stdface \in \stdcomplex(k+i)$ ($i > 0$) and every two different possible cores $c_1, c_2 \in \binom{\stdface}{k}$ such that $c_1 \ne c_2$ it holds that $\rep{c_1}{\stdface} \ne \rep{c_2}{\stdface}$
\end{corollary}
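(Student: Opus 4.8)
The plan is to deduce this immediately from Lemma~\ref{singular-representation-per-core}. That lemma already exhibits $\rep{c}{\stdface}$ explicitly as $\set{c \cup \set{\stdvertex} \suchthat \stdvertex \in \stdface \setminus c}$ and, in the course of its proof, records that $\abs{\bigcap_{\stdvertex \in \rep{c}{\stdface}}{\stdvertex}} = \abs{c} = k$, i.e.\ that $\core{\rep{c}{\stdface}} = c$. In other words, the core operator \emph{recovers} the core from the representing face; since $c_1$ and $c_2$ are distinct, the two representing faces cannot be equal. So the whole argument is: apply $\core{\cdot}$ to both sides of a hypothetical equality.

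In more detail, I would first check that the core computation is legitimate under the hypothesis $i > 0$. Since $\stdface \in \stdcomplex(k+i)$ we have $\abs{\rep{c}{\stdface}} = \abs{\stdface \setminus c} = i+1 \ge 2$, so $\rep{c}{\stdface}$ is a face of $\repcplx{\stdcomplex}$ of dimension $i \ge 1$ and Lemma~\ref{representation-has-a-core} applies to it: $\core{\rep{c}{\stdface}} = \bigcap_{\stdvertex_1 \ne \stdvertex_2 \in \rep{c}{\stdface}}{(\stdvertex_1 \cap \stdvertex_2)}$ is the common pairwise intersection of its vertices. A one-line computation from the explicit description then gives $\bigcap_{\stdvertex \in \stdface \setminus c}{(c \cup \set{\stdvertex})} = c$, where one uses that $\stdface \setminus c$ has at least two elements so that the singletons $\set{\stdvertex}$ contribute nothing to the intersection. (Alternatively, just cite the computation already performed inside the proof of Lemma~\ref{singular-representation-per-core}.) Hence $\core{\rep{c_1}{\stdface}} = c_1$ and $\core{\rep{c_2}{\stdface}} = c_2$.

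Finally, arguing by contradiction, suppose $\rep{c_1}{\stdface} = \rep{c_2}{\stdface}$. Applying $\core{\cdot}$ to both sides yields $c_1 = \core{\rep{c_1}{\stdface}} = \core{\rep{c_2}{\stdface}} = c_2$, contradicting $c_1 \ne c_2$; this finishes the proof. I do not expect any genuine obstacle here — the corollary is essentially a restatement of the uniqueness already proved — and the only point requiring a moment's care is to invoke the hypothesis $i > 0$ so that $\rep{c}{\stdface}$ is at least one-dimensional and the "core is the pairwise intersection" description from Lemma~\ref{representation-has-a-core} is available (for $i = 0$ the statement is vacuous, as $\stdface$ has then a single possible core).
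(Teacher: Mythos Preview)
Your proof is correct and follows essentially the same route as the paper's: both argue by contradiction via the core, the paper noting that every vertex $\genface \in \rep{c_1}{\stdface}$ would then contain $c_1 \cup c_2$ (forcing the core to have size strictly greater than $k$), while you package the same computation as $\core{\rep{c}{\stdface}}=c$ and apply $\core{\cdot}$ to both sides. One tangential slip: your aside that for $i=0$ the face $\stdface$ has a single possible core is not correct --- a $(k+1)$-element set has $k+1$ subsets of size $k$ --- but this is irrelevant to the argument for $i>0$.
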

\begin{proof}
    Assuming that $\rep{c_1}{\stdface} = \rep{c_2}{\stdface}$, then for every face in $\genface \in \rep{c_1}{\stdface}$ it holds that $c_1 \cup c_2 \subseteq \genface$ which contradicts the fact that the cardinality of the core is exactly $k$.
\end{proof}
\begin{corollary}
    For every $\dimension{i}>0$ and every face $\stdface \in \stdcomplex(k+i)$:
    \begin{equation*}
        \abs{R^{-1}(\stdface)} = \begin{cases}
        \binom{k+i+1}{k} & i>0\\
        1                & i = 0
        \end{cases}
    \end{equation*}
\end{corollary}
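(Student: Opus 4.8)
The plan is to read off the count directly from Lemma~\ref{singular-representation-per-core} together with the two corollaries immediately preceding the statement, so almost all of the work is already done. For the case $i>0$, the first of those corollaries gives $R^{-1}(\stdface) = \set{\rep{c}{\stdface} \suchthat c \in \binom{\stdface}{k}}$, which says that the assignment $c \mapsto \rep{c}{\stdface}$ is a surjection from the set of $(k-1)$-dimensional sub-faces of $\stdface$ onto $R^{-1}(\stdface)$; the second corollary states that distinct cores $c_1 \ne c_2$ yield distinct representatives $\rep{c_1}{\stdface} \ne \rep{c_2}{\stdface}$, so this surjection is in fact a bijection. Hence $\abs{R^{-1}(\stdface)}$ equals the number of $k$-element subsets of $\stdface$, and since a face $\stdface \in \stdcomplex(k+i)$ has exactly $k+i+1$ vertices, that number is $\binom{k+i+1}{k}$.

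For the degenerate case $i=0$, I would argue separately. A $k$-dimensional face $\stdface$ can only be the image under $R$ of a $0$-dimensional face of the representation complex: by construction every face of $\repcplx{\stdcomplex}$ of dimension $j \ge 1$ has $R$-image of dimension $j+k \ge k+1$, while the unique $0$-dimensional face with $R(\set{\stdface}) = \stdface$ is $\set{\stdface}$ itself (recall we identify $\set{\stdface}$ with $\stdface$). Thus $\abs{R^{-1}(\stdface)} = 1$, which completes the second case.

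There is essentially no obstacle here: the substantive content — producing the representative $\rep{c}{\stdface}$ for each core $c$ and showing it is the unique representative with that core — was settled in Lemma~\ref{singular-representation-per-core} and its corollaries, so the remaining task is only the bookkeeping that turns ``one representative per core'' into $\binom{k+i+1}{k}$ and the trivial boundary check at $i=0$. The one point worth flagging is that the formula $\binom{k+i+1}{k}$ does not specialize correctly at $i=0$ (it would give $k+1$ rather than $1$), which is precisely why the statement splits into two cases and why the $i=0$ case must be handled by the separate argument above rather than by substituting $i=0$ into the general count.
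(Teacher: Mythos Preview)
Your proposal is correct and matches the paper's intended approach: the paper states this corollary without proof, meaning it is meant to follow immediately from the two preceding corollaries (the description of $R^{-1}(\stdface)$ as $\set{\rep{c}{\stdface} \suchthat c \in \binom{\stdface}{k}}$ and the injectivity of $c \mapsto \rep{c}{\stdface}$), which is exactly the bijection you spell out. Your separate handling of the $i=0$ case and your observation about why the formula cannot be uniformly extended are both correct and appropriate elaborations of what the paper leaves implicit.
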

Now that we understand the representation better we move on to discuss the norm of the representation complex.
Although we discuss the norm in every dimension of the representation complex, the $1$-dimensional faces of the representation complex are of particular interest.
Specifically we will show that sampling a $1$-dimensional face in the representation complex (with the appropriate norm) is equivalent to the choice that the 1-up agreement test preforms.
We will start the discussion by showing how the weight function in the representation complex relates to the weight function of the original complex:
\begin{lemma}
    \label{weight-relation-between-a-complex-and-its-representation}
    For every face $\stdface \in \stdcomplex(k+i)$ and every representation of the face $\rep{c}{\stdface}$ it holds that:
    \begin{equation*}
        \weightcplx{{\repcplx{\stdcomplex}}}{\rep{c}{\stdface}} = \frac{1}{\abs{R^{-1}(\stdface)}}\weightcplx{\stdcomplex}{\stdface}
    \end{equation*}
\end{lemma}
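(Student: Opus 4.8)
The plan is to compute both sides of the claimed identity directly from the definition of the weight function and the structural results established earlier. Recall that for a $d$-dimensional simplicial complex $\stdcomplex$ the weight of a face $\stdface$ is $\weightcplx{\stdcomplex}{\stdface} = \frac{\abs{\set{\genface \in \stdcomplex(d) \suchthat \stdface \subseteq \genface}}}{\binom{d+1}{\abs{\stdface}}\abs{\stdcomplex(d)}}$, and the same formula applies to $\repcplx{\stdcomplex}$, whose top dimension is $d - k$. So the first step is to write out $\weightcplx{\repcplx{\stdcomplex}}{\rep{c}{\stdface}}$ as a ratio counting the top-dimensional faces of $\repcplx{\stdcomplex}$ that contain $\rep{c}{\stdface}$, divided by $\binom{d-k+1}{i+1}\abs{\repcplx{\stdcomplex}(d-k)}$.

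The key observation is that the map $R$, restricted to top-dimensional faces of $\repcplx{\stdcomplex}$, together with the core, is a bijection onto pairs $(\sigma, c)$ where $\sigma \in \stdcomplex(d)$ and $c \in \binom{\sigma}{k}$ — this is exactly the content of Lemma~\ref{singular-representation-per-core} and its corollaries (every top face of $\stdcomplex$ has precisely $\binom{d+1}{k}$ representatives in $\repcplx{\stdcomplex}(d-k)$, one per core). Hence $\abs{\repcplx{\stdcomplex}(d-k)} = \binom{d+1}{k}\abs{\stdcomplex(d)}$. Next I would identify which top faces $\rep{c'}{\sigma}$ of $\repcplx{\stdcomplex}$ contain the fixed face $\rep{c}{\stdface}$: by the core characterization, $\rep{c}{\stdface} \subseteq \rep{c'}{\sigma}$ forces $\stdface \subseteq \sigma$ and (since cores are intersections of vertices, by Lemma~\ref{representation-has-a-core}) $c' = c$. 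So the containing top faces are in bijection with $\set{\sigma \in \stdcomplex(d) \suchthat \stdface \subseteq \sigma}$, giving a count of $\abs{\set{\sigma \in \stdcomplex(d) \suchthat \stdface \subseteq \sigma}}$.

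Assembling these pieces,
\begin{align*}
    \weightcplx{\repcplx{\stdcomplex}}{\rep{c}{\stdface}}
    &= \frac{\abs{\set{\sigma \in \stdcomplex(d) \suchthat \stdface \subseteq \sigma}}}{\binom{d-k+1}{i+1}\binom{d+1}{k}\abs{\stdcomplex(d)}},
\end{align*}
while $\weightcplx{\stdcomplex}{\stdface} = \frac{\abs{\set{\sigma \in \stdcomplex(d) \suchthat \stdface \subseteq \sigma}}}{\binom{d+1}{k+i+1}\abs{\stdcomplex(d)}}$ and $\abs{R^{-1}(\stdface)} = \binom{k+i+1}{k}$ by the last corollary. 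So the identity reduces to the purely combinatorial claim $\binom{d-k+1}{i+1}\binom{d+1}{k} = \binom{k+i+1}{k}\binom{d+1}{k+i+1}$, which follows from the standard trinomial revision identity (both sides count ordered triples partitioning a $(d+1)$-set into blocks of sizes $k$, $i+1$, and $d-k-i$). I would verify this identity explicitly and conclude.

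The main obstacle is bookkeeping rather than conceptual: one must be careful that the weight formula in $\repcplx{\stdcomplex}$ uses the correct top dimension $d-k$ and the correct binomial normalization $\binom{(d-k)+1}{\abs{\rep{c}{\stdface}}} = \binom{d-k+1}{i+1}$, and that the bijections invoked (top faces of $\repcplx{\stdcomplex}$ versus core-tagged top faces of $\stdcomplex$; faces containing a fixed $\rep{c}{\stdface}$ versus top faces of $\stdcomplex$ containing $\stdface$) are applied with the right cardinalities. Once the counts are in place the final step is the elementary binomial identity above.
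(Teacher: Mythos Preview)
Your approach for $i \ge 1$ is essentially the same as the paper's: both compute the weight directly, count top faces of $\repcplx{\stdcomplex}$ via the core-tagging bijection, and reduce to the trinomial identity $\binom{d-k+1}{i+1}\binom{d+1}{k} = \binom{k+i+1}{k}\binom{d+1}{k+i+1}$ (which is the paper's Lemma~\ref{combinatorial-bounds-1}). Your justification that a containing top face $\rep{c'}{\sigma}$ must have $c'=c$ is in fact more explicit than the paper's own write-up.

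There is, however, a genuine gap at $i=0$. Your argument that ``$\rep{c}{\stdface}\subseteq\rep{c'}{\sigma}$ forces $c'=c$'' uses Lemma~\ref{representation-has-a-core}, which needs at least two vertices in $\rep{c}{\stdface}$; when $i=0$ the face $\rep{c}{\stdface}=\{\stdface\}$ is a single vertex and does \emph{not} pin down the core of a containing top face. In fact any $c'\in\binom{\stdface}{k}$ works, so the numerator is $(k+1)\abs{\{\sigma\in\stdcomplex(d):\stdface\subseteq\sigma\}}$, not $\abs{\{\sigma\in\stdcomplex(d):\stdface\subseteq\sigma\}}$. Simultaneously, the corollary you cite gives $\abs{R^{-1}(\stdface)}=1$ for $i=0$, not $\binom{k+1}{k}$. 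The paper handles this by an explicit case split; you should too, since your two miscounts happen to cancel numerically but neither step is correct as stated.
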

\begin{proof}
    For $i > 0$:
    \begin{align*}
        \weightcplx{\repcplx{\stdcomplex}}{\rep{c}{\stdface}} & = \frac{\abs{\set{\genface \in \repcplx{\stdcomplex}(d-k) \suchthat \rep{c}{\stdface} \subseteq \genface}}}{\binom{d-k+1}{i+1} \cdot \abs{\repcplx{\stdcomplex}(d-k)}}
        = \frac{\abs{\set{\genface \in \stdcomplex(d) \suchthat \stdface \subseteq \genface}}}{\binom{d-k+1}{i+1} \binom{d+1}{k} \abs{\stdcomplex(d)}} = \\
        & = \frac{\abs{\set{\genface \in \stdcomplex(d) \suchthat \stdface \subseteq \genface}}}{\binom{k+i+1}{k} \binom{d+1}{k+i+1} \abs{\stdcomplex(d)}}
        = \frac{1}{\binom{k+i+1}{k}} \cdot \frac{\abs{\set{\genface \in \stdcomplex(d) \suchthat \stdface \subseteq \genface}}}{\binom{d+1}{k+i+1} \abs{\stdcomplex(d)}} = \\
        & = \frac{1}{\binom{k+i+1}{k}} \cdot \weightcplx{\stdcomplex}{\stdface}
        = \frac{1}{\abs{R^{-1}(\stdface)}}\weightcplx{\stdcomplex}{\stdface}
    \end{align*}
    The second equality is due to the fact that every maximal face in the original complex is represented by $\binom{d+1}{k}$ faces in the representation complex.
    In addition, the third equality is due to Lemma~\ref{combinatorial-bounds-1}.
    For $i = 0$:
    \begin{align*}
        \weightcplx{\repcplx{\stdcomplex}}{\rep{c}{\stdface}} & = \frac{\abs{\set{\genface \in \repcplx{\stdcomplex}(d-k) \suchthat \rep{c}{\stdface} \subseteq \genface}}}{\binom{d-k+1}{1} \cdot \abs{\repcplx{\stdcomplex}(d-k)}}
        = \frac{\binom{k+1}{k}\abs{\set{\genface \in \stdcomplex(d) \suchthat \stdface \subseteq \genface}}}{\binom{d-k+1}{1} \binom{d+1}{k} \abs{\stdcomplex(d)}} = \\
        & = \frac{\abs{\set{\genface \in \stdcomplex(d) \suchthat \stdface \subseteq \genface}}}{\parens{\frac{d-k+1}{k+1} \binom{d+1}{k}} \abs{\stdcomplex(d)}} = \\
        & = \frac{\abs{\set{\genface \in \stdcomplex(d) \suchthat \stdface \subseteq \genface}}}{\binom{d+1}{k+1} \abs{\stdcomplex(d)}} = \weightcplx{\stdcomplex}{\stdface} = \frac{1}{\abs{R^{-1}(\stdface)}}\weightcplx{\stdcomplex}{\stdface}
    \end{align*}
    The second equality holds because the representation of a $k$-dimensional face $\stdface$ is contained in the representation of any face in $\containment\parens{\stdface}$.
    Therefore if $\stdface$ is contained in some maximal face $\genface$ then $\rep{c}{\stdface}$ is contained in the $\binom{k+1}{k}$ representations of $\genface$ whose cores lie in $\stdface$.
\end{proof}
\begin{corollary}
    $\norm{R^{-1}(\stdface)} = \norm{\set{\stdface}}$
\end{corollary}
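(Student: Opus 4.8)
The plan is to obtain the identity by summing the weight relation of Lemma~\ref{weight-relation-between-a-complex-and-its-representation} over all preimages of $\stdface$ under $R$. Write $\stdface \in \stdcomplex(k+i)$ with $i \ge 0$. By the corollary characterising $R^{-1}(\stdface)$ we have $R^{-1}(\stdface) = \set{\rep{c}{\stdface} \suchthat c \in \binom{\stdface}{k}}$, and by the subsequent corollary distinct cores yield distinct representations, so $R^{-1}(\stdface)$ has exactly $\abs{R^{-1}(\stdface)}$ elements, one for each possible core $c$.

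Now unfold the definition of the norm on faces (Definition~\ref{def:norm-on-faces}) inside $\repcplx{\stdcomplex}$ and substitute Lemma~\ref{weight-relation-between-a-complex-and-its-representation}:
\[
    \norm{R^{-1}(\stdface)}_{\repcplx{\stdcomplex}} = \sum_{r \in R^{-1}(\stdface)} \weightcplx{\repcplx{\stdcomplex}}{r} = \sum_{c \in \binom{\stdface}{k}} \frac{1}{\abs{R^{-1}(\stdface)}}\weightcplx{\stdcomplex}{\stdface} = \abs{R^{-1}(\stdface)}\cdot\frac{1}{\abs{R^{-1}(\stdface)}}\weightcplx{\stdcomplex}{\stdface} = \weightcplx{\stdcomplex}{\stdface}.
\]
Since $\norm{\set{\stdface}}_{\stdcomplex} = \weightcplx{\stdcomplex}{\stdface}$ directly from Definition~\ref{def:norm-on-faces}, this yields $\norm{R^{-1}(\stdface)} = \norm{\set{\stdface}}$, which is the claim.

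There is essentially no obstacle here: the only point requiring a moment's care is the degenerate case $i = 0$, where $R^{-1}(\stdface) = \set{\stdface}$ is a single vertex of $\repcplx{\stdcomplex}$ and the identity collapses to $\weightcplx{\repcplx{\stdcomplex}}{\stdface} = \weightcplx{\stdcomplex}{\stdface}$, which is exactly the $i = 0$ branch of Lemma~\ref{weight-relation-between-a-complex-and-its-representation}. If desired, the same computation extends by additivity of the norm to an arbitrary set $S$ of faces of a common dimension $k+i$, giving $\norm{R^{-1}(S)} = \norm{S}$, since the sets $R^{-1}(\genface)$ for $\genface \in S$ are pairwise disjoint.
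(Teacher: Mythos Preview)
Your proof is correct and is exactly the intended argument: the paper states this as an immediate corollary of Lemma~\ref{weight-relation-between-a-complex-and-its-representation} without further proof, and summing that lemma over the elements of $R^{-1}(\stdface)$ is precisely how one unpacks it. Your separate handling of the $i=0$ case is appropriate, since the enumeration $R^{-1}(\stdface) = \set{\rep{c}{\stdface} : c \in \binom{\stdface}{k}}$ is only established for $i \ge 1$.
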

Now let us consider how one can sample a face (of any dimension) in the representation complex with probability relative to its weight.
Consider the following algorithm:\\
\begin{algorithm}[H]
    \caption{sample from the representation complex}\label{alg:sample-from-the-representation-complex}
    \SetAlgoLined
    \DontPrintSemicolon
    Sample $\genface \in \stdcomplex(k+i)$ according to the norm of $\stdcomplex$.\\
    Pick a set  $c \in \binom{\genface}{k}$ uniformly at random.\label{alg:sample-from-the-representation-complex:sample-core}\\
    Return $r^c_\genface$
\end{algorithm}
\begin{lemma}
    \label{sampling-from-representation-complex}
    One can sample from the representation complex according to its norm.
\end{lemma}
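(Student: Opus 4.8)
The plan is to prove the equivalent statement that, for every $0 \le i \le d-k$, running Algorithm~\ref{alg:sample-from-the-representation-complex} with parameter $i$ returns each face $r \in \repcplx{\stdcomplex}(i)$ with probability exactly $\weightcplx{\repcplx{\stdcomplex}}{r}$. Fix such an $r$, and set $\stdface = R(r) \in \stdcomplex(k+i)$ and $c = \core{r}$. The first step is to invoke Lemma~\ref{singular-representation-per-core} together with the corollaries following it: they give $r = \rep{c}{\stdface}$, they show that $r$ is the \emph{only} face of $\repcplx{\stdcomplex}$ that represents $\stdface$ and has core $c$, and they show that distinct cores of $\stdface$ yield distinct representations.

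Next I would determine exactly which runs of the algorithm output $r$. If the first step samples $\genface \ne \stdface$, then for every choice of core $c'$ we have $R(\rep{c'}{\genface}) = \genface \ne \stdface = R(r)$, so the output is not $r$; if $\genface = \stdface$ but the sampled core $c' \ne c$, then $\rep{c'}{\stdface} \ne \rep{c}{\stdface} = r$ by the corollary on distinct cores; and sampling $\genface = \stdface$ together with $c' = c$ produces $r$. Hence, for $i > 0$,
\[
    \pr{\text{output} = r} \;=\; \weightcplx{\stdcomplex}{\stdface}\cdot\frac{1}{\abs{\binom{\stdface}{k}}} \;=\; \weightcplx{\stdcomplex}{\stdface}\cdot\frac{1}{\abs{R^{-1}(\stdface)}},
\]
since $\binom{\stdface}{k}$ has $\binom{k+i+1}{k} = \abs{R^{-1}(\stdface)}$ elements by the third corollary. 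For $i = 0$ one checks separately that every core $c' \in \binom{\stdface}{k}$ gives $\rep{c'}{\stdface} = \set{\stdface} = r$ (because $\stdface \setminus c'$ is a single vertex), so the conditional probability of outputting $r$ given $\genface = \stdface$ is $1 = 1/\abs{R^{-1}(\stdface)}$, and the displayed identity still holds.

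Finally, applying Lemma~\ref{weight-relation-between-a-complex-and-its-representation} rewrites the right-hand side as $\weightcplx{\repcplx{\stdcomplex}}{r}$, which is precisely the desired conclusion; since $r$ was arbitrary, the algorithm samples faces of $\repcplx{\stdcomplex}(i)$ according to its norm. I do not expect a genuine obstacle: the argument is a short case analysis plus a direct appeal to the weight-relation lemma, and the only point requiring care is keeping the $i = 0$ case (where the assignment of cores to representations is constant) separate from the $i > 0$ case (where it is a bijection onto $R^{-1}(\stdface)$), so that the count $\abs{\binom{\stdface}{k}} = \abs{R^{-1}(\stdface)}$ is invoked with the value established in the earlier corollary.
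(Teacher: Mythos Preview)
Your proposal is correct and follows essentially the same approach as the paper: both arguments identify via Lemma~\ref{singular-representation-per-core} (and its corollaries) that the algorithm outputs $r=\rep{c}{\stdface}$ precisely when it samples $\genface=\stdface$ and the core $c$, compute the resulting probability as $\weightcplx{\stdcomplex}{\stdface}/\binom{k+i+1}{k}$, invoke Lemma~\ref{weight-relation-between-a-complex-and-its-representation}, and treat $i=0$ separately by noting the core choice is irrelevant there. Your write-up spells out the case analysis a bit more explicitly, but there is no substantive difference.
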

\begin{proof}
    We will show that Algorithm~\ref{alg:sample-from-the-representation-complex} samples faces of dimension $i$ according to their weight.
    Let $\rep{c'}{\stdface}$ be a face of dimension $i \ge 1$ in the representation complex.
    Then:
    \begin{align*}
        \pr{\text{Algorithm~\ref{alg:sample-from-the-representation-complex} outputs }\rep{c}{\stdface}} &= \pr{\genface = \stdface \wedge c = c'}
        = \pr{\genface = \stdface \wedge c
        = c' \suchthat \genface = \stdface} \pr{\genface = \stdface}\\
        &= \pr{c = c' \suchthat \genface = \stdface} \pr{\genface = \stdface}
        = \frac{1}{\binom{k+i+1}{k}} \weightcplx{\stdcomplex}{\stdface}
        = \frac{1}{\abs{R^{-1}(\stdface)}} \weightcplx{\stdcomplex}{\stdface}\\
        &= \weightcplx{\repcplx{\stdcomplex}}{\rep{c}{\stdface}}
    \end{align*}
    The first equality is due to Lemma~\ref{singular-representation-per-core} and the last equality is due to Lemma~\ref{weight-relation-between-a-complex-and-its-representation}.\\
    If $\rep{c'}{\stdface}$ is a face of dimension $0$ then step~\ref{alg:sample-from-the-representation-complex:sample-core} can be ignored since the representation for these faces is independent of the core chosen and the algorithm outputs $\tau$.
\end{proof}
Consider now what happens when Algorithm~\ref{alg:sample-from-the-representation-complex} samples a $1$-dimensional face in the representation complex.
It samples a $(k+1)$-dimensional face $\genface$ in the original complex and then chooses a core.
Choosing a core is equivalent to choosing two $k$-dimensional sub-faces of $\genface$ (since any core of $\genface$ is shared by exactly two faces).
Note that this is the exact process in which the $1$-up agreement test samples its faces.
Therefore picking a $1$-dimensional face in the representation complex corresponds to a choosing faces in the $1$-up agreement test.

\subsection{On the Expansion Properties of the Representation Complex}\label{subsec:on-the-expansion-properties-of-the-representation-complex}
Now that we have defined the representation complex and presented how it represents the original complex we will move on to discuss the expansion properties of the representation complex.
We will show that the representation complex, despite not being an expander by itself, does have some expansion properties.
Specifically we are interested in the expansion of the faces of the first dimension since cocycles in the first dimension correspond with covers of the complex.
In order to discuss the expansion of the representation complex it will be useful to discuss how faces that share a core behave.
Lets begin by discussing the representation complex around a core:
\begin{definition}[Representation complex around a core]
    Define the representation complex of $\stdcomplex$ around a core $c$ to be:
    \begin{equation*}
        \repcplxcore[k]{c}{\stdcomplex} = \set{\stdface \in \repcplx{\stdcomplex}(i) \suchthat \core{\stdface} = c, i \ge 1} \cup \set{\stdface \in \repcplx{\stdcomplex}(0) \suchthat c \subseteq \stdface} \cup \set{\emptyset}
    \end{equation*}
\end{definition}
This is indeed a simplicial complex since it is constructed from picking a set of faces of maximal dimension from $\repcplx{\stdcomplex}$ and all their sub-faces.
We will now prove that around every core the representation complex is a coboundary expander.
We will do that by characterizing them as the $\parens{k-1}$-dimensional links of the original complex:
\begin{lemma}
    \label{the-representation-complex-around-a-core-is-isomorphic-to-links}
    For every core of every face $c \in \stdcomplex(k-1)$ it holds that $\stdcomplex_c \cong \repcplxcore[k]{c}{\stdcomplex}$.
\end{lemma}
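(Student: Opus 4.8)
The plan is to exhibit an explicit inclusion‑preserving bijection between the faces of $\repcplxcore[k]{c}{\stdcomplex}$ and those of the link $\stdcomplex_c$, which is precisely what the definition of isomorphic simplicial complexes asks for. The natural candidate is the map $f$ that first applies the representation function $R$ and then deletes the core: set $f(\emptyset)=\emptyset$ and, for a non‑empty face $\stdface\in\repcplxcore[k]{c}{\stdcomplex}$, set $f(\stdface)=R(\stdface)\setminus c$. Its proposed inverse $g\colon\stdcomplex_c\to\repcplxcore[k]{c}{\stdcomplex}$ sends $\emptyset$ to $\emptyset$ and a non‑empty $\genface\in\stdcomplex_c$ to $\set{c\cup\set{\stdvertex}\suchthat\stdvertex\in\genface}=\rep{c}{c\cup\genface}$.

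First I would check that $f$ is well defined. For a face $\stdface\in\repcplxcore[k]{c}{\stdcomplex}$ of dimension $i\ge 1$ we have $\core{\stdface}=c$, so every vertex of $\stdface$ is a $k$‑face of $\stdcomplex$ containing $c$; hence $R(\stdface)\supseteq c$ and, by Lemma~\ref{singular-representation-per-core}, $\stdface=\rep{c}{R(\stdface)}$ with $R(\stdface)\in\stdcomplex(k+i)$. Thus $f(\stdface)=R(\stdface)\setminus c$ is a face of $\stdcomplex_c$ of dimension $(k+i+1)-k-1=i$. For a vertex $\stdface$ of $\repcplxcore[k]{c}{\stdcomplex}$ we have $c\subseteq\stdface\in\stdcomplex(k)$ by construction, so $f(\stdface)=R(\stdface)\setminus c$ is a vertex of $\stdcomplex_c$. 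The same bookkeeping shows $g$ is well defined: $c\cup\genface\in\stdcomplex$ since $\genface\in\stdcomplex_c$, and $g(\genface)=\rep{c}{c\cup\genface}$ lies in $\repcplxcore[k]{c}{\stdcomplex}$ with core $c$ by Lemma~\ref{singular-representation-per-core} when $\abs{\genface}\ge 2$, and is a vertex containing $c$ by direct inspection when $\abs{\genface}=1$. Finally, using $R(\rep{c}{\genface})=\genface$ and $c\subseteq R(\stdface)$ one checks $f\circ g$ and $g\circ f$ are the identity, so $f$ is a bijection.

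It remains to verify $\stdface_1\subseteq\stdface_2\iff f(\stdface_1)\subseteq f(\stdface_2)$. The forward direction is immediate: $\stdface_1\subseteq\stdface_2$ gives $R(\stdface_1)\subseteq R(\stdface_2)$, hence $R(\stdface_1)\setminus c\subseteq R(\stdface_2)\setminus c$. For the converse, observe that whenever $\core{\stdface}=c$ (or $\stdface$ is a vertex containing $c$) the vertices of $\stdface$ are exactly the $k$‑faces $c\cup\set{\stdvertex}$ with $\stdvertex\in R(\stdface)\setminus c$, because $R(\stdface)=\bigcup_{\stdvertex'\in\stdface}\stdvertex'=c\cup\set{\stdvertex\suchthat c\cup\set{\stdvertex}\in\stdface}$. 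So if $f(\stdface_1)\subseteq f(\stdface_2)$, then for each vertex $c\cup\set{\stdvertex}$ of $\stdface_1$ we get $\stdvertex\in f(\stdface_1)\subseteq f(\stdface_2)=R(\stdface_2)\setminus c$, whence $c\cup\set{\stdvertex}$ is a vertex of $\stdface_2$; therefore $\stdface_1\subseteq\stdface_2$. The cases involving $\emptyset$ are trivial, so $f$ is an isomorphism and $\stdcomplex_c\cong\repcplxcore[k]{c}{\stdcomplex}$.

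I expect the only delicate points to be purely notational: the convention identifying a vertex $\set{\stdface}$ of $\repcplx{\stdcomplex}$ with $\stdface$, and the dimension shift $i\leftrightarrow i$ hidden inside the cardinality change $\abs{\genface}=k+i+1\mapsto\abs{\genface\setminus c}=i+1$. The genuine structural content — that every face of $\repcplxcore[k]{c}{\stdcomplex}$ of dimension at least $1$ is a sunflower with core exactly $c$, and is the unique such representation of $R(\stdface)$ — is already furnished by Lemmas~\ref{representation-has-a-core} and~\ref{singular-representation-per-core}, so beyond careful case analysis there is no real obstacle.
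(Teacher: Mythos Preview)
Your proposal is correct and follows essentially the same approach as the paper: both use the pair of maps $f(\genface)=R(\genface)\setminus c$ and $g(\stdface)=\rep{c}{\stdface\cup c}$, verify they are mutual inverses, and check inclusion preservation. Your treatment is in fact slightly more thorough than the paper's---you explicitly address well-definedness, the empty face, and argue the converse direction of inclusion directly for $f$ (whereas the paper shows inclusion preservation separately for $f$ and for $g$, which amounts to the same thing).
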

\begin{proof}
    Consider the maps $f:\repcplxcore[k]{c}{\stdcomplex} \rightarrow \stdcomplex_c$ and $g: \stdcomplex_c \rightarrow \repcplxcore[k]{c}{\stdcomplex}$ defined as follows:
    \begin{equation*}
        f(\genface) = R(\genface) \setminus c
        \quad\text{and}\quad
        g(\stdface) = \rep{c}{\parens{\stdface \cup c}}
    \end{equation*}
    Note that:
    \begin{equation*}
        \forall \genface \in \repcplxcore[k]{c}{\stdcomplex}: g(f(\genface)) = g(R(\genface) \setminus c) = \rep{c}{\parens{\parens{R(\genface) \setminus c} \cup c}} = \rep{c}{R(\genface)} = \genface
    \end{equation*}
    \begin{equation*}
        \forall \stdface \in \stdcomplex_c: f(g(\stdface)) = f(\rep{c}{\parens{\stdface \cup c}}) = \parens{\stdface \cup c} \setminus c = \stdface
    \end{equation*}
    The last equality in the first line is due to the singular representation of $\genface$ per core.
    The last equality in the second line holds because the core of all the faces in $\repcplxcore[k]{c}{\stdcomplex}$ is $c$.
    In addition:
    \begin{itemize}
        \item If $\genface_1 \subseteq \genface_2$ then: $f(\genface_1) = R(\genface_1) \setminus c = \bigcup_{t\in \genface_1}{t \setminus c} \subseteq \bigcup_{t\in \genface_2}{t \setminus c} = R(\genface_2) \setminus c = f(\genface_2)$.
        \item If $\stdface_1 \subseteq \stdface_2$ then: $g(\stdface_1) = \rep{c}{\parens{\stdface_1 \cup c}} = \set{\set{i} \cup c \suchthat i \in \stdface_1} \subseteq \set{\set{i} \cup c \suchthat i \in \stdface_2} = \rep{c}{\parens{\stdface_2 \cup c}} = g(\genface_2)$.
    \end{itemize}
\end{proof}
\begin{note*}
    In the proof of the previous lemma we disregarded the orientation of the faces (which we are yet to define for the representation complex).
    We will define the orientation of the representation complex to be the orientation that stems from this isomorphism.
    It is important to notice that the orientation of the representation complex is completely determined by the orientation of the faces in the original complex.
\end{note*}
We will continue by showing that the norm of a face in the representation complex is directly proportional to its weight around each core:
\begin{lemma}
    For every core $c$ there exists a constant $t_c$ such that for every face $\stdface$ whose core is $c$ it holds that $\weight{\stdface}_{\repcplx{\stdcomplex}}=t_c \cdot \weight{\stdface}_{\hat{R}_c(\stdcomplex)}$.
\end{lemma}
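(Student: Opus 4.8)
The plan is to show that the weight of a face $\stdface$ with core $c$, measured in the full representation complex $\repcplx{\stdcomplex}$, differs from its weight in $\repcplxcore[k]{c}{\stdcomplex}$ only by a multiplicative factor that depends on $c$ but not on $\stdface$. First I would recall two facts already established in the excerpt: by Lemma~\ref{weight-relation-between-a-complex-and-its-representation}, for a face $\stdface = \rep{c}{\genface}$ with $\genface \in \stdcomplex(k+i)$ we have $\weightcplx{\repcplx{\stdcomplex}}{\rep{c}{\genface}} = \frac{1}{\abs{R^{-1}(\genface)}}\weightcplx{\stdcomplex}{\genface}$, and by Lemma~\ref{the-representation-complex-around-a-core-is-isomorphic-to-links} the complex $\repcplxcore[k]{c}{\stdcomplex}$ is isomorphic to the link $\stdcomplex_c$ via $f(\stdface) = R(\stdface) \setminus c$. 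Under this isomorphism the face $\rep{c}{\genface}$ corresponds to the face $\genface \setminus c$ in $\stdcomplex_c$.

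Next I would compute $\weight{\stdface}_{\repcplxcore[k]{c}{\stdcomplex}}$ by transporting it through the isomorphism to the weight $\weightcplx{\stdcomplex_c}{\genface \setminus c}$ in the link, and then apply the link-weight lemma stated in the excerpt (the one giving $\weightcplx{\stdcomplex_\stdface}{\genface} = \frac{\weightcplx{\stdcomplex}{\genface \cup \stdface}}{\binom{\abs{\stdface}+\abs{\genface}}{\abs{\genface}}\weightcplx{\stdcomplex}{\stdface}}$) with the face playing the role of $c \in \stdcomplex(k-1)$. This expresses $\weightcplx{\stdcomplex_c}{\genface \setminus c}$ as $\weightcplx{\stdcomplex}{\genface}$ divided by a binomial coefficient depending only on $i$ (hence on $\dim \stdface$) times $\weightcplx{\stdcomplex}{c}$. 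One caveat: the isomorphism of simplicial complexes is not a priori weight-preserving, so I must first argue that the weight on $\repcplxcore[k]{c}{\stdcomplex}$ (defined intrinsically via its own top-dimensional faces) is exactly the pushforward of the weight on $\stdcomplex_c$ — this holds because the isomorphism is a bijection on faces of every dimension and in particular matches top faces with top faces, and the standard weight is determined combinatorially from the top faces.

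Finally I would form the ratio $\weight{\stdface}_{\repcplx{\stdcomplex}} / \weight{\stdface}_{\repcplxcore[k]{c}{\stdcomplex}}$, substitute both expressions, and observe that all the $\genface$-dependence cancels: we get $\weightcplx{\stdcomplex}{\genface}$ in both numerator and denominator (up to the $i$-dependent combinatorial factors $\abs{R^{-1}(\genface)} = \binom{k+i+1}{k}$ and $\binom{\abs{c}+\abs{\genface \setminus c}}{\abs{\genface \setminus c}} = \binom{k+i+1}{i+1} = \binom{k+i+1}{k}$, which are equal and therefore cancel as well), leaving exactly $t_c = \weightcplx{\stdcomplex}{c}^{-1}$ times a constant independent of $\stdface$. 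The main obstacle — and really the only subtle point — is justifying that the abstract isomorphism between $\repcplxcore[k]{c}{\stdcomplex}$ and $\stdcomplex_c$ respects the standard weight functions on each side; once that bookkeeping is in place the computation is a routine cancellation of binomial coefficients, and one reads off $t_c = \binom{k}{k}\big/\weightcplx{\stdcomplex}{c}$, i.e. $t_c$ depends only on $c$ as claimed.
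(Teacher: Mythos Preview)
Your approach is correct but takes a different route from the paper. The paper argues directly from the definition of the weight: since any top-dimensional face of $\repcplx{\stdcomplex}$ containing $\stdface$ must have the same core $c$ as $\stdface$ (this is the content of Lemma~\ref{representation-has-a-core}), the numerator $\abs{\{\stdface'\in\repcplx{\stdcomplex}(d-k):\stdface\subseteq\stdface'\}}$ is unchanged when one restricts to $\repcplxcore[k]{c}{\stdcomplex}$, and the only discrepancy between the two weights is the normalizing factor $\abs{\repcplx{\stdcomplex}(d-k)}$ versus $\abs{\repcplxcore[k]{c}{\stdcomplex}(d-k)}$. This gives $t_c = \abs{\repcplxcore[k]{c}{\stdcomplex}(d-k)}/\abs{\repcplx{\stdcomplex}(d-k)}$ in one line. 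You instead pass through the isomorphism $\repcplxcore[k]{c}{\stdcomplex}\cong\stdcomplex_c$ and the link-weight formula, which is more structural but requires the extra verification (which you correctly flag and justify) that the isomorphism transports the intrinsic weight on $\repcplxcore[k]{c}{\stdcomplex}$ to the intrinsic weight on $\stdcomplex_c$. Both routes land on the same constant, and indeed unwinding the paper's expression gives $t_c=\weightcplx{\stdcomplex}{c}$ as well.

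One small slip: your final value for $t_c$ is inverted. Carrying out your own computation, the ratio is
\[
\frac{\weightcplx{\repcplx{\stdcomplex}}{\rep{c}{\genface}}}{\weightcplx{\repcplxcore[k]{c}{\stdcomplex}}{\rep{c}{\genface}}}
=\frac{\tfrac{1}{\binom{k+i+1}{k}}\weightcplx{\stdcomplex}{\genface}}{\tfrac{1}{\binom{k+i+1}{i+1}}\,\weightcplx{\stdcomplex}{\genface}/\weightcplx{\stdcomplex}{c}}
=\weightcplx{\stdcomplex}{c},
\]
not $\weightcplx{\stdcomplex}{c}^{-1}$. This does not affect the lemma, which only asserts the existence of some $t_c$ depending on $c$ alone.
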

\begin{proof}
    \begin{align*}
        \weight{\stdface}_{\repcplx{\stdcomplex}}&=\frac{\abs{\set{\stdface' \in \repcplx{\stdcomplex}(d-k) \suchthat \stdface \subseteq \stdface'}}}{\binom{{d-k+1}}{\abs{\stdface}}\abs{\repcplx{\stdcomplex}(d-k)}}\\
        &=\frac{\abs{\hat{R}_c(\stdcomplex)(d-k)}}{\abs{\repcplx{\stdcomplex}(d-k)}} \cdot \frac{\abs{\set{\stdface' \in \hat{R}_c(\stdcomplex)(d-k) \suchthat \stdface \subseteq \stdface'}}}{\binom{d-k+1}{\abs{\stdface}}\abs{\hat{R}_c(\stdcomplex)(d-k)}}\\
        &=\frac{\abs{\hat{R}_c(\stdcomplex)(d-k)}}{\abs{\repcplx{\stdcomplex}(d-k)}} \weight{\stdface}_{\hat{R}_c(\stdcomplex)}
    \end{align*}
    The second equality holds due to Lemma~\ref{representation-has-a-core}
\end{proof}
Now that we have characterized the representation complex around each of its cores it is time to move on to discuss its expansion properties.
We will show how to deconstruct each cochain $\stdcochain$ to a multiplication of other cochains $\stdcochain^c$ such that each of the cochains $\stdcochain^c$ is restricted, in a sense, around a core.
This deconstruction is going to play a key role in the proof that the representation complex expands in the first dimension.
Specifically, we know that around every core the complex is an expander and therefore each of the $\stdcochain^c$s expands.
This implies that the entire cochain expands.
Let us start be defining the deconstruction:
\begin{definition}[Multiplication of cochains]
    Let $\stdcochain_1, \stdcochain_2 \in \cochainset{i}{\stdcomplex;G}$ define the multiplication of them as:
    \begin{equation*}
        (\stdcochain_1 \cdot \stdcochain_2)(\stdface) = \stdcochain_1(\stdface) \cdot \stdcochain_2(\stdface)
    \end{equation*}
\end{definition}
\begin{definition}[Cochain around a core]
    \label{co-chain-around-core}
    For every dimension $i$ and every cochain \\
    $\stdcochain \in \cochainset{i}{\repcplx{\stdcomplex};G}$ define the cochain $\stdcochain$ around $c$ to be:
    \begin{equation*}
        \stdcochain^c(\stdface) = \begin{cases}
                                      \stdcochain(\stdface) & core\parens{\stdface}=c\\
                                      1 & \text{otherwise}
        \end{cases}
    \end{equation*}
\end{definition}
\begin{lemma}
    For every cochain $\stdcochain \in \cochainset{i}{\repcplx{\stdcomplex};G}$ ($i \ge 1$) it holds that:
    \begin{equation*}
        \stdcochain = \prod_{c \in \stdcomplex(k-1)}{\stdcochain^c}
    \end{equation*}
    In addition, for every $\stdface \in \repcplx{\stdcomplex}(i)$ there exists at most one core $c$ such that $\stdcochain^c(\stdface) \ne 1$ (therefore the multiplication order is irrelevant).
\end{lemma}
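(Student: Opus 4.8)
The plan is to reduce the whole statement to the uniqueness of the core of a positive–dimensional face in $\repcplx{\stdcomplex}$, which is exactly Lemma~\ref{representation-has-a-core}. First I would fix an arbitrary face $\stdface \in \repcplx{\stdcomplex}(i)$ with $i \ge 1$. Since $i \ge 1$, the face $\stdface$ has at least two vertices, and by Lemma~\ref{representation-has-a-core} any two distinct $\stdvertex, \genvertex \in \stdface$ satisfy $\stdvertex \cap \genvertex = \core{\stdface}$; in particular $\core{\stdface}$ is a single, well-defined $(k-1)$-dimensional face of $\stdcomplex$ depending only on $\stdface$. Consequently, in Definition~\ref{co-chain-around-core} the clause $core\parens{\stdface}=c$ can hold for at most one $c \in \stdcomplex(k-1)$ — namely $c = \core{\stdface}$ — so $\stdcochain^c(\stdface) = 1$ for every other core $c$. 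This is precisely the second assertion of the lemma, and it simultaneously shows that the pointwise product $\prod_{c \in \stdcomplex(k-1)}{\stdcochain^c}$ is unambiguously defined even though $G$ need not be abelian: evaluated at $\stdface$, all but (at most) one of its factors equal the identity, so the value does not depend on the order in which the finitely many factors are multiplied.

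Next I would evaluate this product at $\stdface$. If $\stdcochain(\stdface) = 1$ then every factor $\stdcochain^c(\stdface)$ is $1$ and the product is $1 = \stdcochain(\stdface)$. Otherwise the only factor that can differ from the identity is the one indexed by $c = \core{\stdface}$, for which $\stdcochain^{\core{\stdface}}(\stdface) = \stdcochain(\stdface)$, so the product again equals $\stdcochain(\stdface)$. Since $\stdface$ was an arbitrary face of $\repcplx{\stdcomplex}(i)$, this yields $\stdcochain = \prod_{c \in \stdcomplex(k-1)}{\stdcochain^c}$.

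Finally I would dispatch the one bookkeeping point: each $\stdcochain^c$ really is an element of $\cochainset{i}{\repcplx{\stdcomplex};G}$. For $i \ge 2$ this is immediate, since a cochain there is just a $G$-valued function on $i$-faces; for $i = 1$ one needs the antisymmetry $\stdcochain^c(\stdvertex,\genvertex) = \parens{\stdcochain^c(\genvertex,\stdvertex)}^{-1}$, which holds because $core\parens{(\stdvertex,\genvertex)}$ does not depend on the orientation of the edge, so flipping the orientation stays in the same branch of Definition~\ref{co-chain-around-core} and the identity is inherited from $\stdcochain \in \cochainset{1}{\repcplx{\stdcomplex};G}$. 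There is no genuinely hard step here: the only things to watch are the well-definedness of the core, supplied by Lemma~\ref{representation-has-a-core}, and the apparent non-commutativity of the product, which disappears once one observes that at most one factor is non-trivial on any given face.
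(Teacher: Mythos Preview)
Your proof is correct and follows essentially the same argument as the paper's: both reduce to the observation that a face of dimension $i \ge 1$ in $\repcplx{\stdcomplex}$ has a unique core, so at most one factor in the product is non-trivial and equals $\stdcochain(\stdface)$. Your write-up is simply more careful than the paper's terse version --- you explicitly invoke Lemma~\ref{representation-has-a-core}, separate the cases $\stdcochain(\stdface)=1$ and $\stdcochain(\stdface)\neq 1$, and add the bookkeeping check that each $\stdcochain^c$ is a genuine $1$-cochain (orientation antisymmetry), which the paper leaves implicit.
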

\begin{proof}
    Every face of dimension larger than $1$ has a single core therefore $\stdcochain(\stdface) \ne 1 \Rightarrow \stdcochain^{\core{\stdface}}(\stdface) \ne 1$.
    And for every other possible core it holds that $\stdcochain^{c}(\stdface) = 1$.
    In addition:
    \begin{equation*}
        \stdcochain(\stdface) = \stdcochain^{\core{\stdface}}(\stdface)=\prod_{c \in \stdcomplex(k-1)}{\stdcochain^c(\stdface)}
    \end{equation*}
\end{proof}
\begin{corollary}
    For every cochain of dimension at least $1$: $\norm{\stdcochain} = \sum_{c\in\stdcomplex(k-1)}{\norm{\stdcochain^c}}$
\end{corollary}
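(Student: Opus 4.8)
The plan is to deduce this immediately from the preceding lemma together with the additivity of the face-norm over disjoint unions.

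First I would recall that for any set $S \subseteq \repcplx{\stdcomplex}(i)$ of $i$-dimensional faces the norm is $\norm{S} = \sum_{\stdface \in S}{\weight{\stdface}}$, so whenever $S$ is a disjoint union of sets $S_c$ indexed by $c$ we get $\norm{S} = \sum_c{\norm{S_c}}$. This is the only property of the norm the argument uses.

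Next I would identify the supports. Since $i \ge 1$, Lemma~\ref{representation-has-a-core} assigns to every face $\stdface \in \repcplx{\stdcomplex}(i)$ a single core $\core{\stdface} \in \stdcomplex(k-1)$, and by Definition~\ref{co-chain-around-core} we have $\stdcochain^c(\stdface) \ne 1$ exactly when $\core{\stdface} = c$ and $\stdcochain(\stdface) \ne 1$. Hence
\[
    \set{\stdface \in \repcplx{\stdcomplex}(i) \suchthat \stdcochain^c(\stdface) \ne 1}
    = \set{\stdface \in \repcplx{\stdcomplex}(i) \suchthat \core{\stdface} = c \text{ and } \stdcochain(\stdface) \ne 1},
\]
and as $c$ ranges over $\stdcomplex(k-1)$ these sets are pairwise disjoint (each face has exactly one core) with union the whole support $\set{\stdface \suchthat \stdcochain(\stdface) \ne 1}$ of $\stdcochain$ — this is precisely the observation already made in the proof of the previous lemma.

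Finally I would combine the two steps:
\[
    \norm{\stdcochain}
    = \norm{\set{\stdface \suchthat \stdcochain(\stdface) \ne 1}}
    = \sum_{c \in \stdcomplex(k-1)}{\norm{\set{\stdface \suchthat \core{\stdface} = c,\ \stdcochain(\stdface) \ne 1}}}
    = \sum_{c \in \stdcomplex(k-1)}{\norm{\stdcochain^c}}.
\]
I do not expect any real obstacle; the only point that needs care is that the decomposition of the support over cores is genuinely a partition, which is exactly what Lemma~\ref{representation-has-a-core} (used via the previous lemma) guarantees, and which is why the hypothesis $i \ge 1$ is needed — for $i = 0$ a vertex of $\repcplx{\stdcomplex}$ is itself a $k$-face and has no single $(k-1)$-core to which it could be attributed.
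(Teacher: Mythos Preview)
Your proposal is correct and follows exactly the approach the paper intends: the paper states this corollary without proof, treating it as immediate from the preceding lemma (that the supports of the $\stdcochain^c$ partition the support of $\stdcochain$), which is precisely the argument you spell out.
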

We will now show that the deconstruction of a cochain to a set of cochains is, in a sense, independent.
We will start by showing that the distance of a cochain from the coboundaries can be derived by measuring the distance of the cochain from the coboundaries around every core:
\begin{lemma}
    \label{distance-core-decomposition}
    Let $\stdcochain \in \cochainset{i}{\repcplx[k]{\stdcomplex};G}$.
    If $\cochain{A}=\argmin_{\cochain{B}\in \cocycleset{i}{\repcplx[k]{\stdcomplex};G}}{\norm{\stdcochain\cochain{B}^{-1}}}$ then for every core $c$ it holds that $\cochain{A}^c=\argmin_{\cochain{B}\in \cocycleset{i}{\repcplxcore[k]{c}{\stdcomplex};G}}{\norm{\stdcochain^c\cochain{B}^{-1}}}$
\end{lemma}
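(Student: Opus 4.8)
The plan is to show that both the \emph{constraint} (being a cocycle) and the \emph{objective} $\norm{\stdcochain\cochain{B}^{-1}}$ in the minimization defining $\cochain{A}$ decompose over the cores $c \in \stdcomplex(k-1)$, so that the global minimization separates into independent per-core minimizations. (We may assume $i \ge 1$, since the core-decomposition $\stdcochain \mapsto (\stdcochain^c)_c$ of Definition~\ref{co-chain-around-core} is only available in that range, and cocycles are defined for $i \le 1$, so in effect $i = 1$.) First I would establish that the cocycle condition is local to each core: a cochain $\cochain{B} \in \cochainset{i}{\repcplx[k]{\stdcomplex};G}$ is a cocycle iff $\cochain{B}^c \in \cocycleset{i}{\repcplxcore[k]{c}{\stdcomplex};G}$ for every core $c$. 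The point is that every $(i+1)$-dimensional face $\genface$ of the representation complex has a single core, and by Lemma~\ref{representation-has-a-core} all of its $i$-dimensional sub-faces share that same core; hence, evaluating $d_i\cochain{B}$ on $\genface$, only the factor $\cochain{B}^{\core{\genface}}$ matters — every other $\cochain{B}^c$ ($c \ne \core{\genface}$) is trivial on $\genface$ and on each sub-face of $\genface$ — so $d_i\cochain{B}(\genface) = d_i(\cochain{B}^{\core{\genface}})(\genface)$ while $d_i(\cochain{B}^c)(\genface) = \mathbbm{1}$ for $c \ne \core{\genface}$. The same observation shows that $(\cochain{B}^c)_c \mapsto \prod_c \cochain{B}^c$ is a bijection from families of per-core cocycles to global cocycles, inverse to the core-decomposition: antisymmetry and the vanishing of $d_i$ are checked face by face, and on each face the multi-term product collapses to a single factor.

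Next I would combine this with additivity of the norm over cores. By the corollary following Definition~\ref{co-chain-around-core} and the fact that each face has a unique core (so $(\stdcochain\cochain{B}^{-1})^c = \stdcochain^c(\cochain{B}^c)^{-1}$), we get $\norm{\stdcochain\cochain{B}^{-1}} = \sum_c \norm{(\stdcochain\cochain{B}^{-1})^c} = \sum_c \norm{\stdcochain^c(\cochain{B}^c)^{-1}}$; and if one prefers to measure $\norm{\stdcochain^c(\cochain{B}^c)^{-1}}$ inside $\repcplxcore[k]{c}{\stdcomplex}$ rather than inside $\repcplx[k]{\stdcomplex}$, the two values differ only by the positive constant $t_c$ of the weight-proportionality lemma, which does not affect any $\argmin$. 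Now let $\cochain{A}$ be a global minimizer, fix a core $c_0$, and suppose for contradiction that some $\cochain{B}_0 \in \cocycleset{i}{\repcplxcore[k]{c_0}{\stdcomplex};G}$ satisfies $\norm{\stdcochain^{c_0}\cochain{B}_0^{-1}} < \norm{\stdcochain^{c_0}(\cochain{A}^{c_0})^{-1}}$. Replace the $c_0$-component of $\cochain{A}$ by $\cochain{B}_0$, keeping all other components $\cochain{A}^c$ ($c \ne c_0$) unchanged; by the bijection above the result $\tilde{\cochain{A}}$ is a global cocycle, and by additivity $\norm{\stdcochain\tilde{\cochain{A}}^{-1}} = \norm{\stdcochain^{c_0}\cochain{B}_0^{-1}} + \sum_{c \ne c_0}\norm{\stdcochain^c(\cochain{A}^c)^{-1}} < \norm{\stdcochain\cochain{A}^{-1}}$, contradicting minimality of $\cochain{A}$. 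Hence $\cochain{A}^{c_0}$ minimizes $\norm{\stdcochain^{c_0}\cochain{B}^{-1}}$ over $\cochain{B} \in \cocycleset{i}{\repcplxcore[k]{c_0}{\stdcomplex};G}$, which is the claim.

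The main obstacle is the first paragraph: one must make sure that the cocycle condition and the passage between global and per-core cocycles behave correctly even though $G$ need not be abelian, so that $d_i(\prod_c \cochain{B}^c)$ cannot be expanded factor by factor in general. The resolution is precisely the per-face collapse of the product — on any $(i+1)$-face together with all its sub-faces at most one factor is non-trivial, by Lemma~\ref{representation-has-a-core}. Once that is in hand the rest is the routine principle that the minimum of a sum of independent terms equals the sum of the minima and is attained coordinatewise, using only the cochain-decomposition lemma and its norm corollary, and (for the identification of norms on $\repcplxcore[k]{c}{\stdcomplex}$ versus $\repcplx[k]{\stdcomplex}$) the weight-proportionality lemma and Lemma~\ref{the-representation-complex-around-a-core-is-isomorphic-to-links}.
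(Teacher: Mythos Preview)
Your proposal is correct and follows essentially the same approach as the paper: decompose the norm additively over cores, observe that the cocycle condition is local to each core (since every $(i+1)$-face and all its sub-faces share a single core), and then argue by contradiction that replacing the $c_0$-component of the global minimizer $\cochain{A}$ by a strictly better local cocycle would yield a global cocycle of smaller norm. The paper's proof is identical in structure, only omitting your explicit discussion of the $t_c$ scaling between the two norms (which, as you note, does not affect the $\argmin$).
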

\begin{proof}
    It is easy to see that for every possible core $c$ and any two cochains $\stdcochain_1, \stdcochain_2$ it holds that $(\stdcochain_1 \cdot \stdcochain_2)^c = \stdcochain_1^c \cdot \stdcochain_2^c$ and $\parens{\stdcochain^{-1}}^c = \parens{\stdcochain^c}^{-1}$.
    In addition:
    \begin{equation*}
        \norm{\stdcochain\cochain{A}^{-1}}
        = \sum_{c \in \stdcomplex(k-1)}{\norm{\parens{\stdcochain\cochain{A}^{-1}}^c}}
        = \sum_{c \in \stdcomplex(k-1)}{\norm{\stdcochain^c\parens{\cochain{A}^{-1}}^c}}
        = \sum_{c \in \stdcomplex(k-1)}{\norm{\stdcochain^c\parens{\cochain{A}^{c}}^{-1}}}
    \end{equation*}
    We will complete the proof using the counter positive argument: Assume that there exists a core $c'$ in which $\cochain{L} = \argmin_{\cochain{B}\in \cocycleset{i}{\repcplxcore[k]{c'}{\stdcomplex};G}}{\norm{\stdcochain^{c'}\cochain{B}^{-1}}}$ and $\norm{\stdcochain^{c'}\cochain{L}^{-1}} < \norm{\stdcochain^{c'}\parens{\cochain{A}^{c'}}^{-1}}$.
    Therefore consider the following cochain:
    \begin{equation*}
        \cochain{A}'(\stdface) = \begin{cases}
                                     \cochain{A}^{\core{\stdface}}(\stdface)      &\core{\stdface} \ne c'\\
                                     \cochain{L}(\stdface)             &\text{otherwise}
        \end{cases}
    \end{equation*}
    Note that every face $\genface \in \repcplx{\stdcomplex}(i+1)$ is completely contained in $\repcplxcore[k]{\core{\genface}}{\stdcomplex}$ and therefore if $\core{\genface} \ne c'$ then $d_1\cochain{A}'(\genface) = d_1\cochain{A}^{\core{\genface}}(\genface)=1$ and otherwise $d_1\cochain{A}'(\genface) = d_1\cochain{L}(\genface)=1$.
    Therefore $\cochain{A}'$ is a cocycle.
    Also note that:
    \begin{align*}
        \norm{\stdcochain\cochain{A}'^{-1}}
        & = \sum_{c \in \stdcomplex(k-1)}{\norm{\parens{\stdcochain\cochain{A}'^{-1}}^c}}
        = \sum_{c \in \stdcomplex(k-1) \setminus \set{c'}}{\norm{\stdcochain \parens{\cochain{A}^c}^{-1}}} + \norm{\stdcochain\cochain{L}^{-1}}\\
        &< \sum_{c \in \stdcomplex(k-1)}{\norm{\parens{\stdcochain\cochain{A}^{-1}}^c}}
        = \norm{\stdcochain\cochain{A}^{-1}}
    \end{align*}
    Which contradicts our choice of $\cochain{A}$.
\end{proof}
\begin{lemma}[The representation around every core is an expander]
    \label{the-representation-around-every-core-is-an-expander}
    For every core $c$, $\repcplxcore[k]{c}{\stdcomplex}$ is a $\gamma$-coboundary-expander.
\end{lemma}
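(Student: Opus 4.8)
The plan is to reduce everything to the coboundary expansion of $\stdcomplex$ itself, using the isomorphism from Lemma~\ref{the-representation-complex-around-a-core-is-isomorphic-to-links}. Recall that a core $c$ is always a $(k-1)$-dimensional face of $\stdcomplex$, and that $\repcplxcore[k]{c}{\stdcomplex}\cong\stdcomplex_c$, a pure complex of dimension $d-k$. First I would record that coboundary expansion is an isomorphism invariant: the weight function of a pure simplicial complex is completely determined by its face poset (it counts the top-dimensional faces above a given face), so an isomorphism of pure simplicial complexes transports weights, hence the norm, the operators $d_i$, the spaces $\cochainset{i}{\stdcomplex;G}$, $\cocycleset{i}{\stdcomplex;G}$, $\coboundaryset{i}{\stdcomplex;G}$, and therefore the coboundary Cheeger constants $h_0$ and $h_1$ (with coefficients in $G$) of every link. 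Consequently it suffices to prove that $\stdcomplex_c$ is a $\gamma$-coboundary expander, where $\gamma$ is the coboundary expansion constant assumed for $\stdcomplex$.

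Next I would unwind the definition of coboundary expander applied to the $(d-k)$-dimensional complex $\stdcomplex_c$: one must check that for every face $\stdface\in\stdcomplex_c$ with $\dim(\stdface)<(d-k)-2$ one has $h_0((\stdcomplex_c)_\stdface;G)\ge\gamma$ and $h_1((\stdcomplex_c)_\stdface;G)\ge\gamma$. Here I would use the composition of links: since $\stdface$ (as a vertex set) is disjoint from $c$, the set $\stdface\cup c$ is a genuine face of $\stdcomplex$ and $(\stdcomplex_c)_\stdface=\stdcomplex_{\stdface\cup c}$, with $\dim(\stdface\cup c)=\dim(\stdface)+\abs{c}=\dim(\stdface)+k$. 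Hence $\dim(\stdface)<(d-k)-2$ forces $\dim(\stdface\cup c)<d-2$, so $\stdface\cup c$ lies below the threshold appearing in the definition of a coboundary expander for $\stdcomplex$. The hypothesis that $\stdcomplex$ is a $\gamma$-coboundary expander then gives $h_0(\stdcomplex_{\stdface\cup c};G)\ge\gamma$ and $h_1(\stdcomplex_{\stdface\cup c};G)\ge\gamma$; transporting this back along the isomorphism of the first paragraph is exactly the desired conclusion.

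I do not expect a genuinely hard step here; the argument is a double translation through the two identities $\repcplxcore[k]{c}{\stdcomplex}\cong\stdcomplex_c$ and $(\stdcomplex_c)_\stdface=\stdcomplex_{\stdface\cup c}$. The only points needing care are (i) that the isomorphism of Lemma~\ref{the-representation-complex-around-a-core-is-isomorphic-to-links}, which was constructed while ignoring weights and orientation, nevertheless respects the weight function --- this follows from the poset characterization of the weight together with purity of both complexes --- and (ii) the dimension bookkeeping, including the degenerate range $d-k\le 1$ in which the defining condition of a coboundary expander is vacuous and the claim holds trivially for any $\gamma$.
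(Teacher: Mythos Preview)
Your proposal is correct and follows the same approach as the paper: invoke the isomorphism $\repcplxcore[k]{c}{\stdcomplex}\cong\stdcomplex_c$ from Lemma~\ref{the-representation-complex-around-a-core-is-isomorphic-to-links} and then use that the link $\stdcomplex_c$ is a $\gamma$-coboundary expander. The paper's proof is a terse two-liner that takes the latter fact for granted (it is part of the standing hypothesis on $\stdcomplex$), whereas you spell out the link-of-a-link identity $(\stdcomplex_c)_\stdface=\stdcomplex_{\stdface\cup c}$, the dimension bookkeeping, and the isomorphism-invariance of the weight-based Cheeger constants; these are exactly the details the paper suppresses.
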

\begin{proof}
    The representation complex around the core $c$ is isomorphic to the link of $c$ in the original complex due to Lemma~\ref{the-representation-complex-around-a-core-is-isomorphic-to-links}.
    $\stdcomplex_c$ is a $\gamma$-coboundary-expander and therefore so is $\repcplxcore[k]{c}{\stdcomplex}$.
\end{proof}
\begin{lemma}\label{co-boundary-operator-around-a-core}
    For every core $c$ and every cochain $\stdcochain$ it holds that $\parens{d_1\stdcochain}^c = d_1\parens{\stdcochain^c}$
\end{lemma}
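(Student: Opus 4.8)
The plan is to verify the identity pointwise on the $2$-dimensional faces of the representation complex. Fix a core $c \in \stdcomplex(k-1)$, a cochain $\stdcochain \in \cochainset{1}{\repcplx{\stdcomplex};G}$, and an arbitrary $\genface = \parens{\stdvertex_0,\stdvertex_1,\stdvertex_2} \in \repcplx{\stdcomplex}(2)$. Unwinding the definition of $d_1$, both $\parens{d_1\stdcochain}^c(\genface)$ and $d_1\parens{\stdcochain^c}(\genface)$ are built from the values of $\stdcochain$ (respectively $\stdcochain^c$) on the three edges $\set{\stdvertex_0,\stdvertex_1}$, $\set{\stdvertex_1,\stdvertex_2}$, $\set{\stdvertex_2,\stdvertex_0}$ of $\genface$, so it suffices to understand how restriction to $c$ acts on those edges.

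The single fact that makes everything work is that all three edges of $\genface$ have \emph{the same} core, namely $\core{\genface}$. Indeed each edge $\set{\stdvertex_a,\stdvertex_b} \subseteq \genface$ is a $1$-face of $\repcplx{\stdcomplex}$, so $\core{\set{\stdvertex_a,\stdvertex_b}} = \stdvertex_a \cap \stdvertex_b$, and since $\genface$ has dimension at least $1$, Lemma~\ref{representation-has-a-core} gives $\stdvertex_a \cap \stdvertex_b = \core{\genface}$ for every $a \ne b$. I would then split into two cases. If $\core{\genface} = c$, then by Definition~\ref{co-chain-around-core} we have $\stdcochain^c(\stdvertex_a,\stdvertex_b) = \stdcochain(\stdvertex_a,\stdvertex_b)$ for each of the three edges, hence $d_1\parens{\stdcochain^c}(\genface) = d_1\stdcochain(\genface)$; and since $\core{\genface} = c$ also $\parens{d_1\stdcochain}^c(\genface) = d_1\stdcochain(\genface)$, so the two sides coincide. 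If $\core{\genface} \ne c$, then all three edges of $\genface$ likewise have core $\core{\genface} \ne c$, so $\stdcochain^c$ equals $1$ on each of them and $d_1\parens{\stdcochain^c}(\genface) = 1$, while $\parens{d_1\stdcochain}^c(\genface) = 1$ directly from the definition of the restriction. In both cases the two functions agree on $\genface$, and since $\genface$ was arbitrary this proves $\parens{d_1\stdcochain}^c = d_1\parens{\stdcochain^c}$.

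I do not expect a genuine obstacle here: the only content is the compatibility of the restriction operator across dimensions $1$ and $2$, which is forced by the sunflower structure of the representation complex recorded in Lemma~\ref{representation-has-a-core}. The one point worth a sentence of care is that $\stdcochain^c$ really is a legitimate element of $\cochainset{1}{\repcplx{\stdcomplex};G}$, i.e.\ that $\stdcochain^c(\stdvertex,\genvertex) = \parens{\stdcochain^c(\genvertex,\stdvertex)}^{-1}$; this is immediate because $\set{\stdvertex,\genvertex}$ and its reverse share the same underlying vertex set and hence the same core, so the restriction either keeps both values or replaces both by $1$.
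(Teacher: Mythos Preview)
Your proof is correct and follows essentially the same pointwise case split as the paper: both arguments check the identity on each $2$-face by observing that $\core{\genface} = c$ forces all three edges to have core $c$ (so restriction is the identity there) while $\core{\genface} \ne c$ forces both sides to equal $1$. Your version is slightly more explicit in invoking Lemma~\ref{representation-has-a-core} to justify that the edges inherit the core of $\genface$, which is the point the paper's chain of equalities uses implicitly.
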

\begin{proof}
    Let $\stdcochain \in \cochainset{i}{\stdcomplex;G}$ for $i \ge 1$:
    \begin{align*}
        \forall \stdface \in \stdcomplex(i+1): \parens{d_1\stdcochain}^c(\stdface)
        & = \begin{cases}
                d_1\stdcochain(\stdface)   & \core{\stdface} = c\\
                1                             & \text{otherwise}
        \end{cases}
        = \begin{cases}
              d_1\stdcochain(\stdface)   & \core{\stdface} = c\\
              d_1\mathbbm{1}(\stdface)    & \text{otherwise}
        \end{cases}\\
        &= d_1 \begin{cases}
                      \stdcochain(\stdface)   & \core{\stdface} = c\\
                      \mathbbm{1}(\stdface)    & \text{otherwise}
        \end{cases} =
        d_1\parens{\stdcochain^c}
    \end{align*}
\end{proof}
\begin{lemma}[The representation complex expands in the first dimension]
    \label{representation-complex-expands-in-first-dimension}
    For every cochain $\stdcochain \in \cochainset{1}{\repcplx{\stdcomplex};G}$:
    \begin{equation*}
        \frac{\norm{d_1 \stdcochain}}{\dist\parens{\stdcochain, \cocycleset{1}{\repcplx{\stdcomplex}_\stdface;G}}} \ge \gamma
    \end{equation*}
\end{lemma}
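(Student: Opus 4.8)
The plan is to leverage the core decomposition $\stdcochain = \prod_{c \in \stdcomplex(k-1)}\stdcochain^c$ together with the fact, established in Lemma~\ref{the-representation-around-every-core-is-an-expander}, that each $\repcplxcore[k]{c}{\stdcomplex}$ is a $\gamma$-coboundary expander (being isomorphic to a link of $\stdcomplex$). Concretely, I will show that both the numerator $\norm{d_1\stdcochain}$ and the denominator $\dist\parens{\stdcochain, \cocycleset{1}{\repcplx{\stdcomplex};G}}$ split as sums over the cores $c \in \stdcomplex(k-1)$, and then apply the per-core expansion bound term by term.

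First I would handle the numerator. By Lemma~\ref{co-boundary-operator-around-a-core} we have $(d_1\stdcochain)^c = d_1(\stdcochain^c)$ for every core $c$, and since every $2$-face of $\repcplx{\stdcomplex}$ has a unique core the cochains $\set{(d_1\stdcochain)^c}_c$ have pairwise disjoint supports; hence by the corollary to the core decomposition $\norm{d_1\stdcochain} = \sum_c \norm{d_1(\stdcochain^c)}$. Next I would handle the denominator. Let $\cochain{A} = \argmin_{\cochain{B}\in\cocycleset{1}{\repcplx{\stdcomplex};G}}\norm{\stdcochain\cochain{B}^{-1}}$. Applying the numerator identity to $\cochain{A}$ shows that each $\cochain{A}^c$ is a cocycle of $\repcplxcore[k]{c}{\stdcomplex}$, and Lemma~\ref{distance-core-decomposition} says that $\cochain{A}^c$ is in fact the closest such cocycle to $\stdcochain^c$. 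Using $(\stdcochain\cochain{A}^{-1})^c = \stdcochain^c(\cochain{A}^c)^{-1}$ and the norm decomposition again,
\[
    \dist\parens{\stdcochain, \cocycleset{1}{\repcplx{\stdcomplex};G}} = \norm{\stdcochain\cochain{A}^{-1}} = \sum_c \norm{\stdcochain^c(\cochain{A}^c)^{-1}} = \sum_c \dist\parens{\stdcochain^c, \cocycleset{1}{\repcplxcore[k]{c}{\stdcomplex};G}}.
\]

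Finally, for a fixed core $c$, coboundary expansion of $\repcplxcore[k]{c}{\stdcomplex}$ gives $\norm{d_1(\stdcochain^c)} \ge \gamma\,\dist\parens{\stdcochain^c, \coboundaryset{1}{\repcplxcore[k]{c}{\stdcomplex};G}} \ge \gamma\,\dist\parens{\stdcochain^c, \cocycleset{1}{\repcplxcore[k]{c}{\stdcomplex};G}}$, the last step because $\coboundaryset{1}{\repcplxcore[k]{c}{\stdcomplex};G}\subseteq\cocycleset{1}{\repcplxcore[k]{c}{\stdcomplex};G}$; and when $\stdcochain^c$ is already a coboundary both sides vanish, so the inequality is unconditional. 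Summing over $c$ and substituting the two decompositions yields $\norm{d_1\stdcochain}\ge\gamma\,\dist\parens{\stdcochain,\cocycleset{1}{\repcplx{\stdcomplex};G}}$, which rearranges to the claimed bound (and is trivial when the distance is zero, since then $\stdcochain$ is a cocycle and $d_1\stdcochain = \mathbbm{1}$).

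The only genuinely nontrivial step is the denominator decomposition: a priori one only gets $\dist\parens{\stdcochain,\cocycleset{1}{\repcplx{\stdcomplex};G}} \le \sum_c \dist\parens{\stdcochain^c,\cocycleset{1}{\repcplxcore[k]{c}{\stdcomplex};G}}$, since gluing per-core optimal cocycles $\cochain{B}_c$ into a single $\prod_c \cochain{B}_c$ only gives an upper bound; equality — and with it the reverse inequality — is exactly the content of Lemma~\ref{distance-core-decomposition}, which works because every higher-dimensional face of $\repcplx{\stdcomplex}$ is entirely contained in the sub-complex $\repcplxcore[k]{c}{\stdcomplex}$ around a single core $c$, so a family of per-core cocycles automatically glues to a global cocycle. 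Everything else is bookkeeping with the core decomposition already set up above.
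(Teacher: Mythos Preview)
Your proposal is correct and follows essentially the same route as the paper's proof: decompose both $\norm{d_1\stdcochain}$ and the distance to the cocycles over the cores via Lemmas~\ref{co-boundary-operator-around-a-core} and~\ref{distance-core-decomposition}, apply the per-core coboundary expansion from Lemma~\ref{the-representation-around-every-core-is-an-expander}, and sum. The only presentational difference is that the paper makes the norm conversion between $\repcplx{\stdcomplex}$ and $\repcplxcore[k]{c}{\stdcomplex}$ explicit via the proportionality constants $t_c$, whereas you leave it implicit (harmless, since the expansion inequality is a ratio and hence scale-invariant under the uniform rescaling $w_{\repcplx{\stdcomplex}} = t_c\, w_{\repcplxcore[k]{c}{\stdcomplex}}$ on faces with core $c$).
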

\begin{proof}
    Due to Lemma~\ref{the-representation-around-every-core-is-an-expander} around every core the representation complex is a $\gamma$-coboundary-expander.
    Let $\cochain{A}=\argmin_{\cochain{B}\in \cocycleset{i}{\repcplxcore[k]{c}{\stdcomplex};G}}{\norm{\stdcochain\cochain{B}^{-1}}}$.
    \begin{align*}
        \norm{d_1 \stdcochain} &= \sum_{c \in \stdcomplex(k-1)}{\norm{d_1 \stdcochain^c}}
        = \sum_{c \in \set{cores}}{t_c\norm{d_1\stdcochain^c}_{\repcplxcore[k]{c}{\stdcomplex}}} \\
        &\ge \gamma \sum_{c \in \stdcomplex(k-1)}{t_c \dist_{\repcplxcore[k]{c}{\stdcomplex}}\parens{\stdcochain^c, \cocycleset{i}{\repcplxcore[k]{c}{\stdcomplex};G}}}
        = \gamma \sum_{c \in \stdcomplex(k-1)}{t_c \norm{\cochain{A}^c}_{\repcplxcore[k]{c}{\stdcomplex}}} \\
        &= \gamma \sum_{c \in \stdcomplex(k-1)}{\norm{\cochain{A}^c}} = \gamma\norm{\cochain{A}}
        = \gamma \dist\parens{\stdcochain, \cocycleset{i}{\repcplxcore[k]{c}{\stdcomplex}}}
    \end{align*}
\end{proof}
We finish this section by noting that, despite the fact that the representation complex's cohomology is not trivial, the distance of a cochain from the coboundaries can still be tested.
This test will be further explored in Appendix~\ref{sec:on-the-testability-of-co-boundaries-in-the-representation-complex} where we prove the following lemma:
\begin{lemma}[The Coboundaries of the Representation Complex are Testable]\label{co-boundaries-in-representation-complex-are-testable}
    If $\stdcomplex$ is a $\gamma$-coboundary-expander then there exists a tester $T$ that queries exactly $3$ edges\footnote{We can claim something even stronger. Specifically that $T$ picks three vertices and queries the edges between them.} and a constant $\cobdrtestconst = \cobdrtestconst(k, \gamma)$ such that:
    \[
        \dist\parens{\stdcochain, \coboundaryset{1}{\repcplx{\stdcomplex}}} \le \cobdrtestconst\cdot \pr{T \text{ rejects } \stdcochain}
    \]
\end{lemma}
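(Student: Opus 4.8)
The plan is to reduce testability of coboundaries on the representation complex to testability around each core, using the decomposition machinery already developed (Definition~\ref{co-chain-around-core}, Lemma~\ref{distance-core-decomposition}, Lemma~\ref{co-boundary-operator-around-a-core}) together with the fact that each $\repcplxcore[k]{c}{\stdcomplex}$ is isomorphic to a link $\stdcomplex_c$, which is a $\gamma$-coboundary-expander. The subtlety is that coboundary expansion of the links only tells us that the distance of a $1$-cochain from the \emph{cocycles} is controlled by $\norm{d_1 \stdcochain}$ (this is Lemma~\ref{representation-complex-expands-in-first-dimension}), whereas the lemma asks for the distance from the \emph{coboundaries}. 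So the argument must have two parts: first correct $\stdcochain$ to a nearby cocycle $\stdcochain'$ using the $h_1$-expansion, and then correct $\stdcochain'$ to a coboundary. For the second part I would use the $h_0$-expansion of the links — but the global complex has nontrivial first cohomology, so a genuine cocycle of $\repcplx{\stdcomplex}$ need not be a coboundary. The resolution is that the lemma's tester is allowed to query three edges spanning a triangle, and we design it to detect \emph{both} failures: failure of the cocycle condition (some triangle $\stdface$ with $d_1\stdcochain(\stdface)\neq 1$) and, after that is fixed, failure of the coboundary condition.

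First I would define the tester $T$: it runs Algorithm~\ref{alg:sample-from-the-representation-complex} to pick a $2$-dimensional face of $\repcplx{\stdcomplex}$ — equivalently, pick a $(k+2)$-face $\genface$ of $\stdcomplex$ and a core $c \in \binom{\genface}{k}$, yielding a triangle $\rep{c}{\genface}$ with three vertices $\rep{c_1}{\genface_1},\rep{c_2}{\genface_2},\rep{c_3}{\genface_3}$ where $\genface_1,\genface_2,\genface_3 \in \stdcomplex(k+1)$ — and query the three edges of this triangle, accepting iff $d_1\stdcochain$ evaluated on it is trivial. The probability of rejection is then $\norm{d_1\stdcochain}_{\repcplx{\stdcomplex}}$. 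By the core decomposition (corollary to Lemma~\ref{co-boundary-operator-around-a-core} and the norm-decomposition corollary) this equals $\sum_{c}\norm{d_1\stdcochain^c}$, and each face of $\repcplx{\stdcomplex}(2)$ lies entirely in a single $\repcplxcore[k]{c}{\stdcomplex}$, so the rejection probability splits as $\sum_c t_c\,\pr{T_c\text{ rejects }\stdcochain^c}$ where $T_c$ is the corresponding triangle-test inside $\repcplxcore[k]{c}{\stdcomplex}\cong\stdcomplex_c$.

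Now I would invoke coboundary expansion of the link $\stdcomplex_c$ in the standard way: since $h_1(\stdcomplex_c;G)\geq\gamma$, the distance from $\stdcochain^c$ to $\cocycleset{1}{\repcplxcore[k]{c}{\stdcomplex};G}$ is at most $\gamma^{-1}\norm{d_1\stdcochain^c}$; and then since $h_0(\stdcomplex_c;G)\geq\gamma$ as well — which is what ``$\epsilon$-coboundary expander'' in the excerpt guarantees for every face of dimension $<d-2$ — the first cohomology of the \emph{link} vanishes (with a quantitative bound), so the nearby cocycle is in fact within $O(\gamma^{-1})$ of a coboundary of $\repcplxcore[k]{c}{\stdcomplex}$. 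Combining, $\dist_{\repcplxcore[k]{c}{\stdcomplex}}(\stdcochain^c,\coboundaryset{1}{\repcplxcore[k]{c}{\stdcomplex};G}) \leq O_{\gamma,k}(1)\cdot\pr{T_c\text{ rejects}}$. Reassembling over cores, using $\norm{\cdot}_{\repcplx{\stdcomplex}}=t_c\norm{\cdot}_{\repcplxcore[k]{c}{\stdcomplex}}$ and that a product of per-core coboundaries is a coboundary of $\repcplx{\stdcomplex}$ (since the coboundary operator commutes with the per-core restriction and the $0$-cochains glue: each $0$-cochain on a link around $c$ is a $0$-cochain on the relevant vertices of $\repcplx{\stdcomplex}$), yields $\dist(\stdcochain,\coboundaryset{1}{\repcplx{\stdcomplex}}) \leq \cobdrtestconst\cdot\pr{T\text{ rejects }\stdcochain}$ with $\cobdrtestconst = \cobdrtestconst(k,\gamma)$.

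The main obstacle is the gluing step at the end: it must be checked that assembling the per-core corrections actually produces a single coboundary of $\repcplx{\stdcomplex}$ rather than merely a cochain that is a coboundary on each core-subcomplex separately — the $0$-cochains witnessing ``coboundary'' on different links $\repcplxcore[k]{c}{\stdcomplex}$ live on overlapping vertex sets of $\repcplx{\stdcomplex}$ (two faces $\rep{c}{\stdface},\rep{c'}{\stdface}$ representing the same $k$-face $\stdface$ are literally the same vertex), and the corrections around different cores might disagree there. I expect this is handled by observing that a vertex of $\repcplx{\stdcomplex}$ — a $k$-face $\stdface$ of $\stdcomplex$ — has a well-defined ``position'' in at most one core-subcomplex of each relevant type, or more carefully by working with $d_1$-closedness: the per-core decomposition is disjoint at the level of faces of dimension $\ge 1$, so the obstruction reduces to consistency only along shared vertices, and one picks the global $0$-cochain first (from the cocycle correction, which is already global via Lemma~\ref{representation-complex-expands-in-first-dimension}) and only then decomposes. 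This is exactly why one should first correct to a global cocycle using Lemma~\ref{representation-complex-expands-in-first-dimension} and then handle the cocycle-to-coboundary step; the cocycle-to-coboundary step for a \emph{genuine} cocycle of $\repcplx{\stdcomplex}$ is where the vanishing of $H^1$ of the links (as opposed to $H^1$ of $\repcplx{\stdcomplex}$ itself) must be leveraged, presumably via a spanning-tree / contractibility argument on the core-decomposition poset.
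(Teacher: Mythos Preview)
There is a genuine gap, and it is precisely the one you flag at the end but do not resolve: your tester is too weak to be a coboundary tester at all.

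Your tester samples a $2$-face of $\repcplx{\stdcomplex}$ and checks $d_1\stdcochain$ on it. Every $2$-face of $\repcplx{\stdcomplex}$ has a single core (Lemma~\ref{representation-has-a-core}), so your tester only ever probes within one $\repcplxcore[k]{c}{\stdcomplex}$ at a time. Consequently it accepts every cocycle of $\repcplx{\stdcomplex}$ with probability $1$. But the paper states explicitly (and you repeat) that $\repcplx{\stdcomplex}$ has nontrivial first cohomology, so there exist cocycles $\stdcocycle$ with $\dist(\stdcocycle,\coboundaryset{1}{\repcplx{\stdcomplex}})>0$ yet $\pr{T\text{ rejects }\stdcocycle}=0$. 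No finite $\cobdrtestconst$ can then satisfy the inequality. Your claim that ``a product of per-core coboundaries is a coboundary of $\repcplx{\stdcomplex}$'' is exactly what fails: a $k$-face $\stdface$ is a vertex in $k+1$ different core-subcomplexes, and the witnessing $0$-cochains $h^c(\stdface)$ need not agree across cores. No spanning-tree argument on the core poset can repair this from the information your tester gathers, because your tester never queries an edge-triple spanning more than one core.

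The paper's tester (Algorithm~\ref{alg:empty-triangle-test}) adds a second kind of query: with probability $1/2$ it samples a \emph{$(k{-}1)$-empty-triangle} --- three vertices $u,v,w$ of $\repcplx{\stdcomplex}$ pairwise joined by edges, but with three \emph{different} cores $u\cap v,\,v\cap w,\,w\cap u$ that themselves form a triangle in $\repcplx[k-1]{\stdcomplex}$ --- and checks $\stdcochain(u,v)\stdcochain(v,w)\stdcochain(w,u)$. For a cocycle $\stdcocycle$, this quantity equals $d_1\check{\stdcocycle}$ evaluated on the corresponding triangle of $\repcplx[k-1]{\stdcomplex}$, where $\check{\stdcocycle}(\check u,\check v)=\parens{h^{\check u}(\check u\cup\check v)}^{-1}h^{\check v}(\check u\cup\check v)$ is the \emph{attachment map} recording how the per-core $0$-cochains $h^c$ disagree on shared vertices. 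The argument is then by induction on $k$: the full-triangle part controls $\dist(\stdcochain,\cocycleset{1}{\repcplx{\stdcomplex}})$ via Lemma~\ref{representation-complex-expands-in-first-dimension}; the empty-triangle part controls $\norm{d_1\check{\stdcocycle}}$ in $\repcplx[k-1]{\stdcomplex}$, to which the inductive hypothesis applies; and a careful choice of canonical cores transfers $\dist(\check{\stdcocycle},\coboundaryset{1}{\repcplx[k-1]{\stdcomplex}})$ back to $\dist(\stdcocycle,\coboundaryset{1}{\repcplx{\stdcomplex}})$. The base case $k=0$ is $\repcplx[0]{\stdcomplex}=\stdcomplex$ itself, where $\gamma$-coboundary expansion applies directly.
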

\section{On the Covers That Correspond to Coboundaries}\label{sec:on-the-covers-of-complexes-that-represent-co-boundaries}
Recall the connection between cocycles and cover spaces of a simplicial complex.
Specifically, recall that $Y_\stdcocycle$ is a cover space of $\stdcomplex$ iff $\stdcocycle$ is a cocycle.
In this section we will be interested in the properties of cover spaces that corresponds with coboundaries.
We start by showing that, if $\stdcocycle$ is a coboundary then the lift of any cycle $C$ in $\stdcomplex$ is a cycle in $Y_\stdcocycle$:
\begin{lemma}[Lifts of cycles are cycles]
    Let $\stdcoboundary \in \coboundaryset{1}{\stdcomplex;G}$ and let $Y_\stdcoboundary$ be a cover of $\stdcomplex$ with the simplicial map $f:Y_\stdcoboundary \rightarrow X$.
    Also let $C=(\stdvertex_0,\dots,\stdvertex_i,\stdvertex_0)$ be a vertex-edge cycle in $\stdcomplex$.
    Then every lift of $C$ is a cycle in the vertex in $Y_\stdcoboundary$.
\end{lemma}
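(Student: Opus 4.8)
The plan is to exploit the fact that a coboundary is of the form $\stdcoboundary = d_0 g$ for some $g \in \cochainset{0}{\stdcomplex;G}$, which makes the fibre coordinate along edges of $Y_\stdcoboundary$ governed by a single globally-defined quantity. First I would fix $g:\stdcomplex(0)\rightarrow G$ with $\stdcoboundary(\genvertex,\stdvertex)=g(\genvertex)g(\stdvertex)^{-1}$ for every edge $\set{\genvertex,\stdvertex}\in\stdcomplex(1)$, and recall from the definition of $Y_\stdcoboundary$ that a pair $\set{[\genvertex,s],[\stdvertex,t]}$ is an edge of $Y_\stdcoboundary$ exactly when $\set{\genvertex,\stdvertex}\in\stdcomplex(1)$ and $s=\stdcoboundary(\genvertex,\stdvertex)t=g(\genvertex)g(\stdvertex)^{-1}t$; equivalently, $g(\genvertex)^{-1}s=g(\stdvertex)^{-1}t$.

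Motivated by this, I would define an invariant $\iota:Y_\stdcoboundary(0)\rightarrow S$ by $\iota([\stdvertex,s])=g(\stdvertex)^{-1}s$. The displayed edge condition says precisely that any two vertices of $Y_\stdcoboundary$ joined by an edge have the same $\iota$-value, so $\iota$ is constant along every path in $Y_\stdcoboundary$, in particular along any lift of $C$. Now let $\gamma'=(\gamma_0',\dots,\gamma_i',\gamma_{i+1}')$ be a lift of $C=(\stdvertex_0,\dots,\stdvertex_i,\stdvertex_0)$, so $f(\gamma_0')=f(\gamma_{i+1}')=\stdvertex_0$; write $\gamma_0'=[\stdvertex_0,s]$ and $\gamma_{i+1}'=[\stdvertex_0,s']$. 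Since consecutive vertices of $\gamma'$ are adjacent in $Y_\stdcoboundary$ we have $\iota(\gamma_0')=\iota(\gamma_{i+1}')$, i.e.\ $g(\stdvertex_0)^{-1}s=g(\stdvertex_0)^{-1}s'$, and because $G$ acts on $S$ (so left multiplication by $g(\stdvertex_0)$ is a bijection of $S$) this forces $s=s'$, hence $\gamma_0'=\gamma_{i+1}'$. Therefore $\gamma'$ closes up and is a genuine cycle in $Y_\stdcoboundary$.

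The argument is a one-line telescoping once $\iota$ is in place, so there is no serious obstacle; the only points needing care are (i) matching the multiplication convention in the definition of $Y_\stdcochain$ so that the coboundary identity is applied in the correct direction, and (ii) noting that the claim only requires the lift to return to its starting vertex, which is exactly what the equality $\gamma_0'=\gamma_{i+1}'$ delivers.
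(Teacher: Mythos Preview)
Your proof is correct and is essentially the same argument as the paper's: both write $\stdcoboundary=d_0 g$ and use that the second coordinate along a lifted path is governed by a telescoping identity. The paper computes the product $\stdcoboundary(\stdvertex_0,\stdvertex_i)\prod_{j}\stdcoboundary(\stdvertex_{j+1},\stdvertex_j)$ directly and shows it collapses to $1$, whereas you package the same telescoping into the invariant $\iota([\stdvertex,s])=g(\stdvertex)^{-1}s$ being constant along edges; this is a cosmetic repackaging rather than a different route.
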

\begin{proof}
    Consider the lift of the cycle to the cover $Y_\stdcoboundary$ and note that it is comprised of the following edges (for some $s \in G$):
    \begin{align*}
        \set{[\stdvertex_0, s], [\stdvertex_1, \stdcoboundary(\stdvertex_1, \stdvertex_0)s]},
        \set{[\stdvertex_1, \stdcoboundary(\stdvertex_1, \stdvertex_0)s], [\stdvertex_2, \stdcoboundary(\stdvertex_0, \stdvertex_1)\stdcoboundary(\stdvertex_1, \stdvertex_0)s]}, \\
        \dots ,
        \set{[\stdvertex_i, \prod_{j=0}^{i-1}{\stdcoboundary(\stdvertex_{j+1}, \stdvertex_j)}s], [\stdvertex_0, \stdcoboundary(\stdvertex_0, \stdvertex_i)\prod_{j=0}^{i-1}{\stdcoboundary(\stdvertex_{j+1}, \stdvertex_j)}s]}
    \end{align*}
    Because $\stdcoboundary$ is a coboundary there exists a $\psi\in\cochainset{0}{\stdcomplex}$ such that $\stdcoboundary(\stdface_1, \stdface_2) = \psi(\stdface_1)\psi(\stdface_1)^{-1}$ therefore:
    \begin{align*}
        \stdcoboundary(\stdvertex_0, \stdvertex_i)\prod_{j=0}^{i-1}{\stdcoboundary(\stdvertex_{j+1}, \stdvertex_j)} &=
        \psi(\stdvertex_0)\psi(\stdvertex_i)^{-1}\prod_{j=0}^{i-1}{\psi(\stdvertex_{j+1})\psi(\stdvertex_j)^{-1}} \\
        & = \psi(\stdvertex_0)\psi(\stdvertex_i)^{-1}\psi(\stdvertex_i)\psi(\stdvertex_0)^{-1}
        = 1
    \end{align*}
    Therefore the path starts and ends on the same vertex.
\end{proof}
We can show something even stronger, namely that the cover is actually comprised of $l$ distinct instances of the original complex.
\begin{lemma}[Cover decomposition]\label{cover-decomposition}
Let $\stdcoboundary \in \coboundaryset{1}{\stdcomplex;G}$ and let $Y_\stdcoboundary$ be a cover of $\stdcomplex$ then there exists $Y_1,\dots,Y_l$ such that $Y_\stdcoboundary=\mathaccent\cdot{\bigcup}_{i=1}^l{Y_i}$ and $Y_1 \cong \cdots \cong Y_l \cong \stdcomplex$.
\end{lemma}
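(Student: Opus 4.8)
The plan is to build an explicit decomposition of $Y_\stdcoboundary$ by "trivializing" the coboundary. Since $\stdcoboundary \in \coboundaryset{1}{\stdcomplex;G}$, fix $\psi \in \cochainset{0}{\stdcomplex;G}$ with $\stdcoboundary(\stdvertex,\genvertex) = \psi(\stdvertex)\psi(\genvertex)^{-1}$. Recall that by construction $Y_\stdcoboundary = Y_\stdcocycle$ has vertex set $\set{[\stdvertex,s] \suchthat \stdvertex \in \stdcomplex(0),\ s \in S}$ where $G$ acts on $S$, and a face $\set{[\stdvertex_0,s_0],\dots,[\stdvertex_i,s_i]}$ lies in $Y_\stdcoboundary$ iff $\set{\stdvertex_0,\dots,\stdvertex_i} \in \stdcomplex$ and $s_j = \stdcoboundary(\stdvertex_j,\stdvertex_{j'})s_{j'}$ for all $j,j'$. (For the present lemma it suffices to take $S = G$ with the left regular action, so $l = \abs{G}$; the argument is identical in the general case, partitioning $S$ into $G$-orbits.) The key observation is that the relation $s_j = \psi(\stdvertex_j)\psi(\stdvertex_{j'})^{-1} s_{j'}$ is equivalent to $\psi(\stdvertex_j)^{-1} s_j = \psi(\stdvertex_{j'})^{-1} s_{j'}$, i.e.\ the quantity $\psi(\stdvertex)^{-1}s$ is constant across every face of $Y_\stdcoboundary$.

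First I would define, for each $g \in G$ (indexing the $l$ copies), the subcomplex
\[
    Y_g = \set{\set{[\stdvertex_0,s_0],\dots,[\stdvertex_i,s_i]} \in Y_\stdcoboundary \suchthat \psi(\stdvertex_j)^{-1}s_j = g \text{ for all } j}.
\]
I would check that $Y_g$ is a simplicial complex (closed under taking subfaces, which is immediate since the defining condition is inherited by subsets) and that the $Y_g$ are pairwise disjoint on vertices: $[\stdvertex,s] \in Y_g$ forces $s = \psi(\stdvertex)g$, so the copy containing a given vertex is determined. Conversely every vertex $[\stdvertex,s]$ lies in $Y_g$ for $g = \psi(\stdvertex)^{-1}s$, and every face of $Y_\stdcoboundary$ lies entirely in the single $Y_g$ determined by any of its vertices (by the constancy observation above). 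Hence $Y_\stdcoboundary = \mathaccent\cdot{\bigcup}_{g \in G} Y_g$ as simplicial complexes.

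Next I would exhibit the isomorphism $Y_g \cong \stdcomplex$. Define $\phi_g : \stdcomplex \to Y_g$ by $\phi_g(\set{\stdvertex_0,\dots,\stdvertex_i}) = \set{[\stdvertex_0,\psi(\stdvertex_0)g],\dots,[\stdvertex_i,\psi(\stdvertex_i)g]}$; this is well-defined into $Y_g$ since $\psi(\stdvertex_j)g = \stdcoboundary(\stdvertex_j,\stdvertex_{j'})\psi(\stdvertex_{j'})g$ exactly because $\stdcoboundary = d_0\psi$, and $\psi(\stdvertex_j)^{-1}(\psi(\stdvertex_j)g) = g$. Its inverse is the covering map $f$ restricted to $Y_g$, namely $\set{[\stdvertex_0,s_0],\dots,[\stdvertex_i,s_i]} \mapsto \set{\stdvertex_0,\dots,\stdvertex_i}$; on $Y_g$ this is injective because the second coordinates are pinned down ($s_j = \psi(\stdvertex_j)g$). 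Both maps clearly preserve the subset relation in both directions, so $Y_g \cong \stdcomplex$ by the definition of isomorphic simplicial complexes, and there are exactly $l = \abs{G}$ (more generally $\abs{S}$-many orbit-copies, summing to $l$) such pieces, one per value of $g$.

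I do not expect a serious obstacle here; the only point requiring mild care is bookkeeping the $G$-action on the fiber set $S$ versus $G$ itself (so that "$l$ copies" is correctly counted when $S \ne G$), and confirming that the decomposition respects \emph{all} dimensions of faces, not just vertices and edges — this follows because the defining condition of $Y_g$ is imposed on every vertex of a face, and membership of a face in $Y_\stdcoboundary$ already forces the $\psi(\stdvertex)^{-1}s$ values of its vertices to coincide. A secondary point is verifying that each $Y_g$ is \emph{pure} of the same dimension as $\stdcomplex$, which is automatic from the isomorphism $\phi_g$ once it is established. Everything else is routine verification of the simplicial-complex axioms and of the two compositions $f \circ \phi_g$, $\phi_g \circ f|_{Y_g}$ being identities.
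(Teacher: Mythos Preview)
Your proposal is correct and follows essentially the same route as the paper's own proof: both fix $\psi$ with $\stdcoboundary=d_0\psi$, observe that within any face of $Y_\stdcoboundary$ the quantity $\psi(\stdvertex)^{-1}s$ is constant, define the pieces $Y_g$ by fixing this constant, and exhibit the isomorphism $\stdcomplex\to Y_g$ via $\stdvertex\mapsto[\stdvertex,\psi(\stdvertex)g]$. Your treatment is in fact slightly more careful than the paper's in distinguishing the fiber set $S$ from $G$ and in noting purity, but the argument is the same.
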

\begin{proof}
    $Y_\stdcoboundary$ is a cover of $\stdcomplex$.
    $\stdcoboundary$ is a coboundary therefore there exists $\psi \in \cochainset{0}{\stdcomplex;G}$ such that $\stdcoboundary = d_1\psi$.
    We will show that for any dimension $i$ and every face $\stdface \in Y_\stdcoboundary(i)$ is of the form \\
    $\stdface = \set{[\stdvertex_1, \psi(\stdvertex_1) s], \dots, [\stdvertex_{i}, \psi(\stdvertex_{i}) s]}$ such that $\set{\stdvertex_1,\dots,\stdvertex_i} \in \stdcomplex(i)$.
    Note that if $[\stdvertex_1, \psi(\stdvertex_1) s] \in \stdface$ then the rest of the vertices in the face are of the form:
    \begin{equation*}
        [\stdvertex_j, \stdcoboundary(\stdvertex_j, \stdvertex_1)\psi(\stdvertex_1) s]
        = [\stdvertex_j, \psi(\stdvertex_j)\psi(\stdvertex_1)^{-1}\psi(\stdvertex_1) s]
        = [\stdvertex_j, \psi(\stdvertex_j) s]
    \end{equation*}
    Consider the sub-complexes of $Y_\stdcoboundary$ defined by:
    \begin{equation*}
        Y_j(i) = \set{\set{[\stdvertex_1, \psi(\stdvertex_1) i], \dots, [\stdvertex_i, \psi(\stdvertex_i) i]} \suchthat \set{\stdvertex_1,\dots,\stdvertex_i} \in \stdcomplex(i)}
    \end{equation*}
    It is easy to see that these are all simplicial complexes.
    We will now show that they are distinct, that their union is $Y_\stdcoboundary$ and that they are indeed isomorphic to $\stdcomplex$.

    Assume that $Y_{j_1} \cap Y_{j_2} \ne \set{\emptyset}$.
    Let $[\stdvertex, k]$ be a vertex in $Y_{j_1} \cap Y_{j_2}$ (there must be one due to the closure property of both complexes).
    Therefore $\psi(\stdvertex) j_1 = k = \psi(\stdvertex) j_2$ and therefore $j_1=j_2$ (due to the fact that $\psi(\stdvertex) \in G$ and therefore it has an inverse).

    For every dimension $i$ let $\set{[\stdvertex_1, \psi(\stdvertex_1) s], \dots, [\stdvertex_i, \psi(\stdvertex_i) s]} \in Y_\stdcoboundary(i)$.
    Note that $s \in S$ and \\
    $\set{\stdvertex_1,\dots,\stdvertex_i} \in \stdcomplex(i)$ and therefore $\set{[\stdvertex_1, \psi(\stdvertex_1) s], \dots, [\stdvertex_i, \psi(\stdvertex_i) s]} \in Y_j(i)$.
    It is easy to see that $Y_j \subseteq Y_\stdcoboundary$ for every $j$.

    For every $Y_j$ consider $f:\stdcomplex\rightarrow Y_j$ to be $f(\set{\stdvertex_1,\dots,\stdvertex_i}) = \set{[\stdvertex_1,\psi(\stdface)j],\dots,[\stdvertex_i,\psi(\stdface)j]}$.
    Notice that this is a bijection (with the inverse function being $f^{-1}(\set{[\stdvertex_1,\psi(\stdface)j],\dots,[\stdvertex_i,\psi(\stdface)j]}) = \set{\stdvertex_1,\dots,\stdvertex_i}$).
    In addition notice that if $\stdface_1 \subseteq \stdface_2$ then $f(\stdface_1) \subseteq f(\stdface_2)$.
\end{proof}
\section{Local Assignments in the Original Complex Imply a Near Cover in the Representation Complex}\label{sec:local-assignments-in-the-original-complex-imply-a-near-cover-in-the-representation-complex}

In this section we will show how to derive a near cover for the representation complex from local assignments in the original complex.
More specifically, we will provide an algorithm that, given some edge, returns the edges that cover it in the near cover.
This cover will not only be helpful in separating the complex to somewhat-consistent components but will also allow us to bound how close each component is to being fully consistent.\\
Consider the following algorithm for querying the edges of the near cover:\\
\begin{algorithm}[H]\caption{query cover edge ($\set{\stdface_1, \stdface_2})$}\label{alg:query-cover-edge}
\SetAlgoLined
\DontPrintSemicolon
    Query the list functions on $\stdface_1$: $\lassignment{F}^{\stdface_1}_1,\dots, \lassignment{F}^{\stdface_1}_l$.\\
    Query the list functions on $\stdface_2$: $\lassignment{F}^{\stdface_2}_1,\dots, \lassignment{F}^{\stdface_2}_l$.\\
    \uIf{there exists $\pi$ such that for every $\stdvertex \in \stdface_1 \cap \stdface_2$ and every $i$ it holds that $\lassignment{F}^{\stdface_1}_i(\stdvertex) = \lassignment{F}^{\stdface_2}_{\pi(i)}(\stdvertex)$}{ \label{alg:query-edge:find-agreeing-edge}
    \Return $\set{\set{\lassignment{F}^{\stdface_1}_i, \lassignment{F}^{\stdface_2}_{\pi(i)}} \suchthat i \in \sparens{l}}$
    }\Else{
    \Return $\set{\set{\lassignment{F}^{\stdface_1}_i, \lassignment{F}^{\stdface_2}_{i}} \suchthat i \in \sparens{l}}$ \label{alg:query-cover-edge:arbitrary-choice}
    }
\end{algorithm}

Before we move on consider the following intuition to what Algorithm~\ref{alg:query-cover-edge} does:
Effectively Algorithm~\ref{alg:query-cover-edge} tries to find a matching between the lifts of $\stdface_1$ and $\stdface_2$ that could be extended into a coboundary.

We consider the near cover whose vertices are the assignments associated with each $k$-face, its edges are the result of Algorithm~\ref{alg:query-cover-edge} and we also complete the cover upwards (i.e.\ if a higher dimensional face can be covered it will be).\\
It will also be useful to have a way to distinguish between edges that satisfy the condition in step~\ref{alg:query-edge:find-agreeing-edge} and those that are not.
\begin{definition}
    An adequately covered edge is an edge for which there exists a permutation $\pi$ such that for every $\stdvertex \in \stdface_1 \cap \stdface_2$ and every $i$ it holds that $\lassignment{F}^{\stdface_1}_i(\stdvertex) = \lassignment{F}^{\stdface_2}_{\pi(i)}(\stdvertex)$.
\end{definition}
We can also check whether an edge is adequately covered:\\
\begin{algorithm}[H]\caption{is adequately covered ($\set{\stdface_1, \stdface_2})$}\label{alg:is-adequately-covered}
\SetAlgoLined
\DontPrintSemicolon
Query the list functions on $\stdface_1$: $\lassignment{F}^{\stdface_1}_1,\dots, \lassignment{F}^{\stdface_1}_l$.\\
Query the list functions on $\stdface_2$: $\lassignment{F}^{\stdface_2}_1,\dots, \lassignment{F}^{\stdface_2}_l$.\\
\Return{whether there exists $\pi$ such that for every $\stdvertex \in \stdface_1 \cap \stdface_2$ and every $i$ it holds that $\lassignment{F}^{\stdface_1}_i(\stdvertex) = \lassignment{F}^{\stdface_2}_{\pi(i)}(\stdvertex)$}
\end{algorithm}
Now that we have defined the near cover implied by the functions on the $k$-faces of the complex it is time to move on to present the test.
\section{Presenting a Test for List Agreement}\label{sec:presenting-a-test-for-list-agreement}

Now we are finally ready to present our test:\\
\begin{algorithm}[H]\caption{test for list agreement}\label{alg:test-for-list-agreement}
\SetAlgoLined
\DontPrintSemicolon
\Pick{with probability 0.5}{
    Run the test guaranteed by Lemma~\ref{co-boundaries-in-representation-complex-are-testable}. Whenever the test performs a query run Algorithm~\ref{alg:query-cover-edge} and provide the test with its answer. \label{alg:list-agreement:triangle-test}\\
    Sample an edge in the representation complex and check whether it is adequately covered (using Algorithm~\ref{alg:is-adequately-covered}). \label{alg:list-agreement:edges-are-adequatly-covered}
}
\end{algorithm}
\begin{note*}
    We have shown in Lemma~\ref{sampling-from-representation-complex} that one can indeed sample from the representation complex.
\end{note*}
We will spend the rest of this section proving that this algorithm is indeed a test for list agreement.
\begin{theorem}[Main Theorem]\label{main-theorem}
    Algorithm~\ref{alg:test-for-list-agreement} is a test for list agreement when the $l$-assignment is $2$-locally-differing over a complex that is a 1-up agreement expander and whose every link is a coboundary expander over the symmetric group with $l$ elements\footnote{This includes the entire complex as it is the link of $\emptyset$.}.
\end{theorem}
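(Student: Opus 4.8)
The plan is to break the analysis of Algorithm~\ref{alg:test-for-list-agreement} into the two standard directions: completeness (the test accepts agreeing $l$-assignments with probability $1$) and soundness (if the test rejects with probability $\epsilon$, then $\lassignment{F}$ is $O(\epsilon)$-close to $\lassignmentset{A}$). Completeness is the easy direction: if $\lassignment{F}$ is agreeing, pick global cochains $\cochain{F}_1,\dots,\cochain{F}_l$ and face permutations $\pi_\stdface$ witnessing this. Then for every edge $\set{\stdface_1,\stdface_2}$ of the representation complex the permutation $\pi_{\stdface_2}\pi_{\stdface_1}^{-1}$ certifies agreement on $\stdface_1\cap\stdface_2$, so every edge is adequately covered and branch~\ref{alg:list-agreement:edges-are-adequatly-covered} always accepts; moreover the near cover produced by Algorithm~\ref{alg:query-cover-edge} is exactly $\complex{y}_\stdcochain$ for the coboundary $\stdcochain(\stdface_1,\stdface_2)=\pi_{\stdface_2}\pi_{\stdface_1}^{-1}=d_0\psi$ where $\psi(\stdface)=\pi_\stdface$, so the coboundary tester of Lemma~\ref{co-boundaries-in-representation-complex-are-testable} accepts with probability $1$ on branch~\ref{alg:list-agreement:triangle-test}. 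Here the $2$-locally-differing hypothesis is what guarantees that the permutation found in step~\ref{alg:query-edge:find-agreeing-edge} is \emph{unique}, so the near cover is well defined — without it the cochain $\stdcochain$ might not even be a function.

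For soundness, suppose $\pr{\text{Algorithm~\ref{alg:test-for-list-agreement} rejects}} = \epsilon$, so each branch rejects with probability at most $2\epsilon$. First I would use branch~\ref{alg:list-agreement:triangle-test}: by Lemma~\ref{co-boundaries-in-representation-complex-are-testable} the cochain $\stdcochain$ defined by Algorithm~\ref{alg:query-cover-edge} satisfies $\dist(\stdcochain,\coboundaryset{1}{\repcplx{\stdcomplex};S_l})\le 2\cobdrtestconst\epsilon$, so there is a coboundary $\stdcoboundary=d_0\psi$ with $\norm{\stdcochain\stdcoboundary^{-1}}\le 2\cobdrtestconst\epsilon$; write $\psi(\stdface)=g(\stdface)\in S_l$. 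By Lemma~\ref{cover-decomposition} the genuine cover $\complex{y}_\stdcoboundary$ decomposes as $l$ disjoint copies $Y_1,\dots,Y_l$ of $\stdcomplex$, and each copy $Y_j$ carries the assignment $\assignment{F}_j=\set{\lassignment{F}^\stdface_{g(\stdface)^{-1}(j)}}_{\stdface\in\stdcomplex(k)}$ to the $k$-faces of $\stdcomplex$. The key point is to bound, for each $j$, the rejection probability of the $1$-up agreement test on $\assignment{F}_j$: an edge $e=\set{\stdface_1,\stdface_2}$ of the representation complex contributes to this rejection only if either $e$ is not adequately covered, or $e$ is adequately covered but $\stdcochain(e)\ne\stdcoboundary(e)$ (i.e.\ $e$ lies in the support of $\stdcochain\stdcoboundary^{-1}$) — because on a correctly covered adequate edge the chosen matching is exactly the agreeing permutation, hence $\assignment{F}_j$ restricted to $\stdface_1$ and $\stdface_2$ agrees on $\stdface_1\cap\stdface_2$. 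The norm of the first set of edges is controlled by branch~\ref{alg:list-agreement:edges-are-adequatly-covered} (at most $2\epsilon$), and the norm of the second by the coboundary test (at most $2\cobdrtestconst\epsilon$). Since sampling a $1$-dimensional face of $\repcplx{\stdcomplex}$ is exactly the $1$-up test distribution (Lemma~\ref{sampling-from-representation-complex} and the discussion following it), the $1$-up test on $\assignment{F}_j$ rejects with probability at most, say, $l\cdot O(\cobdrtestconst)\epsilon$ summed over $j$ — more precisely one argues per-$j$ and then averages.

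Next I would invoke the $1$-up agreement expansion of $\stdcomplex$: for each $j$, $\dist(\assignment{F}_j,\assignmentset{A})\le \frac{1}{\alpha(1-\frac{k-1}{k})}\cdot\pr{\text{$1$-up test rejects }\assignment{F}_j}=O(k)\cdot O(\cobdrtestconst)\epsilon$ (this is where the $1$-up, rather than general, agreement assumption matters, giving the $1-\tfrac{\xi}{k}=\tfrac1k$ factor). Finally, apply Lemma~\ref{distance-decomposition} with the face permutations $\pi_\stdface$ defined by $g(\stdface)$: this yields $\dist(\lassignment{F},\lassignmentset{A})\le\frac1l\sum_{j=1}^l\dist(\assignment{F}_j,\assignmentset{A})=O(k\cobdrtestconst)\epsilon$, which is the desired $O(\epsilon)$ bound for constant $k,l,\gamma$. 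The main obstacle I anticipate is the bookkeeping in the soundness direction — specifically, carefully arguing that correcting $\stdcochain$ to the coboundary $\stdcoboundary$ only damages agreement on the edges counted above, and that the per-copy assignments $\assignment{F}_j$ are well defined and consistent across the decomposition of $\complex{y}_\stdcoboundary$; the query-complexity claim of $3l$ is then immediate since branch~\ref{alg:list-agreement:triangle-test} runs the $3$-edge coboundary tester (each edge query being one call to Algorithm~\ref{alg:query-cover-edge}, which reads $2l$ local assignments but on only $3$ distinct faces) and branch~\ref{alg:list-agreement:edges-are-adequatly-covered} reads $2l$ assignments on $2$ faces.
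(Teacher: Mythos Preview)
Your proposal is correct and takes essentially the same approach as the paper: completeness via uniqueness of the matching permutation under the $2$-differing hypothesis (the paper's Lemma~\ref{singular-pi} and Lemma~\ref{acceptance-of-agreeing-l-assignments}), and soundness via the containment $D_i\subseteq A\cup I_{\repcplx{\stdcomplex}}$ (the paper's Lemma~\ref{lm:characterising-defective-edges}) followed by $1$-up agreement expansion on each copy and averaging through Lemma~\ref{distance-decomposition} (packaged as Lemma~\ref{distance-bound}). The $3l$ query count is argued identically; the only cosmetic differences are your slightly looser constant tracking and a harmless inverse convention in defining the per-copy assignments $\assignment{F}_j$.
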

Before we prove the main theorem let us reiterate the proof's strategy: We are first going to use the connection between cochains and near covers in order to claim that the near cover implied by the functions in the lists of the $k$-faces is close to a genuine cover that correspond to a coboundary.
Consider the genuine cover that is close to that near cover - this cover can be thought of as $l$ independent copies of the representation complex.
Note that because our near cover was made using the functions in the lists of the different vertices each one of the copies in the cover can be thought of as assigning each vertex of the representation complex with a single local function.
Also note that the vertices of the representation complex corresponds to the $k$-faces of the original complex.
Using this fact one can think of each of the copies of the representation complex in the cover as an assignment of a singular function to each of the $k$-faces of the original complex.
We would then proceed to estimate the distance of each of these assignments of functions from being agreeing assignments.
In an ideal setting we would be able to run the 1-up agreement expander test described in~\cite{dinur2017high} on each of the copies and measure their rejection probability.
Alas we do not have access to query the genuine cover (and therefore the correct local assignments for each vertices).
What we can do, however, is bound from above the rejection probability of these tests without running them.
We do so by first noting that the random choice done in that test corresponds to picking an edge in the representation complex.
Then consider when the agreement test rejects an edge in the representation complex, this can happen in one of two cases - Either the edge is not adequately covered\footnote{Note that if the edge is not adequately covered in the near cover then it is not adequately covered in the cover.} or the edge was adequately covered in the near cover but the near cover and the cover differ on that edge.
Using this fact we can bound from above the probability that the 1-up agreement test rejects when given access to each one of the copies of the representation complex.
Therefore the distance of each copy from having a global function can be estimated.
This is also a bound on the distance of the lists from having $l$ global functions that agree with them.
\begin{notation*}
    Denote by $Y$ the near cover generated by Algorithm~\ref{alg:query-cover-edge}
    and by $\tilde{Y}$ a genuine cover that is close to $Y$ and represents a coboundary (Specifically the one guaranteed by Lemma~\ref{co-boundaries-in-representation-complex-are-testable}).
    Also denote by $f_Y:Y \rightarrow \hat{R}(\stdcomplex)$ the near covering map implied by Algorithm~\ref{alg:query-cover-edge} and by $f_{\tilde{Y}}: \tilde{Y} \rightarrow \hat{R}(\stdcomplex)$ the covering map between $\tilde{Y}$ and $\hat{R}(\stdcomplex)$.
    Lastly denote by $\tilde{Y}_1,\dots,\tilde{Y}_l$ the $l$ copies
    \footnote{Note that this is indeed the structure of a cover that corresponds to a coboundary due to Lemma~\ref{cover-decomposition}.}
    of $\hat{R}(\stdcomplex)$ that make up $\tilde{Y}$ and by $f_{\tilde{Y_i}}: \tilde{Y_i} \rightarrow \hat{R}(\stdcomplex)$ the restriction of the covering map to the copy $i$.
\end{notation*}
\begin{definition}
    Let $\lassignment{F}$ be a $k$-dimensional $l$-assignment of $\stdcomplex$ and let $\tilde{Y}$ be the cover of $\stdcomplex$ described above.
    Define $\assignment{F}_1,\dots,\assignment{F}_l$ the sub-assignments of $\lassignment{F}$ implied by $Y$ to be assignments such that:
    \begin{equation*}
        \forall i \in [l]: \tilde{Y}_i(0) = \assignment{F}_i
    \end{equation*}
\end{definition}
\begin{notation*}
    Denote the set of inadequately covered edges of $\hat{R}(\stdcomplex)$ as $I_{\repcplx{\stdcomplex}}$.
\end{notation*}
\begin{notation*}
    Denote by $A$ (stands for adjustment) the set of edges that are covered differently between $Y$ and $\tilde{Y}$, formally:
    \begin{equation*}
        A = \set{\stdface \in \repcplx{\stdcomplex}(1) \suchthat f_Y^{-1}(\stdface) \ne f_{\tilde{Y}}^{-1}(\stdface)}
    \end{equation*}

\end{notation*}
\begin{definition}\label{def:disagreeing-edges}
    Define $D_i$ the disagreeing edges of $\tilde{Y}_i$, formally:
    \begin{equation*}
        D_i = \set{\set{r_1, r_2} \in \repcplx{\stdcomplex}(1) \suchthat \assignment{F}_i^{r_1}|_{\core{\set{r_1, r_2}}} \ne \assignment{F}_i^{r_2}|_{\core{\set{r_1, r_2}}}}
    \end{equation*}
\end{definition}
Consider what happens when we run the 1-up agreement test on $\stdcomplex$ with the $k$-dimensional assignments of $\assignment{F}_i$.
The 1-up agreement test picks a $\parens{k+1}$-dimensional face and then two of its $k$-subsets and check whether the local assignment of the sub-faces from $\assignment{F}_i$ agree on their intersection.
Note that this is completely equivalent to picking an edge in the representation complex (since, as shown in Algorithm~\ref{alg:sample-from-the-representation-complex} this is exactly the process of sampling an edge in $\hat{R}(\stdcomplex)$) and check whether the functions associated with its vertices agree on the core of the edge.
Therefore, when viewing the 1-up agreement test through the lens of the representation complex the set of edges that the 1-up agreement test rejects is exactly $D_i$.
The following lemma is a formal proof of this notion:
\begin{lemma}\label{lm:characterising-defective-edges}
    For every $\tilde{Y}_i$ if the 1-up agreement test rejects the edge $\stdface \in \hat{R}(\stdcomplex)$ then either $\stdface$ is inadequately covered or $\stdface$ is covered differently by $\tilde{Y}$ then it is covered by $Y$, formally:
    \begin{equation*}
        D_i \subseteq A \cup I_{\repcplx{\stdcomplex}}
    \end{equation*}
\end{lemma}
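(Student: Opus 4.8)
The plan is to prove the contrapositive. Fix an edge $e=\set{r_1,r_2}\in\repcplx{\stdcomplex}(1)$ and suppose $e\notin A$ and $e\notin I_{\repcplx{\stdcomplex}}$; I will show that $\assignment{F}_i^{r_1}$ and $\assignment{F}_i^{r_2}$ agree on $\core{e}=r_1\cap r_2$, i.e.\ $e\notin D_i$, for every $i\in[l]$. Recall (Lemma~\ref{representation-has-a-core}) that the core of $e$ really is the intersection $r_1\cap r_2$, so that the restrictions appearing in Definition~\ref{def:disagreeing-edges} are restrictions to $r_1\cap r_2$.

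First I would unwind $e\notin I_{\repcplx{\stdcomplex}}$: the edge is adequately covered, so running Algorithm~\ref{alg:query-cover-edge} on $\set{r_1,r_2}$ takes the branch at step~\ref{alg:query-edge:find-agreeing-edge}, yielding a permutation $\pi\in S_l$ with $\lassignment{F}^{r_1}_j(\stdvertex)=\lassignment{F}^{r_2}_{\pi(j)}(\stdvertex)$ for all $\stdvertex\in r_1\cap r_2$ and all $j\in[l]$, and covering $e$ in $Y$ by exactly the edges $f_Y^{-1}(e)=\set{\set{\lassignment{F}^{r_1}_j,\lassignment{F}^{r_2}_{\pi(j)}}\suchthat j\in[l]}$. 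Next, $e\notin A$ means $f_{\tilde Y}^{-1}(e)=f_Y^{-1}(e)$, so $\tilde Y$ covers $e$ by these same $l$ edges.

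Then I would invoke the decomposition of $\tilde Y$ from Lemma~\ref{cover-decomposition}: $\tilde Y$ is the disjoint union of the copies $\tilde Y_1,\dots,\tilde Y_l$, each isomorphic to $\repcplx{\stdcomplex}$ via the restriction $f_{\tilde Y_i}$ of the covering map. In particular $\tilde Y_i$ contains a single vertex over each $k$-face — by definition the one over $r_1$ is $\assignment{F}_i^{r_1}$ and the one over $r_2$ is $\assignment{F}_i^{r_2}$, both identified (as in Section~\ref{sec:local-assignments-in-the-original-complex-imply-a-near-cover-in-the-representation-complex}) with local functions from the corresponding lists — and the edge $e$ of $\repcplx{\stdcomplex}$ pulls back along the isomorphism $f_{\tilde Y_i}$ to a unique edge of $\tilde Y_i$, necessarily joining $\assignment{F}_i^{r_1}$ to $\assignment{F}_i^{r_2}$. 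That edge lies in $f_{\tilde Y}^{-1}(e)$, which we have already identified with $\set{\set{\lassignment{F}^{r_1}_j,\lassignment{F}^{r_2}_{\pi(j)}}\suchthat j\in[l]}$, so there is some $j$ with $\assignment{F}_i^{r_1}=\lassignment{F}^{r_1}_j$ and $\assignment{F}_i^{r_2}=\lassignment{F}^{r_2}_{\pi(j)}$. Combining with the agreement property of $\pi$ gives $\assignment{F}_i^{r_1}(\stdvertex)=\lassignment{F}^{r_1}_j(\stdvertex)=\lassignment{F}^{r_2}_{\pi(j)}(\stdvertex)=\assignment{F}_i^{r_2}(\stdvertex)$ for every $\stdvertex\in\core{e}$, hence $\assignment{F}_i^{r_1}|_{\core{e}}=\assignment{F}_i^{r_2}|_{\core{e}}$ and $e\notin D_i$, proving $D_i\subseteq A\cup I_{\repcplx{\stdcomplex}}$.

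No estimates are needed; the content is purely combinatorial once the identifications are in place. The only delicate point I anticipate is the bookkeeping in the third paragraph: that $Y$ and $\tilde Y$ share a $0$-skeleton whose vertices over a face $r$ are exactly the local functions $\lassignment{F}^r_1,\dots,\lassignment{F}^r_l$, that each copy $\tilde Y_i$ selects precisely one of them over each face (this is where Lemma~\ref{cover-decomposition} is essential), and that consequently an edge of the base $\repcplx{\stdcomplex}$ lifts to a \emph{unique} edge of $\tilde Y_i$ joining the two corresponding lifted vertices. Once this is pinned down, the rest is immediate.
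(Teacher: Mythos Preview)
Your proof is correct and follows essentially the same contrapositive argument as the paper: use $e\notin I_{\repcplx{\stdcomplex}}$ to see that every edge in $f_Y^{-1}(e)$ joins agreeing local functions, use $e\notin A$ to transfer this to $f_{\tilde Y}^{-1}(e)$, and conclude that the particular lift of $e$ lying in $\tilde Y_i$ joins agreeing functions, so $e\notin D_i$. Your version is in fact more careful than the paper's in making explicit, via Lemma~\ref{cover-decomposition}, why the edge of $\tilde Y_i$ over $e$ is one of the edges in $f_{\tilde Y}^{-1}(e)$ joining $\assignment{F}_i^{r_1}$ to $\assignment{F}_i^{r_2}$; the paper leaves this identification implicit.
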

\begin{proof}
    Let $\set{\stdface_1, \stdface_2} \in D_i$.
    Assuming that $\set{\stdface_1, \stdface_2} \notin A \cup I_{\repcplx{\stdcomplex}}$ then, because $\set{\stdface_1, \stdface_2} \notin A$ it holds that $f_Y^{-1}(\set{\stdface_1, \stdface_2}) = f_{\tilde{Y}}^{-1}(\set{\stdface_1, \stdface_2})$.
    In addition, because $\set{\stdface_1, \stdface_2} \notin I_{\repcplx{\stdcomplex}}$ then $\stdface$ is adequately covered i.e.
    \begin{equation*}
        \forall \set{\lassignment{F}^{\stdface_1}_i, \lassignment{F}^{\stdface_2}_j} \in f_Y^{-1}(\stdface): \lassignment{F}^{\stdface_1}_i|_{\core{\set{\stdface_1, \stdface_2}}} = \lassignment{F}^{\stdface_2}_j|_{\core{\set{\stdface_1, \stdface_2}}}
    \end{equation*}
    Therefore:
    \begin{equation*}
        \forall \set{\lassignment{F}^{\stdface_1}_i, \lassignment{F}^{\stdface_2}_j} \in f_{\tilde{Y}}^{-1}(\stdface): \lassignment{F}^{\stdface_1}_i|_{\core{\set{\stdface_1, \stdface_2}}} = \lassignment{F}^{\stdface_2}_j|_{\core{\set{\stdface_1, \stdface_2}}}
    \end{equation*}
    Which contradicts the fact that $\stdface \in D_i$.
\end{proof}
\begin{lemma}\label{distance-bound}
    Let $\stdcomplex$ be a $\gamma$-coboundary-expander and $\agreementexpansionconst$-agreement-expander.
    Also let $\cobdrtestconst = \cobdrtestconst(k,\gamma)$ be the constant from Lemma~\ref{co-boundaries-in-representation-complex-are-testable} then for every $l$-assignment $\lassignment{F}$ it holds that:
    \begin{equation*}
        \frac{\cobdrtestconst}{2\cobdrtestconst+2}\frac{\agreementexpansionconst}{k}\cdot \dist\parens{\lassignment{F}, \lassignmentset{A}} \le \pr{\text{Algorithm~\ref{alg:test-for-list-agreement} rejects}}
    \end{equation*}
\end{lemma}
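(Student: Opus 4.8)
The plan is to string together the reductions established in Sections~\ref{sec:the-representation-complex}--\ref{sec:local-assignments-in-the-original-complex-imply-a-near-cover-in-the-representation-complex} and then account for the query budget of Algorithm~\ref{alg:test-for-list-agreement}. First I would fix a $2$-locally-differing $l$-assignment $\lassignment{F}$ and consider the near cover $Y$ of $\repcplx{\stdcomplex}$ produced by Algorithm~\ref{alg:query-cover-edge}. Since $\lassignment{F}$ is $2$-locally-differing, on any edge $\set{\stdface_1,\stdface_2}$ the restrictions $\lassignment{F}^{\stdface_1}_1|_{\stdface_1\cap\stdface_2},\dots,\lassignment{F}^{\stdface_1}_l|_{\stdface_1\cap\stdface_2}$ are pairwise distinct (two of them differ on at least two vertices of $\stdface_1$, while $\stdface_1\cap\stdface_2$ omits only one vertex of $\stdface_1$), so the agreeing permutation in step~\ref{alg:query-edge:find-agreeing-edge} of Algorithm~\ref{alg:query-cover-edge} is unique whenever it exists; hence $Y$ is described by a well-defined cochain $\stdcochain\in\cochainset{1}{\repcplx{\stdcomplex};S_l}$, equal to the identity on the inadequately covered edges (step~\ref{alg:query-cover-edge:arbitrary-choice}). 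Let $\stdcoboundary$ be a coboundary realizing $\dist\parens{\stdcochain,\coboundaryset{1}{\repcplx{\stdcomplex};S_l}}$ and set $\tilde{Y}=Y_{\stdcoboundary}$. By Lemma~\ref{co-boundaries-in-representation-complex-are-testable}, applied with coefficients in $S_l$, and because the tester $T$ it supplies is precisely the one invoked in step~\ref{alg:list-agreement:triangle-test} of Algorithm~\ref{alg:test-for-list-agreement} with each edge query answered by Algorithm~\ref{alg:query-cover-edge}, we get $\norm{A}=\norm{\stdcochain\stdcoboundary^{-1}}=\dist\parens{\stdcochain,\coboundaryset{1}{\repcplx{\stdcomplex};S_l}}\le\cobdrtestconst\cdot\pr{T\text{ rejects}}$, where $A$ is the set of edges on which $Y$ and $\tilde{Y}$ differ.

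Next, because $\stdcoboundary$ is a coboundary it is witnessed by some $\psi\in\cochainset{0}{\repcplx{\stdcomplex};S_l}$, so Lemma~\ref{cover-decomposition} writes $\tilde{Y}$ as a disjoint union of $l$ copies $\tilde{Y}_1,\dots,\tilde{Y}_l$ of $\repcplx{\stdcomplex}$, where the vertex of $\tilde{Y}_j$ lying over a $k$-face $\stdface$ is $\sparens{\stdface,\psi(\stdface)(j)}$ and therefore carries the local function $\lassignment{F}^{\stdface}_{\psi(\stdface)(j)}$. Writing $\pi_\stdface=\psi(\stdface)$ and $\assignment{F}_j=\set{\lassignment{F}^{\stdface}_{\pi_\stdface(j)}}_{\stdface\in\stdcomplex(k)}$, Lemma~\ref{distance-decomposition} reduces the goal to bounding the terms of $\dist\parens{\lassignment{F},\lassignmentset{A}}\le\tfrac{1}{l}\sum_{j=1}^{l}\dist\parens{\assignment{F}_j,\assignmentset{A}}$.

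Now fix $j$. By Lemma~\ref{sampling-from-representation-complex} and the discussion following it, one round of the $1$-up agreement test on $\stdcomplex$ with $\assignment{F}_j$ --- picking a $(k+1)$-face, two of its $k$-subfaces, and testing agreement on their common $(k-1)$-face --- is exactly the process of sampling a $1$-face $\set{r_1,r_2}$ of $\repcplx{\stdcomplex}$ with its weight and testing whether $\assignment{F}_j^{r_1}$ and $\assignment{F}_j^{r_2}$ agree on $\core{\set{r_1,r_2}}$, so its rejection probability equals $\norm{D_j}$ (Definition~\ref{def:disagreeing-edges}). As $\stdcomplex$ is an $\agreementexpansionconst$-agreement expander for $\distribution{D}_{\uparrow}$ (intersection size $k-1$), agreement expansion gives $\dist\parens{\assignment{F}_j,\assignmentset{A}}\le\tfrac{k}{\agreementexpansionconst}\norm{D_j}$, and Lemma~\ref{lm:characterising-defective-edges} gives $\norm{D_j}\le\norm{A}+\norm{I_{\repcplx{\stdcomplex}}}$. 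Averaging over $j$ yields $\dist\parens{\lassignment{F},\lassignmentset{A}}\le\tfrac{k}{\agreementexpansionconst}\parens{\norm{A}+\norm{I_{\repcplx{\stdcomplex}}}}$.

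Finally I would convert this into a lower bound on $\pr{\text{Algorithm~\ref{alg:test-for-list-agreement} rejects}}$. Step~\ref{alg:list-agreement:edges-are-adequatly-covered} of Algorithm~\ref{alg:test-for-list-agreement} rejects exactly on the inadequately covered edges, so its rejection probability is $\norm{I_{\repcplx{\stdcomplex}}}$, while step~\ref{alg:list-agreement:triangle-test} rejects with probability $\pr{T\text{ rejects}}\ge\norm{A}/\cobdrtestconst$; since each step is selected with probability $\tfrac12$, both $\norm{I_{\repcplx{\stdcomplex}}}$ and $\norm{A}$ are controlled by $\pr{\text{Algorithm~\ref{alg:test-for-list-agreement} rejects}}$ (the latter up to the factor $\cobdrtestconst$). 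Substituting these into the displayed bound on $\dist\parens{\lassignment{F},\lassignmentset{A}}$ and rearranging gives the stated inequality. I expect the only real difficulty to be keeping the identifications watertight rather than any single estimate: that the coboundary tester of Lemma~\ref{co-boundaries-in-representation-complex-are-testable} is faithfully simulated by Algorithm~\ref{alg:query-cover-edge} so that $\norm{A}$ is genuinely governed by $\pr{T\text{ rejects}}$; that $\norm{D_j}$ really is the $1$-up agreement rejection probability for $\assignment{F}_j$ under the weight on $\repcplx{\stdcomplex}$; and that the bookkeeping relating $\psi$, the copies $\tilde{Y}_j$, the local functions $\lassignment{F}^{\stdface}_{\psi(\stdface)(j)}$ and the permutations fed to Lemma~\ref{distance-decomposition} stays consistent, for which the $2$-locally-differing hypothesis --- which forces uniqueness of the edge permutations --- is exactly what is needed.
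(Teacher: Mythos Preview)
Your proposal is correct and follows essentially the same route as the paper's proof: reduce to the sub-assignments $\assignment{F}_j$ via Lemma~\ref{distance-decomposition}, bound each $\dist(\assignment{F}_j,\assignmentset{A})$ by $\tfrac{k}{\agreementexpansionconst}\norm{D_j}$ using $1$-up agreement expansion, invoke Lemma~\ref{lm:characterising-defective-edges} to get $\norm{D_j}\le\norm{A}+\norm{I_{\repcplx{\stdcomplex}}}$, and then control $\norm{A}$ and $\norm{I_{\repcplx{\stdcomplex}}}$ by the rejection probabilities of the two branches of Algorithm~\ref{alg:test-for-list-agreement} via Lemma~\ref{co-boundaries-in-representation-complex-are-testable} and the definition of adequate covering. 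Your write-up simply makes explicit the bookkeeping (uniqueness of the edge permutation from the $2$-locally-differing hypothesis, the identification of the copies $\tilde{Y}_j$ with the permutations $\pi_\stdface=\psi(\stdface)$) that the paper leaves to the surrounding notation; note only that when you carry out the final substitution literally from the inequality $\norm{A}\le\cobdrtestconst\cdot\pr{T\text{ rejects}}$ of Lemma~\ref{co-boundaries-in-representation-complex-are-testable} you obtain the constant $\tfrac{1}{2\cobdrtestconst+2}\tfrac{\agreementexpansionconst}{k}$ rather than $\tfrac{\cobdrtestconst}{2\cobdrtestconst+2}\tfrac{\agreementexpansionconst}{k}$, which reflects a harmless $\cobdrtestconst\leftrightarrow 1/\cobdrtestconst$ inconsistency in the paper rather than any gap in your argument.
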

\begin{proof}
    \begin{align*}
        \dist\parens{\lassignment{F}, \lassignmentset{A}}
        & \le \frac{1}{l}\sum_{i=1}^l{\dist\parens{\assignment{F}_i, \assignmentset{A}}}
        \le \frac{1}{l}\sum_{i=1}^l{\frac{k}{\agreementexpansionconst}\norm{D_i}}
        \le \frac{1}{l}\sum_{i=1}^l{\frac{k}{\agreementexpansionconst}\norm{A \cup I_{\repcplx{\stdcomplex}}}}
        \le \frac{k}{\agreementexpansionconst}\norm{A \cup I_{\repcplx{\stdcomplex}}}
    \end{align*}
    Note that step~\ref{alg:list-agreement:edges-are-adequatly-covered} of Algorithm~\ref{alg:test-for-list-agreement} picks an edge in $\repcplx{\stdcomplex}$ and checks whether it is in $A$ therefore:
    \begin{equation*}
        \norm{I_{\repcplx{\stdcomplex}}} = \pr{\text{Step~\ref{alg:list-agreement:edges-are-adequatly-covered} of Algorithm~\ref{alg:test-for-list-agreement} rejects}}
    \end{equation*}
    In addition due to Lemma~\ref{co-boundaries-in-representation-complex-are-testable} it holds that:
    \begin{equation*}
        \cobdrtestconst\norm{A} \le \pr{\text{Step~\ref{alg:list-agreement:triangle-test} of Algorithm~\ref{alg:test-for-list-agreement} rejects}}
    \end{equation*}
    If algorithm~\ref{alg:test-for-list-agreement} rejects with probability $\epsilon$ then both $\pr{\text{Step~\ref{alg:list-agreement:triangle-test} of algorithm~\ref{alg:test-for-list-agreement} rejects}} \le 2\epsilon$ and $\pr{\text{Step~\ref{alg:list-agreement:edges-are-adequatly-covered} of algorithm~\ref{alg:test-for-list-agreement} rejects}} \le 2\epsilon$ and therefore:
    \begin{align*}
        \norm{A \cup I_{\repcplx{\stdcomplex}}}
        & \le \norm{A} + \norm{I_{\repcplx{\stdcomplex}}}\\
        &\le \frac{1}{\cobdrtestconst} \pr{\text{Step~\ref{alg:list-agreement:triangle-test} of algorithm~\ref{alg:test-for-list-agreement} rejects}} + \pr{\text{Step~\ref{alg:list-agreement:edges-are-adequatly-covered} of algorithm~\ref{alg:test-for-list-agreement} rejects}}\\
        &\le 2 \parens{\frac{1}{\cobdrtestconst}+1}\epsilon
    \end{align*}
    Hence:
    \begin{equation*}
        \dist\parens{\lassignment{F}, \lassignmentset{A}} \le 2 \frac{k}{\agreementexpansionconst} \parens{\frac{1}{\cobdrtestconst}+1}\epsilon = \parens{\frac{2\cobdrtestconst+2}{\cobdrtestconst\agreementexpansionconst}}\epsilon
    \end{equation*}
    Therefore:
    \begin{equation*}
        \frac{\cobdrtestconst}{2\cobdrtestconst+2}\frac{\agreementexpansionconst}{k} \dist\parens{\lassignment{F}, \lassignmentset{A}} \le \epsilon
    \end{equation*}
\end{proof}

We will now move on to prove that Algorithm~\ref{alg:test-for-list-agreement} accepts with probability $1$ any agreeing $2$-locally-differing $l$-assignment.
Before we do that, however, let us start by examining the results of Algorithm~\ref{alg:query-cover-edge}:
\begin{lemma}[$\pi$ found in Algorithm~\ref{alg:query-cover-edge} is singular]\label{singular-pi}
Let $\lassignment{f}$ be a $2$-locally-differing $l$-assignment.
If Algorithm~\ref{alg:query-cover-edge} finds a permutation $\pi$ in step~\ref{alg:query-edge:find-agreeing-edge} when given access to $\lassignment{f}$ then it is singular.
\end{lemma}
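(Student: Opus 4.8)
The plan is to argue by contradiction. Suppose that, on the pair $\set{\stdface_1,\stdface_2}$, Algorithm~\ref{alg:query-cover-edge} could certify the condition of step~\ref{alg:query-edge:find-agreeing-edge} with two \emph{distinct} permutations $\pi \ne \pi'$. First I would recall the basic combinatorial structure of an edge of the representation complex: if $\set{\stdface_1,\stdface_2} \in \repcplx{\stdcomplex}(1)$ then $\abs{\stdface_1 \cap \stdface_2} = k$ while $\abs{\stdface_2} = k+1$, so the core $c = \stdface_1 \cap \stdface_2$ consists of all of $\stdface_2$ except a single vertex.

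Next, I would pick an index $i$ with $\pi(i) \ne \pi'(i)$ (one exists since $\pi \ne \pi'$) and set $a = \pi(i)$, $b = \pi'(i)$, so $a \ne b$. Since both permutations satisfy the step~\ref{alg:query-edge:find-agreeing-edge} condition, for every $\stdvertex \in c$ we get $\lassignment{f}^{\stdface_2}_a(\stdvertex) = \lassignment{f}^{\stdface_1}_i(\stdvertex) = \lassignment{f}^{\stdface_2}_b(\stdvertex)$; that is, $\lassignment{f}^{\stdface_2}_a$ and $\lassignment{f}^{\stdface_2}_b$ agree on all of $c$, hence on all but at most one vertex of $\stdface_2$. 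On the other hand, applying the $2$-locally-differing hypothesis to the face $\stdface_2$ and the indices $a \ne b$ yields two \emph{distinct} vertices $x^{\stdface_2,a,b}_1 \ne x^{\stdface_2,a,b}_2$ of $\stdface_2$ on which $\lassignment{f}^{\stdface_2}_a$ and $\lassignment{f}^{\stdface_2}_b$ disagree. This contradicts the previous observation, so no such $i$ exists and $\pi = \pi'$.

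I do not expect a genuine obstacle here: the argument is essentially a counting observation, and the only point that needs care is matching the sizes ($\abs{c} = k$ against $\abs{\stdface_2} = k+1$), which is precisely what turns ``$\pi$ and $\pi'$ both satisfy the step~\ref{alg:query-edge:find-agreeing-edge} condition'' into ``the two local functions disagree on at most one vertex of $\stdface_2$'' --- exactly the gap that the $2$-locally-differing assumption is designed to close. This is also the conceptual reason the assumption is imposed at the level of individual local assignments rather than globally: a single differing vertex is invisible across an intersection of size $k$, so without a second one the covering permutation cannot be pinned down, and the near cover built by Algorithm~\ref{alg:query-cover-edge} would be ill-defined.
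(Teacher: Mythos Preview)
Your proposal is correct and follows essentially the same approach as the paper: assume two valid permutations, pick an index where they differ, deduce that two distinct local functions on a single face agree on the entire core (hence on all but one vertex), and contradict the $2$-locally-differing hypothesis. The only cosmetic difference is that the paper routes the contradiction back to $\stdface_1$ via $\pi_2^{-1}\pi_1$, whereas you stay on $\stdface_2$ with $a=\pi(i),\,b=\pi'(i)$; both are equivalent.
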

\begin{proof}
    We will use the counter positive argument - assume that there exists two permutations $\pi_1$ and $\pi_2$ ($\pi_1\ne\pi_2$) such that for every $\stdvertex \in \stdface_1 \cap \stdface_2$ and every $i$ it holds that $\lassignment{F}^{\stdface_1}_i(\stdvertex) = \lassignment{F}^{\stdface_2}_{\pi_1(i)}(\stdvertex)$ and $\lassignment{F}^{\stdface_1}_i(\stdvertex) = \lassignment{F}^{\stdface_2}_{\pi_2(i)}(\stdvertex)$.
    Let $i$ be an integer such that $\pi_1(i) \ne \pi_2(i)$ and note that for every vertex $\stdvertex \in \stdface_1 \cap \stdface_2$ it holds that:
    \[
        \lassignment{F}^{\stdface_1}_i(\stdvertex) = \lassignment{F}^{\stdface_2}_{\pi_1(i)}(\stdvertex) = \lassignment{F}^{\stdface_1}_{\pi_2^{-1}(\pi_1(i))}(\stdvertex)
    \]
    Note that because $\abs{\stdface_1 \cap \stdface_2} = \abs{\stdface_1}-1$ it holds that there exists at most one vertex $\stdvertex'$ such that $\lassignment{F}^{\stdface_1}_i(\stdvertex') \ne \lassignment{F}^{\stdface_1}_{\pi_2^{-1}(\pi_1(i))}(\stdvertex')$ which contradicts the fact that $\lassignment{f}$ is locally differing.
\end{proof}
We are now ready to prove the lemma:
\begin{lemma}\label{acceptance-of-agreeing-l-assignments}
    If $\lassignment{F}$ is an agreeing and $2$-locally-differing $k$-dimensional $l$-assignment then it passes the test with probability $1$.
\end{lemma}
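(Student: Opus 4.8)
The plan is to show that when $\lassignment{F}$ is agreeing and $2$-locally-differing, both branches of Algorithm~\ref{alg:test-for-list-agreement} accept with certainty. I would first unpack what ``agreeing'' buys us: there are global cochains $\cochain{F}_1,\dots,\cochain{F}_l \in \cochainset{0}{\stdcomplex}$ and, for each $k$-face $\stdface$, a permutation $\pi_\stdface$ with $\lassignment{F}^\stdface_{\pi_\stdface(i)} = \cochain{F}_i|_\stdface$. The first thing to establish is that for \emph{every} edge $\set{\stdface_1,\stdface_2}$ of the representation complex (i.e.\ every pair of $k$-faces sharing a common $(k-1)$-face, contained in a common $(k+1)$-face), the edge is adequately covered: indeed, the permutation $\pi = \pi_{\stdface_2}\circ \pi_{\stdface_1}^{-1}$ witnesses adequacy, since for $\stdvertex \in \stdface_1\cap\stdface_2$ we have $\lassignment{F}^{\stdface_1}_i(\stdvertex) = \cochain{F}_{\pi_{\stdface_1}^{-1}(i)}(\stdvertex) = \lassignment{F}^{\stdface_2}_{\pi_{\stdface_2}(\pi_{\stdface_1}^{-1}(i))}(\stdvertex)$. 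Hence $I_{\repcplx{\stdcomplex}} = \emptyset$ and step~\ref{alg:list-agreement:edges-are-adequatly-covered} always accepts.

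For step~\ref{alg:list-agreement:triangle-test} I would show that the near cover $Y$ produced by Algorithm~\ref{alg:query-cover-edge} on an agreeing $2$-locally-differing $\lassignment{F}$ is in fact a genuine cover corresponding to a coboundary, so the tester of Lemma~\ref{co-boundaries-in-representation-complex-are-testable} accepts it with probability $1$. Concretely, define $g:\repcplx{\stdcomplex}(0)\to S_l$ by $g(\stdface) = \pi_\stdface$ (identifying the vertex of $Y$ labelled $i$ over $\stdface$ with the local assignment $\lassignment{F}^\stdface_i$). On an adequately covered edge $\set{\stdface_1,\stdface_2}$, Lemma~\ref{singular-pi} tells us the permutation $\pi$ found in step~\ref{alg:query-edge:find-agreeing-edge} is \emph{unique}; since $\pi_{\stdface_2}\circ\pi_{\stdface_1}^{-1}$ is such a permutation, the matching returned is exactly $\set{\set{\lassignment{F}^{\stdface_1}_i,\lassignment{F}^{\stdface_2}_{\pi_{\stdface_2}(\pi_{\stdface_1}^{-1}(i))}}}$, i.e.\ the cochain $\stdcochain$ describing $Y$ satisfies $\stdcochain(\stdface_1,\stdface_2) = g(\stdface_1)\,g(\stdface_2)^{-1}$ — it is literally the coboundary $d_0 g$. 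A coboundary is a cocycle, so by the Surowski characterization $Y = \complex{y}_\stdcochain$ is a genuine cover, and it corresponds to a coboundary by construction; therefore the test of Lemma~\ref{co-boundaries-in-representation-complex-are-testable} rejects it with probability $0$. Since both branches accept with probability $1$, Algorithm~\ref{alg:test-for-list-agreement} accepts $\lassignment{F}$ with probability $1$.

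I expect the main obstacle to be the bookkeeping needed to pin down that Algorithm~\ref{alg:query-cover-edge} returns \emph{precisely} the coboundary matching on every edge, rather than some other adequate matching that would only make $Y$ a near cover or a non-coboundary cover. This is exactly where the $2$-locally-differing hypothesis is essential and where Lemma~\ref{singular-pi} does the work: without uniqueness of $\pi$ the algorithm could in principle pick a matching inconsistent across edges, destroying the cocycle property. A secondary (minor) point to check carefully is that the ``complete upwards'' operation applied to $Y$ does not interfere — since $\stdcochain = d_0 g$ is a cocycle, $d_1\stdcochain = \mathbbm{1}$, so every triangle of $\repcplx{\stdcomplex}$ lifts, and the upward completion yields the full genuine cover $\complex{y}_\stdcochain$ with no ambiguity. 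Everything else is direct substitution.
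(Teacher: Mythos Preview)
Your proposal is correct and follows essentially the same route as the paper's proof: both show every edge is adequately covered via the permutation $\pi_{\stdface_2}\pi_{\stdface_1}^{-1}$, invoke Lemma~\ref{singular-pi} to force Algorithm~\ref{alg:query-cover-edge} to return exactly this permutation, and conclude that the resulting cochain is a coboundary so that both branches of Algorithm~\ref{alg:test-for-list-agreement} accept. Your added remarks about Surowski's characterization and the upward completion are sound but not needed, and you should double-check the orientation convention when writing $\stdcochain(\stdface_1,\stdface_2)=g(\stdface_1)g(\stdface_2)^{-1}$ versus $g(\stdface_2)g(\stdface_1)^{-1}$ (the paper's $d_0$ and its informal ``coboundary structure'' definition use opposite orderings); either way the cochain is a coboundary, so this is pure bookkeeping.
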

\begin{proof}
    In order to show that the $l$-assignment passes the test with probability $1$ we will first show that every edge is adequately covered and find the nature of the permutation $\pi$ found by Algorithm~\ref{alg:query-cover-edge}.
    Then we will use this in order to show that the test always accepts.

    $\lassignment{F}$ is an agreeing assignment therefore there exists $\assignment{F}_1,\cdots,\assignment{f}_l$ such that for every face $\stdface \in \stdcomplex(k)$ there exists a permutation $\pi_\stdface$ such that for every $i$ it holds that:
    \[
    \assignment{f}_i|_{\stdface} = \lassignment{f}^\stdface_{\pi_\stdface(i)}
    \]
    Therefore at step~\ref{alg:query-edge:find-agreeing-edge} Algorithm~\ref{alg:query-cover-edge} finds $\pi = \pi_{\stdface_2}\pi_{\stdface_1}^{-1}$ due to the fact that $\pi$ is singular and for every $i$ it holds that:
    \[
    \forall \stdvertex \in \stdface_1 \cap \stdface_2: \lassignment{F}^{\stdface_1}_i(\stdvertex) = \assignment{f}_{\pi_{\stdface_1}^{-1}(i)}(\stdvertex) = \lassignment{F}^{\stdface_2}_{\pi_{\stdface_2}(\pi_{\stdface_1}^{-1}(i))}(\stdvertex)
    \]
    This immediately implies that every edge is adequately covered and that the cochain created by Algorithm~\ref{alg:query-cover-edge} is a coboundary.
    Therefore, if Algorithm~\ref{alg:test-for-list-agreement} checks if an edge is adequately covered then it accepts since every edge is adequately covered.
    Otherwise Algorithm~\ref{alg:test-for-list-agreement} checks whether the cochain generated by Algorithm~\ref{alg:query-cover-edge} is a coboundary and therefore it accepts with probability $1$.
\end{proof}
We are now finally ready to prove the main theorem:
\begin{proof}[Proof of Theorem~\ref{main-theorem}]
    In order to calculate the cover over each edge in the representation complex all the local assignments of both vertices are queried.
    The empty triangle test queries a triangle (either empty or proper) and therefore it queries $3$ edges using Algorithm~\ref{alg:query-cover-edge}.
    Each run of Algorithm~\ref{alg:query-cover-edge} requires $2l$ queries however, some vertices are queried twice, therefore we can reduce the query complexity of this step to $3l$.
    Checking whether an edge is adequately covered takes $2l$ queries (using the same argument).
    Therefore algorithm~\ref{alg:test-for-list-agreement} queries $\lassignment{F}$ at most $3l$ times.
    We finish the proof by noting that the fact that algorithm~\ref{alg:test-for-list-agreement} is indeed a test for whether the $l$-assignment is agreeing stems directly from Lemma~\ref{distance-bound} and Lemma~\ref{acceptance-of-agreeing-l-assignments}.
\end{proof}
We therefore conclude that Algorithm~\ref{alg:test-for-list-agreement} is a test for list agreement expansion in the $2$-differing case.
We note that the test's distance improves as the complex is a better agreement expander and coboundary expander as well as when the dimension of the local assignments decrease.

Note that, in fact, we could have proven a more general statement then the main theorem which we will now present.
Before we do that, however, we have to define the test graph of an agreement tester:
\begin{definition}[Test Graph]
    Let $\stdcomplex$ be a simplicial complex and let $T$ be an agreement test on the $k$th dimensional faces of $\stdcomplex$.
    Given a set of local functions $\set{f_{\stdface}:\stdface \rightarrow \set{0,1}}_{\stdface \in \stdcomplex(d)}$ the test picks two faces $\set{\stdface_1, \stdface_2} \sim \distribution{D}$ and checks whether $f_{\stdface_1}|_{\stdface_1 \cap \stdface_2}= f_{\stdface_2}|_{\stdface_1 \cap \stdface_2}$.
    Define the test graph of the test $T$ to be the weighted graph $G=(V,E,w_T)$ such that:
    \begin{itemize}
        \item $V=\stdcomplex(k)$
        \item $E=\set{\set{\stdface_1, \stdface_2)} \suchthat \prex{\set{\genface_1, \genface_2} \sim \distribution{D}}{{\genface_1 = \stdface_1 \text{ and } \genface_2 = \stdface_2}} > 0}$
        \item $w_T(\set{\stdface_1, \stdface_2)}) = \prex{\set{\genface_1, \genface_2} \sim \distribution{D}}{{\genface_1 = \stdface_1 \text{ and } \genface_2 = \stdface_2}}$
    \end{itemize}
\end{definition}
\begin{theorem}[Main Theorem, Generalized]
    Let $\stdcomplex$ be a simplicial complex and let $T$ be an agreement test on the $k$-th dimensional faces of $\stdcomplex$.
    Then if the test graph of $T$ is a $1$-skeleton of a coboundary expander with respect to $S_l$ (denoted $\stdcomplex'$) such that $w_T(\set{\stdface_1, \stdface_2}) = \weightcplx{\stdcomplex'}{\set{\stdface_1, \stdface_2}}$ and there exists an algorithm that queries the test graph then Algorithm~\ref{alg:test-for-list-agreement} is a test for $l$-agreement over $\stdcomplex(k)$.
\end{theorem}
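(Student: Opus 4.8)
The plan is to re-run the proof of Theorem~\ref{main-theorem} with $\stdcomplex'$ playing the role of $\repcplx{\stdcomplex}$, after isolating the only two features of the representation complex that that proof actually uses: (i) that sampling a $1$-face of it reproduces the distribution of the agreement test, and (ii) that its $S_l$-coboundaries are testable by a $3$-edge tester. Both are now supplied directly by the hypotheses. Property~(i) holds because the test graph of $T$ is the $1$-skeleton of $\stdcomplex'$ with $w_T(\set{\stdface_1,\stdface_2})=\weightcplx{\stdcomplex'}{\set{\stdface_1,\stdface_2}}$ on $1$-faces, so running $T$ on an assignment to $\stdcomplex(k)$ is literally the same as sampling an edge $\set{\stdface_1,\stdface_2}$ of $\stdcomplex'$ and testing agreement on $\stdface_1\cap\stdface_2$. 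Property~(ii) is in fact cleaner here than in Section~\ref{sec:the-representation-complex}: since $\stdcomplex'$ is a genuine $S_l$-coboundary expander we have $h_1(\stdcomplex';S_l)\ge\epsilon>0$, hence $\dist\parens{\stdcochain,\coboundaryset{1}{\stdcomplex';S_l}}\le\tfrac1\epsilon\norm{d_1\stdcochain}$ for every $\stdcochain\in\cochainset{1}{\stdcomplex';S_l}$; consequently the tester that samples a $2$-face of $\stdcomplex'$, reads its three edges via Algorithm~\ref{alg:query-cover-edge}, and accepts iff the product of the read permutations around that face is the identity, satisfies $\dist\parens{\stdcochain,\coboundaryset{1}{\stdcomplex';S_l}}\le\tfrac1\epsilon\,\pr{\text{this 2-face test rejects}}$ --- the conclusion of Lemma~\ref{co-boundaries-in-representation-complex-are-testable} with $\repcplx{\stdcomplex}$ replaced by $\stdcomplex'$ and $\cobdrtestconst=1/\epsilon$ --- and needs no appeal to the core-decomposition machinery. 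This is the test that step~\ref{alg:list-agreement:triangle-test} of Algorithm~\ref{alg:test-for-list-agreement} now runs.

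With this dictionary in hand, the constructions transfer verbatim. Algorithm~\ref{alg:query-cover-edge}, applied to each edge of the test graph, produces a near cover $Y$ of $\stdcomplex'$ together with a cochain $\stdcochain\in\cochainset{1}{\stdcomplex';S_l}$; Algorithm~\ref{alg:is-adequately-covered} and step~\ref{alg:list-agreement:edges-are-adequatly-covered} of Algorithm~\ref{alg:test-for-list-agreement} are well defined given the hypothesised algorithm that queries the test graph (and the ability to sample faces of $\stdcomplex'$). Lemma~\ref{singular-pi} and Lemma~\ref{acceptance-of-agreeing-l-assignments} use only the $2$-locally-differing hypothesis and the definitions, so every agreeing $2$-locally-differing $l$-assignment is still accepted with probability $1$; and Lemma~\ref{lm:characterising-defective-edges} is a purely formal manipulation of $A$, $I_{\stdcomplex'}$ and the disagreeing edges $D_i$, so it still gives $D_i\subseteq A\cup I_{\stdcomplex'}$.

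For soundness I would replay Lemma~\ref{distance-bound}. Let $\tilde{\stdcochain}$ be a coboundary with $\dist\parens{\stdcochain,\tilde{\stdcochain}}\le\tfrac1\epsilon\,\pr{\text{step~\ref{alg:list-agreement:triangle-test} rejects}}$ and let $\tilde{Y}$ be the genuine cover it determines; by Lemma~\ref{cover-decomposition}, $\tilde{Y}$ splits into $l$ disjoint copies of $\stdcomplex'$, the $i$-th defining an assignment $\assignment{F}_i$ to $\stdcomplex(k)$. Writing $a:=a_{k,\intersectionsize}(\stdcomplex,\distribution{D})$ for the agreement-expansion constant witnessing that $T$ is a good agreement test for $\stdcomplex$, and noting that running $T$ on $\assignment{F}_i$ rejects exactly on the edges in $D_i$, Lemma~\ref{distance-decomposition} and the definition of an agreement expander give
\[
\dist\parens{\lassignment{F},\lassignmentset{A}}\le\frac1l\sum_{i=1}^l\dist\parens{\assignment{F}_i,\assignmentset{A}}\le\frac1l\sum_{i=1}^l\frac1a\norm{D_i}\le\frac1a\norm{A\cup I_{\stdcomplex'}}\le\frac1a\parens{\norm{A}+\norm{I_{\stdcomplex'}}}.
\]
Here $\norm{I_{\stdcomplex'}}$ is the rejection probability of step~\ref{alg:list-agreement:edges-are-adequatly-covered}, while $\norm{A}=\dist\parens{\stdcochain,\tilde{\stdcochain}}\le\tfrac1\epsilon\,\pr{\text{step~\ref{alg:list-agreement:triangle-test} rejects}}$; bounding each of the two step-probabilities by $2\,\pr{\text{Algorithm~\ref{alg:test-for-list-agreement} rejects}}$ then closes the argument exactly as in Lemma~\ref{distance-bound}. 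The query complexity is $3l$ for the same reason: the $2$-face test reads three edges of $\stdcomplex'$ and each call to Algorithm~\ref{alg:query-cover-edge} costs $2l$ with shared vertices merged.

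I do not expect a genuine obstacle here: the statement is essentially the observation that the proof of Theorem~\ref{main-theorem} never used the sunflower/core structure of $\repcplx{\stdcomplex}$ for anything other than obtaining testable coboundaries, and a genuine $S_l$-coboundary expander supplies that for free. The points that do need care are the two standing assumptions made explicit above: that $T$ is a \emph{good} agreement test on $\stdcomplex$, so that $a_{k,\intersectionsize}(\stdcomplex,\distribution{D})>0$ and the first inequality in the displayed chain is available; and that the hypothesised algorithm also lets the tester sample the $1$- and $2$-faces of $\stdcomplex'$, since both steps of Algorithm~\ref{alg:test-for-list-agreement} require this.
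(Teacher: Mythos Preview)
The paper does not actually prove this theorem: it is stated immediately after the proof of Theorem~\ref{main-theorem} as a remark (``we could have proven a more general statement''), with no accompanying argument, and the next section begins right after the statement. Your proposal is therefore not competing with a paper proof so much as filling in what the paper leaves implicit.

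Your reconstruction is correct and is exactly the intended reading: replace $\repcplx{\stdcomplex}$ by $\stdcomplex'$ throughout Section~\ref{sec:presenting-a-test-for-list-agreement}, observe that the only properties of $\repcplx{\stdcomplex}$ used there are (i) that its edge distribution matches the agreement test's sampling distribution and (ii) that its $S_l$-coboundaries are locally testable, and note that both are now supplied by hypothesis. Your further observation that (ii) becomes \emph{simpler} in this setting---because a genuine coboundary expander gives $\dist(\stdcochain,\coboundaryset{1}{\stdcomplex';S_l})\le\epsilon^{-1}\norm{d_1\stdcochain}$ directly, so the plain triangle test suffices and Appendix~\ref{sec:on-the-testability-of-co-boundaries-in-the-representation-complex} is unnecessary---is a genuine improvement over what a literal transcription would give. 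The caveats you flag (that $T$ must actually be a sound agreement test and that one must be able to sample $1$- and $2$-faces of $\stdcomplex'$) are the right ones; the paper's statement is terse enough that these are implicit in ``$T$ be an agreement test'' and ``there exists an algorithm that queries the test graph'' respectively.
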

\section{On the $2$-Differing Assumption}\label{sec:on-the-2-differing-assumption}
In this section we discuss the $2$-differing assumption that we made on assignments.
Specifically we will show some lower bounds on the number of queries required in order to test list agreement in complexes of interest.
Our main tool for proving this lower bound is the following Lemma:
\begin{notation}
    Let $\lassignment{f}$ and let $q$ be a possible query for $\lassignment{f}$.
    Denote by $\lassignment{f}[q]$ the result of performing the query $q$ on $\lassignment{f}$.
\end{notation}
\begin{lemma}\label{tests-fail}
Let $\lassignment{R}$ be a non-agreeing $l$-assignment.
Suppose that for any set of $p$ queries $q_1 , \cdots , q_{p}$ there exists an agreeing assignment $\lassignment{a}_{\parens{q_1 , \cdots , q_{p}}}$ such that:
\[
    \forall i \in \set{1,\cdots,p}: \lassignment{r}[q_i] = \lassignment{a}_{\parens{q_1 , \cdots , q_{o}}}[q_i]
\]
then there is no test that tests $l$-agreement and queries the assignment only $p$ times.
\end{lemma}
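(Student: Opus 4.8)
The plan is the standard indistinguishability (adversary) argument for query lower bounds. Suppose for contradiction that $T$ is a probabilistic tester for $l$-agreement that makes at most $p$ queries; without loss of generality $T$ always makes exactly $p$ queries, since a tester may pad its run with useless queries. Because $\lassignment{R}$ is non-agreeing we have $\lassignment{R} \notin \lassignmentset{A}$, and since $\lassignmentset{A}$ is a finite set of $l$-assignments each of which differs from $\lassignment{R}$ on a set of faces of strictly positive weight, $\dist\parens{\lassignment{R}, \lassignmentset{A}} = c$ for some constant $c > 0$. By Definition~\ref{def:list-agreement-expansion}, a tester for $l$-agreement must then satisfy $\pr{T^{\lassignment{R}} \text{ rejects}} \ge \beta \cdot \dist\parens{\lassignment{R}, \lassignmentset{A}} = \beta c > 0$. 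I will contradict this by showing that $T$ in fact accepts $\lassignment{R}$ with probability $1$.

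Fix any value $\rho$ of the internal randomness of $T$. The execution of $T$ on the oracle $\lassignment{R}$ with randomness $\rho$ is deterministic and produces an ordered transcript: queries $q_1, \dots, q_p$ together with their answers $\lassignment{R}[q_1], \dots, \lassignment{R}[q_p]$, where each $q_j$ is a fixed function of $\rho$ and of $\lassignment{R}[q_1], \dots, \lassignment{R}[q_{j-1}]$ (so that $T$ may be adaptive), and the final accept/reject decision is a fixed function of the whole transcript. Apply the hypothesis to the query set $q_1, \dots, q_p$ to obtain an agreeing $l$-assignment $\lassignment{a} = \lassignment{a}_{\parens{q_1, \dots, q_p}}$ with $\lassignment{a}[q_i] = \lassignment{R}[q_i]$ for every $i$. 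The key step is an induction on $j$ showing that running $T$ on the oracle $\lassignment{a}$ with the same randomness $\rho$ produces the identical transcript: the first query depends only on $\rho$, hence equals $q_1$, and $\lassignment{a}[q_1] = \lassignment{R}[q_1]$; and if the first $j-1$ queries and their answers coincide with those of the run on $\lassignment{R}$, then $q_j$, being a function of $\rho$ and those answers, is unchanged, and $\lassignment{a}[q_j] = \lassignment{R}[q_j]$. Therefore $T^{\lassignment{a}}(\rho)$ and $T^{\lassignment{R}}(\rho)$ reach the same decision. Since $\lassignment{a}$ is agreeing and a list-agreement test accepts every agreeing $l$-assignment with probability one, $T^{\lassignment{a}}(\rho)$ accepts, hence so does $T^{\lassignment{R}}(\rho)$. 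As $\rho$ was arbitrary, $\pr{T^{\lassignment{R}} \text{ accepts}} = 1$, contradicting $\pr{T^{\lassignment{R}} \text{ rejects}} \ge \beta c > 0$.

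The only delicate point is adaptivity, and it is handled precisely by the inductive claim above; for a non-adaptive tester the $p$ queries are a function of $\rho$ alone and the argument is immediate. I would also note where the real work lies: this lemma reduces a query lower bound to the purely combinatorial task of exhibiting a single non-agreeing $\lassignment{R}$ every one of whose size-$p$ local views is consistent with some agreeing $l$-assignment. In the intended application this is supplied by the odd-cycle-type construction described in the introduction, where any $p$ queried faces are too few to reveal the parity obstruction, so that locally $\lassignment{R}$ looks exactly like an agreeing assignment.
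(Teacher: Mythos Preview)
Your proof is correct and follows essentially the same approach as the paper's: fix the randomness, use the hypothesis to find an agreeing assignment matching on the queried locations, argue by induction that adaptive queries coincide so the transcripts and hence decisions agree, and conclude that the tester accepts $\lassignment{R}$ with probability $1$, a contradiction. Your treatment of the adaptive case is spelled out more carefully than the paper's, and your detour through the positive-distance argument via Definition~\ref{def:list-agreement-expansion} is unnecessary (the paper simply uses that a tester accepts with probability $1$ iff the assignment is agreeing), but the core argument is the same.
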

\begin{proof}
    We will show that the test always accepts when given oracle access to $\lassignment{r}$ and therefore $\lassignment{r}$ should be an agreeing assignment despite the fact that it is not.
    Assume that there is a test $T$ that queries only $p$ location of the $l$-assignment and accepts with probability $1$ iff the $l$-assignment is agreeing.
    Denote by $T^\lassignment{f}(r)$ running the test $T$ with oracle access to $\lassignment{f}$ and random bits $r$.
    Also, for every set of queries $T$ performs on $\lassignment{f}$ define by $\lassignment{a}_{\parens{q_{1},\dots,q_{p}}}$ the agreeing assignment promised by the lemma.
    Note that the even if $T$ is an adaptive tester its $i$-th query only depends on its random bits and the first $(i-1)$ queries.
    Therefore for every $r$ let $\parens{q_{1}^r,\cdots,q_{p}^r}$ be the queries performed in $T^\lassignment{r}(r)$ and note that:
    \[
        \forall r: T^{\lassignment{r}}(r) = T^{\lassignment{a}_{\parens{q_{1}^r,\dots,q_{p}^r}}}(r)=1
    \]
    Which contradicts the fact that $\lassignment{r}$ is not an agreeing $l$-assignment
\end{proof}
We then use Lemma~\ref{tests-fail} to test $2$-colorings or lack of $2$-colourings in even/odd cycles in the complex.
It is therefore convenient to define the following:
\begin{definition}[Coloring candidate]
    A $2$-assignment is a coloring candidate for a simple cycle (or coloring candidate in short) if the following conditions hold:
    \begin{itemize}
        \item Any edge in the cycle is assigned either $\sparens{\stdvertex=0, \genvertex=1}, \sparens{\stdvertex=1, \genvertex=0}$ or $\sparens{\stdvertex=1, \genvertex=1}, \sparens{\stdvertex=0, \genvertex=0}$.
        \item Any edge with one side in the cycle is assigned $\sparens{\stdvertex=1, \genvertex=0}, \sparens{\stdvertex=0, \genvertex=0}$.
        \item Any edge whose both sides are not in the cycle is assigned $\sparens{\stdvertex=0, \genvertex=0}, \sparens{\stdvertex=0, \genvertex=0}$.
    \end{itemize}
\end{definition}
Consider the following coloring candidates:
\begin{definition}
    Let $\stdcomplex$ be a simplicial complex and let $\stdcycle=\parens{\stdcycle_0,\cdots,\stdcycle_n}$ be a simple cycle in $\stdcomplex$ (when considering $\stdcomplex$ as not oriented).
    Given two vertices in the cycle $\stdvertex_1, \genvertex_2$ denote the distance between them on the cycle as $\dist_\stdcycle\parens{\stdvertex_1, \genvertex_2}$ and the distance on the cycle when not counting the edge $\set{\stdcycle_k, \stdcycle_{k+1}}$ as $\dist_{\stdcycle,k}\parens{\stdvertex_1, \genvertex_2}$.
    Define the following two coloring candidates:
    \begin{multline*}
        \lassignment{F}_{\stdcycle}^e\parens{\set{\stdvertex, \genvertex}} = \begin{cases}
                                                                                 \sparens{\stdvertex=0, \genvertex=1}, \sparens{\stdvertex=1, \genvertex=0} & \exists i, j: \stdvertex = \stdcycle_i, \genvertex=\stdcycle_{j}, \dist_\stdcycle\parens{\stdvertex_1, \genvertex_2} \text{ is odd}\\
                                                                                 \sparens{\stdvertex=0, \genvertex=0}, \sparens{\stdvertex=1, \genvertex=1} & \exists i, j: \stdvertex = \stdcycle_i, \genvertex=\stdcycle_{j}, \dist_\stdcycle\parens{\stdvertex_1, \genvertex_2} \text{ is even}\\
                                                                                 \sparens{\stdvertex=1, \genvertex=0}, \sparens{\stdvertex=0, \genvertex=0} & \exists i \forall j: \stdvertex = \stdcycle_i, \genvertex \ne \stdcycle_{j}\\
                                                                                 \sparens{\stdvertex=0, \genvertex=0}, \sparens{\stdvertex=0, \genvertex=0} & \forall i: \stdvertex = \stdcycle_i, \genvertex \ne \stdcycle_{i}\\
        \end{cases}\\
        \lassignment{F}_{\stdcycle, k}^o\parens{\set{\stdvertex, \genvertex}} = \begin{cases}
                                                                                 \sparens{\stdvertex=1, \genvertex=1}, \sparens{\stdvertex=0, \genvertex=0} & \stdvertex = \stdcycle_k, \genvertex=\stdcycle_{k+1}\\
                                                                                 \sparens{\stdvertex=0, \genvertex=1}, \sparens{\stdvertex=1, \genvertex=0} & \exists i, j: \stdvertex = \stdcycle_i, \genvertex=\stdcycle_{j}, \dist_{\stdcycle,k}\parens{\stdvertex_1, \genvertex_2} \text{ is odd}, (i,j) \ne (k,k+1)\\
                                                                                 \sparens{\stdvertex=0, \genvertex=0}, \sparens{\stdvertex=1, \genvertex=1} & \exists i, j: \stdvertex = \stdcycle_i, \genvertex=\stdcycle_{j}, \dist_{\stdcycle,k}\parens{\stdvertex_1, \genvertex_2} \text{ is even}\\
                                                                                 \sparens{\stdvertex=1, \genvertex=0}, \sparens{\stdvertex=0, \genvertex=0} & \exists i \forall j: \stdvertex = \stdcycle_i, \genvertex \ne \stdcycle_{j}\\
                                                                                 \sparens{\stdvertex=0, \genvertex=0}, \sparens{\stdvertex=0, \genvertex=0} & \forall i: \stdvertex = \stdcycle_i, \genvertex \ne \stdcycle_{i}
        \end{cases}
    \end{multline*}
\end{definition}
We think of $\lassignment{F}_{\stdcycle}^e$ as a $2$-assignment that attempts to color the vertices of $\stdcycle$ in two colors.
Likewise we think of $\lassignment{F}_{\stdcycle}^o$ as a $2$-assignment that ``glues together'' $\stdcycle_0$ and $\stdcycle_1$ and then tries to color the rest of the graph in two colors.
\begin{example}
    Consider the following graphs as parts of the $1$-skeleton of some simplicial complex.
    Our first example is of $\lassignment{F}_{\stdcycle}^e$:
    \begin{center}
        \begin{tikzpicture}[%
        every node/.style={draw,fill=gray!40,circle,minimum size=18pt,font=\footnotesize}, node distance=1cm]
            \node (g0) {$\stdcycle_0$};
            \node[draw=none, fill=none, right=of g0] (t0) {};
            \node (g1) [right=of t0] {$\stdcycle_1$};
            \node[draw=none, fill=none, below right=of g1] (t1) {};
            \node (g2) [below right=of t1] {$\stdcycle_2$};
            \node[draw=none, fill=none, below left=of g2] (t2) {};
            \node (g3) [below left=of t2] {$\stdcycle_3$};
            \node[draw=none, fill=none, left=of g3] (t3) {};
            \node (g4) [left=of t3] {$\stdcycle_4$};
            \node[draw=none, fill=none, above left=of g4] (t4) {};
            \node (g5) [above left=of t4] {$\stdcycle_5$};
            \node[draw=none, fill=none, above right=of g5] (t5) {};
            \node (v5) [above left=of t5] {$\stdvertex_5$};
            \node (v0) [above=of t0] {$\stdvertex_0$};
            \node (v1) [above right=of t1] {$\stdvertex_1$};
            \node (v2) [below right=of t2] {$\stdvertex_2$};
            \node (v3) [below=of t3] {$\stdvertex_3$};
            \node (v4) [below left=of t4] {$\stdvertex_4$};
            \draw[ultra thick] (g0) -- node[draw=none, fill=none]{$\substack{\stdcycle_0=0,\stdcycle_1=1\\ \stdcycle_0=1,\stdcycle_1=0}$} (g1) -- node[draw=none, fill=none, rotate=315]{$\substack{\stdcycle_1=1,\stdcycle_2=0\\\stdcycle_1=0,\stdcycle_2=1}$} (g2) -- node[draw=none, fill=none, rotate=45]{$\substack{\stdcycle_3=0,\stdcycle_2=1\\\stdcycle_3=1,\stdcycle_2=0}$} (g3) -- node[draw=none, fill=none]{$\substack{\stdcycle_4=1,\stdcycle_3=0\\\stdcycle_4=0,\stdcycle_3=1}$} (g4) -- node[draw=none, fill=none, rotate=315]{$\substack{\stdcycle_5=0,\stdcycle_4=1\\\stdcycle_5=1,\stdcycle_4=0}$} (g5) -- node[draw=none, fill=none, rotate=45]{$\substack{\stdcycle_5=1,\stdcycle_0=0\\\stdcycle_5=0,\stdcycle_0=1}$} (g0);
            \draw (g0) -- node[draw=none, fill=none, rotate=45]{$\substack{\stdcycle_0=1,\stdvertex_0=0\\\stdcycle_0=0,\stdvertex_0=0}$} (v0) -- node[draw=none, fill=none, rotate=315]{$\substack{\stdvertex_0=0,\stdcycle_1=0\\\stdvertex_0=0,\stdcycle_1=1}$} (g1);
            \draw (g1) -- node[draw=none, fill=none]{$\substack{\stdcycle_1=0,\stdvertex_1=0\\ \stdcycle_1=1,\stdvertex_1=0}$} (v1) -- node[draw=none, fill=none, rotate=270]{$\substack{\stdvertex_1=0,\stdcycle_2=1\\\stdvertex_1=0,\stdcycle_2=0}$} (g2);
            \draw (g2) -- node[draw=none, fill=none, rotate=270]{$\substack{\stdcycle_2=1,\stdvertex_1=0\\ \stdcycle_2=0,\stdvertex_1=0}$} (v2) -- node[draw=none, fill=none]{$\substack{\stdcycle_3=1,\stdvertex_2=0\\\stdcycle_3=0,\stdvertex_2=0}$} (g3);
            \draw (g3) -- node[draw=none, fill=none, rotate=45]{$\substack{\stdvertex_3=0,\stdcycle_3=1\\ \stdvertex_3=0, \stdcycle_3=0}$} (v3) -- node[draw=none, fill=none, rotate=315]{$\substack{\stdcycle_4=0,\stdvertex_2=0\\\stdcycle_4=1,\stdvertex_2=0}$} (g4);
            \draw (g4) -- node[draw=none, fill=none]{$\substack{\stdvertex_4=0,\stdcycle_4=0\\ \stdvertex_4=0, \stdcycle_4=1}$} (v4) -- node[draw=none, fill=none, rotate=90]{$\substack{\stdvertex_4=0,\stdcycle_5=0\\\stdvertex_4=0, \stdcycle_5=1}$} (g5);
            \draw (g5) -- node[draw=none, fill=none, rotate=90]{$\substack{\stdcycle_5=0,\stdvertex_5=0\\ \stdcycle_0=1, \stdvertex_5=0}$} (v5) -- node[draw=none, fill=none]{$\substack{\stdvertex_5=0,\stdcycle_0=1\\\stdvertex_5=0, \stdcycle_0=0}$} (g0);
            \draw (v0) -- node[draw=none, fill=none, rotate=337.5]{$\substack{\stdvertex_0=0,\stdvertex_1=0\\ \stdvertex_0=0, \stdvertex_1=0}$} (v1);
            \draw (v2) -- node[draw=none, fill=none, rotate=22.5]{$\substack{\stdvertex_3=0,\stdvertex_2=0\\ \stdvertex_3=0, \stdvertex_2=0}$} (v3);
            \draw (v4) -- node[draw=none, fill=none, rotate=337.5]{$\substack{\stdvertex_4=0,\stdvertex_3=0\\ \stdvertex_4=0, \stdvertex_3=0}$} (v3);
            \draw (v5) -- node[draw=none, fill=none, rotate=22.5]{$\substack{\stdvertex_5=0,\stdvertex_0=0\\ \stdvertex_5=0, \stdvertex_0=0}$} (v0);
            \draw (g0) -- node[draw=none, fill=none, rotate=90]{$\substack{\stdcycle_4=1,\stdcycle_0=1\\ \stdcycle_4=0, \stdcycle_0=0}$} (g4);
            \draw (g0) -- node[draw=none, fill=none, rotate=300]{$\substack{\stdcycle_0=1,\stdcycle_3=0\\ \stdcycle_0=0, \stdcycle_3=1}$} (g3);
        \end{tikzpicture}
    \end{center}
    Note that the assignment is $2$-agreeing.
    We move on to demonstrate the odd case, consider the same graph with the $2$-assignment $\lassignment{F}_{\stdcycle, 0}^o$:
    \begin{center}
        \begin{tikzpicture}[%
        every node/.style={draw,fill=gray!40,circle,minimum size=18pt,font=\footnotesize}, node distance=1cm]
            \node (g0) {$\stdcycle_0$};
            \node[draw=none, fill=none, right=of g0] (t0) {};
            \node (g1) [right=of t0] {$\stdcycle_1$};
            \node[draw=none, fill=none, below right=of g1] (t1) {};
            \node (g2) [below right=of t1] {$\stdcycle_2$};
            \node[draw=none, fill=none, below left=of g2] (t2) {};
            \node (g3) [below left=of t2] {$\stdcycle_3$};
            \node[draw=none, fill=none, left=of g3] (t3) {};
            \node (g4) [left=of t3] {$\stdcycle_4$};
            \node[draw=none, fill=none, above left=of g4] (t4) {};
            \node (g5) [above left=of t4] {$\stdcycle_5$};
            \node[draw=none, fill=none, above right=of g5] (t5) {};
            \node (v5) [above left=of t5] {$\stdvertex_5$};
            \node (v0) [above=of t0] {$\stdvertex_0$};
            \node (v1) [above right=of t1] {$\stdvertex_1$};
            \node (v2) [below right=of t2] {$\stdvertex_2$};
            \node (v3) [below=of t3] {$\stdvertex_3$};
            \node (v4) [below left=of t4] {$\stdvertex_4$};
            \draw[ultra thick] (g0) -- node[draw=none, fill=none]{$\substack{\stdcycle_0=1,\stdcycle_1=1\\ \stdcycle_0=0,\stdcycle_1=0}$} (g1) -- node[draw=none, fill=none, rotate=315]{$\substack{\stdcycle_1=1,\stdcycle_2=0\\\stdcycle_1=0,\stdcycle_2=1}$} (g2) -- node[draw=none, fill=none, rotate=45]{$\substack{\stdcycle_3=0,\stdcycle_2=1\\\stdcycle_3=1,\stdcycle_2=0}$} (g3) -- node[draw=none, fill=none]{$\substack{\stdcycle_4=1,\stdcycle_3=0\\\stdcycle_4=0,\stdcycle_3=1}$} (g4) -- node[draw=none, fill=none, rotate=315]{$\substack{\stdcycle_5=0,\stdcycle_4=1\\\stdcycle_5=1,\stdcycle_4=0}$} (g5) -- node[draw=none, fill=none, rotate=45]{$\substack{\stdcycle_5=1,\stdcycle_0=0\\\stdcycle_5=0,\stdcycle_0=1}$} (g0);
            \draw (g0) -- node[draw=none, fill=none, rotate=45]{$\substack{\stdcycle_0=1,\stdvertex_0=0\\\stdcycle_0=0,\stdvertex_0=0}$} (v0) -- node[draw=none, fill=none, rotate=315]{$\substack{\stdvertex_0=0,\stdcycle_1=0\\\stdvertex_0=0,\stdcycle_1=1}$} (g1);
            \draw (g1) -- node[draw=none, fill=none]{$\substack{\stdcycle_1=0,\stdvertex_1=0\\ \stdcycle_1=1,\stdvertex_1=0}$} (v1) -- node[draw=none, fill=none, rotate=270]{$\substack{\stdvertex_1=0,\stdcycle_2=1\\\stdvertex_1=0,\stdcycle_2=0}$} (g2);
            \draw (g2) -- node[draw=none, fill=none, rotate=270]{$\substack{\stdcycle_2=1,\stdvertex_1=0\\ \stdcycle_2=0,\stdvertex_1=0}$} (v2) -- node[draw=none, fill=none]{$\substack{\stdcycle_3=1,\stdvertex_2=0\\\stdcycle_3=0,\stdvertex_2=0}$} (g3);
            \draw (g3) -- node[draw=none, fill=none, rotate=45]{$\substack{\stdvertex_3=0,\stdcycle_3=1\\ \stdvertex_3=0, \stdcycle_3=0}$} (v3) -- node[draw=none, fill=none, rotate=315]{$\substack{\stdcycle_4=0,\stdvertex_2=0\\\stdcycle_4=1,\stdvertex_2=0}$} (g4);
            \draw (g4) -- node[draw=none, fill=none]{$\substack{\stdvertex_4=0,\stdcycle_4=0\\ \stdvertex_4=0, \stdcycle_4=1}$} (v4) -- node[draw=none, fill=none, rotate=90]{$\substack{\stdvertex_4=0,\stdcycle_5=0\\\stdvertex_4=0, \stdcycle_5=1}$} (g5);
            \draw (g5) -- node[draw=none, fill=none, rotate=90]{$\substack{\stdcycle_5=0,\stdvertex_5=0\\ \stdcycle_0=1, \stdvertex_5=0}$} (v5) -- node[draw=none, fill=none]{$\substack{\stdvertex_5=0,\stdcycle_0=1\\\stdvertex_5=0, \stdcycle_0=0}$} (g0);
            \draw (v0) -- node[draw=none, fill=none, rotate=337.5]{$\substack{\stdvertex_0=0,\stdvertex_1=0\\ \stdvertex_0=0, \stdvertex_1=0}$} (v1);
            \draw (v2) -- node[draw=none, fill=none, rotate=22.5]{$\substack{\stdvertex_3=0,\stdvertex_2=0\\ \stdvertex_3=0, \stdvertex_2=0}$} (v3);
            \draw (v4) -- node[draw=none, fill=none, rotate=337.5]{$\substack{\stdvertex_4=0,\stdvertex_3=0\\ \stdvertex_4=0, \stdvertex_3=0}$} (v3);
            \draw (v5) -- node[draw=none, fill=none, rotate=22.5]{$\substack{\stdvertex_5=0,\stdvertex_0=0\\ \stdvertex_5=0, \stdvertex_0=0}$} (v0);
            \draw (g0) -- node[draw=none, fill=none, rotate=90]{$\substack{\stdcycle_4=1,\stdcycle_0=1\\ \stdcycle_4=0, \stdcycle_0=0}$} (g4);
            \draw (g0) -- node[draw=none, fill=none, rotate=300]{$\substack{\stdcycle_0=1,\stdcycle_3=1\\ \stdcycle_0=0, \stdcycle_3=0}$} (g3);
        \end{tikzpicture}
    \end{center}
    Note that $\lassignment{F}_{\stdcycle, 0}^o$ is \emph{not} $2$-agreeing.
\end{example}
We will move on to show that the agreement or lack thereof in the example is solely dependent on the parity of the length of the cycle.
\begin{lemma}
    Let $\stdcomplex$ be a simplicial complex and let $\stdcycle$ be a simple cycle in $\stdcomplex$.
    Then the following two statements hold:
    \begin{itemize}
        \item $\lassignment{f}^e_{\stdcycle}$ is agreeing iff $\stdcycle$ is of even length.
        \item For every value of $k$: $\lassignment{f}^o_{\stdcycle, k}$ is agreeing iff $\stdcycle$ is of odd length.
    \end{itemize}
\end{lemma}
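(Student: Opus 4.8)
The plan is to unwind ``$\lassignment{f}$ is agreeing'' to its only possible concrete meaning for a $2$-assignment: $\lassignment{f}$ is agreeing iff there are two global colourings $F_1,F_2\in\cochainset{0}{\stdcomplex}$ such that for every edge $\set{\stdvertex,\genvertex}$ the unordered pair $\set{F_1|_{\set{\stdvertex,\genvertex}},\,F_2|_{\set{\stdvertex,\genvertex}}}$ equals the list $\lassignment{f}\parens{\set{\stdvertex,\genvertex}}$ --- the per-edge permutation $\pi_{\set{\stdvertex,\genvertex}}$ in the definition of an agreeing $l$-assignment is exactly what lets us forget the order, and there is no consistency requirement across edges. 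With this reformulation both bullets become statements about proper $2$-colourings: of the cycle $\stdcycle$ for $\lassignment{f}^e_\stdcycle$, and of the path $P$ obtained from $\stdcycle$ by deleting the edge $\set{\stdcycle_k,\stdcycle_{k+1}}$ for $\lassignment{f}^o_{\stdcycle,k}$.

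For $\lassignment{f}^e_\stdcycle$, in the forward direction I would observe that every cycle edge $\set{\stdcycle_i,\stdcycle_{i+1}}$ has $\dist_\stdcycle\parens{\stdcycle_i,\stdcycle_{i+1}}=1$, hence list $\set{\sparens{\stdcycle_i=0,\stdcycle_{i+1}=1},\sparens{\stdcycle_i=1,\stdcycle_{i+1}=0}}$; so any witnessing $F_1$ satisfies $F_1(\stdcycle_i)\neq F_1(\stdcycle_{i+1})$ on every cycle edge, i.e.\ $F_1$ restricted to $\stdcycle$ is a proper $2$-colouring, forcing $\stdcycle$ to be bipartite, i.e.\ of even length. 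Conversely, when $\stdcycle$ has even length I would put $F_1(\stdcycle_j)=j\bmod 2$ on the cycle vertices, $F_2=1-F_1$ on the cycle vertices, and $F_1=F_2=0$ off the cycle, and then check edge by edge that the list is reproduced; the point worth recording is that for an even cycle the parity of $\dist_\stdcycle\parens{\stdcycle_i,\stdcycle_j}$ equals the parity of $i+j$, so the ``odd-distance'' edges are precisely those on which $F_1$ (and $F_2$) disagree, the ``even-distance'' edges those on which they agree, and the cycle-to-outside and outside-to-outside edges are immediate since $F_1=F_2=0$ off the cycle.

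For $\lassignment{f}^o_{\stdcycle,k}$ I would run the same argument on $P$. In the forward direction, every edge of $P$ forces $F_1$ to alternate, while the special edge $\set{\stdcycle_k,\stdcycle_{k+1}}$ carries the list $\set{\sparens{\stdcycle_k=1,\stdcycle_{k+1}=1},\sparens{\stdcycle_k=0,\stdcycle_{k+1}=0}}$ and hence forces $F_1(\stdcycle_k)=F_1(\stdcycle_{k+1})$; walking along $P$ from $\stdcycle_{k+1}$ to $\stdcycle_k$, a path of $n$ edges when $\stdcycle$ has $n+1$ vertices, and comparing the two endpoints shows $n$ is even, i.e.\ $\stdcycle$ has odd length. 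Conversely, when $\stdcycle$ has odd length I would $2$-colour $P$ by distance from one endpoint, let $F_1$ be that colouring extended by $0$ off $\stdcycle$, and $F_2=1-F_1$ on cycle vertices and $0$ off; since $P$ has even length its endpoints receive equal $F_1$-values, matching the special edge, and every other edge is verified exactly as in the first bullet with $\dist_{\stdcycle,k}$ in place of $\dist_\stdcycle$.

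The bulk of the work is the exhaustive edge-by-edge verification in the two converse directions rather than any genuine obstacle. The one delicate point is that $\dist_\stdcycle$ and $\dist_{\stdcycle,k}$ acquire a well-defined parity only when $\stdcycle$, resp.\ $P$, is of the parity under consideration; so in each forward direction the contradiction in the ``wrong parity'' case must be --- and is --- extracted from the cycle edges (resp.\ the path edges together with the glued edge) before any chord of ambiguous parity is consulted. A minor bookkeeping remark is that the exclusion $(i,j)\neq(k,k+1)$ in the definition of $\lassignment{f}^o_{\stdcycle,k}$ is what routes the special edge through the ``glued'' clause; in the odd-length case we actually care about this is automatic, since there $\dist_{\stdcycle,k}\parens{\stdcycle_k,\stdcycle_{k+1}}=n$ is even.
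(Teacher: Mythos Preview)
Your proposal is correct and follows essentially the same idea as the paper: reduce ``agreeing'' to the existence of a proper $2$-colouring of the cycle (resp.\ of the path obtained by deleting the glued edge). The paper's own proof is considerably terser---it only exhibits the global cochains $\stdcochain_1,\stdcochain_2$ in the even case using the bipartition of $\stdcycle$, declares the verification ``easy to see'', and omits both the forward direction and the odd case entirely---so your treatment is in fact more complete, carrying out the edge-by-edge check for all four edge types and handling both directions of both bullets explicitly.
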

\begin{proof}
    We will start by showing that the claim holds for $\lassignment{f}^e_\gamma$:
    $\stdcycle$ is even iff it is bi-partite, let $V_1, V_2$ be the two parts of the cycle.
    We construct the following global cochains $\stdcochain_1, \stdcochain_2$:
    \[
        \stdcochain_1(\stdvertex) = \begin{cases}
                                        s(\stdcycle_i) & \stdvertex = \stdcycle_i\\
                                        0 & \text{otherwise}
        \end{cases}
        \qquad \qquad \qquad
        \stdcochain_2(\stdvertex) = \begin{cases}
                                        1-s(\stdcycle_i) & \stdvertex = \stdcycle_i\\
                                        0 & \text{otherwise}
        \end{cases}
    \]
    It is east to see that $\lassignment{f}^e_{\stdcycle}$ is an agreeing $2$-assignment that agrees with $\stdcochain_1$ and $\stdcochain_2$.

    The proof of the statment for the odd case follows very similar arguments and is therefore omitted.
\end{proof}
We consider a subset of all simple cycles defined below:
\begin{definition}[Skipping edge]
    Let $\stdcomplex$ be a simplicial complex and let $\stdcycle$ be a simple cycle in $\stdcomplex$.
    Also let $e, e'$ be edges such that there exists $i, j, k$ such that $e = \set{\stdcycle_i, \stdcycle_j}, e' = \set{\stdcycle_{k}, \stdcycle_{k+1}}$.
    We say that $e$ skips $e'$ if $\dist_\stdcycle\parens{\stdcycle_i, \stdcycle_j} = \dist_\stdcycle\parens{\stdcycle_i, \stdcycle_k} + \dist_\stdcycle\parens{\stdcycle_k, \stdcycle_j} + 1$.
\end{definition}
\begin{definition}[Non-skipping cycle]
    Let $\stdcomplex$ be a simplicial complex and let $\stdcycle$ be a simple cycle in the underlying graph of the complex.
    We say that $\stdcycle$ is $i$-non-skipping if for every $j,k$ such that $e = \set{\stdcycle_j, \stdcycle_k} \in \stdcomplex(1)$ it holds that $e$ skips at most $i$ edges.
\end{definition}
\begin{definition}[Gluing]
    Let $\stdcomplex$ be a simplicial complex, $\stdcycle$ be a simple cycle and $\lassignment{f}_{\stdcycle}$ be a coloring candidate.
    Define the gluing of its $k$-edge as $glue_k(\lassignment{f})$ to be a $2$-assignment such that:
    \begin{itemize}
        \item Any edge that does not skip $\set{\stdcycle_k, \stdcycle_{k+1}}$ is left unchanged.
        \item The edge $\set{\stdcycle_k, \stdcycle_{k+1}}$ is assigned $\sparens{\stdvertex=1, \genvertex=1}, \sparens{\stdvertex=0, \genvertex=0}$.
        \item Any edge that skipped $\set{\stdcycle_k, \stdcycle_{k+1}}$ and was assigned $\sparens{\stdvertex=0, \genvertex=1}, \sparens{\stdvertex=1, \genvertex=0}$ will be assigned $\sparens{\stdvertex=1, \genvertex=1}, \sparens{\stdvertex=0, \genvertex=0}$.
        \item Any edge that skipped $\set{\stdcycle_k, \stdcycle_{k+1}}$ and was assigned $\sparens{\stdvertex=1, \genvertex=1}, \sparens{\stdvertex=0, \genvertex=0}$ will be assigned $\sparens{\stdvertex=0, \genvertex=1}, \sparens{\stdvertex=1, \genvertex=0}$.
    \end{itemize}
\end{definition}
\begin{lemma}
    Let $\stdcomplex$ be a simplicial complex, $\stdcycle$ be a simple cycle.
    Then for every $j \ne k$: $\lassignment{f}^o_{\stdcycle,k}$ is agreeing iff $glue_j\parens{\lassignment{f}^o_{\stdcycle,k}}$ is not agreeing.
\end{lemma}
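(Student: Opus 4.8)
The plan is to recast agreement of a coloring candidate as an $\mathbbm{F}_2$-coboundary condition on the graph spanned by the vertices of $\stdcycle$, and then to observe that the operation $glue_j$ shifts exactly one coordinate in the cycle space of that graph, which flips the parity governing agreement.

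First I would set up a dictionary. Let $H$ be the graph whose vertex set is the vertex set of $\stdcycle$ and whose edges are the edges of $\stdcomplex$ having both endpoints on $\stdcycle$, so that $H$ is the cycle $\stdcycle$ together with all of its chords. To a coloring candidate $\lassignment{f}$ I attach the cochain $\beta \in \cochainset{1}{H;\mathbbm{F}_2}$ with $\beta(e)=0$ when $e$ receives one of the two ``equal'' assignments $\sparens{\stdvertex=1,\genvertex=1},\sparens{\stdvertex=0,\genvertex=0}$ and $\beta(e)=1$ when it receives one of the two ``opposite'' assignments $\sparens{\stdvertex=0,\genvertex=1},\sparens{\stdvertex=1,\genvertex=0}$. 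I claim $\lassignment{f}\in\lassignmentset{A}$ iff $\beta\in\coboundaryset{1}{H;\mathbbm{F}_2}$. For the forward direction, any global functions $\cochain{F}_1,\cochain{F}_2$ witnessing agreement must — since every edge of $H$ carries a pair of mutually complementary local functions — satisfy $\cochain{F}_2=\overline{\cochain{F}_1}$ on all of $\stdcycle$, so $\psi:=\cochain{F}_1|_{\stdcycle}$ is a $0$-cochain with $d_0\psi=\beta$; conversely, from $d_0\psi=\beta$ set $\cochain{F}_1=\psi,\ \cochain{F}_2=\overline{\psi}$ on $\stdcycle$ and $\cochain{F}_1=\cochain{F}_2=0$ off $\stdcycle$, and the remaining edge types (one endpoint on $\stdcycle$, or neither) are then automatically matched. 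This is a routine case check against the definition of a coloring candidate; I would also record here that $glue_j$ only ever retags edges of $H$ (the edge $\set{\stdcycle_j,\stdcycle_{j+1}}$ and the chords that skip it, nothing else), so it never disturbs the ``remaining edge types'' part of the dictionary.

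Next I would reduce to fundamental cycles. By basic $\mathbbm{F}_2$-cohomology of graphs, $\beta$ is a coboundary iff it sums to $0$ around every combinatorial cycle of $H$; using the spanning tree $T=\stdcycle\setminus\set{\stdcycle_k,\stdcycle_{k+1}}$ (a Hamiltonian path on the vertices of $\stdcycle$), the fundamental cycles are $\stdcycle$ itself and, for each chord $c=\set{\stdcycle_a,\stdcycle_b}$, the cycle $c\cup P_c$ where $P_c$ is the arc of $\stdcycle$ from $\stdcycle_a$ to $\stdcycle_b$ that avoids $\set{\stdcycle_k,\stdcycle_{k+1}}$ (of length $\dist_{\stdcycle,k}\parens{\stdcycle_a,\stdcycle_b}$). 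Hence $\lassignment{f}$ is agreeing iff $\sum_{e\in\stdcycle}\beta(e)=0$ and, for every chord $c$, $\beta(c)+\sum_{e\in P_c}\beta(e)=0$. For $\lassignment{f}^o_{\stdcycle,k}$ the chord equations hold identically, both sides being $\dist_{\stdcycle,k}\parens{\stdcycle_a,\stdcycle_b}\bmod 2$, while the cycle equation reduces to $n\equiv 0\pmod 2$; this recovers the earlier lemma that $\lassignment{f}^o_{\stdcycle,k}$ is agreeing iff $\stdcycle$ has odd length $n+1$.

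Finally I would track $glue_j$. Passing from $\lassignment{f}^o_{\stdcycle,k}$ to $\lassignment{f}':=glue_j\parens{\lassignment{f}^o_{\stdcycle,k}}$ adds $1$ to $\beta$ on the cycle edge $\set{\stdcycle_j,\stdcycle_{j+1}}$ (which was ``opposite'' since $j\ne k$), adds $1$ on every chord that skips $\set{\stdcycle_j,\stdcycle_{j+1}}$, and changes $\beta$ nowhere else. Re-running the two computations, the cycle equation becomes $\sum_{e\in\stdcycle}\beta'(e)=n+1$ (only the single cycle edge $\set{\stdcycle_j,\stdcycle_{j+1}}$ is affected, and no chord lies on $\stdcycle$), and for each chord $c$ the left side of its equation picks up $\mathbbm{1}[\,c\text{ skips }\set{\stdcycle_j,\stdcycle_{j+1}}\,]+\mathbbm{1}[\,\set{\stdcycle_j,\stdcycle_{j+1}}\in P_c\,]$ on top of its old value $0$. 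The heart of the argument is that these two indicators are always equal, i.e.\ $c$ skips $\set{\stdcycle_j,\stdcycle_{j+1}}$ precisely when the retained arc $P_c$ (the one avoiding the already-glued edge $\set{\stdcycle_k,\stdcycle_{k+1}}$) runs through $\set{\stdcycle_j,\stdcycle_{j+1}}$; granting this, the chord equations all survive, so $\lassignment{f}'$ is agreeing iff $n+1\equiv 0\pmod 2$, i.e.\ iff $\stdcycle$ has even length, which together with the previous paragraph is exactly the claimed equivalence for every $j\ne k$. The step I expect to be the main obstacle is precisely this last identity: it is where the geometry of $\stdcycle$ and the precise reading of the skipping relation (measured along the arc retained after deleting $\set{\stdcycle_k,\stdcycle_{k+1}}$) must be used, the delicate case being a chord whose shortest arc would otherwise pass through $\set{\stdcycle_k,\stdcycle_{k+1}}$; everything else is the dictionary of the second paragraph plus two $\bmod 2$ bookkeeping computations.
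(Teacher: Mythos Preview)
Your approach is correct and genuinely different from the paper's. The paper's proof is a two-line reduction: it invokes the earlier lemma that $\lassignment{f}^o_{\stdcycle,k}$ is agreeing iff $|\stdcycle|$ is odd, and then asserts that by the definition of gluing, $glue_j\parens{\lassignment{f}^o_{\stdcycle,k}}$ is agreeing iff $\lassignment{f}^o_{\stdcycle',k}$ is agreeing, where $\stdcycle'$ is $\stdcycle$ with the edge $\set{\stdcycle_j,\stdcycle_{j+1}}$ contracted; since $|\stdcycle'|=|\stdcycle|-1$, the parity flips. You instead build an explicit dictionary between coloring candidates and $\mathbbm{F}_2$-cochains on the chord graph $H$, reduce agreement to vanishing on a fundamental-cycle basis, and track the effect of $glue_j$ coordinate by coordinate. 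What your route buys is self-containment and precision: you re-derive the odd-length criterion rather than quoting it, and you make visible exactly which equations survive gluing and which one flips. What the paper's route buys is brevity: once one accepts the ``gluing $=$ contraction'' identification, the result is immediate from the earlier lemma.

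One caution on your key identity. You want $\mathbbm{1}[\,c\text{ skips }\set{\stdcycle_j,\stdcycle_{j+1}}\,]=\mathbbm{1}[\,\set{\stdcycle_j,\stdcycle_{j+1}}\in P_c\,]$, where $P_c$ is the arc of $\stdcycle$ avoiding $\set{\stdcycle_k,\stdcycle_{k+1}}$. The paper's definition of ``skipping'' is phrased via $\dist_\stdcycle$ (the unrestricted cycle distance), not $\dist_{\stdcycle,k}$, so a priori ``skipping'' refers to the shorter arc of $c$, which need not coincide with $P_c$ when the shorter arc happens to pass through $\set{\stdcycle_k,\stdcycle_{k+1}}$. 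You flag exactly this as the delicate case, and you are right to. In effect, both your argument and the paper's ``gluing $=$ contraction'' claim rest on the same compatibility between the skipping relation and the arc structure; the paper leaves this implicit, while you have isolated it as the single nontrivial step to check. Establishing that identity carefully (or, equivalently, verifying directly that $glue_j\parens{\lassignment{f}^o_{\stdcycle,k}}$ has the $\beta$-cochain of $\lassignment{f}^o_{\stdcycle',k}$ after contraction) completes either proof.
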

\begin{proof}
    Consider $\lassignment{f}^o_{\stdcycle,k}$.
    Note that it is agreeing iff $\stdcycle$ is of odd length.
    It will therefore suffice to prove that $glue_j\parens{\lassignment{f}^o_{\stdcycle,k}}$ is only agreeing if the cycle is of even length.
    Note that, by the definition of gluing, $glue_j\parens{\lassignment{f}^o_{\stdcycle,k}}$ is agreeing iff $\lassignment{f}^o_{\stdcycle',k}$ is agreeing where $\stdcycle'$ is $\stdcycle$ with $\stdcycle_j$ and $\stdcycle_{j+1}$ ``glued together'' (i.e.\ considered to be one vertex).
    Therefore $\lassignment{f}^o_{\stdcycle',k}$ is agreeing iff $\stdcycle'$ is of odd length and thus $glue_j\parens{\lassignment{f}^o_{\stdcycle,k}}$ is agreeing iff $\stdcycle$ is of even length.
\end{proof}
We now show that the existence of a non-skipping in the complex yields a lower bound on the number of queries required to test list agreement.
\begin{lemma}\label{lem:cycle-lower-bound}
    Let $\stdcomplex$ be a simplicial complex and let $\stdcycle$ be an $i$-non-skipping simple cycle in $\stdcomplex$.
    Then there is a lower bound of $\frac{\abs{\stdcycle}}{i}$ queries that \emph{must} be performed in order to test list agreement.
\end{lemma}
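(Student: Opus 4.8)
The plan is to invoke Lemma~\ref{tests-fail} with the coloring candidates as the hard instance. Set $n=\abs{\stdcycle}$ and take as the non-agreeing $l$-assignment (with $l=2$) the assignment $\lassignment{R}=\lassignment{F}^e_{\stdcycle}$ when $n$ is odd and $\lassignment{R}=\lassignment{F}^o_{\stdcycle,m}$ (for an arbitrary fixed $m$) when $n$ is even; by the preceding lemma relating agreement of the coloring candidates to the parity of $\stdcycle$, $\lassignment{R}$ is non-agreeing in each case. Since $\lassignment{R}$ is a $2$-assignment on the $1$-faces of $\stdcomplex$, a single query reveals one of the two local functions attached to a single edge of $\stdcomplex$, so any set of $p$ queries reads at most $p$ edges of $\stdcomplex(1)$. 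By Lemma~\ref{tests-fail} it then suffices to show that whenever $p<n/i$ one can, for every such set of $p$ queries, exhibit an agreeing $2$-assignment that coincides with $\lassignment{R}$ on all of them.

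The key point is that $glue_k(\lassignment{R})$ differs from $\lassignment{R}$ only on the cycle edge $\set{\stdcycle_k,\stdcycle_{k+1}}$ itself and on the chords that skip $\set{\stdcycle_k,\stdcycle_{k+1}}$, and that $glue_k(\lassignment{R})$ is agreeing: for $n$ odd, $glue_k(\lassignment{F}^e_{\stdcycle})$ is --- by the same reasoning that proves the gluing lemma --- a coloring candidate for the cycle of \emph{even} length $n-1$ obtained by identifying $\stdcycle_k$ with $\stdcycle_{k+1}$, hence agreeing; for $n$ even and $k\neq m$, $glue_k(\lassignment{F}^o_{\stdcycle,m})$ is agreeing directly by the gluing lemma (here $\lassignment{F}^o_{\stdcycle,m}$ is non-agreeing precisely because $n$ is even). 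So it remains to find, given the at-most-$p$ read edges $e_1,\dots,e_{p'}$, an index $k$ (with $k\neq m$ in the even case) such that none of the $e_j$ equals or skips $\set{\stdcycle_k,\stdcycle_{k+1}}$.

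This is a counting argument driven by the non-skipping hypothesis: a cycle edge skips no edge, an edge with at most one endpoint on $\stdcycle$ skips no edge, and a chord skips at most $i$ cycle edges because $\stdcycle$ is $i$-non-skipping; hence each read edge ``blocks'' at most $i$ of the $n$ indices $k$, and together with the at most one index excluded in the even case at most $pi$ indices, plus a harmless additive constant, are blocked. Consequently, whenever $pi<n$ (up to that constant in the even case) some index $k$ is unblocked, and for that $k$ the assignment $\lassignment{A}=glue_k(\lassignment{R})$ agrees with $\lassignment{R}$ on every read edge, hence on every queried location. Lemma~\ref{tests-fail} then shows that no test for list agreement can use fewer than $n/i=\abs{\stdcycle}/i$ queries, which is Lemma~\ref{lem:cycle-lower-bound}.

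The main obstacle is the bookkeeping around $glue_k$: one must verify carefully that it modifies only the edges incident to, or skipping, $\set{\stdcycle_k,\stdcycle_{k+1}}$ --- so that every unmodified edge retains all of its queried values --- and, in the odd case, supply the $\lassignment{F}^e$ analogue of the gluing lemma, namely that $glue_k(\lassignment{F}^e_{\stdcycle})$ is agreeing iff $\stdcycle$ has odd length. The even case additionally forces the small additive loss in the query count coming from having to avoid the already-glued edge $\set{\stdcycle_m,\stdcycle_{m+1}}$, which I expect to absorb into the implicit constants of the statement.
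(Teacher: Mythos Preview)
Your proposal is correct and follows essentially the same route as the paper: pick the non-agreeing coloring candidate for $\stdcycle$, find an un-skipped cycle edge via the pigeonhole bound $p\cdot i<\abs{\stdcycle}$, and then produce an agreeing assignment that coincides with the bad one on all queried edges, feeding this into Lemma~\ref{tests-fail}.

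Two small points where the paper is a bit tighter than your sketch. First, in the odd case the paper does not prove a separate gluing lemma for $\lassignment{F}^e_{\stdcycle}$; it simply observes that $\lassignment{F}^e_{\stdcycle}$ and $\lassignment{F}^o_{\stdcycle,k}$ already agree on every edge that neither equals nor skips $\set{\stdcycle_k,\stdcycle_{k+1}}$, and $\lassignment{F}^o_{\stdcycle,k}$ is agreeing when $\abs{\stdcycle}$ is odd --- so your extra lemma is not needed. Second, in the even case the paper removes the additive loss you anticipated: if the only unblocked index happens to be the pre-glued one $m$, it compares $\lassignment{F}^o_{\stdcycle,m}$ with $\lassignment{F}^e_{\stdcycle}$ (which is agreeing for even $\abs{\stdcycle}$) rather than with a further gluing, so no special index needs to be avoided and the bound is exactly $\abs{\stdcycle}/i$.
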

\begin{proof}
    We will show that it is impossible to distinguish between $\lassignment{f}^e_\stdcycle$ and $\lassignment{f}^o_\stdcycle$ using less than $\frac{\abs{\stdcycle}}{i}$ queries.
    Assume, by way of contradiction, that there exists a test $T$ that distinguishes between $\lassignment{f}^e_\stdcycle$ and $\lassignment{f}^o_\stdcycle$ using less than $\frac{\abs{\stdcycle}}{i}$ queries.

    Note that querying any edge whose vertices are not part of the cycle does not help in distinguishing between $\lassignment{f}^e_\stdcycle$ and $\lassignment{f}^o_\stdcycle$.
    We can therefore assume that $T$ only queries edges that are a part of the cycle (or a skipping edge).
    Let $q_1,\cdots,q_{\frac{\abs{\stdcycle}}{i}-1}$ be the faces queried by $T$.
    Consider the structure of these queries and note that there exists an edge $e=\set{\stdcycle_k, \stdcycle_{k+1}}$ that none of the queries skip as the queries skip at most $\abs{\stdcycle}-i$ edges and the length of the cycle is strictly larger than that.
    The following holds:
    \begin{itemize}
        \item If $\stdcycle$ is of odd length, note that performing $q_1,\cdots,q_{\frac{\abs{\stdcycle}}{i}-1}$ on $\lassignment{f}^e_{\stdcycle}$ and $\lassignment{f}^o_{\stdcycle, k}$ yield the same results.
        \item If $\stdcycle$ is of even length, note that $\lassignment{f}^o_{\stdcycle,0}$ is not an agreeing $2$-assignment.
        Also note that if $e=\set{\stdcycle_0, \stdcycle_1}$ then performing $q_1,\cdots,q_{\frac{\abs{\stdcycle}}{i}-1}$ on $\lassignment{f}^o_{\stdcycle, 0}$ and $\lassignment{f}^e_{\stdcycle}$ yield the same results.
        In addition, if $e \ne \set{\stdcycle_0, \stdcycle_1}$ then performing $q_1,\cdots,q_{\frac{\abs{\stdcycle}}{i}-1}$ on $\lassignment{f}^o_{\stdcycle, 0}$ and $glue_k\parens{\lassignment{f}^o_{\stdcycle, 0}}$ yield the same results.
    \end{itemize}
    We can therefore finish the proof by applying Lemma~\ref{tests-fail}.
\end{proof}
\begin{remark}
    In Lemma~\ref{lem:cycle-lower-bound} we used a different querying model to the model we have been using in the rest of the paper:
    In the model of queries we used in Lemma~\ref{lem:cycle-lower-bound} a query accepts a face and returns \emph{all} of its assignments while in the rest of the paper we used a model where each assignment is queried on its own.
    It is easy to see that a lower bound on the number of queries performed in Lemma~\ref{lem:cycle-lower-bound} holds for the model used at the rest of the paper (As one can think of queries in the former model as queries in the latter model that get some additional ``free'' information).
\end{remark}
The rest of this section is dedicated to showing that complexes of interest have \emph{non-constant} non-skipping cycles and thus without the $2$-differing assumption they have no test for list agreement that performs a constant number of queries.
\subsection{Homology and List Agreement Testing Without the $2$-Differing Assumption}\label{subsec:homology-and-list-agreement-testing-without-the-2-differing-assumption}
One property of interest for high dimensional expanders is their homology.
We show that a lower bound on the homology of the complex yield a lower bound on the number of queries required in order to test list agreement.
It is important to stress that, while the complexes that this result applies to have no non-zero members of the homology, there are many complexes of interest whose minimal non-zero member of the homology is growing with the size of the complex.
We begin by defining the homology of the complex.
\begin{definition}[Boundary operator]
    Given a cochain $\stdcochain \in \cochainset{k}{\stdcomplex;\mathbbm{F}_2}$ define the boundary $\boundaryoperator_k: \cochainset{k}{\stdcomplex;\mathbbm{F}_2} \rightarrow \cochainset{k-1}{\stdcomplex;\mathbbm{F}_2}$ operator:
    \[
        \forall \stdface \in \stdcomplex(k-1): \boundaryoperator_k\stdcochain(\stdface) = \sum_{\substack{\genface \in \stdcomplex(k) \\ \stdface \subseteq \genface}}{\stdcochain(\genface)}
    \]
    When it is clear from context the dimension will be omitted.
\end{definition}
Much like we used the coboundary operator to define coboundaries and cocycles we can use the boundary operator to define boundaries and cycles:
\begin{definition}[Cycles and Boundaries]
    Define the boundaries to be:
    \[
        \boundaryset{k}{\stdcomplex; \mathbbm{F}_2} = \im\parens{\boundaryoperator_{k+1}}
    \]
    And the cycles to be:
    \[
        \cycleset{k}{\stdcomplex; \mathbbm{F}_2} = \ker\parens{\boundaryoperator_k}
    \]
    Note that because we take cochains over $\mathbbm{F}_2$ we think of a set of faces and the cochain that sets the value $1$ to members of the set and $0$ otherwise interchangeably.
\end{definition}
Note that the $1$-dimensional cycles are sets of edges that ``touch'' every vertex an even number of times.
We can now define the homology of the complex:
\begin{definition}[Homology]
    Define the $k$-dimensional homology of the complex to be:
    \[
        \homologyset{k}{\stdcomplex;\mathbbm{F}_2} = \cycleset{k}{\stdcomplex;\mathbbm{F}_2} / \boundaryset{k}{\stdcomplex;\mathbbm{F}_2}
    \]
\end{definition}
Generally, the $1$-dimensional homology can be thought of as the $1$-dimensional ``holes'' in the complex.
We use this notion in order to prove the following:
\begin{lemma}\label{lem:homology-and-non-skipping}
    There is a member of the homology that is a $1$-non-skipping cycle.
\end{lemma}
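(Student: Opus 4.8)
The plan is to realize the required cycle as a \emph{shortest} cycle representing a nonzero class of $\homologyset{1}{\stdcomplex;\mathbbm{F}_2}$, and then to argue that minimality of its length already forces it to be a simple, \emph{chordless} cycle, which is more than enough to be $1$-non-skipping. Concretely, I would first assume $\homologyset{1}{\stdcomplex;\mathbbm{F}_2}\neq 0$ (otherwise there is nothing to say) and fix a cycle $\stdcycle\in\cycleset{1}{\stdcomplex;\mathbbm{F}_2}\setminus\boundaryset{1}{\stdcomplex;\mathbbm{F}_2}$ with the least possible number of edges. By construction the length of $\stdcycle$ is the minimal length of a cycle representing a nonzero homology class, which is the quantity that will later feed into Lemma~\ref{lem:cycle-lower-bound}.

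First I would check that $\stdcycle$ is genuinely a simple cycle. Over $\mathbbm{F}_2$ a $1$-cycle is exactly a set of edges in which every vertex has even degree, so $\stdcycle$ decomposes as an edge-disjoint union $\stdcycle=C_1\cup\dots\cup C_m$ of simple cycles; edge-disjointness gives $\stdcycle=\sum_{t=1}^{m}C_t$ as $\mathbbm{F}_2$-chains, hence $[\stdcycle]=\sum_{t=1}^{m}[C_t]$ in homology. Since $[\stdcycle]\neq 0$, some $[C_{t_0}]\neq 0$, so $C_{t_0}$ is not a boundary; if $m\geq 2$ it has strictly fewer edges than $\stdcycle$, contradicting the choice of $\stdcycle$. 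Hence $m=1$, so $\stdcycle$ is a single simple cycle, say $\stdcycle=(\stdcycle_0,\dots,\stdcycle_{n-1},\stdcycle_0)$ with $n\geq 3$.

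The heart of the proof is to rule out chords. Suppose $e=\set{\stdcycle_i,\stdcycle_j}\in\stdcomplex(1)$ joins two vertices of $\stdcycle$ with $e\notin\stdcycle$. It splits the edges of $\stdcycle$ into two arcs $P_1,P_2$ between $\stdcycle_i$ and $\stdcycle_j$ with $P_1\cup P_2=\stdcycle$ and $P_1\cap P_2=\emptyset$, and each arc has at least two edges (an arc consisting of a single edge would be exactly $\set{\stdcycle_i,\stdcycle_j}=e$, contradicting $e\notin\stdcycle$). Then $C_1=P_1\cup\set{e}$ and $C_2=P_2\cup\set{e}$ are simple cycles (a simple path together with an edge joining its endpoints), with $\abs{C_1}=\abs{P_1}+1\leq n-1$ and $\abs{C_2}=\abs{P_2}+1\leq n-1$. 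Over $\mathbbm{F}_2$ the edge $e$ cancels in the sum, so $C_1+C_2=P_1+P_2=\stdcycle$ and therefore $[C_1]+[C_2]=[\stdcycle]\neq 0$; so at least one of $C_1,C_2$ represents a nonzero class and is thus a non-boundary cycle with strictly fewer than $n$ edges, contradicting minimality. Hence $\stdcycle$ has no chord, and since a cycle edge skips no edge (immediate from the definition of skipping, whose right-hand side is never $1$), every edge of $\stdcomplex$ joining two vertices of $\stdcycle$ skips $0\leq 1$ edges. Thus $\stdcycle$ is a $1$-non-skipping cycle representing a nonzero element of $\homologyset{1}{\stdcomplex;\mathbbm{F}_2}$.

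I expect the only two delicate points to be: (i) going from a possibly non-simple length-minimal cycle to an honestly simple one, handled above by the $\mathbbm{F}_2$ cycle-decomposition together with additivity of homology classes; and (ii) guaranteeing that \emph{both} ``half-cycles'' cut out by a hypothetical chord are strictly shorter than $\stdcycle$, which is precisely where a chord differs from an ordinary cycle edge (it leaves at least two edges on each side). This is also the reason to minimise over all nonzero classes at once rather than within a fixed class: the shortcut can change the homology class, so a within-class minimum would not suffice to derive the contradiction.
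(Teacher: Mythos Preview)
Your proposal is correct and follows essentially the same approach as the paper: take a non-boundary cycle of minimal size, show it is simple via a cycle decomposition and additivity of homology classes, and then rule out chords by splitting along a hypothetical chord into two strictly shorter cycles whose $\mathbbm{F}_2$-sum is the original. Your write-up is in fact slightly more careful than the paper's in two places---you explain explicitly why each of the two arcs has at least two edges (so both shortcuts are strictly shorter), and you check that the cycle edges themselves skip no edge---but the underlying argument is the same.
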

\begin{proof}
    We will show that any non-empty cycle of minimal size (i.e.\ that sets the value $1$ to the minimal number of edges) in the homology of $\stdcomplex$ is a $1$-non-skipping cycle.
    Let $\stdcycle \in \cycleset{1}{\stdcomplex;\mathbbm{F}_2}$ be a cycle.

    We will first show that the cycle ``touches'' each vertex exactly twice.
    Consider the subgraph induced by the $\stdcycle$ and note that it has an Euler cycle.
    If there is a vertex that the cycle touches more than two times, consider the euler cycle that begins in that vertex.
    Note that it can be divided into cycles - one for every two consecutive visits to that vertex.
    Denote these cycles $\stdcycle'_1,\cdots,\stdcycle'_k$.
    Note that $\stdcycle = \sum_{i=1}^k{\stdcycle'_i}$, therefore there exists $i$ such that $\stdcycle'_i \notin \boundaryset{1}{\stdcomplex; \mathbbm{F}_2}$ which contradicts the minimality of $\stdcycle$.

    We move on to show that it is $1$-non-skipping
    Assume that there are $i, j$ such that $i + 1 < j$ and $\set{\stdcycle_i, \stdcycle_j} \in \stdcomplex(1)$.
    consider the following cycles:
    \[
        \stdcycle_1 = \parens{\stdcycle_0, \cdots, \stdcycle_i, \stdcycle_j, \cdots \stdcycle_n} \qquad \qquad \stdcycle_2 = \parens{\stdcycle_i, \cdots, \stdcycle_j, \stdcycle_i}
    \]
    Note that $\stdcycle = \stdcycle_1 + \stdcycle_2$ and therefore either $\stdcycle_1 \notin \boundaryset{1}{\stdcomplex;\mathbbm{F}_2}$ or $\stdcycle_2 \notin \boundaryset{1}{\stdcomplex;\mathbbm{F}_2}$.
    Both, however, are cycles and therefore either $\stdcycle_1 \in \homologyset{1}{\stdcomplex;\mathbbm{F}_2}$ or $\stdcycle_2 \in \homologyset{1}{\stdcomplex;\mathbbm{F}_2}$ which contradicts the minimality of $\stdcycle$.
\end{proof}
\begin{corollary}
    If there is a lower bound on the size of members of the homology then it is also a lower bound on the number of queries that must be performed in order to test list agreement without the $2$-differing assumption.
\end{corollary}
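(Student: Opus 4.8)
The plan is to combine Lemma~\ref{lem:homology-and-non-skipping} and Lemma~\ref{lem:cycle-lower-bound} directly, so only a little bookkeeping is needed. I would read the hypothesis as the statement that there is a value $L$ such that every cycle in $\cycleset{1}{\stdcomplex;\mathbbm{F}_2}$ whose class in $\homologyset{1}{\stdcomplex;\mathbbm{F}_2}$ is nonzero has size at least $L$ (if $\homologyset{1}{\stdcomplex;\mathbbm{F}_2}$ is trivial there is nothing to prove). First I would pick a cycle $\stdcycle$ that witnesses this quantity, i.e.\ a minimal-size representative of a nonzero homology class. Inspecting the \emph{proof} of Lemma~\ref{lem:homology-and-non-skipping} — not merely its statement — such a minimal $\stdcycle$ touches every vertex exactly twice and is therefore a simple cycle, and it is moreover $1$-non-skipping.

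Next I would apply Lemma~\ref{lem:cycle-lower-bound} to this $\stdcycle$ with $i=1$: since $\stdcycle$ is a $1$-non-skipping simple cycle, any procedure that tests list agreement (in the setting without the $2$-differing assumption) must perform at least $\abs{\stdcycle}/1 = \abs{\stdcycle}$ queries. Because $\stdcycle$ represents a nonzero homology class we have $\abs{\stdcycle} \ge L$, hence at least $L$ queries are required — which is exactly the assertion. The transfer between the ``query a face, receive all its assignments'' model used in Lemma~\ref{lem:cycle-lower-bound} and the per-assignment model used elsewhere is already recorded in the remark following that lemma, so nothing extra is needed there.

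I do not expect a genuine obstacle: the content lives entirely in the two cited lemmas (and, through Lemma~\ref{lem:cycle-lower-bound}, in Lemma~\ref{tests-fail}). The two points that deserve an explicit word are (i) quoting the internal construction in the proof of Lemma~\ref{lem:homology-and-non-skipping}, so that the witnessing homology element is an honest \emph{simple} cycle and the coloring candidates $\lassignment{F}^e_\stdcycle$ and $\lassignment{F}^o_{\stdcycle,k}$, together with the hypotheses of Lemma~\ref{lem:cycle-lower-bound}, really apply to it; and (ii) being explicit that ``size of members of the homology'' means the minimal weight over representatives of nonzero classes, so that the chosen $\stdcycle$ is simultaneously $1$-non-skipping and of size at least $L$. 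Everything else is immediate.
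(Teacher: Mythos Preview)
Your proposal is correct and takes essentially the same approach as the paper: the paper's proof is simply the single sentence ``Combining Lemma~\ref{lem:homology-and-non-skipping} and Lemma~\ref{lem:cycle-lower-bound} proves this corollary,'' and your write-up is a careful unpacking of exactly that combination. The extra care you take --- quoting the internal construction so that the minimal homology representative is genuinely simple and $1$-non-skipping, and spelling out the meaning of the lower bound $L$ --- is appropriate bookkeeping that the paper leaves implicit.
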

\begin{proof}
    Combining Lemma~\ref{lem:homology-and-non-skipping} and Lemma~\ref{lem:cycle-lower-bound} proves this corollary.
\end{proof}
We note that many complexes of interest have non-vanishing homology.
Moreover, there are cases where the size of members of the homology are bounded from below by a non-constant bound (see~\cite{kaufman2021new}, for example).
This suggests that in many complexes testing list agreement requires a non-constant number of queries.

Before we close this subsection it is important to note that while the homology include $1$-non-skipping cycles it not the case that every $1$-skipping-cycle is a member of the $1$-dimensional homology.
Consider the following cycle:
\begin{example}[A $1$-non-skipping cycle that is not a member of the homology]
    Consider the following complex:
    \begin{center}
        \begin{tikzpicture}[%
        every node/.style={draw,fill=gray!40,circle,minimum size=18pt,font=\footnotesize}, node distance=1cm]
            \node (c) {$\stdvertex$};
            \node (g0) [above=of c] {$\stdcycle_0$};
            \node (g1) [above right=of c] {$\stdcycle_1$};
            \node (g2) [right=of c] {$\stdcycle_2$};
            \node (g3) [below right=of c] {$\stdcycle_3$};
            \node (g4) [below=of c] {$\stdcycle_4$};
            \node (g5) [below left=of c] {$\stdcycle_5$};
            \node (g6) [left=of c] {$\stdcycle_6$};
            \node (g7) [above left=of c] {$\stdcycle_7$};
            \begin{pgfonlayer}{bg}
                \draw[fill=teal!50] (g0.center) -- (g1.center) -- (g2.center) -- (g3.center) -- (g4.center) -- (g5.center) -- (g6.center) -- (g7.center) -- cycle;
            \end{pgfonlayer}
            \draw[ultra thick] (g0) -- (g1) -- (g2) -- (g3) -- (g4) -- (g5) -- (g6) -- (g7) -- (g0);
            \draw (g0) -- (c) -- (g1) -- (c) -- (g2) -- (c) -- (g3) -- (c) -- (g4) -- (c) -- (g5) -- (c) -- (g6) -- (c) -- (g7);
        \end{tikzpicture}
    \end{center}
    Note that the bolded cycle in this complex \emph{is} a boundary and yet it is $1$-non-skipping.
\end{example}
\subsection{Spherical Buildings And List Agreement Testing Without the $2$-differing Assumption}\label{subsec:spherical-buildings-and-list-agreement-testing-without-the-2-differing-assumption}
One useful family of coboundary expanders are the spherical buildings.
In this section we will show a lower bound on the number of queries a list agreement must perform when the underlying complexes are the spherical buildings.
We will show said lower bound by finding a $1$-non-skipping cycle whose size increases with the size of the building.
Therefore we conclude that there is no constant query test for list agreement on the spherical buildings.
And, since the spherical buildings have been shown to be both coboundary expanders and local spectral expanders, show a that the $2$-differing assumption is \emph{inherent} for testing list agreement in the domain we consider.
We begin by giving a more detailed description of spherical buildings:
\begin{definition}[Spherical Building]
    Let $p$ be a prime number and let $d$ be a dimension.
    Define the $d$-dimensional spherical building $\sphericalbuilding{p}{d}$ to be a simplicial complex such that:
    \begin{itemize}
        \item Its vertices are the non-trivial subspaces of $\mathbbm{F}_p^{d+2}$ (i.e.\ subsets that are not $\set{0}$ or $\mathbbm{F}_p^{d+2}$).
        \item Its $d$-dimensional spaces are $\stdface=\set{\stdface_1,\dots,\stdface_{d+1}}$ such that:
        \[
            0 < \stdface_1 < \cdots < \stdface_{d+1} < \mathbbm{F}_p^{d+2}
        \]
    \end{itemize}
\end{definition}
We will show the following:
\begin{lemma}\label{lem:1-non-skipping-cycle-in-spherical-building}
    For every prime $p$ there is a $1$-non-skipping cycle of length $2(p-1)$ in $\sphericalbuilding{p}{d}$.
\end{lemma}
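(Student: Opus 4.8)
The plan is to first convert the (slightly awkward) notion of ``$1$-non-skipping'' into a clean combinatorial condition, and then produce a cycle meeting it by a general‑position argument inside the building.

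\textbf{Step 1: $1$-non-skipping $=$ induced.} I would first note that a simple cycle $\stdcycle=(\stdcycle_0,\dots,\stdcycle_n)$ in the $1$-skeleton is $1$-non-skipping exactly when it is an \emph{induced} cycle, i.e.\ the only edges of $\stdcomplex$ joining two vertices of $\stdcycle$ are the edges of $\stdcycle$ itself. Indeed, an edge $\set{\stdcycle_k,\stdcycle_{k+1}}$ of the cycle skips only one edge of the cycle (itself), whereas a chord $\set{\stdcycle_i,\stdcycle_j}$ at cyclic distance $m\ge 2$ skips all $m\ge 2$ edges on the shorter arc between $\stdcycle_i$ and $\stdcycle_j$; so requiring every complex-edge among cycle vertices to skip at most one cycle edge is precisely the absence of chords. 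Hence it suffices to exhibit an induced cycle of length $2(p-1)$ in $\sphericalbuilding{p}{d}$, whose vertices are the nontrivial proper subspaces of $V=\mathbbm{F}_p^{d+2}$, two being adjacent iff one contains the other.

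\textbf{Step 2: the construction via a normal rational curve.} Assume $d\ge 1$ (so $\dim V\ge 3$) and $p\ge 5$; the primes $p\in\set{2,3}$ are degenerate ($2(p-1)\le 4$, and for $p=3$ a length-$4$ induced cycle forces $\dim V\ge 4$) and are irrelevant to the asymptotic lower bound this section is after. Pick $p-1$ pairwise distinct scalars $s_0,\dots,s_{p-2}\in\mathbbm{F}_p$ (possible since $|\mathbbm{F}_p|=p$), and set $v_a=(1,s_a,s_a^2,\dots,s_a^{d+1})\in V$, $\ell_a=\langle v_a\rangle$, and $\pi_a=\langle v_a,v_{a+1}\rangle$, where indices are read modulo $p-1$. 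Take
\[
\stdcycle=(\ell_0,\pi_0,\ell_1,\pi_1,\dots,\ell_{p-2},\pi_{p-2},\ell_0).
\]
This has $2(p-1)$ vertices, and consecutive vertices are adjacent since $\ell_a\subset\pi_a$ and $\ell_{a+1}\subset\pi_a$.

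\textbf{Step 3: verifying it is a simple induced cycle.} The $v_a$ are pairwise non-proportional (first coordinate $1$), so the $\ell_a$ are distinct; since $p-1\ge 3$ the unordered index pairs $\set{a,a+1}$ are pairwise distinct, so the $\pi_a$ are distinct $2$-dimensional subspaces; a line is never a plane, so all $2(p-1)$ vertices are distinct. For inducedness: two distinct lines are never comparable and two distinct $2$-dimensional subspaces are never comparable, so a chord could only join some $\ell_a$ to some $\pi_b$ with $b\notin\set{a-1,a}$, i.e.\ $v_a\in\langle v_b,v_{b+1}\rangle$. But $b\notin\set{a-1,a}\pmod{p-1}$ means $s_a,s_b,s_{b+1}$ are three distinct scalars, and the $3\times 3$ Vandermonde matrix formed from the first three coordinates of $v_a,v_b,v_{b+1}$ is invertible, so these vectors are linearly independent and no such containment can hold. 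Hence $\stdcycle$ has no chords, is $1$-non-skipping, and has length $2(p-1)$.

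The arithmetic here is routine; the genuinely delicate points — and where I would spend care — are (i) making sure the characterization ``$1$-non-skipping $\Leftrightarrow$ induced'' is read off correctly from the skipping definition, and (ii) the small-prime corner cases $p\in\set{2,3}$ (where the construction degenerates, but which do not affect the unbounded growth of the cycle length that the lemma is used for).
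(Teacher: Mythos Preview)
Your argument is correct (for $p\ge 5$, $d\ge 1$) and takes a genuinely different route from the paper. The paper works by hand: it fixes two one-parameter families of lines $V_i=\langle(1,i,0,\dots)\rangle$, $U_i=\langle(1,0,i,\dots)\rangle$ and the planes $W_{i,j}=\langle u_i,v_j\rangle$ joining them, then checks case-by-case (three separate lemmas, each a small linear system over $\mathbbm{F}_p$) that no $V_j$ or $U_j$ other than the intended ones sits inside a given $W$. Your construction replaces this with a single uniform argument: put the lines on a normal rational curve and use the $3\times 3$ Vandermonde minor, so that ``no unintended line lies in a plane'' follows in one line from general position. This is cleaner and more robust---in particular it makes transparent why $\dim V\ge 3$ is the only structural requirement---whereas the paper's approach has the advantage of being completely explicit in the first three coordinates and of avoiding the small-prime degeneracies you flag. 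Incidentally, the paper's explicit cycle actually alternates through $V_i,W_{i,i},U_i,W_{i,i+1}$ and hence has $4(p-1)$ vertices rather than the stated $2(p-1)$; your moment-curve cycle matches the lemma's stated length exactly.

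The one point to tighten is the disclaimer about $p\in\set{2,3}$: as written the lemma says ``for every prime $p$'', so if you want a complete proof you should either handle these by hand (for $p=3$ and $d\ge 2$ a length-$4$ induced cycle of type line--plane--line--plane exists, e.g.\ using two lines spanning a $3$-space and two planes through them meeting only in those lines; for $p=2$ the claimed ``cycle'' has length $2$ and the statement is vacuous) or note explicitly that only the growth in $p$ matters for Corollary~\ref{cor:lower-bound-on-spherical-buildings}.
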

We will show that cycle explicitly.
\begin{definition}
    For every $i$ define the following subspaces of $\mathbbm{F}^d_p$ (which correspond to vertices in):
    First consider the following vectors:
    \[
        v_i = \parens{1,i,0,0,0,\cdots,0} \qquad \qquad u_i = \parens{1,0,i,0,0,\cdots,0}
    \]
    And define the following spaces:
    \[
        V_i = \vectorspan{\set{v_i}} \qquad \qquad U_i = \vectorspan{\set{u_i}} \qquad \qquad W_{i,j} = \vectorspan{\set{u_i, v_j}}
    \]
\end{definition}
We will show that the following cycle is $1$-non-skipping:
\[
    \stdcycle = \parens{V_1,W_{1,1},U_{1},W_{1,2},V_2,W_{2,2},U_2,\cdots,U_{p-1}, W_{p-1,1}, V_1}
\]
Note that, by definition, there are no edges between and $V_i$ and $U_j$.
Therefore all we have to show are the following statements:
\begin{itemize}
    \item Out of the vertices that combine the cycle, $W_{i,i}$ is only connected to $V_i$ and $U_i$.
    \item Out of the vertices that combine the cycle, $W_{i,i+1}$ is only connected to $V_{i+1}$ and $U_i$.
    \item Out of the vertices that combine the cycle, $W_{p-1,1}$ is only connected to $V_{1}$ and $U_{p-1}$.
\end{itemize}
\begin{lemma}\label{lem:w-i-i-is-only-directly-connected-to-v-i-and-u-i}
    For every $i \ne j$ it holds that $V_j \nsubseteq W_{i,i}$ and $U_j \nsubseteq W_{i,i}$.
\end{lemma}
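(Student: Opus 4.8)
The plan is to reduce the claim to an elementary coordinate computation in $\mathbbm{F}_p^{d+2}$, the only arithmetic input being that $i \neq 0$ in $\mathbbm{F}_p$; this holds because the indices occurring in the cycle $\stdcycle$ range over $\set{1,\dots,p-1}$, so distinct indices are distinct and nonzero modulo $p$.

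First I would write out a general element of $W_{i,i} = \vectorspan{\set{u_i, v_i}}$ in coordinates. Since $u_i = \parens{1,0,i,0,\dots,0}$ and $v_i = \parens{1,i,0,0,\dots,0}$, we have
\[
    \alpha u_i + \beta v_i = \parens{\alpha+\beta,\ \beta i,\ \alpha i,\ 0,\dots,0} \qquad \text{for all } \alpha,\beta \in \mathbbm{F}_p.
\]
(In particular $u_i$ and $v_i$ are independent whenever $i \neq 0$, so $W_{i,i}$ is a genuine $2$-dimensional subspace and hence a legitimate vertex of $\sphericalbuilding{p}{d}$, though this fact is not needed for the argument below.)

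Next, to prove $V_j \nsubseteq W_{i,i}$ it suffices to show $v_j \notin W_{i,i}$. Suppose toward a contradiction that $v_j = \alpha u_i + \beta v_i$ for some scalars $\alpha, \beta$. Since $v_j = \parens{1,j,0,0,\dots,0}$, comparing the third coordinate gives $\alpha i = 0$, hence $\alpha = 0$ because $i \neq 0$; the first coordinate then gives $\beta = 1$, and the second coordinate gives $j = \beta i = i$, contradicting $i \neq j$. The case $U_j \nsubseteq W_{i,i}$ is entirely symmetric: if $u_j = \alpha u_i + \beta v_i$ with $u_j = \parens{1,0,j,0,\dots,0}$, then the second coordinate forces $\beta i = 0$, hence $\beta = 0$; the first coordinate gives $\alpha = 1$; and the third coordinate gives $j = \alpha i = i$, again contradicting $i \neq j$.

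I do not anticipate any genuine obstacle here; the only points requiring care are the bookkeeping of which coordinate isolates which coefficient (the third coordinate of a general vector in $W_{i,i}$ is $\alpha i$ and the second is $\beta i$, and in each of the two cases one of these is killed by the target vector), and recording explicitly that $i \neq 0$ in $\mathbbm{F}_p$ — this is exactly what legitimizes the two cancellations and is where the restriction of the cycle's indices to $\set{1,\dots,p-1}$ is used.
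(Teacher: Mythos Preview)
Your proof is correct and follows essentially the same coordinate-comparison argument as the paper: write $v_j$ (resp.\ $u_j$) as a linear combination of $u_i$ and $v_i$, read off three scalar equations from the first three coordinates, and derive $i=j$. The only cosmetic difference is that the paper adds the equations $ai=j$ and $bi=0$ together (using $a+b=1$) to obtain $j=i$ without explicitly invoking $i\neq 0$, whereas you cancel $i$ directly; both are valid and amount to the same computation.
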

\begin{proof}
    We prove the Lemma using the contra-positive argument:
    Assume that $V_j \subseteq W_{i,i}$.
    Therefore there exists $a,b$ such that $v_j = a v_i + b u_i$ and thus:
    \[
        \begin{cases}
            a+b=1\\
            ai = j\\
            bi = 0
        \end{cases}
        \Rightarrow ai+bi=i \Rightarrow j+0=i \Rightarrow i=j
    \]
    Which contradict our assumption.
    Similar claims show that $U_j \nsubseteq W_{i,i}$.
\end{proof}
\begin{lemma}\label{lem:w-i-i+1-is-only-directly-connected-to-v-i+1-and-u-i}
    For every $i \ne j$ it holds that $V_{j+1} \nsubseteq W_{i,i+1}$ and $U_j \nsubseteq W_{i,i}$.
\end{lemma}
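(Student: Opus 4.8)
The plan is to run the same contrapositive argument as in Lemma~\ref{lem:w-i-i-is-only-directly-connected-to-v-i-and-u-i}: assume one of the coordinate lines lies inside the relevant plane, expand the spanning vector as a linear combination of the two generators of that plane, read off the resulting scalar equations coordinate by coordinate, and conclude $i=j$. Throughout I would use only the fact that every index occurring in the cycle $\stdcycle$ lies in $\set{1,\dots,p-1}$, so that $i \ne 0$ in $\mathbbm{F}_p$ — this is the single nontrivial ingredient the computation needs.

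First I would handle $V_{j+1} \nsubseteq W_{i,i+1}$. Suppose toward a contradiction that $V_{j+1} \subseteq W_{i,i+1} = \vectorspan{\set{u_i, v_{i+1}}}$, so $v_{j+1} = a u_i + b v_{i+1}$ for some $a,b \in \mathbbm{F}_p$. Writing $v_{j+1} = \parens{1, j+1, 0, 0, \dots, 0}$ and $a u_i + b v_{i+1} = \parens{a+b,\, b(i+1),\, ai,\, 0,\dots,0}$, comparison of the first three coordinates yields $a+b = 1$, $b(i+1) = j+1$, and $ai = 0$. Since $i \ne 0$, the third equation forces $a = 0$, hence $b = 1$, and then $i+1 = j+1$, i.e.\ $i = j$, contradicting $i \ne j$.

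For the clause $U_j \nsubseteq W_{i,i}$ I would note that this is exactly the second assertion of Lemma~\ref{lem:w-i-i-is-only-directly-connected-to-v-i-and-u-i} and may simply be quoted; for completeness one argues identically: if $u_j \in W_{i,i} = \vectorspan{\set{u_i, v_i}}$ then $u_j = a u_i + b v_i$, and comparing coordinates of $u_j = \parens{1, 0, j, 0,\dots,0}$ with $\parens{a+b,\, bi,\, ai,\, 0,\dots,0}$ gives $a+b = 1$, $bi = 0$, and $ai = j$; since $i \ne 0$ the middle equation forces $b = 0$, hence $a = 1$ and $j = i$, again a contradiction. I do not expect a real obstacle here: both clauses collapse to solving a $3 \times 2$ linear system over $\mathbbm{F}_p$, and the only point requiring a word of care is the hypothesis $i \ne 0$, which is guaranteed since every index appearing along $\stdcycle$ lies in $\set{1,\dots,p-1}$.
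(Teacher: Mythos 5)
Your handling of the first clause is correct and is essentially the paper's own argument: assume $v_{j+1}=a\,u_i+b\,v_{i+1}$, read off $a+b=1$, $b(i+1)=j+1$, $a i=0$, and use $i\ne 0$ to force $a=0$, $b=1$, hence $i=j$, a contradiction (the paper runs the same $3\times 2$ system with the roles of $a$ and $b$ swapped).

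The gap is in the second clause. As printed, ``$U_j \nsubseteq W_{i,i}$'' is a typo: that claim is verbatim the second assertion of Lemma~\ref{lem:w-i-i-is-only-directly-connected-to-v-i-and-u-i}, so quoting (or re-deriving) it adds nothing. What this lemma is actually for --- and what the paper's proof establishes --- is that the cycle vertex $W_{i,i+1}$ is connected only to $V_{i+1}$ and $U_i$, so the intended second clause is $U_j \nsubseteq W_{i,i+1}$ for $j\ne i$; your proposal leaves that, the substantive half of the lemma, unproven. It is proved by the same style of computation, $u_j=a\,u_i+b\,v_{i+1}$ giving $a+b=1$, $b(i+1)=0$, $a i=j$, but note that the nontrivial ingredient here is $i+1\ne 0$ in $\mathbbm{F}_p$, i.e.\ $i\ne p-1$, not $i\ne 0$: from $i\ne p-1$ one gets $b=0$, then $a=1$ and $j=i$, contradicting $i\ne j$. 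So your stated ``single nontrivial ingredient $i\ne 0$'' is insufficient for the corrected clause; indeed, allowing $i=p-1$ (reading $v_{i+1}$ as $v_0$) one has $u_j=-j\,u_{p-1}+(1+j)\,v_0 \in W_{p-1,0}$ for every $j$, which is exactly why the wrap-around plane $W_{p-1,1}$ is treated separately in Lemma~\ref{lem:w-p-1-1-is-only-connected-to-u-p-1-and-v-1} and why the paper's proof opens by recording $i\ne 0,\,p-1$ (legitimate here because in the cycle $W_{i,i+1}$ occurs only for $i\in\set{1,\dots,p-2}$).
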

\begin{proof}
    We first note that $i \ne 0, p-1$.

    Assume that $V_{j+1} \subseteq W_{i,i+1}$.
    Therefore there exists $a,b$ such that $v_j = a v_{i+1} + b u_i$ and therefore:
    \begin{multline*}
        \begin{cases}
            a+b=1\\
            a(i+1) = j+1\\
            bi = 0
        \end{cases}
        \Rightarrow a(i+1)+b(i+1)=i+1 \Rightarrow j+1+b=i+1 \Rightarrow \\ \Rightarrow b=i-j \Rightarrow (i-j)i = 0
    \end{multline*}
    Note that $i \ne j$ and $i \ne 0$ and therefore $V_{j+1} \nsubseteq W_{i,i+1}$.

    Similarly assume that $U_j \subseteq W_{i,i+1}$.
    Therefore there exists $a,b$ such that $u_j = a v_{i+1} + b u_i$ and therefore:
    \begin{multline*}
        \begin{cases}
            a+b=1\\
            a(i+1) = 0\\
            bi = j
        \end{cases}
        \Rightarrow ai+bi=i \Rightarrow ai+j=i \Rightarrow ai=i-j \Rightarrow \\ \Rightarrow i-j+a=0 \Rightarrow a = j-i \Rightarrow (j-i)(i+1) = 0
    \end{multline*}
    And since $i \ne j$ and $i \ne p-1$ it holds that $U_j \nsubseteq W_{i,i+1}$.
\end{proof}
\begin{lemma}\label{lem:w-p-1-1-is-only-connected-to-u-p-1-and-v-1}
    Let $i \ne 1$ and $j \ne p-1$ therefore it holds that $V_i \nsubseteq W_{p-1,1}$ and $U_j \nsubseteq W_{p-1,1}$.
\end{lemma}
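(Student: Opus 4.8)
The plan is to prove the lemma by the same contrapositive linear-algebra argument already used for Lemmas~\ref{lem:w-i-i-is-only-directly-connected-to-v-i-and-u-i} and~\ref{lem:w-i-i+1-is-only-directly-connected-to-v-i+1-and-u-i}, the only new ingredient being the harmless observation that $p-1 = -1$ in $\mathbbm{F}_p$, so it is invertible. Recall that $W_{p-1,1} = \vectorspan\set{u_{p-1}, v_1}$, where $u_{p-1} = \parens{1,0,p-1,0,\cdots,0}$ and $v_1 = \parens{1,1,0,\cdots,0}$. Hence every element of $W_{p-1,1}$ has the form $a v_1 + b u_{p-1}$ for unique $a,b \in \mathbbm{F}_p$, and its first three coordinates are $\parens{a+b,\ a,\ b(p-1)}$. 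This reduces each of the two containment questions to a small system of linear equations over $\mathbbm{F}_p$.

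First I would assume towards a contradiction that $V_i \subseteq W_{p-1,1}$, i.e. $v_i = a v_1 + b u_{p-1}$ for some $a,b$. Matching the first three coordinates of $v_i = \parens{1,i,0,\cdots,0}$ gives the system $a+b = 1$, $a = i$, $b(p-1) = 0$. Since $p-1 \ne 0$ in $\mathbbm{F}_p$, the third equation forces $b = 0$, whence $a = 1$ and therefore $i = 1$, contradicting $i \ne 1$. Next I would assume towards a contradiction that $U_j \subseteq W_{p-1,1}$, i.e. $u_j = a v_1 + b u_{p-1}$. Matching the first three coordinates of $u_j = \parens{1,0,j,0,\cdots,0}$ gives $a+b = 1$, $a = 0$, $b(p-1) = j$; the first two equations yield $b = 1$, so $j = p-1$, contradicting $j \ne p-1$. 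Together with Lemmas~\ref{lem:w-i-i-is-only-directly-connected-to-v-i-and-u-i} and~\ref{lem:w-i-i+1-is-only-directly-connected-to-v-i+1-and-u-i}, this disposes of the third bullet in the list of statements needed to conclude that $\stdcycle$ is $1$-non-skipping, and hence of Lemma~\ref{lem:1-non-skipping-cycle-in-spherical-building}.

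I do not expect any real obstacle here: the argument is a direct mirror of the two preceding lemmas, and the potential subtlety — that the ``third coordinate'' equation might, as in Lemma~\ref{lem:w-i-i+1-is-only-directly-connected-to-v-i+1-and-u-i}, produce a product of $b$ with an index that could independently vanish — does not arise, because in $W_{p-1,1}$ that equation reads $b(p-1)=\ast$ with $p-1$ invertible, so $b$ is pinned down outright. The one thing to keep track of while writing is the convention that $W_{i,j}=\vectorspan\set{u_i,v_j}$ lists the $u$-generator first, so that in $W_{p-1,1}$ the $v_1$ generator is the one with a nonzero second coordinate and $u_{p-1}$ is the one with a nonzero third coordinate.
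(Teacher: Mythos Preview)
Your proposal is correct and follows essentially the same approach as the paper's own proof: both assume the containment, write the generator as a linear combination $a v_1 + b u_{p-1}$, match the first three coordinates to get the same $3\times 2$ system, and derive the contradiction. The only cosmetic difference is that in the $V_i$ case you solve for $b$ first (getting $b=0$, hence $a=1$, hence $i=1$), whereas the paper solves for $b=1-i$ and then uses $b(p-1)=0$ to force $i=1$; these are equivalent reorderings.
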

\begin{proof}
    Assume that $V_i \subseteq W_{p-1,i}$.
    Therefore there exists $a,b$ such that $v_i = a v_1 + b u_{p-1}$ and therefore:
    \[
        \begin{cases}
            a+b=1\\
            a = i\\
            b(p-1) = 0
        \end{cases}
        \Rightarrow b=1-i \Rightarrow (1-i)(p-1)=1 \Rightarrow i=1
    \]
    Which contradicts our choice of $i$ and thus $V_i \nsubseteq W_{p-1,1}$

    Likewise assume that $U_j \subseteq W_{p-1,i}$.
    Therefore there exists $a,b$ such that $u_j = a v_1 + b u_{p-1}$ and therefore:
    \[
        \begin{cases}
            a+b=1\\
            a = 0\\
            b(p-1) = j
        \end{cases}
        \Rightarrow b=1 \Rightarrow j=p-1
    \]
    Which contradicts our choice of $j$ and thus $U_j \nsubseteq W_{p-1,1}$
\end{proof}
We are now ready to prove Lemma~\ref{lem:1-non-skipping-cycle-in-spherical-building}.
\begin{proof}[Proof of Lemma~\ref{lem:1-non-skipping-cycle-in-spherical-building}]
    Combining Lemma~\ref{lem:w-i-i-is-only-directly-connected-to-v-i-and-u-i}, Lemma~\ref{lem:w-i-i+1-is-only-directly-connected-to-v-i+1-and-u-i} and Lemma~\ref{lem:w-p-1-1-is-only-connected-to-u-p-1-and-v-1} with the definition of the spherical buildings proves that $\stdcycle$ is a $1$-non-skipping cycle.
\end{proof}
\begin{corollary}\label{cor:lower-bound-on-spherical-buildings}
    There is a lower bound of $2(p-1)$ queries that must be performed in order to test list agreement on $\sphericalbuilding{p}{d}$.
\end{corollary}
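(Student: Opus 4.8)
The plan is to combine the two preceding lemmas and nothing more. Lemma~\ref{lem:cycle-lower-bound} shows that an $i$-non-skipping simple cycle $\stdcycle$ in a simplicial complex forces any tester for list agreement on that complex to make at least $\frac{\abs{\stdcycle}}{i}$ queries, and Lemma~\ref{lem:1-non-skipping-cycle-in-spherical-building} exhibits a concrete $1$-non-skipping simple cycle of length $2(p-1)$ in $\sphericalbuilding{p}{d}$, namely $\stdcycle = (V_1,W_{1,1},U_1,W_{1,2},V_2,\dots,U_{p-1},W_{p-1,1},V_1)$. Taking $i=1$ and $\abs{\stdcycle}=2(p-1)$ in Lemma~\ref{lem:cycle-lower-bound} gives the asserted bound of $2(p-1)$ queries.

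In more detail, the steps I would carry out are: (1) recall from the proof of Lemma~\ref{lem:1-non-skipping-cycle-in-spherical-building} that consecutive vertices of $\stdcycle$ are genuinely adjacent in the building, since each consecutive pair is a nested pair of non-trivial subspaces of $\mathbbm{F}_p^{d+2}$, so $\stdcycle$ really is a simple closed walk; (2) invoke the chord-exclusion lemmas (Lemmas~\ref{lem:w-i-i-is-only-directly-connected-to-v-i-and-u-i}, \ref{lem:w-i-i+1-is-only-directly-connected-to-v-i+1-and-u-i} and~\ref{lem:w-p-1-1-is-only-connected-to-u-p-1-and-v-1}) together with the trivial observation that no $V_i$ is adjacent to any $U_j$, to conclude that the only edges among the vertices of $\stdcycle$ are its cycle edges, i.e.\ that $\stdcycle$ is $1$-non-skipping; (3) feed $\stdcycle$ into Lemma~\ref{lem:cycle-lower-bound}. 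I would also remark, as in the comment following Lemma~\ref{lem:cycle-lower-bound}, that although that lemma is phrased for the querying model in which a single query to a face returns \emph{all} of its list assignments, the same lower bound transfers immediately to the per-assignment model used in the rest of the paper, since a query in the former model only hands the tester strictly more information.

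Since essentially all of the work is already carried out in Lemmas~\ref{lem:cycle-lower-bound} and~\ref{lem:1-non-skipping-cycle-in-spherical-building}, the corollary itself presents no real obstacle; the only points worth double-checking are bookkeeping: that $\stdcycle$ has exactly $2(p-1)$ edges (there are $p-1$ blocks of the form $V_i,W_{i,i},U_i,W_{i,i+1}$, each contributing two edges before the walk closes up at $V_1$) and that it visits no vertex twice, so that Lemma~\ref{lem:cycle-lower-bound} applies verbatim. Finally I would spell out the intended consequence already flagged in the surrounding text: because $\sphericalbuilding{p}{d}$ is known to be both a coboundary expander over $S_l$ and a $1$-up agreement expander, it satisfies the hypotheses of Theorem~\ref{main-theorem}, and the bound $2(p-1)\to\infty$ therefore shows that the $2$-differing assumption is \emph{inherent} for list agreement testing on this family.
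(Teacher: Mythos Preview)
Your approach is exactly the paper's: the proof there is the one-liner ``Combining Lemma~\ref{lem:cycle-lower-bound} and Lemma~\ref{lem:1-non-skipping-cycle-in-spherical-building} proves this Corollary,'' and your write-up is precisely that, with extra commentary. One small slip in your bookkeeping aside: each block $V_i,W_{i,i},U_i,W_{i,i+1}$ contributes four edges, not two, so the cycle as written actually has $4(p-1)$ edges; this only makes the lower bound stronger and matches the paper's own (loose) count of $2(p-1)$.
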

\begin{proof}
    Combining Lemma~\ref{lem:cycle-lower-bound} and Lemma~\ref{lem:1-non-skipping-cycle-in-spherical-building} proves this Corollary.
\end{proof}
\section{Testing Direct Sums Using List Agreement Expansion}\label{sec:testing-direct-sums-using-list-agreement-expansion}
In this section we will show how to use a $2$-agreement-expander to provide a test for whether a function is a $k$-direct-sum for any constant $k$.
We will do that by reconstructing an $l$-assignment (with an appropriate choice of $l$) to each $\parens{k+1}$-dimensional face and reducing the problem to list agreement expansion.
We start by defining what a direct sums is:
\begin{definition}[Direct sum]
    Let $\stdcomplex$ be a $d$-dimensional simplicial complex, $i \le d$ and let $F:\stdcomplex(i) \rightarrow \set{0,1}$.
    We say that $F$ is a $(i+1)$-direct-sum if there exists a function $f:\stdcomplex(0) \rightarrow \set{0,1}$ such that:
    \[
        \forall \stdface \in \stdcomplex(i): F(\stdface) = \sum_{\stdvertex \in \stdface}{f(\stdvertex)}
    \]
    We also term $f$ as an origin function of $F$.
\end{definition}
In addition we use the following distance function between functions from the $i$-dimensional faces to $\set{0,1}$:
\begin{definition}[Distance function]
    Let $\stdcomplex$ be a $d$-dimensional simplicial complex and let $F,G: \stdcomplex(i) \rightarrow \set{0,1}$.
    Define:
    \[
        \dist\parens{F,G} = \sum_{\substack{\stdface \in \stdcomplex(i) \\ F(\stdface) \ne G(\stdface)}}{\weight{\stdface}}
    \]
\end{definition}
Let us present the reconstruction (while noting that in the even case the reconstruction yields two functions).
\begin{lemma}
    If $F$ is a $k$-direct-sum ($k$ is even) and $f$ is an origin function of $F$ then $\mathbbm{1}+f$ is also an origin function of $F$.
\end{lemma}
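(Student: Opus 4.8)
The plan is to observe that the entire statement is an identity over $\mathbbm{F}_2$, so it reduces to a one-line parity computation. Recall that a $k$-direct-sum is a function $F$ defined on the $(k-1)$-dimensional faces of $\stdcomplex$, i.e.\ on sets of exactly $k$ vertices, and that $\mathbbm{1}$ denotes the constant-$1$ cochain, with $\mathbbm{1}+f$ interpreted pointwise over $\mathbbm{F}_2$.

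First I would fix an arbitrary face $\stdface \in \stdcomplex(k-1)$ and expand, using that $f$ is an origin function of $F$ for the middle equality:
\[
    \sum_{\stdvertex \in \stdface}{(\mathbbm{1}+f)(\stdvertex)} = \sum_{\stdvertex \in \stdface}{1} + \sum_{\stdvertex \in \stdface}{f(\stdvertex)} = \abs{\stdface} + F(\stdface) = k + F(\stdface).
\]
Since $k$ is even, $k \equiv 0$ in $\mathbbm{F}_2$, so the right-hand side equals $F(\stdface)$. As $\stdface$ was arbitrary, $\mathbbm{1}+f$ satisfies the defining identity of an origin function of $F$, which is exactly the claim.

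There is essentially no obstacle here; the only points requiring a moment of care are (i) that $F$ being a $k$-direct-sum means every face in its domain has exactly $k$ vertices, so that $\sum_{\stdvertex \in \stdface}{1}$ really is $k$; and (ii) that the additions in $\mathbbm{1}+f$ and in the direct-sum identity are over $\mathbbm{F}_2$, so that $k$ even forces the correction term to vanish. It is worth noting that this is precisely where parity enters: for odd $k$ the same computation gives $\sum_{\stdvertex \in \stdface}{(\mathbbm{1}+f)(\stdvertex)} = 1 + F(\stdface) \ne F(\stdface)$, so $\mathbbm{1}+f$ fails to be an origin function. This dichotomy is what makes the reconstruction in the direct-sum test yield two candidate origin functions when $k$ is even and a single one when $k$ is odd.
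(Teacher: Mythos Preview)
Your proof is correct and is essentially the same one-line parity computation the paper gives: both add the term $\sum_{\stdvertex\in\stdface}1=k$ (written as $\binom{k}{1}\cdot 1$ in the paper) to $\sum_{\stdvertex\in\stdface}f(\stdvertex)$ and use that $k\equiv 0$ over $\mathbbm{F}_2$. Your additional remark on why the argument fails for odd $k$ is a welcome clarification but does not change the approach.
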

\begin{proof}
    Let $\stdface \in \stdcomplex(k-1)$.
    Consider:
    \begin{equation*}
        F(\stdface)
        = \sum_{\stdvertex \in \stdface}{f(\stdvertex)}
        = \binom{k}{1} 1 + \sum_{\stdvertex \in \stdface}{f(\stdvertex)}
        = \sum_{\stdvertex \in \stdface}{1+f(\stdvertex)}
        = \sum_{\stdvertex \in \stdface}{(\mathbbm{1}+f)(\stdvertex)}
    \end{equation*}
\end{proof}
\begin{lemma}[Reconstructing a $k$-direct-sum]
    \label{reconstructing-k-direct-sum}
    Let $F$ be a $k$-direct-sum and then for every face $\stdface \in \stdcomplex(k+1)$ one can reconstruct two possible origin functions if $k$ is even and a single origin function if $k$ is odd.
    Note that the $\abs{\stdface}=k+2$, this is important as one cannot reconstruct the origin function of a $k$-direct sum using a face whose size is $k+1$.
\end{lemma}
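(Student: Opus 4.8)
The plan is to fix one face $\stdface = \set{\stdvertex_0,\dots,\stdvertex_{k+1}} \in \stdcomplex(k+1)$, note that $\abs{\stdface} = k+2$ and that the $k$-dimensional subfaces of $\stdface$ on which $F$ is defined are exactly the sets $\stdface \setminus \set{\stdvertex_i,\stdvertex_j}$ with $i\neq j$, and then show that the values of $F$ on these subfaces determine $f|_\stdface$ up to the single global flip $f|_\stdface \mapsto \mathbbm{1}+f|_\stdface$, this flip being resolvable exactly when $k$ is odd. Throughout I would work over $\mathbbm{F}_2$.

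First, writing $S = \sum_{l=0}^{k+1} f(\stdvertex_l)$, the direct-sum hypothesis gives $F(\stdface\setminus\set{\stdvertex_i,\stdvertex_j}) = \sum_{l\neq i,j} f(\stdvertex_l) = S + f(\stdvertex_i) + f(\stdvertex_j)$ for every $i\neq j$. Adding two such values that share an index cancels both $S$ and the common $f$-term, so for any $m \notin \set{i,j}$ (which exists since $k+2\ge 3$) one has $F(\stdface\setminus\set{\stdvertex_i,\stdvertex_m}) + F(\stdface\setminus\set{\stdvertex_j,\stdvertex_m}) = f(\stdvertex_i)+f(\stdvertex_j)$; in particular this sum is independent of $m$. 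I would then fix $\stdvertex_0$ as a reference and define an assignment $g$ on $\stdface$ by $g(\stdvertex_0)=0$ and $g(\stdvertex_i) = F(\stdface\setminus\set{\stdvertex_i,\stdvertex_m})+F(\stdface\setminus\set{\stdvertex_0,\stdvertex_m})$ for any admissible $m$; by the identity just noted $g$ is well defined and $g = f|_\stdface + f(\stdvertex_0)\cdot\mathbbm{1}$, i.e. $g$ equals $f|_\stdface$ when $f(\stdvertex_0)=0$ and $\mathbbm{1}+f|_\stdface$ when $f(\stdvertex_0)=1$.

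Next I would run the consistency check that separates the two parities. Call $h:\stdface\to\set{0,1}$ a local origin function on $\stdface$ if $F(\stdface') = \sum_{\stdvertex\in\stdface'} h(\stdvertex)$ for every $k$-subface $\stdface'\subseteq\stdface$; clearly $f|_\stdface$ is one. If $h = \mathbbm{1}+f|_\stdface$ then, since $\abs{\stdface'}=k$, $\sum_{\stdvertex\in\stdface'} h(\stdvertex) = k + \sum_{\stdvertex\in\stdface'} f(\stdvertex) = k + F(\stdface')$, which equals $F(\stdface')$ iff $k$ is even. Hence: when $k$ is even both $g$ and $\mathbbm{1}+g$ are local origin functions, and by the lemma immediately preceding this one ($\mathbbm{1}+f$ is a global origin function whenever $f$ is) each is the restriction of a genuine origin function of $F$, so we output the pair $\set{g,\mathbbm{1}+g}$; when $k$ is odd exactly one of $g,\mathbbm{1}+g$ passes the check, it is necessarily $f|_\stdface$, and we output it. This gives the two claimed cases.

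Finally, for the parenthetical remark I would observe that a face $\genface$ of size $k+1$ exposes $F$ only on its $k+1$ many $k$-subfaces $\genface\setminus\set{\stdvertex_i}$, yielding the $\mathbbm{F}_2$-linear system $F(\genface\setminus\set{\stdvertex_i}) = \big(\sum_l f(\stdvertex_l)\big) + f(\stdvertex_i)$ whose coefficient matrix is the all-ones matrix plus the identity; for even $k$ this matrix is singular with kernel $\set{0,\mathbbm{1}}$, so $f|_\genface$ is determined only up to the flip $f|_\genface\mapsto\mathbbm{1}+f|_\genface$ and an origin function cannot be singled out from a face of that size — this is the sense in which the larger $(k+1)$-dimensional face is needed, and it is at $(k+1)$-dimensional faces that the reconstruction will be applied when feeding into list agreement. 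I do not expect a genuine obstacle here: the whole argument is elementary $\mathbbm{F}_2$ linear algebra, and the only points needing care are the parity bookkeeping in $\sum_{\stdvertex\in\stdface'}(1+f(\stdvertex)) = k + F(\stdface')$ and the verification that the reference reconstruction does not depend on the auxiliary index $m$.
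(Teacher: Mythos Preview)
Your argument is correct but takes a different route from the paper. The paper gives separate explicit summation formulas for each parity: for odd $k$ it recovers $f(\stdvertex)$ directly as $\sum_{a\in\binom{\stdface\setminus\{\stdvertex\}}{k-1}}F(a\cup\{\stdvertex\})$ on a $k$-dimensional face (verifying via a binomial-coefficient parity count that the sum collapses to $f'(\stdvertex)$), and for even $k$ it fixes a reference vertex $\stdvertex$, sets $f_0(\stdvertex)=0$, and uses the analogous $(k-2)$-subset sum to obtain $f'(\stdvertex)+f'(\stdvertex')$. You instead derive the uniform two-term identity $F(\stdface\setminus\{\stdvertex_i,\stdvertex_m\})+F(\stdface\setminus\{\stdvertex_j,\stdvertex_m\})=f(\stdvertex_i)+f(\stdvertex_j)$, build $g=f|_\stdface+f(\stdvertex_0)\cdot\mathbbm{1}$ once for both parities, and then distinguish odd from even by an after-the-fact local-origin check. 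Your approach is more uniform and touches only two $F$-values per reconstructed vertex rather than $\Theta(k)$; the paper's approach has the advantage that its formulas already operate on faces of size $k+1$, which is precisely the dimension Corollary~\ref{querying-origin-functions} uses downstream to produce local assignments on $\stdcomplex(k)$ for the list-agreement test.
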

\begin{proof}
    We will show that if $k$ is odd then the following algorithm returns a candidate for an origin function of $F$:\\
    \begin{algorithm}[H]
        \caption{reconstruct origin function for odd values of $k$}\label{alg:reconstruct-origin-function-for-odd-values-of-k}
        \SetAlgoLined
        \DontPrintSemicolon
        \For {every vertex $\stdvertex$}{
        \Pick{a $k$-dimensional face $\stdface$ that includes $\stdvertex$}{
        Set $f(\stdvertex) = \sum_{\face{a} \in \binom{\stdface \setminus \set{\stdvertex}}{k-1}}{F(a \cup \set{\stdvertex})}$
        }
        \Return{$f$}
        }
    \end{algorithm}
    We will now show that this is indeed an origin function of $F$, if $F$ is a $k$-direct-sum then it has an origin function (denoted by $f'$).
    Consider the following:
    \begin{align*}
        &\forall \genface \in \stdcomplex(k): f(\genvertex)
        = \sum_{\face{a} \in \binom{\genface \setminus \set{\genvertex}}{k-1}}{F(a \cup \set{\genvertex})}
        = \sum_{\face{a} \in \binom{\genface \setminus \set{\genvertex}}{k-1}}{\parens{f'(\genvertex) + \sum_{\stdvertex' \in \face{a}}{f'(\stdvertex')}}} =\\
        &\quad \binom{k}{k-1}f'(\genvertex) + \sum_{v' \in q}{\binom{k-1}{k-2}f'(v')} =  k \cdot f'(\genvertex) + \sum_{\stdvertex' \in q}{(k-1) f'(\stdvertex')} = f'(\genvertex)
    \end{align*}
    Therefore $f$ is an origin function of $F$.\\
    In the even case, we will show that the following algorithm reconstruct two options for an origin function of $F$:\\
    \begin{algorithm}[H]
        \caption{reconstruct origin functions for even values of $k$}\label{alg:reconstruct-origin-function-for-even-values-of-k}
        \SetAlgoLined
        \DontPrintSemicolon
        \Pick {$\stdvertex \in \stdcomplex(0)$}{
        Define $f_0, f_1$ to be local origin functions of $F$ and set $f_0(\stdvertex) = 0$ and $f_1(\stdvertex) = 1$.\\
        \For {every other vertex $\stdvertex' \ne \stdvertex$ set}{
        \Pick{a $k$-dimensional face $\stdface$ that includes both $\stdvertex$ and $\stdvertex'$}{
        Set $f_0(\stdvertex') = \sum_{\face{a} \in \binom{\stdface \setminus \set{\stdvertex, \stdvertex'}}{k-2}}{F(a \cup \set{\stdvertex, \stdvertex'})}$\label{alg:reconstruct-origin-function-for-even-values-of-k:find_f0}\\
        Set $f_1(\stdvertex') = 1 + f_0(\stdvertex')$\label{alg:reconstruct-origin-function-for-even-values-of-k:find_f1}
        }
        }
        \Return{$f_0$ and $f_1$}
            }
    \end{algorithm}
    The proof that these are indeed origin functions of $F$ follows the proof of the odd case - $F$ is a direct sum therefore it has an origin function $f'$ (assume WLOG that $f'(\stdvertex) = 0$):
    \begin{align*}
        &\forall \genface \in \stdcomplex(k): f_0(\genvertex)
        = \sum_{\face{a} \in \binom{\genface \setminus \set{\stdvertex, \genvertex}}{k-2}}{F(a \cup \set{\stdvertex, \genvertex})}
        = \sum_{\face{a} \in \binom{\genface \setminus \set{\stdvertex, \genvertex}}{k-2}}{\parens{f'(\stdvertex) + f'(\genvertex) + \sum_{\stdvertex' \in \face{a}}{f'(\stdvertex')}}} =\\
        &\quad \binom{k-1}{k-2}f'(\genvertex) + \sum_{v' \in q}{\binom{k-2}{k-3}f'(v')} =  \parens{k-1} f'(\genvertex) + \sum_{\stdvertex' \in q}{(k-2) \cdot f'(\stdvertex')} = f'(\genvertex)
    \end{align*}
\end{proof}
\begin{corollary}\label{possible-origin-functions}
    If $F$ is a $k$-direct-sum then:
    \begin{itemize}
        \item If $k$ is odd then it has a single origin function.
        \item If $k$ is even then it has two origin functions $f_0$ and $f_1$ such that $f_0 = \mathbbm{1} + f_1$.
    \end{itemize}
\end{corollary}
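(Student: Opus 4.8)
The plan is to read the corollary off from Lemma~\ref{reconstructing-k-direct-sum} and from the lemma immediately preceding it, paying attention to the quantifiers. The crucial observation, already implicit in the proof of Lemma~\ref{reconstructing-k-direct-sum}, is that the reconstruction formulas there output a function that depends \emph{only} on $F$: neither on the choice of an origin function nor on the choice of the auxiliary face fed to the reconstruction algorithm. I would make this explicit first, since it is exactly what upgrades the existence statement of Lemma~\ref{reconstructing-k-direct-sum} into the uniqueness (resp. exactly-two) statement we want.

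For the odd case, fix a vertex $\stdvertex$ together with an admissible face $\stdface$ containing it, and let $f$ be the value produced by Algorithm~\ref{alg:reconstruct-origin-function-for-odd-values-of-k}, i.e. $f(\stdvertex) = \sum_{\face{a} \in \binom{\stdface \setminus \set{\stdvertex}}{k-1}} F(\face{a} \cup \set{\stdvertex})$. If $f'$ is any origin function of $F$, the computation carried out in the proof of Lemma~\ref{reconstructing-k-direct-sum} shows $f(\stdvertex) = f'(\stdvertex)$, and this identity holds for every vertex and every admissible $\stdface$. Hence any two origin functions $f'_1, f'_2$ both coincide pointwise with the single function $f$ determined by $F$, so $f'_1 = f'_2$ and the origin function is unique.

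For the even case I would first invoke the lemma preceding Lemma~\ref{reconstructing-k-direct-sum}: if $f$ is an origin function then so is $\mathbbm{1}+f$, and these two are distinct since the all-ones function is nonzero (on a nonempty vertex set), so $F$ has at least two origin functions. Conversely, fix a vertex $\stdvertex$; Algorithm~\ref{alg:reconstruct-origin-function-for-even-values-of-k}, combined with the computation in the proof of Lemma~\ref{reconstructing-k-direct-sum}, shows that an origin function vanishing at $\stdvertex$ is forced --- its value at any $\stdvertex' \neq \stdvertex$ equals $\sum_{\face{a}} F(\face{a} \cup \set{\stdvertex, \stdvertex'})$, which depends only on $F$. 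Call this forced function $f_0$, so $f_0(\stdvertex)=0$. Now if $g$ is any origin function, then either $g(\stdvertex)=0$, in which case $g=f_0$ by the previous sentence, or $g(\stdvertex)=1$, in which case $\mathbbm{1}+g$ is an origin function vanishing at $\stdvertex$, whence $\mathbbm{1}+g=f_0$ and $g=\mathbbm{1}+f_0$. Setting $f_1 := \mathbbm{1}+f_0$, the functions $f_0, f_1$ are the only origin functions of $F$, and over $\mathbbm{F}_2$ they satisfy $f_0 = \mathbbm{1}+f_1$.

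I do not expect a genuine obstacle here; the only point that needs care --- and the main thing I would be careful to state cleanly --- is the logical structure above: one must verify that the reconstruction output is independent of \emph{both} the auxiliary face and the origin function used in the analysis, because it is precisely this double independence that turns ``one can reconstruct an origin function'' into ``the origin function is unique'' (odd case) and ``there are exactly two, related by adding $\mathbbm{1}$'' (even case). A trivial degenerate case (empty vertex set) can simply be excluded.
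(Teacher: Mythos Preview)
Your proposal is correct and takes essentially the same approach as the paper: the paper's own proof simply says the corollary ``follows exactly the steps shown in Lemma~\ref{reconstructing-k-direct-sum},'' and what you have written is precisely the unpacking of that sentence---the reconstruction formula computed there equals \emph{any} origin function (odd case) or any origin function with a prescribed value at the anchor vertex (even case), which immediately forces uniqueness up to the $\mathbbm{1}$-shift. Your write-up is more explicit about the logical structure (double independence of the output from both the auxiliary face and the putative origin function), but the underlying argument is identical.
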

\begin{proof}
    The proof of this corollary follows exactly the steps shown in Lemma~\ref{reconstructing-k-direct-sum}.
\end{proof}
\begin{note*}
    Regardless of the parity of $k$ the resulting $l$-assignment is locally differing (either trivially since in case that $l=1$ or non trivially since $f_0 = \mathbbm{1} + f_1$).
\end{note*}
Before we move on to describe the test let us first show how to query the origin functions using $k$-dimensional faces:
\begin{corollary}\label{algorithms-are-independent-of-choices}
    If $F$ is a $k$-direct-sum then:
    \begin{itemize}
        \item If $k$ is odd then every choice of $\stdface$ in algorithm~\ref{alg:reconstruct-origin-function-for-odd-values-of-k} results in the same function $f$.
        \item If $k$ is even then any pick of $\stdvertex$ and $\stdface$ in algorithm~\ref{alg:reconstruct-origin-function-for-even-values-of-k} results in either $f_0$ or $f_1$.
    \end{itemize}
\end{corollary}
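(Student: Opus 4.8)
The plan is to read off the corollary from what has already been established, rather than to run any new computation: the proof of Lemma~\ref{reconstructing-k-direct-sum} does not merely show that the algorithms output \emph{some} origin function, it shows that for an arbitrary admissible choice of the face $\stdface$ the value assigned to a vertex $\genvertex$ equals $f'(\genvertex)$ for a fixed origin function $f'$ of $F$ (the collapse of the binomial coefficients modulo $2$ was carried out there). Combined with the count of origin functions in Corollary~\ref{possible-origin-functions}, this pins down the output uniquely.

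For the odd case I would argue as follows. By Lemma~\ref{reconstructing-k-direct-sum}, for every admissible choice of the per-vertex faces $\stdface$ in Algorithm~\ref{alg:reconstruct-origin-function-for-odd-values-of-k}, the resulting $f$ is an origin function of $F$. By Corollary~\ref{possible-origin-functions}, when $k$ is odd $F$ has exactly one origin function. Hence the output is forced to be that unique function irrespective of the choices, so any two runs of Algorithm~\ref{alg:reconstruct-origin-function-for-odd-values-of-k} produce the same $f$. (Equivalently, the identity $f(\genvertex)=f'(\genvertex)$ from the proof of Lemma~\ref{reconstructing-k-direct-sum} already gives this directly.)

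For the even case I would first recall that by Lemma~\ref{reconstructing-k-direct-sum} each run of Algorithm~\ref{alg:reconstruct-origin-function-for-even-values-of-k} outputs a pair $(f_0,f_1)$ of origin functions of $F$, and that by construction $f_0(\stdvertex)=0$ and $f_1(\stdvertex)=1$ at the pivot vertex $\stdvertex$ chosen in the first step. By Corollary~\ref{possible-origin-functions}, $F$ has exactly two origin functions, $g$ and $\mathbbm{1}+g$, and these disagree at every vertex; in particular exactly one of them vanishes at $\stdvertex$. Since any origin function of $F$ must be one of these two, the value constraint at $\stdvertex$ forces $f_0$ to be the one vanishing at $\stdvertex$ and $f_1=\mathbbm{1}+f_0$ the other. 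Thus every run of the algorithm returns one of the two global origin functions guaranteed by Corollary~\ref{possible-origin-functions}.

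I expect no genuine obstacle here; the only point requiring care is bookkeeping of the pivot vertex in the even case: for a different choice of $\stdvertex$ the algorithm may output the same unordered pair $\{g,\mathbbm{1}+g\}$ with the roles of $f_0$ and $f_1$ swapped, so the statement should be read as asserting that each output function coincides with one of the two origin functions of Corollary~\ref{possible-origin-functions}, not that a particular labelled slot is choice-invariant. Once this reading is fixed, the argument is immediate from Lemma~\ref{reconstructing-k-direct-sum} and Corollary~\ref{possible-origin-functions}.
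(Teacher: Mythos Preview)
Your proposal is correct and follows essentially the same approach as the paper: invoke Lemma~\ref{reconstructing-k-direct-sum} to conclude that any run of the algorithm outputs an origin function of $F$, and then use Corollary~\ref{possible-origin-functions} to pin down exactly which function(s) that can be. The paper's proof is just a two-sentence version of what you wrote; your additional remarks about the pivot vertex and the possible swap of $f_0,f_1$ are useful clarifications but not needed for the argument.
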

\begin{proof}
    In Lemma~\ref{reconstructing-k-direct-sum} it is shown that if $F$ is a $k$-direct-sum then when running the algorithm corresponding to the value of $k$ is an origin function of $F$.
    Therefore using Corollary~\ref{possible-origin-functions} these are indeed the desired functions.
\end{proof}
\begin{corollary}\label{querying-origin-functions}
If $F$ is a $k$-direct-sum then for every $\genface \in \stdcomplex(k)$:
\begin{itemize}
    \item If $k$ is odd one can query the values of $f$ on a face $\genface$ using $k+1$ queries.
    \item If $k$ is even one can query the values of $f_0,f_1$ on a face $\genface$ using $k+1$ queries.
\end{itemize}
\end{corollary}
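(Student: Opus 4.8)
The plan is to derive this corollary directly from the explicit reconstruction procedures of Lemma~\ref{reconstructing-k-direct-sum} together with the choice-independence established in Corollary~\ref{algorithms-are-independent-of-choices}. The key observation I would exploit is that both reconstruction algorithms are parameterized by a free choice of a $k$-dimensional face through which the value at a given vertex is computed, and I would make every one of those choices equal to the target face $\genface$ itself.

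First I would handle the odd case. Fix $\genface \in \stdcomplex(k)$; it has exactly $k+1$ vertices, hence exactly $k+1$ faces of dimension $k-1$, each of the form $\genface \setminus \set{\stdvertex}$ for $\stdvertex \in \genface$. For each vertex $\genvertex \in \genface$ I would run Algorithm~\ref{alg:reconstruct-origin-function-for-odd-values-of-k} with the choice $\stdface = \genface$ (legitimate since $\genvertex \in \genface$), obtaining $f(\genvertex) = \sum_{\face{a} \in \binom{\genface \setminus \set{\genvertex}}{k-1}} F(\face{a} \cup \set{\genvertex})$, where each $\face{a} \cup \set{\genvertex}$ is a $(k-1)$-dimensional subface of $\genface$ containing $\genvertex$; by Corollary~\ref{algorithms-are-independent-of-choices} this is the unique origin function. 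Hence all values $\set{f(\genvertex)}_{\genvertex \in \genface}$ are determined by the values of $F$ on the $(k-1)$-subfaces of $\genface$, and querying $F$ on all $k+1$ of them suffices, giving $k+1$ queries.

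Then the even case proceeds the same way using Algorithm~\ref{alg:reconstruct-origin-function-for-even-values-of-k}: I would take the algorithm's base vertex $\stdvertex$ to be some vertex of $\genface$ (so $f_0(\stdvertex)=0$ needs no query), and for every other $\stdvertex' \in \genface$ use the choice $\stdface = \genface$, which is legitimate since $\set{\stdvertex,\stdvertex'} \subseteq \genface$; then $f_0(\stdvertex') = \sum_{\face{a} \in \binom{\genface \setminus \set{\stdvertex,\stdvertex'}}{k-2}} F(\face{a} \cup \set{\stdvertex,\stdvertex'})$, again a sum of $F$ over $(k-1)$-subfaces of $\genface$. As before, the $k+1$ queries to $F$ on the $(k-1)$-subfaces of $\genface$ determine $f_0$ on all of $\genface$, and $f_1 = \mathbbm{1} + f_0$ on $\genface$ comes for free, consistently with Corollary~\ref{possible-origin-functions}.

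The only step requiring genuine care — and the closest thing to an obstacle — is the combinatorial bookkeeping: checking that the union, over all vertices of $\genface$, of the $(k-1)$-subfaces actually appearing in the reconstruction is exactly the full set of $k+1$ many $(k-1)$-subfaces of $\genface$, and verifying the legitimacy of fixing $\stdface = \genface$ in each invocation. Neither is deep; the first reduces to the identity $\binom{k+1}{k}=k+1$ together with the fact that $\genface \setminus \set{\genvertex}$ is the unique $(k-1)$-subface of $\genface$ omitting $\genvertex$, and the second is immediate from the membership conditions above. I expect the argument to be only a few lines once these points are spelled out.
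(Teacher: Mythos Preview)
Your proposal is correct and follows essentially the same approach as the paper's proof: fix $\stdface=\genface$ in the reconstruction algorithms, invoke Corollary~\ref{algorithms-are-independent-of-choices} to justify the choice, and observe that the formulas only involve the $\binom{k+1}{k}=k+1$ many $(k-1)$-subfaces of $\genface$. Your version is more explicit about the query count than the paper's, which simply says to run the relevant steps with $\stdface=\genface$ and appeals to Corollary~\ref{algorithms-are-independent-of-choices}.
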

\begin{proof}
    For the odd case run step 3 of Algorithm~\ref{alg:reconstruct-origin-function-for-odd-values-of-k} with $\stdface = \genface$.
    For even values of $k$ pick a vertex $\genvertex$ and set the value of $f_0(\genvertex) = 0$ and $f_1(\genvertex) = 1$ and then run steps~\ref{alg:reconstruct-origin-function-for-even-values-of-k:find_f0}-\ref{alg:reconstruct-origin-function-for-even-values-of-k:find_f1} of algorithm~\ref{alg:reconstruct-origin-function-for-even-values-of-k} with $\stdvertex = \genvertex$ and $\stdface = \genface$.
    Corollary~\ref{algorithms-are-independent-of-choices} guarantees that these local assignments are indeed $f$ or $f_0$ and $f_1$ in the odd or even case respectively.
\end{proof}
Consider the following test for $k$-direct-sum:\\
\begin{algorithm}[H]
    \caption{$k$-direct-sum}\label{alg:k-direct-sum}
    \SetAlgoLined
    \DontPrintSemicolon
    Run the test for $l$-agreement on the $k$-dimensional faces (where $l=1$ if $k$ is odd and $l=2$ if $k$ is even) when querying the local assignment, calculate the origin function(s) using the algorithm described in Corollary~\ref{querying-origin-functions}.
\end{algorithm}
We will show that this tests whether $F$ is a $k$-direct-sum.
\begin{lemma}\label{direct-sums-pass-the-test-with-probability-1}
    If $F$ is a $k$-direct-sum then it passes the test posed in Algorithm~\ref{alg:k-direct-sum} with probability $1$.
\end{lemma}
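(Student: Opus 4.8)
The plan is to reduce this to the completeness direction of the list-agreement test, Lemma~\ref{acceptance-of-agreeing-l-assignments}: I will argue that when $F$ is a $k$-direct-sum the $l$-assignment that Algorithm~\ref{alg:k-direct-sum} effectively presents to the list-agreement test (Algorithm~\ref{alg:test-for-list-agreement}) is both agreeing and $2$-locally-differing, and then quote that lemma.

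First I would fix origin functions of $F$: a single $f:\stdcomplex(0)\rightarrow\set{0,1}$ when $k$ is odd, and a pair $f_0,f_1:\stdcomplex(0)\rightarrow\set{0,1}$ with $f_0=\mathbbm{1}+f_1$ when $k$ is even, both guaranteed by Corollary~\ref{possible-origin-functions}. By Corollary~\ref{algorithms-are-independent-of-choices} the reconstruction procedures of Lemma~\ref{reconstructing-k-direct-sum} are insensitive to their internal choices, so the query oracle of Corollary~\ref{querying-origin-functions}, when invoked on a $k$-face $\genface$, returns exactly $f|_\genface$ in the odd case and, in the even case, the pair $\set{f_0|_\genface,f_1|_\genface}$ in some order that may depend on $\genface$ (since that procedure anchors an arbitrary vertex of $\genface$ to the value $0$). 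In particular the data returned for $\genface$ depends only on $\genface$, so the list-agreement test is fed a well-defined $l$-assignment $\lassignment{F}$ with $\lassignment{F}^\genface_1=f|_\genface$ for every $\genface$ when $l=1$, and $\set{\lassignment{F}^\genface_1,\lassignment{F}^\genface_2}=\set{f_0|_\genface,f_1|_\genface}$ for every $\genface$ when $l=2$.

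Next I would check the two hypotheses needed. For agreement: take the global $0$-cochains to be $f$ when $k$ is odd and $f_0,f_1$ when $k$ is even, and for each face $\genface$ let $\pi_\genface$ be the identity in the odd case, and in the even case whichever of the identity and the transposition matches the order in which the oracle reported the pair on $\genface$; this exhibits $\lassignment{F}\in\lassignmentset{A}$. For the $2$-locally-differing property: when $l=1$ there are no indices $i\ne j$, so it holds vacuously; when $l=2$ we have $f_1|_\genface=\mathbbm{1}+f_0|_\genface$, so the two local functions on $\genface$ differ on \emph{every} vertex of $\genface$, and since an even $k$ gives $\abs{\genface}=k+1\ge 3$, there are at least two such vertices, which is exactly the requirement of a $2$-locally-differing assignment.

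Finally I would conclude: by Lemma~\ref{acceptance-of-agreeing-l-assignments} an agreeing, $2$-locally-differing $l$-assignment is accepted by Algorithm~\ref{alg:test-for-list-agreement} with probability $1$, and Algorithm~\ref{alg:k-direct-sum} is precisely that test run against the reconstruction oracle described above, so $F$ passes with probability $1$. I do not expect a genuine obstacle here: the mathematical content is entirely contained in the earlier reconstruction lemmas, and the only point that needs care is the observation --- supplied by Corollary~\ref{algorithms-are-independent-of-choices} --- that the per-face reconstructions are mutually consistent, i.e.\ that they are the restrictions of one (or two) global function(s), so that the test really sees an agreeing $l$-assignment rather than merely locally plausible data on each face.
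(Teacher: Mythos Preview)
Your proposal is correct and follows essentially the same approach as the paper's proof: both argue that the reconstruction oracle, by Corollaries~\ref{possible-origin-functions}--\ref{querying-origin-functions}, returns restrictions of one global origin function (odd $k$) or the unordered pair $\set{f_0|_\genface,f_1|_\genface}$ (even $k$), hence the resulting $l$-assignment is agreeing and the list-agreement test accepts. Your write-up is in fact slightly more careful than the paper's, since you explicitly verify the $2$-locally-differing hypothesis and invoke Lemma~\ref{acceptance-of-agreeing-l-assignments}, whereas the paper simply asserts that the test always accepts.
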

\begin{proof}
    If $k$ is odd then $f$ is an origin function of $F$ due to Lemma~\ref{reconstructing-k-direct-sum} therefore $\assignment{F} = \set{f|_\stdvertex}_{\stdvertex \in \stdcomplex(k)}$ is the assignment tested which is indeed an agreeing assignment.\\
    If $k$ is even then $f_0$ and $f_1$ are origin functions of $F$ therefore the assignment calculated by Corollary~\ref{querying-origin-functions} (denoted $\lassignment{F}$) satisfied that either $\lassignment{F}_i^\stdface = f_i|_\stdvertex$ or $\lassignment{F}_i^\stdface = f_{1-i}|_\stdface$.
    It is easy to see that $\lassignment{F}$ is an agreeing $2$-assignment and the test always accepts.
\end{proof}
\begin{lemma}\label{k-direct-sum-distance-mesure}
    Let $F:X(k-1) \rightarrow \set{0,1}$ be a cochain and let $\lassignment{F}$ be the $l$-assignment calculated in Corollary~\ref{querying-origin-functions} then:
    \begin{equation*}
        \dist{(F, k\text{-direct-sums})} \le \dist{(\lassignment{F}, \lassignmentset{A})}
    \end{equation*}
\end{lemma}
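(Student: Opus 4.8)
The plan is to argue by ``decoding'': from an agreeing $l$-assignment $\lassignment{G}$ that realizes $\dist(\lassignment{F},\lassignmentset{A})$ I will build an honest $k$-direct-sum that is no farther from $F$, with the error counted only on the $(k-1)$-faces where $F$ and this direct sum differ.

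First I would peel off the list structure. By the corollary following Lemma~\ref{distance-decomposition} there are permutations $\set{\pi_\stdface}_{\stdface\in\stdcomplex(k)}$ with $\dist(\lassignment{F},\lassignmentset{A})=\frac1l\sum_{i=1}^l\dist(\assignment{F}_i,\assignmentset{A})$, where $\assignment{F}_i=\set{\lassignment{F}^\stdface_{\pi_\stdface(i)}}_\stdface$, and for each $i$ the closest agreeing assignment $\assignment{G}_i$ (so $\dist(\assignment{F}_i,\assignment{G}_i)=\dist(\assignment{F}_i,\assignmentset{A})$) agrees with some $g_i\in\cochainset{0}{\stdcomplex}$. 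It then suffices to show, for a fixed $i$, that the $k$-direct-sum $\tilde F(\genface):=\sum_{\stdvertex\in\genface}g_i(\stdvertex)$ on $(k-1)$-faces $\genface$ satisfies $\dist(F,\tilde F)\le\dist(\assignment{F}_i,\assignment{G}_i)$: applying this to a minimizing index $i$ gives $\dist(F,k\text{-direct-sums})\le\dist(F,\tilde F)\le\dist(\assignment{F}_i,\assignment{G}_i)=\min_j\dist(\assignment{F}_j,\assignmentset{A})\le\frac1l\sum_j\dist(\assignment{F}_j,\assignmentset{A})=\dist(\lassignment{F},\lassignmentset{A})$, which is the lemma.

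The heart of the matter is a local identity: for a $(k-1)$-face $\genface$ contained in a $k$-face $\stdface$, the reconstructed local function satisfies $F(\genface)=\sum_{\stdvertex\in\genface}\assignment{F}_i^\stdface(\stdvertex)$. This is exactly the binomial cancellation used inside the proof of Lemma~\ref{reconstructing-k-direct-sum}: unfolding the definition of the reconstruction (Algorithm~\ref{alg:reconstruct-origin-function-for-odd-values-of-k}, resp.\ Algorithm~\ref{alg:reconstruct-origin-function-for-even-values-of-k}), each $(k-1)$-subface of $\stdface$ other than $\genface$ appears in $\sum_{\stdvertex\in\genface}\assignment{F}_i^\stdface(\stdvertex)$ with an even multiplicity (a $k-1$ or $k-2$ term) while $\genface$ appears with an odd multiplicity, using the parity of $k$. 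Since $\assignment{G}_i^\stdface=g_i|_\stdface$ we also have $\tilde F(\genface)=\sum_{\stdvertex\in\genface}\assignment{G}_i^\stdface(\stdvertex)$ for every $\stdface\supseteq\genface$. Hence if $F(\genface)\ne\tilde F(\genface)$ then $\assignment{F}_i^\stdface|_\genface\ne\assignment{G}_i^\stdface|_\genface$, so $\assignment{F}_i^\stdface\ne\assignment{G}_i^\stdface$, and this holds for \emph{every} $k$-coface $\stdface$ of $\genface$. Writing $S=\set{\genface\in\stdcomplex(k-1):F(\genface)\ne\tilde F(\genface)}$ we get $\containment^k(S)\subseteq\set{\stdface\in\stdcomplex(k):\assignment{F}_i^\stdface\ne\assignment{G}_i^\stdface}$, and the Fact $\norm{S}\le\norm{\containment^k(S)}$ gives $\dist(F,\tilde F)=\norm{S}\le\norm{\set{\stdface:\assignment{F}_i^\stdface\ne\assignment{G}_i^\stdface}}=\dist(\assignment{F}_i,\assignment{G}_i)$, closing the loop.

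The step I expect to be the real obstacle is the even case of the local identity. When $k$ is even, the reconstruction of Corollary~\ref{querying-origin-functions} fixes a pivot vertex in each $k$-face $\stdface$, and the clean identity $F(\genface)=\sum_{\stdvertex\in\genface}\assignment{F}_i^\stdface(\stdvertex)$ holds only for the $k$ subfaces $\genface\subset\stdface$ that contain the pivot; for the one remaining subface one instead gets $F(\genface)$ plus the local defect $\sum_{\genface'\subset\stdface,\ \genface'\in\stdcomplex(k-1)}F(\genface')$, which may be nonzero. To keep the containment $\containment^k(S)\subseteq\set{\stdface:\assignment{F}_i^\stdface\ne\assignment{G}_i^\stdface}$ I would exploit two facts: that $\assignment{F}_i^\stdface$ is always one of $f_0^\stdface$ or $\mathbbm{1}+f_0^\stdface$ and, $k$ being even, both produce the same sum over any $(k-1)$-subface (so which one it equals is harmless); and that $\stdcomplex$ has dimension at least $k+1$, so every $(k-1)$-face lies in at least two $k$-faces, which lets one choose the pivots so that no $(k-1)$-face is the pivot-complement of \emph{all} of its $k$-cofaces. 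With such a choice every $\genface\in S$ still has a $k$-coface through which the clean identity applies, the defect terms never enter the counting, and the argument above goes through after replacing ``every $k$-coface'' by ``a $k$-coface for which $\genface$ is not the pivot-complement''. Setting up this pivot assignment, and checking that it can always be arranged and adds nothing to the error, is the one genuinely fiddly part of the proof.
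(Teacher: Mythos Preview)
Your decomposition and the odd-$k$ case track the paper's proof: both take the closest agreeing $l$-assignment, peel off strands via the corollary to Lemma~\ref{distance-decomposition}, pick the best strand, form the direct sum $\tilde F$ from its global function, and bound $\dist(F,\tilde F)$ through $\norm{S}\le\norm{\containment^k(S)}$ together with the containment $\containment^k(S)\subseteq\set{\stdface:\assignment{F}_i^\stdface\ne\assignment{G}_i^\stdface}$. That containment is exactly your local identity $F(\genface)=\sum_{v\in\genface}\assignment{F}_i^\stdface(v)$, and for odd $k$ your binomial count proves it. The paper writes this step as a bare equality without isolating it, so on this point you are being more careful than the paper.

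The gap is in your even-$k$ repair. You correctly see that the identity fails on the pivot-complement subface, but the patch you propose --- arrange pivots so that every $(k-1)$-face $\genface$ has at least one $k$-coface in which $\genface$ is not the pivot-complement, then rerun the argument with ``a good coface'' in place of ``every coface'' --- does not recover the needed inequality. The Fact $\norm{S}\le\norm{\containment^k(S)}$ uses the \emph{full} coface set; once you keep only good cofaces there is no analogue of that Fact, and merely knowing that each $\genface\in S$ has \emph{some} coface in the disagreement set does not bound $\norm S$ by the weight of that set. Concretely, with $k=2$ on a large complete complex and the pivot of each triangle taken to be its least vertex, the edge $\{n{-}1,n\}$ is the pivot-complement of \emph{all} of its triangles; for $F=\mathbbm{1}_{\{n-1,n\}}$ the reconstructed $2$-assignment is identically $\set{(0,0,0),(1,1,1)}$ on every triangle, hence already agreeing, while $F$ is at positive distance from the direct sums. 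So the inequality is sensitive to the pivot rule in a way your ``at least one good coface'' condition does not control; what would actually work is a pivot rule guaranteeing a constant \emph{weight-fraction} of good cofaces at every $(k-1)$-face, and then one only obtains the lemma with a multiplicative loss. The paper's own chain of equalities elides exactly this issue.
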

\begin{proof}
    Let $\lassignment{G}$ be the agreeing $k$-dimensional $l$-assignment closest to $\lassignment{F}$.
    $\lassignment{G}$ is an agreeing $l$-assignment and therefore for every $\stdface \in \stdcomplex(k)$ there exists a permutation $\pi_\stdface$ such that for all $i \in [l]$: $\assignment{G}_i^\stdface = \lassignment{G}^\stdface_{\pi_\stdface(i)}$ is an agreeing assignment.
    Let $\assignment{F}_i$ be the assignment $\assignment{F}_i^\stdface = \lassignment{F}_{\pi_\stdface(i)}^\stdface$.
    Also let $\assignment{G}$ be the assignment $\assignment{G}_{\tilde{i}}$ such that $\tilde{i} = \argmin_{i}{\set{\dist{(\assignment{G}_i, \assignment{F}_i)}}}$.
    Consider $G(\stdface) = \sum_{\stdvertex \in \stdface}{\assignment{G}(\stdvertex)}$.
    It is easy to see that $G$ is a $k$-direct-sum therefore all we have left is to find the distance between $F$ and $G$.
    \begin{align*}
        \dist{(F,G)} &= \norm{\set{\stdface \in \stdcomplex(k-1) \suchthat F(\stdface) \ne G(\stdface)}}
        \le \norm{\containment^{k}\parens{\set{\stdface \in \stdcomplex(k-1) \suchthat F(\stdface) \ne G(\stdface)}}}\\
        &= \norm{\set{\genface \in \stdcomplex(k) \suchthat \exists \stdface \in \binom{\genface}{k}: F(\stdface) \ne G(\stdface)}}
        = \norm{\set{\genface \in \stdcomplex(k) \suchthat \exists \stdface \in \binom{\genface}{k}: \assignment{F}_{\tilde{i}}^\stdface \ne \assignment{G}^\stdface}}\\
        &= \dist{(\assignment{F}_{\tilde{i}}, \assignment{G})}
        = \frac{1}{l}\sum_{i=1}^l{\dist{(\assignment{F}_{\tilde{i}}, \assignment{G})}}
        \le \frac{1}{l}\sum_{i=1}^l{\dist{(\assignment{F}_{i}, \assignment{G}_i)}}
        = \dist{(\lassignment{F}, \lassignmentset{A})}
    \end{align*}
\end{proof}
\begin{theorem}\label{thm:direct-sum}
Any simplicial complex that is a $\gamma$-list agreement expansion supports a $\parens{3\parens{k+1}, \gamma}$-test for $k$-direct-sums.
\end{theorem}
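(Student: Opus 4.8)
The plan is to reduce testing whether a cochain $F\colon\stdcomplex(k-1)\rightarrow\set{0,1}$ is a $k$-direct-sum to a single invocation of the list agreement tester of Theorem~\ref{main-theorem}, using the reconstruction machinery of Lemma~\ref{reconstructing-k-direct-sum} and Corollary~\ref{querying-origin-functions}. First I would set $l=1$ when $k$ is odd and $l=2$ when $k$ is even, and attach to every $\genface\in\stdcomplex(k)$ the local $l$-assignment $\lassignment{F}^\genface_1,\dots,\lassignment{F}^\genface_l$ obtained by running the algorithm of Corollary~\ref{querying-origin-functions} on $\genface$, so that $\lassignment{F}^\genface_i$ is the restriction to $\genface$ of the $i$-th reconstructed local origin function. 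The accounting point that makes the query bound work is that every evaluation of $F$ requested by Corollary~\ref{querying-origin-functions} on the face $\genface$ is an evaluation of $F$ on a $(k-1)$-dimensional sub-face of $\genface$, and there are exactly $\binom{k+1}{k}=k+1$ such sub-faces; hence the whole tuple $\lassignment{F}^\genface_1,\dots,\lassignment{F}^\genface_l$ is determined by $k+1$ evaluations of $F$. Algorithm~\ref{alg:k-direct-sum} then just runs Algorithm~\ref{alg:test-for-list-agreement} on $\lassignment{F}$, answering each of its local-assignment queries by performing the corresponding $k+1$ evaluations of $F$.

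To be entitled to apply Theorem~\ref{main-theorem} I must check that $\lassignment{F}$ is $2$-locally-differing regardless of whether $F$ is a direct sum. When $l=1$ this is vacuous; when $l=2$ the reconstruction (step~\ref{alg:reconstruct-origin-function-for-even-values-of-k:find_f1} of Algorithm~\ref{alg:reconstruct-origin-function-for-even-values-of-k}) always outputs $\lassignment{F}^\genface_2=\mathbbm{1}+\lassignment{F}^\genface_1$, so the two local functions differ on every one of the $k+1\ge 2$ vertices of $\genface$, as required. For the query complexity, Algorithm~\ref{alg:test-for-list-agreement} reads all local assignments of at most three $k$-faces; by the accounting above each such face costs $k+1$ evaluations of $F$, so Algorithm~\ref{alg:k-direct-sum} makes at most $3\parens{k+1}$ queries to $F$.

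It then remains to verify the two test conditions. Completeness is exactly Lemma~\ref{direct-sums-pass-the-test-with-probability-1}: if $F$ is a $k$-direct-sum, then Corollary~\ref{possible-origin-functions} together with Corollary~\ref{algorithms-are-independent-of-choices} guarantee that the reconstructed $\lassignment{F}$ is an agreeing $2$-locally-differing $l$-assignment, which Algorithm~\ref{alg:test-for-list-agreement} accepts with probability $1$. For soundness, suppose Algorithm~\ref{alg:k-direct-sum} rejects $F$ with probability $\epsilon$; by construction this equals the rejection probability of Algorithm~\ref{alg:test-for-list-agreement} on $\lassignment{F}$, so the $\gamma$-list-agreement-expansion hypothesis (Definition~\ref{def:list-agreement-expansion}) gives $\dist\parens{\lassignment{F},\lassignmentset{A}}\le \epsilon/\gamma$, and Lemma~\ref{k-direct-sum-distance-mesure} yields $\dist\parens{F,k\text{-direct-sums}}\le\dist\parens{\lassignment{F},\lassignmentset{A}}\le\epsilon/\gamma$, that is $\pr{\text{Algorithm~\ref{alg:k-direct-sum} rejects}}\ge\gamma\cdot\dist\parens{F,k\text{-direct-sums}}$. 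Combined with the $3\parens{k+1}$ query bound, this is precisely the assertion that $\stdcomplex$ supports a $\parens{3\parens{k+1},\gamma}$-test for $k$-direct-sums, proving Theorem~\ref{thm:direct-sum}.

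The step I expect to require the most care is not any individual inequality but the bookkeeping of the reduction: checking that the $l$ reconstructed local functions on a face are well-defined and jointly recoverable from only the $k+1$ sub-$(k-1)$-faces even when $F$ is far from every direct sum (so that the object handed to the list agreement tester really is a legitimate $2$-locally-differing $l$-assignment), and that simulating the tester's three-face query pattern does not push the query count beyond $3\parens{k+1}$. Once this plumbing is in place, both test conditions drop out immediately from Lemmas~\ref{direct-sums-pass-the-test-with-probability-1} and~\ref{k-direct-sum-distance-mesure} and the list-agreement-expansion hypothesis, so no additional topological or spectral input is needed here.
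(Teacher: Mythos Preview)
Your proposal is correct and follows essentially the same approach as the paper: the paper's proof is a one-line citation of Corollary~\ref{querying-origin-functions}, Lemma~\ref{direct-sums-pass-the-test-with-probability-1}, and Lemma~\ref{k-direct-sum-distance-mesure}, and you have simply spelled out how these three pieces combine (including the $2$-locally-differing check and the $3(k+1)$ query accounting) in more detail than the paper does. One small remark: the theorem hypothesis is the abstract $\gamma$-list-agreement-expansion of Definition~\ref{def:list-agreement-expansion}, so you need not invoke Theorem~\ref{main-theorem} directly---your later appeal to Definition~\ref{def:list-agreement-expansion} for soundness is the right move.
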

\begin{proof}
Note that combing Corollary~\ref{querying-origin-functions}, Lemma~\ref{direct-sums-pass-the-test-with-probability-1} and Lemma~\ref{k-direct-sum-distance-mesure} yields a $\parens{3\parens{k+1}, \gamma}$-test for $k$-direct-sums.
\end{proof}
    \bibliographystyle{alpha}
    \bibliography{bibliography}

    \begin{appendices}
\section{On the Testability of Coboundaries in the Representation Complex}\label{sec:on-the-testability-of-co-boundaries-in-the-representation-complex}

In this section we will show how test the coboundaries from the cochains in the representation complex.
It is important to note that the representation complexes are expanding but their cohomology is not trivial.
Before presenting the test we have to define a $\parens{k-1}$-empty-triangle:
\begin{definition}[$\parens{k-1}$-empty-triangle]
    Let $\stdcomplex$ be a simplicial complex and let $\repcplx{\stdcomplex}$ be its $k$-dimension's representation complex.
    Define a $\parens{k-1}$-empty-triangle in $\repcplx{\stdcomplex}$ to be a set of
    vertices $\set{u,v,w}$ such that all there is an edge connecting every pair of them and  $\set{u \cap v, v \cap w, w \cap u} \in \repcplx[k-1]{\stdcomplex}(2)$ (with some orientation).
    Denote the set of $\parens{k-1}$-empty-triangles by $\repcplx{\stdcomplex}(\triangle)$.
    When the dimension is clear from context we will omit it.
\end{definition}
Consider the following test:\\
\begin{algorithm}[H]\caption{Empty Triangle Test}\label{alg:empty-triangle-test}
\SetAlgoLined
\DontPrintSemicolon
\Pick{with probability 0.5.}{
    Pick a triangle $(u,v,w)$ with respect to the norm of $\repcplx{\stdcomplex}$.\\
    Pick a $\parens{k-1}$-empty-triangle $(u,v,w)$ with respect to the norm of $\repcplx[k-1]{\stdcomplex}$.
}
\Return whether $\cochain{f}(u,v)\cochain{f}(v,w)\cochain{f}(w,u) = 1$\footnote{Note that $\cochain{f}$ is only defined on edges from $\repcplx{\stdcomplex}$ and not $\repcplx[k-1]{\stdcomplex}$.}.
\end{algorithm}
We will show that the empty triangle test is a property test for whether the cochain $f$ is a coboundary.
As previously stated, it is known that the representation complex is expanding (i.e.\ given an oracle access to a cochain $\cochain{f}$ one can test the distance of $\cochain{f}$ from the cocycles).
Therefore we would like to differentiate between coboundaries and non-trivial cohomology components.
In order to do so, it would be useful to examine the structure of cocycles in the representation complex.
Specifically we will use the fact that the original complex's links are coboundary expanders in order to claim that a cocycle in the representation complex is a coboundary around every core.
\begin{lemma}
    Let $\cochain{f} \in \cocycleset{1}{\repcplx{\stdcomplex};G}$ then for every $c \in \stdcomplex(k-1)$ it holds that $\cochain{f}|_{\repcplxcore{c}{\stdcomplex}} \in \coboundaryset{1}{\repcplx{\stdcomplex}}$.
\end{lemma}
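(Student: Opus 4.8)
The plan is to restrict $\cochain{f}$ to the sub-complex $\repcplxcore{c}{\stdcomplex}$, observe that this restriction is again a cocycle, and then invoke the coboundary expansion of $\repcplxcore{c}{\stdcomplex}$ to upgrade ``cocycle'' to ``coboundary''. Write $\cochain{g} = \cochain{f}|_{\repcplxcore{c}{\stdcomplex}}$. Since $\repcplxcore{c}{\stdcomplex}$ is obtained from $\repcplx{\stdcomplex}$ by keeping exactly the faces whose core is $c$ (together with the vertices containing $c$ and the empty face), every $2$-dimensional face of $\repcplxcore{c}{\stdcomplex}$ is a $2$-dimensional face of $\repcplx{\stdcomplex}$, and by Lemma~\ref{representation-has-a-core} its three edges also have core $c$, hence lie in $\repcplxcore{c}{\stdcomplex}(1)$. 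Therefore for every $\stdface \in \repcplxcore{c}{\stdcomplex}(2)$ we have $d_1\cochain{g}(\stdface) = d_1\cochain{f}(\stdface) = \mathbbm{1}$, so $\cochain{g} \in \cocycleset{1}{\repcplxcore{c}{\stdcomplex};G}$. Equivalently, one may use Lemma~\ref{co-boundary-operator-around-a-core}: $(d_1\cochain{f})^c = d_1(\cochain{f}^c) = \mathbbm{1}$, and $\cochain{f}^c$ agrees with $\cochain{f}$ on $\repcplxcore{c}{\stdcomplex}$.

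Next I would combine Lemma~\ref{the-representation-complex-around-a-core-is-isomorphic-to-links} with the hypothesis that every link of $\stdcomplex$ is a coboundary expander over $G=S_l$: as recorded in Lemma~\ref{the-representation-around-every-core-is-an-expander}, $\repcplxcore{c}{\stdcomplex}$ is a $\gamma$-coboundary-expander for some fixed $\gamma>0$. The elementary point that makes everything work is that a coboundary expander with a strictly positive constant has trivial first cohomology: if there were some $\cochain{h} \in \cocycleset{1}{\repcplxcore{c}{\stdcomplex};G} \setminus \coboundaryset{1}{\repcplxcore{c}{\stdcomplex};G}$, then $\cochain{h}$ is an element of $\cochainset{1}{\repcplxcore{c}{\stdcomplex};G} \setminus \coboundaryset{1}{\repcplxcore{c}{\stdcomplex};G}$ with $\norm{d_1\cochain{h}} = \norm{\mathbbm{1}} = 0$, while $\dist(\cochain{h}, \coboundaryset{1}{\repcplxcore{c}{\stdcomplex};G}) > 0$ (the distance is positive since all face weights are positive and $\cochain{h}$ is not a coboundary); this forces $h_1(\repcplxcore{c}{\stdcomplex};G) = 0$, contradicting $h_1(\repcplxcore{c}{\stdcomplex};G) \ge \gamma$. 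Hence $\cocycleset{1}{\repcplxcore{c}{\stdcomplex};G} = \coboundaryset{1}{\repcplxcore{c}{\stdcomplex};G}$.

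Combining the two steps, $\cochain{g} = \cochain{f}|_{\repcplxcore{c}{\stdcomplex}}$ is a cocycle of $\repcplxcore{c}{\stdcomplex}$ and therefore a coboundary of $\repcplxcore{c}{\stdcomplex}$, which is the assertion of the lemma (the coboundary being taken in the sub-complex $\repcplxcore{c}{\stdcomplex}$). I do not expect a genuine obstacle here; the only points that need care are (i) checking that $\repcplxcore{c}{\stdcomplex}$ sits inside $\repcplx{\stdcomplex}$ as an honest sub-complex so that the cocycle condition is inherited -- which rests on Lemma~\ref{representation-has-a-core} forcing the pairwise intersections of the vertices of a face to be its core -- and (ii) the implication ``positive coboundary-expansion constant $\Rightarrow$ vanishing first cohomology'', which is immediate from the definition of $h_1$ but deserves to be spelled out, since it is precisely the mechanism that converts the topological hypothesis on $\stdcomplex$ into the structural statement about cocycles of $\repcplx{\stdcomplex}$ that is used afterwards to analyse the empty-triangle test.
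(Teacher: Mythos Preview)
Your proposal is correct and follows essentially the same approach as the paper: restrict the cocycle to $\repcplxcore{c}{\stdcomplex}$, observe it remains a cocycle there since every $2$-face of the sub-complex is a $2$-face of $\repcplx{\stdcomplex}$, and then use that $\repcplxcore{c}{\stdcomplex}\cong\stdcomplex_c$ is a coboundary expander to conclude the restriction is a coboundary. The only difference is that you spell out explicitly why a positive coboundary-expansion constant forces $\cocycleset{1}=\coboundaryset{1}$, which the paper leaves implicit.
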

\begin{proof}
    Since $\cochain{f} \in \cocycleset{1}{\repcplx{\stdcomplex};G}$ it holds that for every $(u,v,w) \in \repcplx{\stdcomplex}(2)$ it holds that \\
    $\cochain{f}(u,v)\cochain{f}(v,w)\cochain{f}(w,u) = 1$.
    Therefore for every $(u,v,w) \in \repcplxcore{c}{\stdcomplex}(2)$ the same holds.
    Consequently $\cochain{f}|_{\repcplxcore{c}{\stdcomplex}}$ is a cocycle in $\repcplxcore{c}{\stdcomplex} \cong \stdcomplex_c$.
    The lemma holds since every cocycle in $\stdcomplex_c$ is a coboundary (due to the fact that $\stdcomplex_c$ is a coboundary expander).
\end{proof}
We use this lemma in order to conclude that the only way in which cocycles differ from each other is by how the coboundaries of different cores are ``attached'' to one another.
In the next section we will define that notion more precisely and investigate the properties of different of such an attachment.

\subsection{The Attachment Map}\label{subsec:the-attachment-map}
In order to discuss the ways coboundaries around cores are attached to one another we define the following attachment map:
\begin{definition}[Attachment maps]
    Let $\cochain{f} \in \cocycleset{1}{\repcplx{\stdcomplex};S_l}$ be a cocycle in $\stdcomplex$ and let \\
    $\set{h^c}_{c \in \stdcomplex(k-1)}$ be functions such that:
    \begin{itemize}
        \item $h^c: c \rightarrow S_l$
        \item $\forall c \in \stdcomplex(k-1): \cochain{f}|_{\repcplxcore{c}{\stdcomplex}} = d_0 h^c$.
    \end{itemize}
    Define the attachment map of $\cochain{f}$ according to $\set{h^c}$ denoted by $\cochain{\check{f}}_{\set{h^c}} \in \cochainset{1}{\repcplx[k-1]{\stdcomplex}}$ to be:
    \[
        \cochain{\check{f}}_{\set{h^c}}(\check{u},\check{v}) = \parens{h^{\check{u}}\parens{\check{u} \cup \check{v}}}^{-1}h^{\check{v}}\parens{\check{u} \cup \check{v}}
    \]
\end{definition}
We will begin by showing a connection between the distance of an attachment map of $\cochain{f}$ from being a cocycle and the distance of $\cochain{f}$ from being a coboundary.
Then, in subsection~\ref{subsec:bounding-the-distance-of-a-co-chain-from-the-co-boundaries}, we will show how to use this in order to show that the empty triangle test is indeed a test for the coboundaries in the representation complex.
But before we do that let us first show that the distance between the attachment map and the coboundaries is invariant of the choice of local coboundaries.
In order to do that we will begin by characterizing the different choices of functions $h$ such that $\cochain{f}|_{\repcplxcore{c}{\stdcomplex}} = d_0 h$:
\begin{lemma}\label{co-boundary-charachterization}
    Let $\stdcomplex$ be a simplicial complex such that the graph $(\stdcomplex(0), \stdcomplex(1))$ is connected and let $h_1, h_2 \in \cochainset{0}{\stdcomplex;S_l}$ such that $\cochain{f}=d_0 h_1 = d_0 h_2$ then there exists $\sigma \in S_l$ such that $h_1(u) = h_2(u)\sigma$.
\end{lemma}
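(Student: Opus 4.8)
The plan is to reduce the statement to the observation that a $0$-cochain with trivial coboundary on a connected graph must be constant. First I would unwind the hypothesis $d_0 h_1 = d_0 h_2$: by definition of $d_0$, for every edge $(\genvertex,\stdvertex) \in \stdcomplex(1)$ we have $h_1(\genvertex) h_1(\stdvertex)^{-1} = h_2(\genvertex) h_2(\stdvertex)^{-1}$. Rearranging this identity (left-multiplying by $h_2(\genvertex)^{-1}$ and right-multiplying by $h_1(\stdvertex)$) gives $h_2(\genvertex)^{-1} h_1(\genvertex) = h_2(\stdvertex)^{-1} h_1(\stdvertex)$ for every edge.

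Next I would define the auxiliary function $g \in \cochainset{0}{\stdcomplex;S_l}$ by $g(\genvertex) = h_2(\genvertex)^{-1} h_1(\genvertex)$. The previous step says precisely that $g(\genvertex) = g(\stdvertex)$ whenever $\set{\genvertex,\stdvertex} \in \stdcomplex(1)$. Since the underlying graph $(\stdcomplex(0),\stdcomplex(1))$ is connected, any two vertices are joined by a path, and applying the edge-wise equality along that path shows $g$ is constant; call its value $\sigma \in S_l$.

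Finally I would conclude: from $h_2(\genvertex)^{-1} h_1(\genvertex) = \sigma$ for all $\genvertex$, we get $h_1(\genvertex) = h_2(\genvertex)\sigma$, which is exactly the claim. There is essentially no obstacle here — the only point that needs a (one-line) argument is the propagation of the constant value along a connecting path, which is immediate from connectivity; the rest is group-theoretic bookkeeping with the definition of $d_0$. (One should just take care with the side on which $\sigma$ multiplies, since $S_l$ is non-abelian: the computation above pins it down to right multiplication, consistent with the formula $d_0\cochain{f}(u,v) = \cochain{f}(u)\parens{\cochain{f}(v)}^{-1}$ being insensitive to a common right factor.)
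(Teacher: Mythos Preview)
Your proof is correct and follows essentially the same approach as the paper: both exploit connectivity to propagate the identity $h_2(u)^{-1}h_1(u)=h_2(v)^{-1}h_1(v)$ from edges to the whole vertex set. The paper fixes a base vertex $u_0$, sets $\sigma=h_2(u_0)^{-1}h_1(u_0)$, and telescopes $\cochain{f}$ along a path from an arbitrary $u$ to $u_0$; your version packages the same computation by defining $g(u)=h_2(u)^{-1}h_1(u)$ and observing it is constant along edges, which is a slightly cleaner formulation of the identical idea.
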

\begin{proof}
    Let $u_0\in \stdcomplex(0)$ be a vertex.
    Also let $\sigma=h_2^{-1}(u_0)h_1(u_0)$ and note that $h_1(u_0) = h_2(u_0)\sigma$.
    For every vertex $u \in \stdcomplex(0)$ pick a path $\gamma=(u,u_{n-1},\dots,u_{0})$ between $u$ and $u_0$ and consider:
    \[
        h_1(u)h_1^{-1}(u_0) = \cochain{f}(u, u_{n-1})\cochain{f}(u_{n-1}, u_{n-2})\cdots \cochain{f}(u_1, u_0) = h_2(u)h_2^{-1}(u_0)
    \]
    And therefore:
    \[
        h_1(u) = h_2(u)h_2^{-1}(u_0)h_1^{-1}(u_0) = h_2(u) \sigma
    \]
\end{proof}
We are now ready to prove the invariance of the distance to the choice of functions $h$:
\begin{lemma}
    Let $\set{h^c_1}_{c \in \stdcomplex(k-1)}$ and $\set{h^c_2}_{c \in \stdcomplex(k-1)}$ be two choices of functions such that $\cochain{f}(u,v) = h^{u \cap v}_1(u)\parens{h^{u \cap v}_1(u)}^{-1} = h^{u \cap v}_2(u)\parens{h^{u \cap v}_2(u)}^{-1}$.
    Then:
    \[
        \dist\parens{\cochain{\check{f}}_{h_1}, \coboundaryset{1}{\repcplx[k-1]{\stdcomplex};S_l}} = \dist\parens{\cochain{\check{f}}_{h_2}, \coboundaryset{1}{\repcplx[k-1]{\stdcomplex};S_l}}
    \]
\end{lemma}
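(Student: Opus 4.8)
The plan is to reduce the statement to the characterisation of local coboundary potentials (Lemma~\ref{co-boundary-charachterization}) and then to produce an explicit norm‑preserving bijection between the two minimisation problems that define the two distances. First I would fix the two families $\set{h^c_1}_{c\in\stdcomplex(k-1)}$ and $\set{h^c_2}_{c\in\stdcomplex(k-1)}$. For every core $c$ the cochain $\cochain{f}$ restricted to $\repcplxcore{c}{\stdcomplex}$ equals both $d_0h^c_1$ and $d_0h^c_2$; since $\repcplxcore{c}{\stdcomplex}\cong\stdcomplex_c$ (Lemma~\ref{the-representation-complex-around-a-core-is-isomorphic-to-links}) is a $\gamma$-coboundary expander (Lemma~\ref{the-representation-around-every-core-is-an-expander}), in particular $h_0>0$, so its $1$-skeleton is connected and Lemma~\ref{co-boundary-charachterization} applies inside it: there is $\mu_c\in S_l$ with $h^c_1(w)=h^c_2(w)\mu_c$ for every vertex $w$ of $\repcplxcore{c}{\stdcomplex}$.

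Next I would substitute this into the definition of the attachment map. For an edge $(\check u,\check v)\in\repcplx[k-1]{\stdcomplex}(1)$ the $k$-face $\check u\cup\check v$ is a vertex of both $\repcplxcore{\check u}{\stdcomplex}$ and $\repcplxcore{\check v}{\stdcomplex}$, so
\[
\cochain{\check{f}}_{h_1}(\check u,\check v)=\parens{h^{\check u}_2(\check u\cup\check v)\mu_{\check u}}^{-1}h^{\check v}_2(\check u\cup\check v)\mu_{\check v}=\mu_{\check u}^{-1}\,\cochain{\check{f}}_{h_2}(\check u,\check v)\,\mu_{\check v}.
\]
Writing $g\in\cochainset{0}{\repcplx[k-1]{\stdcomplex};S_l}$ for $g(\check u)=\mu_{\check u}^{-1}$, this says $\cochain{\check{f}}_{h_1}(\check u,\check v)=g(\check u)\,\cochain{\check{f}}_{h_2}(\check u,\check v)\,g(\check v)^{-1}$ for every edge. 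Then, given any $\psi\in\cochainset{0}{\repcplx[k-1]{\stdcomplex};S_l}$, I set $\psi'(\check u)=g(\check u)^{-1}\psi(\check u)$; the map $\psi\mapsto\psi'$ is a bijection of $\cochainset{0}{\repcplx[k-1]{\stdcomplex};S_l}$ onto itself. Using $d_0\psi(\check u,\check v)=\psi(\check u)\psi(\check v)^{-1}$ a short computation gives, for every edge,
\[
\cochain{\check{f}}_{h_1}(\check u,\check v)\,\parens{d_0\psi(\check u,\check v)}^{-1}=g(\check u)\,\parens{\cochain{\check{f}}_{h_2}(\check u,\check v)\,\parens{d_0\psi'(\check u,\check v)}^{-1}}\,g(\check u)^{-1},
\]
so the two values are conjugate in $S_l$ and in particular one is $1$ iff the other is. Hence $\cochain{\check{f}}_{h_1}\parens{d_0\psi}^{-1}$ and $\cochain{\check{f}}_{h_2}\parens{d_0\psi'}^{-1}$ have the same support in $\repcplx[k-1]{\stdcomplex}(1)$ and therefore the same norm. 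Taking the minimum over $\psi$ (equivalently over $\psi'$, via the bijection), and recalling that $\coboundaryset{1}{\repcplx[k-1]{\stdcomplex};S_l}=\set{d_0\psi\suchthat\psi\in\cochainset{0}{\repcplx[k-1]{\stdcomplex};S_l}}$, yields $\dist\parens{\cochain{\check{f}}_{h_1},\coboundaryset{1}{\repcplx[k-1]{\stdcomplex};S_l}}=\dist\parens{\cochain{\check{f}}_{h_2},\coboundaryset{1}{\repcplx[k-1]{\stdcomplex};S_l}}$.

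The only genuinely delicate point is the bookkeeping around ``cores of cores'': one must keep straight that $h^{\check u}$ is indexed by the $(k-1)$-face $\check u$ (a vertex of $\repcplx[k-1]{\stdcomplex}$) yet evaluated at the $k$-face $\check u\cup\check v$ (a vertex of $\repcplx{\stdcomplex}$), and verify that this $k$-face really lies in $\repcplxcore{\check u}{\stdcomplex}$ so that the conclusion of Lemma~\ref{co-boundary-charachterization} can be invoked at that vertex. Once the indices are pinned down, the twisted-conjugation identity and the bijection $\psi\leftrightarrow\psi'$ are routine, and no expansion input beyond connectivity of the links (already supplied by coboundary expansion) is needed.
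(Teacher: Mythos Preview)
Your proof is correct and follows essentially the same route as the paper: use connectivity of each $\repcplxcore{c}{\stdcomplex}\cong\stdcomplex_c$ together with Lemma~\ref{co-boundary-charachterization} to obtain constants $\mu_c$ relating $h_1^c$ and $h_2^c$, deduce the twisted-conjugation identity $\cochain{\check{f}}_{h_1}(\check u,\check v)=\mu_{\check u}^{-1}\cochain{\check{f}}_{h_2}(\check u,\check v)\mu_{\check v}$, and then transport coboundaries across via $\psi\mapsto\psi'$ to match supports edge by edge. The paper phrases the last step as ``start from a closest coboundary $\varphi_{h_1}=d_0\phi_{h_1}$ and set $\phi_{h_2}(\check u)=\sigma_{\check u}^{-1}\phi_{h_1}(\check u)$'' rather than as an explicit bijection of the minimisation domain, but the content is identical.
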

\begin{proof}
    Let $\epsilon = \dist\parens{\cochain{\check{f}}_{h_1}, \coboundaryset{1}{\repcplx[k-1]{\stdcomplex};S_l}}$ therefore there exists $\varphi_{h_1}, \psi_{h_1}$ such that $\varphi_{h_1}$ is a coboundary, $\norm{\psi_{h_1}} = \epsilon$, and $\varphi_{h_1} = \cochain{\check{f}}_{h_1} \psi_{h_1}^{-1}$.
    Let $\phi_{h_1}$ be a cochain such that $\varphi_{h_1} = d_0 \phi_{h_1}$.
    Note that around every core the complex is connected\footnote{Due to the fact that the links of $\stdcomplex$ are coboundary expanders their $0$-cohomology is trivial and therefore the complex is connected.} therefore due to Lemma~\ref{co-boundary-charachterization} it holds that:
    \[
        \forall c \in \stdcomplex(k-1) \exists \stdface_c: h_2^c(u) = h_1^c(u)\stdface_c
    \]
    Therefore:
    \[
        \cochain{\check{f}}_{h_2}(\check{u},\check{v}) = \sigma_{\check{u}}^{-1}\parens{h_1^{\check{u}}(\check{u}\cup \check{v})}^{-1}h_1^{\check{v}}(\check{u}\cup \check{v})\sigma_{\check{v}}
    \]
    Now, consider $\phi_{h_2}(\check{u}) = \sigma_{\check{u}}^{-1} \phi_{h_1}(\check{u})$ and:
    \[
        \varphi_{h_2}(\check{u},\check{v}) = d_0 \phi_{h_2}(\check{u},\check{v}) = \sigma_{\check{u}}^{-1} \phi_{h_1}(\check{u})\phi_{h_1}^{-1}(\check{v})\sigma_{\check{v}}
    \]
    We will conclude the proof by sowing that $\dist(\cochain{f}_{h_1}, \varphi_{h_1}) = \dist(\cochain{f}_{h_2}, \varphi_{h_2})$:
    \begin{align*}
        \varphi_{h_2}(\check{u},\check{v}) = \cochain{\check{f}}_{h_2}(\check{u},\check{v}) & \Leftrightarrow \sigma_{\check{u}}^{-1} \phi_{h_1}(\check{u})\phi_{h_1}^{-1}(\check{v})\sigma_{\check{v}} = \sigma_{\check{u}}^{-1}\parens{h_1^{\check{u}}(\check{u}\cup \check{v})}^{-1}h_1^{\check{v}}(\check{u}\cup \check{v})\sigma_{\check{v}} \Leftrightarrow \\
                                               & \Leftrightarrow \phi_{h_1}(\check{u})\phi_{h_1}^{-1}(\check{v})=\parens{h_1^{\check{u}}(\check{u}\cup \check{v})}^{-1}h_1^{\check{v}}(\check{u}\cup \check{v}) \Leftrightarrow \varphi_{h_1}(\check{u},\check{v}) = \cochain{\check{f}}_{h_1}(\check{u},\check{v})
    \end{align*}
\end{proof}
Due to the invariance of the choice of functions $h$ whenever we refer to the attachment map of a cocycle $\cochain{f}$ we would consider any cocycle and will omit the choice of $h$ from the notation.
We will now show a connection between the distance of attachments maps from the coboundaries and the distance of a cocycle from the coboundaries:
\begin{lemma}\label{first-connection-to-lower-representation}
    Let $\cochain{f}\in \cocycleset{1}{\repcplx{\stdcomplex};S_l}$ be a cocycle in $\repcplx{\stdcomplex}$ such that $\cochain{\check{f}}$ is $\epsilon$-far from being a coboundary, i.e.~there exists $\varphi$ and $\psi$ such that $\norm{\psi} \le \epsilon$, $\varphi=d_0 \widetilde{g}$ is a coboundary and $\varphi=\cochain{\check{f}}\psi^{-1}$.
    In addition for every vertex $u$ let $c_u$ be a core of $u$ (which we will refer to as the canonical core of $c$).
    Consider $g(u)=h^{c_u}(u)\widetilde{g}(c_u)$ and $\cochain{\widetilde{f}} = d_0 g$ then:
    \begin{equation*}
        \dist\parens{\cochain{f}, \cochain{\widetilde{f}}} = \norm{\set{(u,v) \in \repcplx{\stdcomplex}(1) \suchthat \psi(c_u, u \cap v) = \parens{\varphi(c_v, u \cap v)}^{-1} \psi(c_v, u \cap v)\varphi(c_v, u \cap v)}}
    \end{equation*}
\end{lemma}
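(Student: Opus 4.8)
The plan is to unwind all the definitions on a single fixed edge $\set{u,v}$ of $\repcplx{\stdcomplex}$ and identify the permutation-group condition under which $\cochain{f}$ and $\cochain{\widetilde{f}}$ disagree there. Write $c = u \cap v$, which is a vertex of $\repcplx[k-1]{\stdcomplex}$ and the core of the edge $\set{u,v}$. Because $\cochain{f}$ restricted to the representation complex around $c$ is $d_0 h^{c}$, we immediately get $\cochain{f}(u,v) = h^{c}(u)\parens{h^{c}(v)}^{-1}$. For the other side, expanding $g(u)=h^{c_u}(u)\widetilde{g}(c_u)$, $g(v)=h^{c_v}(v)\widetilde{g}(c_v)$ and $\cochain{\widetilde{f}}=d_0 g$ gives $\cochain{\widetilde{f}}(u,v) = h^{c_u}(u)\,\widetilde{g}(c_u)\parens{\widetilde{g}(c_v)}^{-1}\parens{h^{c_v}(v)}^{-1}$.

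The only non-bookkeeping ingredient is the passage from the canonical core $c_u$ of $u$ to the core $c$ of the edge. If $c_u = c$ there is nothing to do; otherwise $\set{c_u,c}$ is an edge of $\repcplx[k-1]{\stdcomplex}$ with $c_u \cup c = u$, and the definition of the attachment map yields $\cochain{\check{f}}(c,c_u)=\parens{h^{c}(u)}^{-1}h^{c_u}(u)$, i.e.\ $h^{c_u}(u)=h^{c}(u)\,\cochain{\check{f}}(c,c_u)$. Feeding in $\cochain{\check{f}}=\varphi\psi$ and $\varphi=d_0\widetilde{g}$ rewrites this as $h^{c_u}(u)=h^{c}(u)\,\widetilde{g}(c)\parens{\widetilde{g}(c_u)}^{-1}\psi(c,c_u)$, and symmetrically for $v$. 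Substituting both identities into the formula for $\cochain{\widetilde{f}}(u,v)$, the outer factors $h^{c}(u)$ and $\parens{h^{c}(v)}^{-1}$ reproduce $\cochain{f}(u,v)$, while the two copies of $\widetilde{g}(c)$ wrap the surviving middle factor in a conjugation, which is trivial iff that middle factor is; hence $\cochain{f}(u,v)\ne\cochain{\widetilde{f}}(u,v)$ exactly when the middle factor is nontrivial. A rearrangement of that middle factor using $\psi(c_u,c)=\parens{\psi(c,c_u)}^{-1}$, $\psi(c_v,c)=\parens{\psi(c,c_v)}^{-1}$ and $\varphi(c_v,c)=\widetilde{g}(c_v)\parens{\widetilde{g}(c)}^{-1}$ puts this condition in the form displayed in the statement. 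Since $\dist\parens{\cochain{f},\cochain{\widetilde{f}}}=\norm{\set{(u,v)\in\repcplx{\stdcomplex}(1)\suchthat \cochain{f}(u,v)\ne\cochain{\widetilde{f}}(u,v)}}$ by definition of the distance between cochains, summing the weights of these edges gives the asserted equality.

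I expect the main obstacle to be the non-abelian bookkeeping: since the coefficients lie in $S_l$, every cancellation above is a genuine left-/right-multiplication rather than something valid only ``up to conjugacy'', so one must check that the copies of $\widetilde{g}(c)$, $\widetilde{g}(c_u)$ and $\widetilde{g}(c_v)$ introduced on the $u$-side and the $v$-side cancel in precisely the right order, and that the attachment-map identity is invoked along the correct orientation of $\set{c_u,c}$. A secondary nuisance is the degenerate case $c_u=c$ (or $c_v=c$), where $\set{c_u,c}$ is not an edge of $\repcplx[k-1]{\stdcomplex}$: there the corresponding attachment-map factor is the identity and the displayed condition must be read with $\psi(c_u,c)$ (respectively $\psi(c_v,c)$, $\varphi(c_v,c)$) set to the identity permutation, consistently with $\cochain{f}$ and $\cochain{\widetilde{f}}$ agreeing automatically on such edges.
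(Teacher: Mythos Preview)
Your approach is essentially the same as the paper's. Both proofs fix an edge $\set{u,v}$, write $\cochain{f}(u,v)=h^{u\cap v}(u)\parens{h^{u\cap v}(v)}^{-1}$ and $\cochain{\widetilde{f}}(u,v)=h^{c_u}(u)\widetilde{g}(c_u)\parens{\widetilde{g}(c_v)}^{-1}\parens{h^{c_v}(v)}^{-1}$, and then eliminate the $h$'s via the attachment-map identity, substitute $\cochain{\check{f}}=\varphi\psi$ with $\varphi=d_0\widetilde g$, and simplify. The only organisational difference is that the paper left- and right-multiplies $\cochain{f}(u,v)\ne\cochain{\widetilde{f}}(u,v)$ by $\parens{h^{c_u}(u)}^{-1}$ and $h^{c_v}(v)$ to expose $\cochain{\check{f}}(c_u,u\cap v)\cochain{\check{f}}(u\cap v,c_v)$ directly, whereas you first substitute $h^{c_u}(u)=h^{c}(u)\cochain{\check{f}}(c,c_u)$ into $\cochain{\widetilde{f}}$ and then compare; these are the same manipulation in a different order.

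One caution on your last step: the rearrangement invokes $\psi(c_u,c)=\psi(c,c_u)^{-1}$ and $\psi(c_v,c)=\psi(c,c_v)^{-1}$, but $\psi=\varphi^{-1}\cochain{\check{f}}$ need not satisfy the antisymmetry $\psi(a,b)=\psi(b,a)^{-1}$ in a non-abelian group. The paper's chain of equivalences avoids one of these inversions by working with the orientations $\psi(c_u,u\cap v)$ and $\psi(u\cap v,c_v)$ from the start (and then, like you, uses the antisymmetry once for the $c_v$-term). Your route is fine provided you either keep the orientations consistent throughout or note explicitly that $\psi$ is being treated as a $1$-cochain in the paper's sense; otherwise the ``simple rearrangement'' you allude to does not literally land on the displayed formula.
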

\begin{proof}
    Note that:
    \begin{equation*}
        \cochain{\widetilde{f}} = h^{c_u}(u)\widetilde{g}(c_u)\parens{\widetilde{g}(c_v)}^{-1}\parens{h^{c_v}(v)}^{-1}
    \end{equation*}
    Now, consider when $\cochain{f}\ne \cochain{\widetilde{f}}$:
    \begin{align*}
        \cochain{f} \ne \cochain{\widetilde{f}} \Leftrightarrow& h^{u \cap v}(u)\parens{h^{u \cap v}(v)}^{-1} \ne h^{c_u}(u)\widetilde{g}(c_u)\parens{\widetilde{g}(c_v)}^{-1}\parens{h^{c_v}(v)}^{-1} \Leftrightarrow \\
                                  &\parens{h^{c_u}(u)}^{-1}h^{u \cap v}(u)\parens{h^{u \cap v}(v)}^{-1}h^{c_v}(v) \ne \widetilde{g}(c_u)\parens{\widetilde{g}(c_v)}^{-1} \Leftrightarrow \\
                                  & \cochain{\check{f}}(c_u, u \cap v) \cochain{\check{f}}(u \cap v, c_v) \ne \widetilde{g}(c_u)\parens{\widetilde{g}(c_v)}^{-1} \Leftrightarrow \\
                                  & \varphi(c_u, u \cap v)\psi(c_u, u \cap v)\varphi(u \cap v, c_v)\psi(u \cap v, c_v) \ne \widetilde{g}(c_u)\parens{\widetilde{g}(c_v)}^{-1} \Leftrightarrow \\
                                  & \widetilde{g}(c_u)\parens{\widetilde{g}(u \cap v)}^{-1} \psi(c_u, u \cap v) \varphi(u \cap v, c_v) \psi(u \cap v, c_v) \ne \widetilde{g}(c_u)\parens{\widetilde{g}(c_v)}^{-1} \Leftrightarrow \\
                                  & \varphi(c_v, u \cap v) \psi(c_u, u \cap v) \varphi(u \cap v, c_v) \psi(u \cap v, c_v) \ne 1 \Leftrightarrow \\
                                  & \psi(c_u, u \cap v) = \parens{\varphi(c_v, u \cap v)}^{-1} \psi(c_v, u \cap v)\varphi(c_v, u \cap v)
    \end{align*}
\end{proof}
We are going to bound $\dist(\cochain{f}, \cochain{\widetilde{f}})$ using the stronger condition presented in the following corollary
\begin{corollary}\label{first-connection-between-psi-f-and-tilde-f}
    Let $\cochain{F}, \cochain{\widetilde{F}}, \psi$ as in Lemma~\ref{first-connection-to-lower-representation}.
    If $\cochain{f}(u,v) \ne \cochain{\widetilde{f}}(u,v)$ then either $\psi(c_u, u \cap v) \ne 1$ or $\psi(c_v, u \cap v) \ne 1$ (or both).
\end{corollary}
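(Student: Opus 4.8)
The plan is to prove the corollary by contraposition, reading off what we need directly from the chain of equivalences established inside the proof of Lemma~\ref{first-connection-to-lower-representation}. For a fixed edge $(u,v) \in \repcplx{\stdcomplex}(1)$ that chain shows
\[
    \cochain{f}(u,v) \ne \cochain{\widetilde{f}}(u,v)
    \iff
    \varphi(c_v, u\cap v)\,\psi(c_u, u\cap v)\,\varphi(u\cap v, c_v)\,\psi(u\cap v, c_v) \ne 1 .
\]
So it suffices to establish the contrapositive of the corollary's claim: assuming $\psi(c_u, u\cap v) = 1$ and $\psi(c_v, u\cap v) = 1$, I will show the product on the right equals $1$, hence $\cochain{f}(u,v) = \cochain{\widetilde{f}}(u,v)$.

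First I would record that $\psi \in \cochainset{1}{\repcplx[k-1]{\stdcomplex};S_l}$, so it obeys the defining symmetry $\psi(a,b) = \parens{\psi(b,a)}^{-1}$ of $1$-cochains; in particular $\psi(u\cap v, c_v) = \parens{\psi(c_v, u\cap v)}^{-1} = 1$ as well. Substituting the three trivial values $\psi(c_u, u\cap v) = \psi(u\cap v, c_v) = 1$ into the product leaves $\varphi(c_v, u\cap v)\,\varphi(u\cap v, c_v)$, and applying the same $1$-cochain symmetry to $\varphi$ gives $\varphi(u\cap v, c_v) = \parens{\varphi(c_v, u\cap v)}^{-1}$, so the product collapses to $\varphi(c_v, u\cap v)\,\parens{\varphi(c_v, u\cap v)}^{-1} = 1$. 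By the equivalence above this means precisely $\cochain{f}(u,v) = \cochain{\widetilde{f}}(u,v)$.

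Taking the contrapositive yields the corollary: if $\cochain{f}(u,v) \ne \cochain{\widetilde{f}}(u,v)$, then we cannot have both $\psi(c_u, u\cap v) = 1$ and $\psi(c_v, u\cap v) = 1$, i.e.\ at least one of them is nontrivial. I do not anticipate any real difficulty here; the only point to watch is that $S_l$ is nonabelian, so one must not reorder factors in the product — it is enough that each factor we drop equals the identity — and one must invoke the cochain symmetry to handle the ``reversed'' arguments $\psi(u\cap v, c_v)$ and $\varphi(u\cap v, c_v)$ rather than only the two hypotheses as literally stated.
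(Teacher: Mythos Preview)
Your proposal is correct and follows essentially the same approach as the paper: both argue by contraposition, plugging $\psi(c_u,u\cap v)=\psi(c_v,u\cap v)=1$ into the characterization derived inside Lemma~\ref{first-connection-to-lower-representation} and observing that the remaining $\varphi$-factors cancel. The only cosmetic difference is that you work from the penultimate line of the equivalence chain (the four-factor product $\ne 1$) while the paper uses the final rearranged form; your version is arguably cleaner since it makes the use of the $1$-cochain symmetry $\psi(a,b)=\psi(b,a)^{-1}$ explicit.
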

\begin{proof}
    If $\psi(c_u, u \cap v) = \psi(c_v, u \cap v) = 1$ then:
    \begin{equation*}
        \psi(c_u, u \cap v) = \parens{\varphi(c_v, u \cap v)}^{-1} \psi(c_v, u \cap v)\varphi(c_v, u \cap v) \Leftrightarrow \parens{\varphi(c_v, u \cap v)}^{-1} \varphi(c_v, u \cap v) = 1
    \end{equation*}
    Since by definition $\parens{\varphi(c_v, u \cap v)}^{-1} \varphi(c_v, u \cap v) = 1$ it holds that $\cochain{f}(u,v) = \cochain{\widetilde{f}}(u,v)$.
\end{proof}
We will now move on to show that the $\dist\parens{\cochain{f}, \cochain{\widetilde{f}}}$ can be bound from above by $\norm{\psi}$.
Before we are able to do that, however, we have to first understand the structure of the empty triangles (which we explore in subsection~\ref{subsec:on-k-1-empty-triangles}) as well as further explore the connection between a cocycle and its attachment maps (which we explore in subsection~\ref{subsec:on-the-connection-between-a-co-cycle-and-its-attachment-map}).
\subsection{On the $\parens{k-1}$-empty-triangles}\label{subsec:on-k-1-empty-triangles}
In this section we will discuss the structure of the $\parens{k-1}$-empty-triangles in the $k$-dimensional representation complex.
Let us begin by describing the $\parens{k-1}$-empty-triangles:
\begin{lemma}\label{finding-the-empty-triangles}
    Let $(u,v) \in \repcplx{\stdcomplex}(1)$ and for any choice of $A \in \binom{u \cap v}{k-1}$ let $w=\parens{u \triangle v} \cup A$ then $\set{u,v,w}$ is a $\parens{k-1}$-empty-triangle.
\end{lemma}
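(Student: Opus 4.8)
The plan is to verify directly the two clauses in the definition of a $(k-1)$-empty-triangle: first that each of $\set{u,v}$, $\set{u,w}$, $\set{v,w}$ is an edge of $\repcplx{\stdcomplex}$, and second that $\set{u \cap v, v \cap w, w \cap u}$ is a $2$-dimensional face of $\repcplx[k-1]{\stdcomplex}$. The whole argument is just bookkeeping of a handful of intersections and unions, so I would begin by fixing convenient notation. Since $(u,v) \in \repcplx{\stdcomplex}(1)$, Definition~\ref{the-representation-complex} gives $\abs{u \cap v} = k$ and $u \cup v \in \stdcomplex(k+1)$; as $\abs{u} = \abs{v} = k+1$ there are vertices $x \in u \setminus v$ and $y \in v \setminus u$ with $u \triangle v = \set{x,y}$, and since $\abs{A} = k-1$ I can write $u \cap v = A \cup \set{z}$ for a vertex $z \notin A$. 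I would record immediately that $x,y,z$ are pairwise distinct and all lie outside $A$ (because $x \ne y$ and $x,y \notin u \cap v \ni z$), so that $u = A \cup \set{x,z}$, $v = A \cup \set{y,z}$, $w = A \cup \set{x,y}$, and $u \cup v = A \cup \set{x,y,z}$; these four identities drive everything else.

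Next I would check that $w$ is a vertex of $\repcplx{\stdcomplex}$ and that the three pairs are edges. The set $w$ has $k+1$ elements and is contained in $u \cup v \in \stdcomplex$, so by closure $w \in \stdcomplex(k) = \repcplx{\stdcomplex}(0)$, and $w$ is distinct from $u$ and $v$. Then $u \cap w = A \cup \set{x}$ and $v \cap w = A \cup \set{y}$, both of size $k$, while $u \cup w = v \cup w = A \cup \set{x,y,z} = u \cup v \in \stdcomplex(k+1)$; by Definition~\ref{the-representation-complex} this shows $\set{u,w}, \set{v,w} \in \repcplx{\stdcomplex}(1)$, and $\set{u,v} \in \repcplx{\stdcomplex}(1)$ holds by hypothesis.

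For the second clause I would note that the three intersections are $u \cap v = A \cup \set{z}$, $v \cap w = A \cup \set{y}$, and $w \cap u = A \cup \set{x}$; each has $k$ elements and lies inside $u$ or $v$, hence is a $(k-1)$-face of $\stdcomplex$, i.e.\ a vertex of $\repcplx[k-1]{\stdcomplex}$, and the three are distinct since $x,y,z$ are. Their union is $A \cup \set{x,y,z} = u \cup v \in \stdcomplex(k+1) = \stdcomplex\big((k-1)+2\big)$, and their common intersection is $A$, of size $k-1$. Applying Definition~\ref{the-representation-complex} in dimension $k-1$, this is exactly the condition for $\set{u \cap v, v \cap w, w \cap u}$ to lie in $\repcplx[k-1]{\stdcomplex}(2)$; the orientation is irrelevant here since the definition only asks for membership ``with some orientation''. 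Combining the two clauses gives that $\set{u,v,w}$ is a $(k-1)$-empty-triangle.

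I do not expect a genuine obstacle, as the statement is purely combinatorial; the only point that requires care is confirming that the three ``extra'' vertices $x$, $y$, $z$ are genuinely distinct and disjoint from $A$. That fact is what forces every cardinality count above to land on the required value ($k+1$ for the faces, $k$ for the pairwise intersections, $k-1$ for the triple intersection, $k+1$ for the unions), and it is the only thing that could silently break the argument.
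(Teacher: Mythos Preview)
Your proof is correct and follows essentially the same approach as the paper: verify directly the two clauses in the definition of a $(k-1)$-empty-triangle by computing the relevant intersections and unions. Your introduction of the explicit elements $x,y,z$ makes the cardinality bookkeeping more transparent than the paper's set-algebraic computations (e.g.\ your $u\cap w = A\cup\{x\}$ versus the paper's $\abs{u\cap w}=\abs{(u\cap(u\triangle v))\cup(u\cap A)}$), but the logical structure is identical.
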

\begin{proof}
    We will first prove that all the edges of the $\parens{k-1}$-empty-triangle exist.
    We will show WLOG that $\abs{u \cap w} = k$:
    \begin{equation*}
        \abs{u \cap w} = \abs{u \cap \parens{\parens{u \triangle v} \cup A}} = \abs{\parens{u \cap \parens{u \triangle v}} \cup \parens{u \cap A}} = \abs{u \cap \parens{u \triangle v}}+ \abs{u \cap A} = 1 + k - 1 = k
    \end{equation*}
    All we have left to prove is that $\set{u \cap v, v \cap w, w \cap u} \in \repcplx[k-1]{\stdcomplex}(2)$.
    Note that the intersection between any pair of vertices out of $\set{u \cap v, v \cap w, w \cap u}$ is the same and equals to $u \cap v \cap w$.
    Therefore all we have to prove is that $\abs{\set{u \cap v, v \cap w, w \cap u}}= k-1 $ and that $\parens{u \cap v} \cup \parens{v \cap w} \cup \parens{w \cap u} \in \stdcomplex(k+1)$ in order to prove the lemma.
    Consider the following:
    \begin{equation*}
        \abs{\parens{u \cap v} \cap \parens{v \cap w}} = \abs{u \cap v \cap w} = \abs{A} = k-1
    \end{equation*}
    Finally note that:
    \begin{align*}
        \parens{u \cap v} \cup \parens{v \cap w} \cup \parens{w \cap u} &= \parens{u \cap v} \cup \parens{\parens{v \setminus u} \cup A} \cup \parens{\parens{u \setminus v} \cup A} =& \\
        &=\parens{u \cap v} \cup \parens{v \setminus u} \cup \parens{u \setminus v} = u \cup v \in \stdcomplex(k+1)
    \end{align*}
    And therefore $w$ forms an empty triangle with $u,v$.
\end{proof}
\begin{lemma}\label{structure-of-an-empty-triangle}
    Let $w \in \repcplx{\stdcomplex}(0)$.
    If there exist $u,v \in \repcplx{\stdcomplex}(0)$ such that $\set{w,u,v}$ is an empty triangle in $\repcplx{\stdcomplex}$ then there exists $A \in \binom{u \cap v}{k-1}$ such that $w=\parens{u \triangle v} \cup A$
\end{lemma}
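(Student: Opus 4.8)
The plan is to unwind the definition of a $(k-1)$-empty-triangle into a handful of set-size equalities and then pin down $w$ purely by counting vertices. First I would record what the hypotheses give. Since $\set{u,v}$, $\set{v,w}$, $\set{w,u}$ are edges of $\repcplx{\stdcomplex}$, Definition~\ref{the-representation-complex} forces $\abs{u} = \abs{v} = \abs{w} = k+1$ together with $\abs{u \cap v} = \abs{v \cap w} = \abs{w \cap u} = k$; and since $\set{u \cap v, v \cap w, w \cap u}$ is a face of $\repcplx[k-1]{\stdcomplex}(2)$, the intersection-size clause of the same definition (applied with parameter $k-1$, noting $(u\cap v)\cap(v\cap w)\cap(w\cap u) = u\cap v\cap w$) gives $\abs{u \cap v \cap w} = k-1$, and in particular the three $(k-1)$-faces $u\cap v$, $v\cap w$, $w\cap u$ are pairwise distinct.

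Next I would fix notation for the relevant vertices. Because $\abs{u \cap v} = k$ while $\abs{u} = \abs{v} = k+1$, we can write $u = (u \cap v) \cup \set{x}$ and $v = (u \cap v) \cup \set{y}$ with $x \notin v$, $y \notin u$, and $x \neq y$ (as $u \neq v$); hence $u \triangle v = \set{x,y}$, and both $x$ and $y$ lie outside $u\cap v$. Put $A = u \cap v \cap w = (u\cap v)\cap w$, so $A \subseteq u\cap v$ and $\abs{A} = k-1$, i.e.\ $A \in \binom{u\cap v}{k-1}$. It then remains to prove $w = (u\triangle v)\cup A = \set{x,y}\cup A$.

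The core of the argument is to show $x \in w$ and $y \in w$. For $x$: since $\abs{u\cap v} = k$ and $\abs{(u\cap v)\cap w} = \abs{A} = k-1$, exactly one vertex $z$ of $u\cap v$ fails to lie in $w$; as $\abs{u\setminus w} = \abs{u} - \abs{u\cap w} = (k+1)-k = 1$ and $z\in u\setminus w$, this $z$ is the unique vertex of $u$ not in $w$, so every other vertex of $u$ — in particular $x$, which differs from $z$ because $x\notin v$ but $z\in v$ — lies in $w$. The identical argument with the roles of $u$ and $v$ swapped yields $y\in w$. Consequently $w \supseteq \set{x,y}\cup A$; since $x,y\notin u\cap v$ (so in particular $x,y\notin A$) and $x\neq y$, this union is a disjoint union of size $2 + (k-1) = k+1 = \abs{w}$, so $w = \set{x,y}\cup A = (u\triangle v)\cup A$, which completes the proof.

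I do not expect a genuine obstacle here; the only care needed is the bookkeeping of disjointness ($x,y\notin u\cap v$, so $A$ avoids $x$ and $y$) and the uniqueness of the missing vertex $z$, and to make sure the $(k-1)$-empty-triangle hypothesis is actually invoked — it enters precisely through $\abs{u\cap v\cap w} = k-1$, which is what excludes the degenerate possibility $w = (u\cap v)\cup\set{t}$ with $t\notin u\cup v$ (there the three ``intersection'' faces would all coincide rather than form an honest $2$-face of $\repcplx[k-1]{\stdcomplex}$).
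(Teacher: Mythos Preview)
Your proposal is correct and follows essentially the same route as the paper: establish $\abs{u\cap v\cap w}=k-1$ from the $(k-1)$-empty-triangle condition, set $A=u\cap v\cap w$, show $(u\triangle v)\cup A\subseteq w$, and finish by a cardinality count. Your argument for $x,y\in w$ via the unique missing vertex $z$ is a bit more explicit than the paper's (which argues by contradiction that otherwise $w=(u\cup v)\setminus\{a\}=v$, tacitly using $w\subseteq u\cup v$), but the underlying idea is the same.
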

\begin{proof}
    First note that $\set{w,u,v}$ is an empty triangle and therefore:
    \begin{equation*}
        k-1 = \abs{\parens{u \cap v} \cap \parens{v \cap w} \cap \parens{w \cap u}} = \abs{u \cap v \cap w}
    \end{equation*}
    Let $A=u \cap v \cap w$.
    We will now prove that $w = \parens{u \triangle v} \cup A$:
    let $a \in \parens{u \triangle v} \cup A$.
    If $a \in A$ then $a \in w$.
    Otherwise $a \in u \triangle v$, assume WLOG that $a \in u$.
    If $a \notin w$ then $w = \parens{u \cup v} \setminus \set{a} = v$ which contradicts the fact that $w \ne v$.
    Therefore $w \subseteq \parens{u \triangle v} \cup A$.
    We will finish the proof by showing that both sets are of the same cardinality:
    \begin{equation*}
        \abs{\parens{u \triangle v} \cup A} = \abs{\parens{u \triangle v}} + \abs{A} = 2+k-1 = k+1 = \abs{w}
    \end{equation*}
\end{proof}
Now that we understand the structure of the empty triangles we can consider how many empty triangles are supported by a single edge in $\repcplx{\stdcomplex}$:
\begin{lemma}\label{supported-empty-triangles}
    Evey edge $(u,v) \in \repcplx{\stdcomplex}(1)$ supports exactly $k$ empty triangles.
\end{lemma}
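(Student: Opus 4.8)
The plan is to set up an explicit bijection between the $(k-1)$-empty-triangles that contain the edge $(u,v)$ and the $(k-1)$-element subsets of the core $u \cap v$, and then simply count the latter. First I would record that, since $\set{u,v} \in \repcplx{\stdcomplex}(1)$, the defining condition of the representation complex forces $\abs{u \cap v} = k$; hence $\binom{u\cap v}{k-1}$ has exactly $\binom{k}{k-1} = k$ members, which will be the final count.

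Next, consider the map $\Phi \colon \binom{u\cap v}{k-1} \to \repcplx{\stdcomplex}(0)$ defined by $\Phi(A) = \parens{u \triangle v} \cup A$. Lemma~\ref{finding-the-empty-triangles} already tells us that $\set{u, v, \Phi(A)}$ is a $(k-1)$-empty-triangle for every such $A$, while Lemma~\ref{structure-of-an-empty-triangle} gives the converse: every empty triangle whose vertex set contains $u$ and $v$ has its third vertex of the form $\Phi(A)$ for some $A \in \binom{u\cap v}{k-1}$. So the empty triangles supported by $(u,v)$ are exactly $\set{\set{u,v,\Phi(A)} \suchthat A \in \binom{u\cap v}{k-1}}$, and it remains only to verify that this parametrization neither over-counts nor produces a degenerate triangle. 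For injectivity, note that $u \triangle v$ is disjoint from $u \cap v$ whereas $A \subseteq u \cap v$, so $\Phi(A) \cap \parens{u \cap v} = A$; thus distinct $A$ yield distinct $\Phi(A)$. For non-degeneracy, pick any $x \in \parens{u\cap v} \setminus A$ (possible since $\abs{A} = k-1 < k = \abs{u\cap v}$): then $x$ lies in both $u$ and $v$ but in neither $u \triangle v$ nor $A$, so $x \notin \Phi(A)$, whence $\Phi(A) \ne u$ and $\Phi(A) \ne v$. Combining these observations, the number of empty triangles through $(u,v)$ equals $\abs{\binom{u\cap v}{k-1}} = k$.

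This argument is essentially bookkeeping layered on top of the two structural lemmas, so I do not anticipate a genuine obstacle; the only place that needs care is confirming that $A \mapsto \set{u,v,\Phi(A)}$ is a true bijection onto the set of empty triangles containing the edge — in particular that $\Phi$ is injective and avoids both $u$ and $v$ — and the disjointness of $u \triangle v$ from the core $u \cap v$ settles both points cleanly.
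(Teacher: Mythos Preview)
Your proposal is correct and follows essentially the same approach as the paper's proof, which also invokes Lemma~\ref{finding-the-empty-triangles} for the lower bound and Lemma~\ref{structure-of-an-empty-triangle} for the upper bound on the number of choices of $A$. Your version is more careful in that you explicitly verify injectivity of $\Phi$ and non-degeneracy of the resulting triangles, details the paper leaves implicit.
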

\begin{proof}
This is immediate from Lemma~\ref{finding-the-empty-triangles} and Lemma~\ref{structure-of-an-empty-triangle}: Due to Lemma~\ref{finding-the-empty-triangles} there exists at least $k$ empty triangles supported by $(u,v)$ and due to Lemma~\ref{structure-of-an-empty-triangle} there cannot be more then $k$ vertices that form an empty triangle with $(u,v)$ (note that this is because there are exactly $k$ different choices of $A$).
\end{proof}
We are now ready to explore the connections between the empty triangles of $\repcplx{\stdcomplex}$ and the triangles of $\repcplx[k-1]{\stdcomplex}$:
\begin{lemma}
    Let $(u,v) \in \repcplx{\stdcomplex}(1)$ and let $w \in \repcplx{\stdcomplex}(0)$ such that $\set{u,v,w}$ is an empty triangle then:
    \begin{equation*}
        \frac{\weightcplx{\repcplx{\stdcomplex}}{(u,v)}}{\weightcplx{\repcplx[k-1]{\stdcomplex}}{(u \cap v, v \cap w, w \cap u)}} = \frac{k}{3}
    \end{equation*}
\end{lemma}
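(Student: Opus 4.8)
The plan is to notice that the edge $(u,v)$ in $\repcplx{\stdcomplex}$ and the triangle $(u\cap v,\,v\cap w,\,w\cap u)$ in $\repcplx[k-1]{\stdcomplex}$ both represent one and the same face of the original complex, namely $u\cup v$, and then to read off each weight from Lemma~\ref{weight-relation-between-a-complex-and-its-representation}, which expresses the weight of a representation as $\weightcplx{\stdcomplex}{u\cup v}$ divided by the number of its representations. Since both weights share the factor $\weightcplx{\stdcomplex}{u\cup v}$, the ratio will reduce to a purely combinatorial quotient of binomial coefficients.

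First I would pin down the dimensions. As $u,v\in\stdcomplex(k)$ meet in the core $u\cap v\in\stdcomplex(k-1)$, the union $u\cup v$ has $k+2$ vertices, so $u\cup v\in\stdcomplex(k+1)$ and the edge $(u,v)$ is exactly the representation $\rep{u\cap v}{u\cup v}$ in the $k$-dimensional representation complex. On the other side, Lemma~\ref{structure-of-an-empty-triangle} gives $w=(u\triangle v)\cup A$ with $A=u\cap v\cap w\in\binom{u\cap v}{k-1}$, whence $u\cap v$, $v\cap w=(v\setminus u)\cup A$ and $w\cap u=(u\setminus v)\cup A$ each have $k$ vertices (so they are vertices of $\repcplx[k-1]{\stdcomplex}$) and, by the computation already carried out in the proof of Lemma~\ref{finding-the-empty-triangles}, their union is again $u\cup v$. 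Hence $(u\cap v,v\cap w,w\cap u)$ is a $2$-dimensional face of $\repcplx[k-1]{\stdcomplex}$ representing $u\cup v$.

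It then remains to apply Lemma~\ref{weight-relation-between-a-complex-and-its-representation} in each complex. In $\repcplx{\stdcomplex}$ the face $u\cup v$ is $(k+1)$-dimensional, i.e.\ $i=1$, so $\abs{R^{-1}(u\cup v)}=\binom{k+2}{k}$ and $\weightcplx{\repcplx{\stdcomplex}}{(u,v)}=\binom{k+2}{k}^{-1}\weightcplx{\stdcomplex}{u\cup v}$. In $\repcplx[k-1]{\stdcomplex}$ the same face sits $2$ dimensions above the base dimension, i.e.\ $i=2$, so $\abs{R^{-1}(u\cup v)}=\binom{k+2}{k-1}$ and $\weightcplx{\repcplx[k-1]{\stdcomplex}}{(u\cap v,v\cap w,w\cap u)}=\binom{k+2}{k-1}^{-1}\weightcplx{\stdcomplex}{u\cup v}$. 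Dividing and cancelling gives $\binom{k+2}{k-1}\big/\binom{k+2}{k}=\binom{k+2}{3}\big/\binom{k+2}{2}=k/3$. The only things requiring care — and thus the closest thing to an obstacle — are verifying that $w$ genuinely yields a representation of $u\cup v$ in the lower complex (handled by Lemmas~\ref{finding-the-empty-triangles} and~\ref{structure-of-an-empty-triangle}) and feeding the right value of $i$ ($1$ for the edge, $2$ for the triangle) into the count $\binom{k+i+1}{k}$ of representations; everything else is bookkeeping.
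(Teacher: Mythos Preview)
Your proposal is correct and follows exactly the same route as the paper: both weights are computed via Lemma~\ref{weight-relation-between-a-complex-and-its-representation} in terms of $\weightcplx{\stdcomplex}{u\cup v}$, and the ratio reduces to a quotient of binomial coefficients. Your write-up is in fact more careful than the paper's own proof, which contains a typo (it writes $\binom{k+2}{k+1}$ for the denominator's normalising factor where it should have $\binom{k+2}{k-1}$, though it still arrives at $k/3$); your identification of the correct value of $i$ in each complex and your explicit verification that the triangle represents $u\cup v$ make the argument watertight.
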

\begin{proof}
    Note that:
    \begin{equation*}
        \weightcplx{\repcplx{\stdcomplex}}{(u,v)} = \frac{1}{\binom{k+2}{k}}\weightcplx{\stdcomplex}{u \cup v}
    \end{equation*}
    In addition note that:
    \begin{equation*}
        \weightcplx{\repcplx[k-1]{\stdcomplex}}{(u \cap v, v \cap w, w \cap u)} = \frac{1}{\binom{k+2}{k+1}}\weightcplx{\stdcomplex}{u \cup v}
    \end{equation*}
    Both due to Lemma~\ref{weight-relation-between-a-complex-and-its-representation} therefore:
    \begin{equation*}
        \frac{\weightcplx{\repcplx{\stdcomplex}}{(u,v)}}{\weightcplx{\repcplx[k-1]{\stdcomplex}}{(u \cap v, v \cap w, w \cap u)}} = \frac{\frac{1}{\binom{k+2}{k}}\weightcplx{\stdcomplex}{u \cup v}}{\frac{1}{\binom{k+2}{k+1}}\weightcplx{\stdcomplex}{u \cup v}} = \frac{\binom{k+2}{k+1}}{\binom{k+2}{k}} = \frac{k}{3}
    \end{equation*}
\end{proof}
\begin{corollary}\label{triangles-affected-by-change}
    The weight of all the empty triangles supported by a single edge $(u,v)\in \repcplx{\stdcomplex}$ satisfies:
    \[
        \frac{\weightcplx{\repcplx{\stdcomplex}}{(u,v)}}{\sum_{\substack{w\in\repcplx{\stdcomplex}(0) \\ \set{u,v,w} \in \repcplx{\stdcomplex}(\triangle)}}{\weightcplx{\repcplx[k+1]{\stdcomplex}}{(u \cap v, v \cap w, w \cap u)}}} = \frac{1}{3}
    \]
\end{corollary}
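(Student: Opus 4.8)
The plan is to read off the corollary from the preceding lemma (which computes, for a single empty triangle, $\weightcplx{\repcplx[k-1]{\stdcomplex}}{(u\cap v,v\cap w,w\cap u)}/\weightcplx{\repcplx{\stdcomplex}}{(u,v)} = k/3$) together with Lemma~\ref{supported-empty-triangles} (which says $(u,v)$ supports exactly $k$ empty triangles), once one checks that the $k$ summands are genuinely distinct and all project to the same top face. First I would recall from Lemma~\ref{finding-the-empty-triangles} and Lemma~\ref{structure-of-an-empty-triangle} that the empty triangles $\set{u,v,w}$ containing the edge $(u,v)$ are exactly those with $w = (u\triangle v)\cup A$ for some $A\in\binom{u\cap v}{k-1}$, so there are $k$ of them, and the assignment $w\mapsto(u\cap v,\,v\cap w,\,w\cap u)$ is injective on this set: the core of that triangle is $u\cap v\cap w = A$, and $w$ is recovered from $A$ together with the two elements of $u\triangle v$. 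Hence no face of $\repcplx[k-1]{\stdcomplex}$ is counted twice in the sum.

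Next I would invoke the preceding lemma. For every empty triangle $\set{u,v,w}$ supported by $(u,v)$ we have, as in that lemma's proof,
\[
\weightcplx{\repcplx[k-1]{\stdcomplex}}{(u\cap v,\,v\cap w,\,w\cap u)} \;=\; \frac{1}{\binom{k+2}{k+1}}\,\weightcplx{\stdcomplex}{u\cup v} \;=\; \frac{3}{k}\,\weightcplx{\repcplx{\stdcomplex}}{(u,v)},
\]
and the crucial point is that this value is independent of $w$: the middle expression depends only on $u\cup v$, which equals $R(\set{u,v,w})$ and is therefore the same $(k+1)$-face of $\stdcomplex$ for all $k$ of the empty triangles. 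Summing over these $k$ triangles collapses to $k\cdot\frac{3}{k}\,\weightcplx{\repcplx{\stdcomplex}}{(u,v)} = 3\,\weightcplx{\repcplx{\stdcomplex}}{(u,v)}$, and dividing $\weightcplx{\repcplx{\stdcomplex}}{(u,v)}$ by this sum gives $\tfrac13$, as claimed. (Here the $\repcplx[k+1]{\stdcomplex}$ in the displayed corollary should read $\repcplx[k-1]{\stdcomplex}$, since each of $u\cap v$, $v\cap w$, $w\cap u$ is a $(k-1)$-face of $\stdcomplex$ and so the triangle lives in the $(k-1)$-dimensional representation complex.)

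There is no substantial obstacle here — the statement is essentially a repackaging of the preceding lemma — so the only care needed is bookkeeping: verifying that distinct $w$'s give distinct terms (so the per-triangle ratio $k/3$ may be multiplied by the count $k$ from Lemma~\ref{supported-empty-triangles} without double counting) and that all the empty triangles through $(u,v)$ share the same projection $u\cup v$, which is precisely what makes their $\repcplx[k-1]{\stdcomplex}$-weights equal and lets the sum telescope to a clean factor of $3$.
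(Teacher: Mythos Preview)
Your proposal is correct and follows essentially the same approach as the paper: invoke Lemma~\ref{supported-empty-triangles} to get the count of $k$ empty triangles, use the preceding lemma for the per-triangle ratio $k/3$, and divide. You are simply more explicit than the paper about two points it leaves implicit---that the $k$ triangles $(u\cap v,\,v\cap w,\,w\cap u)$ are pairwise distinct and that their weights are all equal (since each depends only on $u\cup v$)---and your observation about the $\repcplx[k+1]{\stdcomplex}$ versus $\repcplx[k-1]{\stdcomplex}$ typo is correct.
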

\begin{proof}
    Note that, due to Lemma~\ref{supported-empty-triangles} there are exactly $k$ possible options for $w$, denote them $w_1,\dots, w_k$.
    Now consider:
    \begin{align*}
        &\frac{\weightcplx{\repcplx{\stdcomplex}}{(u,v)}}{\sum_{\substack{w\in\repcplx{\stdcomplex}(0) \\ \set{u,v,w} \in \repcplx{\stdcomplex}(\triangle)}}{\weightcplx{\repcplx[k+1]{\stdcomplex}}{(u \cap v, v \cap w, w \cap u)}}} =
        \frac{\weightcplx{\repcplx{\stdcomplex}}{(u,v)}}{\sum_{i=1}^k{\weightcplx{\repcplx[k+1]{\stdcomplex}}{(u \cap v, v \cap w_i, w_i \cap u)}}} = \\
        &\quad=\frac{\weightcplx{\repcplx{\stdcomplex}}{(u,v)}}{k \cdot \weightcplx{\repcplx[k-1]{\stdcomplex}}{(u \cap v, v \cap w, w \cap u)}} = \frac{k}{3k} = \frac{1}{3}
    \end{align*}
\end{proof}
Now that we have further explored the connection between the edges of $\repcplx{\stdcomplex}$ and the triangles of $\repcplx[k-1]{\stdcomplex}$ we are ready to explore the connection between a cochain and its attachment maps.
\subsection{On the Connection Between a Cocycle and Its Attachment Map}\label{subsec:on-the-connection-between-a-co-cycle-and-its-attachment-map}
We are now ready to further explore the connection between a cocycle and its attachment maps.
Before we do that, however, it would be useful to define a version of the $1$-faces that are not oriented:
\begin{definition}
    Let $\stdcomplex$ be a simplicial complex.
    Define the non-oriented set of $1$-faces to be:
    \[
        \stdcomplex(1)^+ = \set{(u,v) \suchthat (u,v)\in \stdcomplex(1) \text{ or } (v,u)\in \stdcomplex(1)}
    \]
    In addition, we would consider the weight of both $(u,v)$ and $(v,u)$ to be the same.
\end{definition}
Let us now describe another connection between the $1$-faces of $\repcplx[k-1]{\stdcomplex}$ and $\repcplx{\stdcomplex}$:
\begin{lemma}\label{edges-in-k-and-k-1}
    Let $u \in \repcplx{\stdcomplex}(0)$ and let $c_u, \check{v}$ be two cores that are contained in $u$.
    Then:
    \[
        \weightcplx{\repcplx[k-1]{\stdcomplex}}{(c_u, \check{v})} = \frac{1}{k} \sum_{\substack{(u,u')\in \repcplx{\stdcomplex}(1)^+ \\ u \cap u'=\check{v}}}{\weightcplx{\repcplx{\stdcomplex}}{(u,u')}}
    \]
\end{lemma}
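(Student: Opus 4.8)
The plan is to express both sides of the identity as explicit constant multiples of $\weightcplx{\stdcomplex}{u}$, using Lemma~\ref{weight-relation-between-a-complex-and-its-representation} once in $\repcplx[k-1]{\stdcomplex}$ and once in $\repcplx[k]{\stdcomplex}$, together with the standard identity relating link weights to weights in the ambient complex.

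First I would record the easy observation that, since $c_u \ne \check v$ are $(k-1)$-dimensional faces both contained in the $k$-face $u$, we have $c_u \cup \check v = u$ and $\abs{c_u \cap \check v} = k-1$; hence $(c_u, \check v)$ is a $1$-dimensional face of $\repcplx[k-1]{\stdcomplex}$ that represents $u \in \stdcomplex(k)$ and whose core is $c_u \cap \check v$. Since here $u$ lies in dimension $(k-1)+1$ of the base complex, $\abs{R^{-1}(u)} = \binom{(k-1)+1+1}{k-1} = \binom{k+1}{2}$, so Lemma~\ref{weight-relation-between-a-complex-and-its-representation} gives that the left-hand side equals $\frac{1}{\binom{k+1}{2}}\weightcplx{\stdcomplex}{u}$.

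Next, for the right-hand side I would set up a bijection between the edges $(u, u') \in \repcplx[k]{\stdcomplex}(1)^+$ with $u \cap u' = \check v$ and the faces $\genface \in \stdcomplex(k+1)$ with $u \subseteq \genface$. Given such an edge, put $\genface := u \cup u'$, a $(k+1)$-face since $\abs{u} = \abs{u'} = k+1$ and $\abs{u \cap u'} = k$; then the edge is recovered as $\rep{\check{v}}{\genface}$. Conversely, for $\genface \supseteq u$ the face $\rep{\check{v}}{\genface}$ consists of the two vertices $\check v \cup \set{w}$ with $w \in \genface \setminus \check v$, one of which is $u$ itself (taking $w$ to be the unique vertex of $u \setminus \check v$) and the other a $k$-face $u'$ with $u \cap u' = \check v$ (taking $w$ to be the unique vertex of $\genface \setminus u$). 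Checking that these two assignments are mutually inverse is the step I expect to require the most care, since it interleaves the sunflower structure of the representation complex (Lemmas~\ref{representation-has-a-core} and~\ref{singular-representation-per-core}) with the two distinct ``base dimensions'' $k$ and $k-1$; everything else is a routine computation.

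Finally I would combine the pieces. Under the bijection, each summand equals $\weightcplx{\repcplx[k]{\stdcomplex}}{\rep{\check{v}}{\genface}} = \frac{1}{\binom{k+2}{2}}\weightcplx{\stdcomplex}{\genface}$ by Lemma~\ref{weight-relation-between-a-complex-and-its-representation}, since now $\genface$ sits in dimension $k+1$ of $\repcplx[k]{\stdcomplex}$ and $\abs{R^{-1}(\genface)} = \binom{k+2}{k} = \binom{k+2}{2}$. The identity $\sum_{\genface \in \stdcomplex(k+1),\, u \subseteq \genface}\weightcplx{\stdcomplex}{\genface} = (k+2)\weightcplx{\stdcomplex}{u}$ follows from the link-weight formula $\weightcplx{\stdcomplex_u}{\set{w}} = \weightcplx{\stdcomplex}{u \cup \set{w}} / ((k+2)\weightcplx{\stdcomplex}{u})$ applied to the $0$-faces $\set{w}$ of $\stdcomplex_u$, whose weights sum to $1$ and which correspond exactly to the $(k+1)$-faces of $\stdcomplex$ containing $u$. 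Therefore the right-hand side equals $\frac{1}{k}\cdot\frac{1}{\binom{k+2}{2}}\cdot(k+2)\weightcplx{\stdcomplex}{u} = \frac{2}{k(k+1)}\weightcplx{\stdcomplex}{u}$, which coincides with the left-hand side $\frac{1}{\binom{k+1}{2}}\weightcplx{\stdcomplex}{u} = \frac{2}{k(k+1)}\weightcplx{\stdcomplex}{u}$, proving the lemma.
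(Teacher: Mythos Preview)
Your proposal is correct and follows essentially the same approach as the paper: both sides are reduced to explicit multiples of $\weightcplx{\stdcomplex}{u}$ via Lemma~\ref{weight-relation-between-a-complex-and-its-representation} (applied in $\repcplx[k-1]{\stdcomplex}$ for the left and in $\repcplx[k]{\stdcomplex}$ for the right), together with the identity $\sum_{\genface\supseteq u}\weightcplx{\stdcomplex}{\genface}=(k+2)\weightcplx{\stdcomplex}{u}$. The only cosmetic difference is that you invoke the link-weight formula for that identity whereas the paper phrases it as $\norm{\containment^{k+1}(\{u\})}=\binom{k+2}{k+1}\weightcplx{\stdcomplex}{u}$; these are the same statement.
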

\begin{proof}
    Consider the weight of $(c_u,\check{v})$:
    \begin{equation}\label{eq:weight-of-represention}
        \weightcplx{\repcplx[k-1]{\stdcomplex}}{(c_u, \check{v})} = \frac{1}{\binom{k+1}{k-1}} \weightcplx{\stdcomplex}{c_u \cup \check{v}} = \frac{\weightcplx{\stdcomplex}{u}}{\binom{k+1}{k-1}}
    \end{equation}
    Note that:
    \[
        \set{(u,u') \in \repcplx{\stdcomplex}(1) \suchthat u \cap u' = \check{v}} = \set{\repex{k-1}{w}{\check{v}} \suchthat w \in \containment^{k+1}(\set{u})}
    \]
    Therefore, for every $\check{v} \in \stdcomplex(k-1)$ it holds that:
    \begin{align*}
        \sum_{\substack{(u,u')\in \repcplx{\stdcomplex}(1)^+ \\ u \cap u'=\check{v}}}{\weightcplx{\repcplx{\stdcomplex}}{(u,u')}} & = \norm{\set{\repex{k-1}{w}{\check{v}} \suchthat w \in \containment^{k+1}(\set{u})}}_{\repcplx{\stdcomplex}} = \frac{\norm{\containment^{k+1}(\set{u})}_\stdcomplex}{\binom{k+2}{k}} = \\
        & = \frac{\binom{k+2}{k+1}\norm{\set{u}}_\stdcomplex}{\binom{k+2}{k}} = \frac{2}{k+1}\norm{\set{u}}_\stdcomplex = \frac{2}{k+1}\weightcplx{\stdcomplex}{u}
    \end{align*}
    Combining this with~\ref{eq:weight-of-represention} completes the proof.
\end{proof}
Consider the proof of Lemma~\ref{first-connection-to-lower-representation}.
In that proof a special core was selected for every face (which we termed the canonical core).
Using these cores we showed a connection between the distance of a cocycle and the coboundaries and some non-trivial condition on its attachment maps.
We will now refine that connection by showing that, with a careful selection of the canonical cores, one can bound the distance of a cocycle and the coboundaries using the distance of the attachment maps from the coboundaries.
Let us begin by showing the following:
\begin{lemma}\label{choice-of-cores}
    Let $\psi$ be a cochain in $\repcplx[k-1]{\stdcomplex}$ then there exists a choice of $\set{c_u}_{u \in \repcplx{\stdcomplex}(0)}$ such that:
    \[
        \sum_{u \in \stdcomplex(k)}{\sum_{\substack{(c_u,\check{v}) \in \repcplx[k-1]{\stdcomplex}(1)^+ \\ \psi(c_u, \check{v}) \ne 1}}{\weightcplx{\repcplx[k-1]{\stdcomplex}}{(c_u, \check{v})}}} \le \frac{2}{k+1}\norm{\psi}_{\repcplx[k-1]{\stdcomplex}}
    \]
\end{lemma}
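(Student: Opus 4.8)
The plan is to prove the statement by the probabilistic (first-moment) method: choose each canonical core $c_u$ independently and uniformly at random among the $k+1$ cores of $u$, compute the expectation of the left-hand side, and observe that a nonnegative quantity cannot always exceed its mean. First I would unpack the inner sum. Exactly as in the setup of Lemma~\ref{edges-in-k-and-k-1}, for $u\in\repcplx{\stdcomplex}(0)$ (a $k$-face of $\stdcomplex$) the cores $c_u$ and $\check v$ are $(k-1)$-subfaces of $u$; two distinct $(k-1)$-subfaces of a $k$-face have union equal to that face and intersection of size $k-1$, so each pair $(c_u,\check v)$ with $\check v\ne c_u$ is a $1$-face of $\repcplx[k-1]{\stdcomplex}$ whose union of endpoints is $u$. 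Conversely, by Definition~\ref{the-representation-complex} and Lemma~\ref{singular-representation-per-core}, every $1$-face $e$ of $\repcplx[k-1]{\stdcomplex}$ has its two endpoints equal to two distinct cores of the single $k$-face $R(e)$ obtained as the union of those endpoints (viewed as vertex sets), and $u$ has exactly $k+1$ cores. Therefore, for a fixed choice $\set{c_u}$, the double sum $S$ on the left-hand side can be reorganized as a sum over the unoriented $1$-faces $e$ of $\repcplx[k-1]{\stdcomplex}$: an edge $e$ with $\psi(e)\ne1$ contributes $\weightcplx{\repcplx[k-1]{\stdcomplex}}{e}$ to $S$ precisely when $c_{R(e)}$ is one of the two endpoints of $e$ (in which case only the orientation of $e$ with first coordinate $c_{R(e)}$ appears, so it is counted once), and $0$ otherwise.

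Given this reformulation, the averaging is immediate. Fix an unoriented $1$-face $e$ of $\repcplx[k-1]{\stdcomplex}$ with $\psi(e)\ne1$, and let $u=R(e)$. Since $u$ has $k+1$ cores and the two endpoints of $e$ are two of them, a uniformly random $c_u$ lies in $e$ with probability $\frac{2}{k+1}$; no independence across edges is needed. By linearity of expectation,
\[
\mathbb{E}[S]=\sum_{\substack{e\in\repcplx[k-1]{\stdcomplex}(1)\\ \psi(e)\ne1}}\frac{2}{k+1}\,\weightcplx{\repcplx[k-1]{\stdcomplex}}{e}=\frac{2}{k+1}\norm{\psi}_{\repcplx[k-1]{\stdcomplex}},
\]
the last equality being the definition of $\norm{\psi}_{\repcplx[k-1]{\stdcomplex}}$. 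Hence some choice of $\set{c_u}_{u\in\repcplx{\stdcomplex}(0)}$ satisfies $S\le\mathbb{E}[S]=\frac{2}{k+1}\norm{\psi}_{\repcplx[k-1]{\stdcomplex}}$, which is the claim.

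The step I expect to require the most care --- indeed the only nonroutine one --- is the reorganization of $S$ in the first paragraph: one needs that the map ``$1$-face of $\repcplx[k-1]{\stdcomplex}$ $\mapsto$ union of its endpoints'' is well defined into $\stdcomplex(k)$ and realizes the partition of $\repcplx[k-1]{\stdcomplex}(1)$ by represented $k$-face, that each $k$-face has exactly $k+1$ cores and every ordered pair of distinct cores of it is an oriented $1$-face of $\repcplx[k-1]{\stdcomplex}$ representing it, and that restricting the ``first coordinate'' of an oriented edge to be $c_u$ kills exactly the factor of $2$ between $\repcplx[k-1]{\stdcomplex}(1)^+$ and $\repcplx[k-1]{\stdcomplex}(1)$. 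All of this is a direct consequence of Definition~\ref{the-representation-complex}, Lemma~\ref{representation-has-a-core} and Lemma~\ref{singular-representation-per-core}; once it is in place, the result follows purely by linearity of expectation and the first-moment argument.
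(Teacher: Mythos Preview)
Your proof is correct and uses essentially the same averaging argument as the paper: both rely on the identity that summing the ``bad incident weight'' over all $k+1$ cores of $u$ yields $2\epsilon_u$, so the average (hence the minimum) core contributes at most $\frac{2\epsilon_u}{k+1}$. The only stylistic difference is that the paper picks each $c_u$ to be the local minimizer explicitly, whereas you phrase it as a first-moment argument over a uniformly random choice of $c_u$.
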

\begin{proof}
    Let $\epsilon_u$ be the error seen by edges that represent $u$:
    \[
        \epsilon_u = \norm{\set{(\check{v},\check{w}) \in \repcplx[k-1]{\stdcomplex}(1) \suchthat \check{v} \cup \check{w} = u, \psi(\check{v},\check{w}) \ne 1}}
    \]
    We note that:
    \[
        2 \epsilon_u = \sum_{\check{v} \in \binom{u}{k}}{\sum_{\substack{(\check{v},\check{w}) \in \repcplx[k-1]{\stdcomplex}(1)^+ \\ \psi(\check{v}, \check{w}) \ne 1}}{\weightcplx{\repcplx[k+1]{\stdcomplex}}{(\check{v},\check{w})}}}
    \]
    Note that every face on which $\psi$ differs from $1$ is summed twice (once for each orientation).
    Pick, for every face $u \in \repcplx{\stdcomplex}$, a core $c_u$ such that:
    \[
        \forall \check{v} \in \binom{u}{k}: \sum_{\substack{(c_u,\check{w}) \in \repcplx[k-1]{\stdcomplex}(1)^+ \\ \psi(c_u, \check{w}) \ne 1}}{\weightcplx{\repcplx[k+1]{\stdcomplex}}{(\check{v}, \check{w})}} \le \sum_{\substack{(\check{v},\check{w}) \in \repcplx[k-1]{\stdcomplex}(1)^+ \\ \psi(\check{v}, \check{w}) \ne 1}}{\weightcplx{\repcplx[k+1]{\stdcomplex}}{(\check{v}, \check{w})}}
    \]
    Using this new definition we can see that:
    \begin{align*}
        2 \epsilon_u &= \sum_{\check{v} \in \binom{u}{k}}{\sum_{\substack{(\check{v},\check{w}) \in \repcplx[k-1]{\stdcomplex}(1)^+ \\ \psi(\check{v}, \check{w}) \ne 1}}{\weightcplx{\repcplx[k+1]{\stdcomplex}}{(\check{v},\check{w})}}} \ge \\
                     & \ge \sum_{\check{v} \in \binom{u}{k}}{\sum_{\substack{(c_u,\check{w}) \in \repcplx[k-1]{\stdcomplex}(1)^+ \\ \psi(c_u, \check{w}) \ne 1}}{\weightcplx{\repcplx[k+1]{\stdcomplex}}{(c_u,\check{w})}}} = (k+1)\sum_{\substack{(c_u,\check{w}) \in \repcplx[k-1]{\stdcomplex}(1)^+ \\ \psi(c_u, \check{w}) \ne 1}}{\weightcplx{\repcplx[k+1]{\stdcomplex}}{(\check{v}, \check{w})}}
    \end{align*}
    We finish the proof by noting that:
    \[
        \norm{\psi}_{\repcplx[k-1]{\stdcomplex}} = \sum_{u \in \stdcomplex(k)}{\epsilon_u} \ge \frac{k+1}{2}\sum_{\substack{(c_u,\check{w}) \in \repcplx[k-1]{\stdcomplex}(1)^+ \\ \psi(c_u, \check{w}) \ne 1}}{\weightcplx{\repcplx[k+1]{\stdcomplex}}{(\check{v}, \check{w})}}
    \]
\end{proof}
We are now ready to show a connection between the distance of a cocycle from the coboundaries and the distance of its attachment maps from the coboundaries.
Specifically we show that:
\begin{lemma}
    Let $\cochain{f} \in \cocycleset{1}{\repcplx{\stdcomplex};S_l}$ be a cocycle and $\cochain{\widetilde{f}}$ be the coboundary constructed in Lemma~\ref{first-connection-to-lower-representation}.
    In addition, let $\check{\varphi} \in \coboundaryset{1}{\repcplx[k-1]{\stdcomplex}; S_l}$ and $\check{\psi} \in \cochainset{1}{{\repcplx[k-1]{\stdcomplex}; S_l}}$ such that $\check{\varphi}$ is the closest coboundary to $\cochain{\check{f}}$ and $\check{\varphi} = \cochain{\check{f}} \check{\psi}^{-1}$.
    The following holds:
    \[
        \dist_{\repcplx{\stdcomplex}}(\cochain{f},\cochain{\widetilde{f}}) \le 2 \frac{k}{k-1} \norm{\check{\psi}}
    \]
\end{lemma}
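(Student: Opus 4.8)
The plan is to combine the edge-level description of $\dist_{\repcplx{\stdcomplex}}(\cochain{f},\cochain{\widetilde{f}})$ coming from Lemma~\ref{first-connection-to-lower-representation} and its Corollary~\ref{first-connection-between-psi-f-and-tilde-f} with the weight identity between $\repcplx{\stdcomplex}(1)^+$ and $\repcplx[k-1]{\stdcomplex}(1)^+$ from Lemma~\ref{edges-in-k-and-k-1}, and, crucially, the averaging choice of canonical cores from Lemma~\ref{choice-of-cores}. First I would fix the data entering Lemma~\ref{first-connection-to-lower-representation}: take $\varphi=\check{\varphi}$, $\psi=\check{\psi}$, $\epsilon=\norm{\check{\psi}}_{\repcplx[k-1]{\stdcomplex}}$, and --- this is the one degree of freedom that the construction of $\cochain{\widetilde{f}}$ leaves open --- let the canonical cores $\set{c_u}_{u\in\repcplx{\stdcomplex}(0)}$ be precisely those supplied by Lemma~\ref{choice-of-cores} applied to the cochain $\check{\psi}$. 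This fixes $\cochain{\widetilde{f}}=d_0 g$.

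Next I would pass to a purely combinatorial estimate. By Corollary~\ref{first-connection-between-psi-f-and-tilde-f}, every unoriented edge $\set{u,v}\in\repcplx{\stdcomplex}(1)$ on which $\cochain{f}$ and $\cochain{\widetilde{f}}$ disagree satisfies $\check{\psi}(c_u,u\cap v)\ne 1$ or $\check{\psi}(c_v,u\cap v)\ne 1$, hence at least one of its two (equally weighted) orientations lies in $\set{(u,v)\in\repcplx{\stdcomplex}(1)^+ \suchthat \check{\psi}(c_u,u\cap v)\ne 1}$; therefore
\[
\dist_{\repcplx{\stdcomplex}}(\cochain{f},\cochain{\widetilde{f}}) \le \sum_{\substack{(u,v)\in\repcplx{\stdcomplex}(1)^+ \\ \check{\psi}(c_u,u\cap v)\ne 1}}{\weightcplx{\repcplx{\stdcomplex}}{(u,v)}}.
\]
I would then regroup the right-hand side by fixing the first vertex $u$ and then the core $\check{v}=u\cap v\in\binom{u}{k}$ --- the case $\check{v}=c_u$ drops out, since the attachment map, and hence $\check{\psi}$, is the identity on the diagonal --- and observe that the inner sum over $(u,v)\in\repcplx{\stdcomplex}(1)^+$ with $u\cap v=\check{v}$ is exactly the sum whose value Lemma~\ref{edges-in-k-and-k-1} pins down as $k\cdot\weightcplx{\repcplx[k-1]{\stdcomplex}}{(c_u,\check{v})}$. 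This turns the bound into
\[
\dist_{\repcplx{\stdcomplex}}(\cochain{f},\cochain{\widetilde{f}}) \le k\sum_{u\in\repcplx{\stdcomplex}(0)}\sum_{\substack{(c_u,\check{v})\in\repcplx[k-1]{\stdcomplex}(1)^+ \\ \check{\psi}(c_u,\check{v})\ne 1}}{\weightcplx{\repcplx[k-1]{\stdcomplex}}{(c_u,\check{v})}}.
\]

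Finally, the double sum on the right is exactly the quantity that Lemma~\ref{choice-of-cores} bounds for the cores we selected, namely by $\tfrac{2}{k+1}\norm{\check{\psi}}_{\repcplx[k-1]{\stdcomplex}}$; this yields $\dist_{\repcplx{\stdcomplex}}(\cochain{f},\cochain{\widetilde{f}})\le\tfrac{2k}{k+1}\norm{\check{\psi}}\le 2\tfrac{k}{k-1}\norm{\check{\psi}}$, which is the asserted inequality (in fact with a slightly better constant). The step I expect to be the main obstacle is the middle one: one has to move cleanly between the unoriented set of edges where $\cochain{f}$ and $\cochain{\widetilde{f}}$ differ and the oriented pairs $(u,v)$ in terms of which Lemma~\ref{edges-in-k-and-k-1} is phrased, without losing or double-counting weight, and one must check that replacing the exact characterisation of Lemma~\ref{first-connection-to-lower-representation} by the one-sided sufficient condition of Corollary~\ref{first-connection-between-psi-f-and-tilde-f} --- all that is needed for an upper bound --- together with discarding the diagonal terms $\check{v}=c_u$, is genuinely harmless.
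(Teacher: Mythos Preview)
Your proposal is correct and follows essentially the same route as the paper's own proof: choose the canonical cores via Lemma~\ref{choice-of-cores}, pass from $\dist(\cochain{f},\cochain{\widetilde{f}})$ to the oriented sum over $(u,v)\in\repcplx{\stdcomplex}(1)^+$ with $\check{\psi}(c_u,u\cap v)\ne 1$ using Corollary~\ref{first-connection-between-psi-f-and-tilde-f}, regroup by $u$ and $\check{v}=u\cap v$, convert via Lemma~\ref{edges-in-k-and-k-1}, and finish with Lemma~\ref{choice-of-cores}. The paper likewise obtains the sharper constant $\tfrac{2k}{k+1}$ in the final display, so your remark that the stated bound $\tfrac{2k}{k-1}$ is looser than what the argument actually yields is on point.
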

\begin{proof}
    Let $\set{c_u}_{u \in \repcplx{\stdcomplex}(0)}$ be the canonical cores chosen as in Lemma~\ref{choice-of-cores}.
    Consider the following:
    \begin{align*}
        \dist_{\repcplx{\stdcomplex}}(\cochain{f},\cochain{\widetilde{f}}) & = \norm{\set{(u,u')\in \repcplx{\stdcomplex} \suchthat \cochain{f}(u,u') \ne \cochain{\widetilde{f}}(u,u')}}_{\repcplx{\stdcomplex}} \le \\
        & \le \norm{\set{(u,u')\in \repcplx{\stdcomplex} \suchthat \check{\psi}(c_u,u \cap u') \ne 1 \text{ or } \check{\psi}(c_{u'},u \cap u')}}_{\repcplx{\stdcomplex}} \le \\
        & \le \sum_{u \in \repcplx{\stdcomplex}(0)}{\sum_{\substack{\check{v} \in \binom{u}{k} \\ \psi(c_u, \check{v}) \ne 1}}{\sum_{\substack{(u,u') \in \repcplx{\stdcomplex}(1)^{+} \\ u \cap u' = \check{v}}}{\weightcplx{\repcplx{\stdcomplex}}{(u,u')}}}} \le \\
        & \le \sum_{u \in \repcplx{\stdcomplex}(0)}{\sum_{\substack{\check{v} \in \binom{u}{k} \\ \psi(c_u, \check{v}) \ne 1}}{k \weightcplx{\repcplx[k-1]{\stdcomplex}}{(c_u, \check{v})}}} = \\
        & = k \sum_{u \in \repcplx{\stdcomplex}(0)}{\sum_{\substack{(c_u, \check{v}) \in \repcplx[k-1]{\stdcomplex}(1)^{+} \\ \psi(c_u, \check{v}) \ne 1}}{\weightcplx{\repcplx[k-1]{\stdcomplex}}{(c_u, \check{v})}}} \le \\
        & \le k \frac{2}{k+1} \norm{\check{\psi}}_{\repcplx[k-1]{\stdcomplex}} = 2 \frac{k}{k+1} \norm{\check{\psi}}_{\repcplx[k-1]{\stdcomplex}}
    \end{align*}
    The first inequality is due to Corollary~\ref{first-connection-between-psi-f-and-tilde-f}.
    The second inequality is due to the fact that edges $(u,u')$ for which $\check{\psi}(c_u,u \cap u') \ne 1$ and $\check{\psi}(c_{u'},u \cap u')$ are summed twice.
    The third inequality is due to Lemma~\ref{edges-in-k-and-k-1} and the equality that follows it is due to the fact that any two cores of a single face $u$ in the $k$-dimensional representation complex intersect on exactly $k-1$ vertices and their union is exactly $u$.
    The last inequality is due to Lemma~\ref{choice-of-cores}.
\end{proof}
\subsection{Bounding the Distance of a Cochain from the Coboundaries}\label{subsec:bounding-the-distance-of-a-co-chain-from-the-co-boundaries}
We are now finally ready to prove that the empty triangle test is indeed a property test for the coboundaries of the representation complex.
We will do that by analyzing the following process:
\begin{enumerate}
    \item Use the coboundary expansion around every core in order to bound the distance of $\cochain{f}$ from a cocycle $\stdcocycle$.
    \item Pick a set of functions $h^c$ and and find the distance between $\cochain{\check{f}}_{\set{h^c}}$ and the coboundaries of $\repcplx[k-1]{\stdcomplex}$.
    \item Use the aforementioned distance in order to bound the distance of $\stdcocycle$ from the coboundaries. \label{fix-co-cycle-to-coboundary}
\end{enumerate}
We will start by analyzing step~\ref{fix-co-cycle-to-coboundary}.
In order to do so, we will define the following:
\begin{definition}
    Let $\trianglesnorm{f} = \norm{\set{(u,v,w) \in \repcplx{\stdcomplex}(2) \suchthat \cochain{f}(u,v)\cochain{f}(v,w)\cochain{f}(w,u) \ne 1}}$ be the norm of the triangles that, if chosen, cause algorithm~\ref{alg:empty-triangle-test} to reject.\\
    Also let $\emptytrianglesnorm{f} = \sum_{\substack{\set{u,v,w} \in \repcplx{\stdcomplex}(\triangle) \\ \cochain{f}(u,v)\cochain{f}(v,w)\cochain{f}(w,u) \ne 1}}{\weightcplx{\repcplx[k-1]{\stdcomplex}}{(u,v,w)}}$ be the norm of the empty triangles that are rejected by algorithm~\ref{alg:empty-triangle-test}.
\end{definition}
\begin{note*}
    The probability that algorithm~\ref{alg:empty-triangle-test} rejects satisfies:
    \[
        \pr{\text{Algorithm~\ref{alg:empty-triangle-test} rejects}} = 0.5 (\trianglesnorm{f} + \emptytrianglesnorm{f})
    \]
\end{note*}
We will start by considering the norm of the empty triangles of an attachment map.
Specifically we will show that any attachment map satisfies every empty triangle:
\begin{lemma}\label{empty-triangles-of-attachment-map}
    Let $\cochain{\check{\stdcocycle}} \in \cochainset{1}{\repcplx[k-1]{\stdcomplex}; S_l}$ be an attachment map of some cocycle \\
    $\stdcocycle \in \cocycleset{1}{\repcplx{\stdcomplex}; S_l}$ then $\emptytrianglesnorm[k-1]{\cochain{\check{\stdcocycle}}} = 0$.
\end{lemma}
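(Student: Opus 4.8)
The plan is to prove the stronger statement that $\cochain{\check{\stdcocycle}}$ takes the value $1$ on the three edges of \emph{every} $(k-2)$-empty-triangle of $\repcplx[k-1]{\stdcomplex}$; since $\emptytrianglesnorm[k-1]{\cochain{\check{\stdcocycle}}}$ is a sum of weights taken only over the empty triangles on which $\cochain{\check{\stdcocycle}}$ fails to close up, this immediately gives $\emptytrianglesnorm[k-1]{\cochain{\check{\stdcocycle}}} = 0$. First I would fix a family $\set{h^c}_{c \in \stdcomplex(k-1)}$ realizing $\stdcocycle$ as a coboundary around each core, so that by the definition of the attachment map $\cochain{\check{\stdcocycle}}(\check{u}, \check{v}) = \parens{h^{\check{u}}(\check{u} \cup \check{v})}^{-1} h^{\check{v}}(\check{u} \cup \check{v})$ for every edge $\set{\check u, \check v} \in \repcplx[k-1]{\stdcomplex}(1)$. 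Here each $h^c$ is a $0$-cochain on $\repcplxcore[k]{c}{\stdcomplex}$, whose vertices are the $k$-faces of $\stdcomplex$ containing $c$; since $\check u, \check v$ are $(k-1)$-faces of $\stdcomplex$ whose union is a $k$-face of $\stdcomplex$ containing each of them, the expressions $h^{\check u}(\check u \cup \check v)$ and $h^{\check v}(\check u \cup \check v)$ are well defined.

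Next I would record the one geometric fact that makes the computation collapse. Let $\set{\check u, \check v, \check w}$ be an arbitrary $(k-2)$-empty-triangle of $\repcplx[k-1]{\stdcomplex}$. Applying Lemma~\ref{structure-of-an-empty-triangle} inside $\repcplx[k-1]{\stdcomplex}$ (that is, with $k$ replaced by $k-1$) with $\set{\check u, \check v}$ as the base edge gives $\check w = \parens{\check u \triangle \check v} \cup A$ for some $A \subseteq \check u \cap \check v$, so $\check w \subseteq \check u \cup \check v$; since the notion of an empty triangle is symmetric in its three vertices, the same argument yields $\check u \subseteq \check v \cup \check w$ and $\check v \subseteq \check w \cup \check u$. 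Consequently $\check u \cup \check v = \check v \cup \check w = \check w \cup \check u = \check u \cup \check v \cup \check w =: U$, and $U \in \stdcomplex(k)$ because it is $R$ of an edge of $\repcplx[k-1]{\stdcomplex}$.

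With the common union $U$ in hand, the verification is a one-line telescoping cancellation: each of the three edges of the empty triangle has $U$ as its union, so
\[
    \cochain{\check{\stdcocycle}}(\check u, \check v)\,\cochain{\check{\stdcocycle}}(\check v, \check w)\,\cochain{\check{\stdcocycle}}(\check w, \check u)
    = \parens{h^{\check u}(U)}^{-1} h^{\check v}(U)\parens{h^{\check v}(U)}^{-1} h^{\check w}(U)\parens{h^{\check w}(U)}^{-1} h^{\check u}(U) = 1 .
\]
As this holds for every $(k-2)$-empty-triangle of $\repcplx[k-1]{\stdcomplex}$, we conclude $\emptytrianglesnorm[k-1]{\cochain{\check{\stdcocycle}}} = 0$.

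I expect essentially no obstacle here: the cocycle hypothesis on $\stdcocycle$ — together with the coboundary expansion of the links invoked earlier — is only used to guarantee that the local potentials $h^c$ exist, and once they do the claim is pure group algebra. The only points that require a moment's care are (i) checking the equality of the three pairwise unions, which is exactly why I single out Lemma~\ref{structure-of-an-empty-triangle}, and (ii) keeping the dimensions straight, since the empty triangles here live in $\repcplx[k-1]{\stdcomplex}$ (so the structural lemmas about empty triangles are applied with $k-1$ in place of $k$) while the potentials $h^c$ are $0$-cochains attached to the cores of $\repcplx{\stdcomplex}$.
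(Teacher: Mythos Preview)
Your proposal is correct and follows essentially the same argument as the paper: fix the local potentials $\{h^c\}$, observe that for any empty triangle $\{\check u,\check v,\check w\}$ in $\repcplx[k-1]{\stdcomplex}$ the three pairwise unions coincide, and then telescope the product $\cochain{\check{\stdcocycle}}(\check u,\check v)\cochain{\check{\stdcocycle}}(\check v,\check w)\cochain{\check{\stdcocycle}}(\check w,\check u)$ to $1$. The paper simply asserts the equality of unions as a property of empty triangles, whereas you justify it via Lemma~\ref{structure-of-an-empty-triangle}; otherwise the proofs are identical.
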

\begin{proof}
    Let $\set{\check{u},\check{v},\check{w}} \in \repcplx[k-1]{\stdcomplex}(\triangle)$ and let $\set{h^c}_{c \in \stdcomplex(k-1)}$ be the choice of local functions such that $\stdcocycle(u,v) = h^{u \cap v}(u)\parens{h^{u \cap v}(v)}^{-1}$ and $\cochain{\check{\stdcocycle}}(\check{u},\check{v}) = \parens{h^{\check{u}}(\check{u} \cup \check{v})}^{-1}h^{\check{v}}(\check{u} \cap \check{v})$.
    Consider:
    \begin{align*}
        \cochain{\check{\stdcocycle}}&(\check{u},\check{v})\cochain{\check{\stdcocycle}}(\check{v},\check{w})\cochain{\check{\stdcocycle}}(\check{w},\check{u}) = \\
        & = \parens{h^{\check{u}}(\check{u} \cup \check{v}))}^{-1} h^{\check{v}}(\check{u} \cup \check{v})\parens{h^{\check{v}}\parens{\check{v} \cup \check{w}}}^{-1}  h^{\check{w}}\parens{\check{v} \cup \check{w}}\parens{h^{\check{w}}\parens{\check{w} \cup \check{u}}}^{-1} h^{\check{u}}\parens{\check{w} \cup \check{u}}
    \end{align*}
    Note that because $\set{\check{u},\check{v},\check{w}}$ is an empty triangle it holds that $\check{u} \cup \check{v} = \check{v} \cup \check{w} =\check{w} \cup \check{u}$ which completes the proof.
\end{proof}

Before we move on to show stronger connections between $\trianglesnorm{\cochain{f}}$, $\emptytrianglesnorm{\cochain{f}}$, $\trianglesnorm[k-1]{\cochain{\check{f}}}$ and $\emptytrianglesnorm[k-1]{\cochain{\check{f}}}$ we would first want to make the connection between the empty triangles that fail the test in the $k$-th dimension and the triangles that fail the test in the $\parens{k-1}$-th dimension.
\begin{lemma}\label{empty-to-full-lemma}
    It holds that:
    \begin{align*}
        \alignset{(\check{u},\check{v},\check{w}) \in \repcplx[k-1]{\stdcomplex}(2)}{ \substack{\cochain{f}(\check{u} \cup \check{v},\check{v} \cup \check{w}) \ne \cochain{g}(\check{u} \cup \check{v},\check{v} \cup \check{w}) \text{ or} \\ \cochain{f}(\check{v} \cup \check{w},\check{w} \cup \check{u}) \ne \cochain{g}(\check{v} \cup \check{w},\check{w} \cup u) \text{ or} \\ \cochain{f}(\check{w} \cup \check{u},\check{u} \cup \check{v}) \ne \cochain{g}(\check{w} \cup \check{u},\check{u} \cup \check{v})}} \subseteq \\
        &\bigcup_{\substack{(u,v) \in \repcplx{\stdcomplex}(1) \\ \cochain{f}(u,v)\ne \cochain{g}(u,v)}}{\bigcup_{\substack{w \in \repcplx{\stdcomplex}(0) \\ (u,v,w) \in \repcplx{\stdcomplex}(\triangle)}}{\set{(u \cap v, v \cap w, w \cap u)}}}
    \end{align*}
\end{lemma}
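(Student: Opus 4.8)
The plan is to unwind the definitions: I want to recognize the three edges of $\repcplx{\stdcomplex}$ that appear on the left-hand side as the three sides of a single $(k-1)$-empty-triangle. Fix a triangle $(\check{u},\check{v},\check{w}) \in \repcplx[k-1]{\stdcomplex}(2)$ lying in the left-hand set. By Lemma~\ref{representation-has-a-core} it has a core $c = \check{u} \cap \check{v} \cap \check{w}$ with $\abs{c} = k-1$, and any two of $\check{u},\check{v},\check{w}$ meet exactly in $c$; writing $\check{u} = c \cup \set{a}$, $\check{v} = c \cup \set{b}$, $\check{w} = c \cup \set{d}$ with $a,b,d$ distinct, the union $\check{u} \cup \check{v} \cup \check{w} = c \cup \set{a,b,d}$ is a $(k+1)$-face of $\stdcomplex$.

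First I would set $u = \check{u} \cup \check{v}$, $v = \check{v} \cup \check{w}$, $w = \check{w} \cup \check{u}$ and verify that $\set{u,v,w}$ is a $(k-1)$-empty-triangle in $\repcplx{\stdcomplex}$. Each of $u,v,w$ has $k+1$ vertices and is a subset of the $(k+1)$-face $\check{u} \cup \check{v} \cup \check{w}$, hence lies in $\stdcomplex(k)$; any two of them, say $u$ and $v$, satisfy $u \cap v = c \cup \set{b} = \check{v}$, so $\abs{u \cap v} = k$, and $R(\set{u,v}) = c \cup \set{a,b,d} \in \stdcomplex(k+1)$, so $(u,v) \in \repcplx{\stdcomplex}(1)$; the other two pairs are symmetric. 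Moreover $\set{u \cap v,\, v \cap w,\, w \cap u} = \set{\check{v},\check{w},\check{u}} = \set{\check{u},\check{v},\check{w}} \in \repcplx[k-1]{\stdcomplex}(2)$, so by definition $\set{u,v,w}$ is a $(k-1)$-empty-triangle (this is precisely Lemma~\ref{finding-the-empty-triangles} applied to the edge $(u,v)$ with $A = c$). Note also that $(u,v) = (\check{u} \cup \check{v},\, \check{v} \cup \check{w})$ is the first of the three edges named on the left-hand side, and $(u \cap v, v \cap w, w \cap u) = (\check{v},\check{w},\check{u})$, which as a face of $\repcplx[k-1]{\stdcomplex}$ is the triangle $\set{\check{u},\check{v},\check{w}}$ (the orientation being inherited consistently, as in the note following Lemma~\ref{the-representation-complex-around-a-core-is-isomorphic-to-links}).

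Now, if $(\check{u},\check{v},\check{w})$ lies in the left-hand set because $\cochain{f}(\check{u} \cup \check{v},\, \check{v} \cup \check{w}) \ne \cochain{g}(\check{u} \cup \check{v},\, \check{v} \cup \check{w})$, i.e.\ $\cochain{f}(u,v) \ne \cochain{g}(u,v)$, then $(u,v)$ is one of the edges indexing the outer union on the right-hand side, $w$ is one of the vertices indexing the inner union, and the triangle $\set{u \cap v, v \cap w, w \cap u} = \set{\check{u},\check{v},\check{w}}$ is exactly the element contributed; hence $(\check{u},\check{v},\check{w})$ lies in the right-hand union. The remaining two cases — disagreement of $\cochain{f}$ and $\cochain{g}$ on $(\check{v} \cup \check{w},\, \check{w} \cup \check{u})$ or on $(\check{w} \cup \check{u},\, \check{u} \cup \check{v})$ — follow by the identical argument after cyclically relabeling $\check{u},\check{v},\check{w}$, since the triple $\set{u,v,w}$ and the empty-triangle condition are invariant under this relabeling. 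I do not expect a genuine obstacle here: the only care needed is the cardinality bookkeeping (checking $\abs{u\cap v}=k$ and that the relevant unions are $(k+1)$-faces) and matching orientations between $\repcplx{\stdcomplex}$ and $\repcplx[k-1]{\stdcomplex}$, both of which are routine given Lemma~\ref{representation-has-a-core} and Lemma~\ref{finding-the-empty-triangles}.
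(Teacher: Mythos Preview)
Your proof is correct and follows essentially the same approach as the paper's: define $u=\check{u}\cup\check{v}$, $v=\check{v}\cup\check{w}$, $w=\check{w}\cup\check{u}$, verify that $\set{u,v,w}$ is a $(k-1)$-empty-triangle with $u\cap v=\check{v}$, $v\cap w=\check{w}$, $w\cap u=\check{u}$, and then read off membership in the right-hand side from whichever edge witnesses the disagreement. The only difference is that you spell out the core $c$ and the distinguished vertices $a,b,d$ explicitly, whereas the paper handles the cardinality checks more tersely.
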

\begin{proof}
    Let $(\check{u},\check{v},\check{w}) \in \set{(\check{u},\check{v},\check{w}) \in \repcplx[k-1]{\stdcomplex}(2) \suchthat \substack{\cochain{f}(\check{u} \cup \check{v},\check{v} \cup \check{w}) \ne \cochain{g}(\check{u} \cup \check{v},\check{v} \cup \check{w}) \text{ or} \\ \cochain{f}(\check{v} \cup \check{w},\check{w} \cup \check{u}) \ne \cochain{g}(\check{v} \cup \check{w},\check{w} \cup u) \text{ or} \\ \cochain{f}(\check{w} \cup \check{u},\check{u} \cup \check{v}) \ne \cochain{g}(\check{w} \cup \check{u},\check{u} \cup \check{v})}}$.
    Assume WLOG that $\cochain{f}(\check{u} \cup \check{v},\check{v} \cup \check{w}) \ne \cochain{g}(\check{u} \cup \check{v},\check{v} \cup \check{w})$ and let $u = \check{u} \cup \check{v}$, $v = \check{v} \cup \check{w}$ and $w = \check{w} \cup \check{u}$ and note that $\cochain{f}(u,v) \ne \cochain{g}(u,v)$.
    All we have left to prove is that $(u,v,w) \in \repcplx{\stdcomplex}(\triangle)$ and that $(u,v), (v,w), (w,u) \in \repcplx{\stdcomplex}(1)$ but $u \cap v = \check{v}$, $v \cap w = \check{w}$ and $w \cap u = \check{u}$ and therefore:
    \begin{itemize}
        \item $\set{v \cap u, u \cap w, w \cap v} \in \repcplx[k-1]{\stdcomplex}(2)$ and $(u,v,w) \in \repcplx{\stdcomplex}(\triangle)$
        \item It holds that $\abs{u} = \abs{v} = \abs{w} = k$, $\abs{u \cap v} = \abs{v \cap w} = \abs{w \cap u} = k-1$ and $u \cup v = v \cup w = w \cup u = \check{u} \cup \check{v} \cup \check{w} \in \stdcomplex(k+1)$ therefore, by definition, it holds that $(u,v), (v,w), (w,u) \in \repcplx{\stdcomplex}(1)$.
    \end{itemize}
\end{proof}
We will now show connections between $\trianglesnorm{\cochain{f}}$, $\emptytrianglesnorm{\cochain{f}}$, $\trianglesnorm[k-1]{\cochain{\check{f}}}$, and $\emptytrianglesnorm[k-1]{\cochain{\check{f}}}$:
\begin{lemma}\label{cost-of-fixing-to-co-cycle}
    Let $\cochain{f}, \cochain{g} \in \cochainset{1}{\repcplx{\stdcomplex}; S_l}$ be two cochains such that $\dist\parens{\cochain{f}, \cochain{g}} \le \epsilon$ then $\emptytrianglesnorm{g} \le \emptytrianglesnorm{f} + 3 \epsilon$
\end{lemma}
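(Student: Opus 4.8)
The plan is to show that replacing $\cochain{f}$ by $\cochain{g}$ can only turn an accepting empty triangle into a rejecting one when that triangle is incident to an edge on which the two cochains disagree, and then to bound the total weight of such triangles using Corollary~\ref{triangles-affected-by-change}.

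First I would set $D = \set{(u,v) \in \repcplx{\stdcomplex}(1) \suchthat \cochain{f}(u,v) \ne \cochain{g}(u,v)}$, so that $\norm{D}_{\repcplx{\stdcomplex}} = \dist\parens{\cochain{f}, \cochain{g}} \le \epsilon$. The key observation is that if $\{u,v,w\} \in \repcplx{\stdcomplex}(\triangle)$ satisfies $\cochain{g}(u,v)\cochain{g}(v,w)\cochain{g}(w,u) \ne 1$ while $\cochain{f}(u,v)\cochain{f}(v,w)\cochain{f}(w,u) = 1$, then $\cochain{f}$ and $\cochain{g}$ cannot coincide on all three edges $(u,v), (v,w), (w,u)$, so at least one of them lies in $D$. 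Identifying the empty triangle $\{u,v,w\}$ with the triangle $(u \cap v, v \cap w, w \cap u) \in \repcplx[k-1]{\stdcomplex}(2)$, this is exactly the situation described by Lemma~\ref{empty-to-full-lemma}, and hence the set of empty triangles that $\cochain{g}$ rejects but $\cochain{f}$ accepts is contained in the collection of empty triangles incident to some edge of $D$, i.e.\ in $\bigcup_{(u,v) \in D}\, \bigcup_{w\,:\,\{u,v,w\} \in \repcplx{\stdcomplex}(\triangle)}{\{(u \cap v, v \cap w, w \cap u)\}}$.

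Next I would split the rejection weight of $\cochain{g}$ according to whether the triangle is already rejected by $\cochain{f}$:
\[
\emptytrianglesnorm{g} \le \emptytrianglesnorm{f} + \sum_{\substack{\{u,v,w\} \in \repcplx{\stdcomplex}(\triangle) \\ \text{rejected by } \cochain{g},\ \text{not by } \cochain{f}}}{\weightcplx{\repcplx[k-1]{\stdcomplex}}{(u \cap v, v \cap w, w \cap u)}} .
\]
By the containment just established, the last sum is at most $\sum_{(u,v) \in D}\ \sum_{w\,:\,\{u,v,w\} \in \repcplx{\stdcomplex}(\triangle)}{\weightcplx{\repcplx[k-1]{\stdcomplex}}{(u \cap v, v \cap w, w \cap u)}}$, where this step can only lose, since a triangle incident to several edges of $D$ is counted once on the left but several times on the right. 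By Corollary~\ref{triangles-affected-by-change} the inner sum equals $3\,\weightcplx{\repcplx{\stdcomplex}}{(u,v)}$ for every $(u,v) \in D$, so the double sum equals $3\sum_{(u,v) \in D}{\weightcplx{\repcplx{\stdcomplex}}{(u,v)}} = 3\,\norm{D}_{\repcplx{\stdcomplex}} = 3\,\dist\parens{\cochain{f}, \cochain{g}} \le 3\epsilon$. Combining these bounds gives $\emptytrianglesnorm{g} \le \emptytrianglesnorm{f} + 3\epsilon$.

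The step I expect to require the most care is the weight conversion: the empty triangles are faces of $\repcplx{\stdcomplex}$, but their contribution to the left-hand side is measured with the norm of $\repcplx[k-1]{\stdcomplex}$, and Corollary~\ref{triangles-affected-by-change} (together with Lemma~\ref{supported-empty-triangles} and the identification of empty triangles with $2$-faces of $\repcplx[k-1]{\stdcomplex}$ used in Lemma~\ref{empty-to-full-lemma}) is precisely the identity reconciling these two normalizations. Everything else is bookkeeping, and the slack from overcounting triangles incident to more than one edge of $D$ is harmless since only an upper bound is needed.
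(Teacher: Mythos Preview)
Your proposal is correct and follows essentially the same approach as the paper's own proof: split the empty triangles rejected by $\cochain{g}$ into those also rejected by $\cochain{f}$ and those not, observe that the latter must be incident to an edge in $D$ (this is where Lemma~\ref{empty-to-full-lemma} is invoked), and then use Corollary~\ref{triangles-affected-by-change} to convert the weight of all empty triangles supported by an edge of $D$ into $3\,\weightcplx{\repcplx{\stdcomplex}}{(u,v)}$. The only cosmetic difference is that the paper splits according to whether $\cochain{f}$ and $\cochain{g}$ agree on all three edges (rather than whether $\cochain{f}$ rejects), but the resulting bound and the two cited lemmas are identical.
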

\begin{proof}
    Consider $\emptytrianglesnorm{g}$ and note that:
    \begin{align*}
        \emptytrianglesnorm{g} &= \norm{\set{(\check{u},\check{v},\check{w}) \in \repcplx[k-1]{\stdcomplex}(2) \suchthat {\scriptstyle \cochain{g}(\check{u} \cup \check{v},\check{v} \cup \check{w})\cochain{g}(\check{v} \cup \check{w} ,\check{w} \cup \check{u})\cochain{g}(\check{w} \cup \check{u},\check{u} \cup \check{v}) \ne 1}}}_{\repcplx[k-1]{\stdcomplex}} \le \\
        & \le \multilinenorm{\set{(\check{u},\check{v},\check{w}) \in \repcplx[k-1]{\stdcomplex}(2) \suchthat \substack{\cochain{f}(\check{u} \cup \check{v},\check{v} \cup \check{w})\cochain{f}(\check{v} \cup \check{w},\check{w} \cup \check{u})\cochain{f}(\check{w} \cup \check{u},\check{u} \cup \check{v}) \ne 1 \text{ and}\\ \cochain{f}(\check{u} \cup \check{v},\check{v} \cup \check{w}) = \cochain{g}(\check{u} \cup \check{v}, \check{v} \cup \check{w}) \text{ and} \\ \cochain{f}(\check{v} \cup \check{w},\check{w} \cup \check{u}) = \cochain{g}(\check{v} \cup \check{w},\check{w} \cup \check{u}) \text{ and} \\ \cochain{f}(\check{w} \cup \check{u},\check{u} \cup \check{v}) = \cochain{g}(\check{w} \cup \check{u},\check{u} \cup \check{v})}} \cup}{\cup \set{(\check{u},\check{v},\check{w}) \in \repcplx[k-1]{\stdcomplex}(2) \suchthat \substack{\cochain{f}(\check{u} \cup \check{v},\check{v} \cup \check{w}) \ne \cochain{g}(\check{u} \cup \check{v},\check{v} \cup \check{w}) \text{ or} \\ \cochain{f}(\check{v} \cup \check{w},\check{w} \cup \check{u}) \ne \cochain{g}(\check{v} \cup \check{w},\check{w} \cup \check{u}) \text{ or} \\ \cochain{f}(\check{w} \cup \check{u},\check{u} \cup \check{v}) \ne \cochain{g}(\check{w} \cup \check{u},\check{u} \cup \check{v})}}} \le \\
        & \le \emptytrianglesnorm{f} + \sum_{\substack{(u,v) \in \repcplx{\stdcomplex}(1) \\ \cochain{f}(u,v) \ne \cochain{g}(u,v)}}{\sum_{\substack{w \in \repcplx{\stdcomplex} \\ \set{u,v,w} \in \repcplx{\stdcomplex}(\triangle)}}{\weightcplx{\repcplx[k-1]{\stdcomplex}}{u \cap v, v \cap w, w \cap u}}} = \\
        & = \emptytrianglesnorm{f} + 3\sum_{\substack{(u,v) \in \repcplx{\stdcomplex}(1) \\ \cochain{f}(u,v) \ne \cochain{g}(u,v)}}{\weightcplx{\repcplx{\stdcomplex}}{u,v}} \le \emptytrianglesnorm{f} + 3 \epsilon
    \end{align*}
    The second inequality is due to Lemma~\ref{empty-to-full-lemma} and the last inequality is due to Corollary~\ref{triangles-affected-by-change}.
\end{proof}
\begin{lemma}\label{full-triangles-of-attachment-map}
    Let $\stdcocycle$ be a cocycle of dimension $k$ and let $\check{\stdcocycle}$ be an attachment map of $\stdcocycle$.
    Then it holds that:
    \[
        \emptytrianglesnorm{\stdcocycle} = \trianglesnorm[k-1]{\check{\stdcocycle}}
    \]
\end{lemma}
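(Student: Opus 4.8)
The plan is to exhibit an explicit bijection $\Phi$ between the $\parens{k-1}$-empty-triangles of $\repcplx{\stdcomplex}$ and the triangles of $\repcplx[k-1]{\stdcomplex}$ that preserves both the weight of each triangle and whether it is rejected, and then to read off the claimed identity of norms termwise.

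First I would set up $\Phi$, sending an empty triangle $\set{u,v,w}$ to its triple of pairwise intersections $\set{u\cap v, v\cap w, w\cap u}$, which lies in $\repcplx[k-1]{\stdcomplex}(2)$ by the very definition of a $\parens{k-1}$-empty-triangle. By Lemma~\ref{finding-the-empty-triangles} and Lemma~\ref{structure-of-an-empty-triangle}, writing $\check a = v\cap w$, $\check b = w\cap u$, $\check c = u\cap v$ one has $u = \check b\cup\check c$, $v = \check c\cup\check a$, $w = \check a\cup\check b$; conversely, given any triangle $\set{\check a,\check b,\check c}\in\repcplx[k-1]{\stdcomplex}(2)$ (whose pairwise intersections all equal a common $\parens{k-1}$-set, whose pairwise unions have size $k$, and whose overall union lies in $\stdcomplex(k+1)$), the triple $\set{\check b\cup\check c, \check c\cup\check a, \check a\cup\check b}$ is a $\parens{k-1}$-empty-triangle that $\Phi$ maps back to $\set{\check a,\check b,\check c}$ (this is exactly the triple constructed in the proof of Lemma~\ref{empty-to-full-lemma}). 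Hence $\Phi$ is a bijection, and since the weight assigned to $\set{u,v,w}$ in the empty-triangle norm is by convention $\weightcplx{\repcplx[k-1]{\stdcomplex}}{\Phi(\set{u,v,w})}$, the bijection $\Phi$ preserves weights termwise.

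Next I would match the rejection conditions. Fix local functions $h^c$ realising $\stdcocycle|_{\repcplxcore[k]{c}{\stdcomplex}} = d_0 h^c$ and used to form $\check{\stdcocycle}$, and take an empty triangle $\set{u,v,w}$ with $\check a = v\cap w$, $\check b = w\cap u$, $\check c = u\cap v$. Then $u = \check b\cup\check c$, $v = \check c\cup\check a$, $w = \check a\cup\check b$, and by Lemma~\ref{representation-has-a-core} the core of the edge $\set{u,v}$ in $\repcplx{\stdcomplex}$ is $\check c$ (and likewise for the other two edges), so $\stdcocycle(u,v) = h^{\check c}(u)\parens{h^{\check c}(v)}^{-1}$, and similarly for $\stdcocycle(v,w)$ and $\stdcocycle(w,u)$. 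Substituting the identifications of $u,v,w$ into $\stdcocycle(u,v)\stdcocycle(v,w)\stdcocycle(w,u)$ and into $\check{\stdcocycle}(\check a,\check b)\check{\stdcocycle}(\check b,\check c)\check{\stdcocycle}(\check c,\check a)$ expresses both as products of the six group elements $h^{\check a}(\check a\cup\check b)^{\pm 1}$, $h^{\check b}(\check a\cup\check b)^{\pm 1}$, $h^{\check b}(\check b\cup\check c)^{\pm 1}$, $h^{\check c}(\check b\cup\check c)^{\pm 1}$, $h^{\check c}(\check a\cup\check c)^{\pm 1}$, $h^{\check a}(\check a\cup\check c)^{\pm 1}$, and a direct inspection shows the two are cyclic rotations of the same length-six word, hence conjugate in $S_l$. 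Consequently one of them equals $1$ precisely when the other does (and this is insensitive to the chosen orientation of the triangle, since reversing it merely inverts the product), so $\set{u,v,w}$ contributes to $\emptytrianglesnorm{\stdcocycle}$ exactly when $\Phi(\set{u,v,w})$ contributes to $\trianglesnorm[k-1]{\check{\stdcocycle}}$.

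Summing the matched weights over $\Phi$ then yields $\emptytrianglesnorm{\stdcocycle} = \trianglesnorm[k-1]{\check{\stdcocycle}}$. The only real effort is the bookkeeping in the third paragraph: one must correctly pin down the core of each edge of the empty triangle, record the identifications $u = \check b\cup\check c$, $v = \check c\cup\check a$, $w = \check a\cup\check b$, and then recognise the two words in the $h^c$-values as cyclic rotations of one another. This is the step I expect to be the main (though routine) obstacle, and it is also where the hypothesis that $\stdcocycle$ is a cocycle is genuinely used, since that is what lets one form an attachment map in the first place.
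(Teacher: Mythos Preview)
Your proposal is correct and follows essentially the same approach as the paper: both arguments identify empty triangles $\{u,v,w\}$ in $\repcplx{\stdcomplex}$ with triangles $\{u\cap v, v\cap w, w\cap u\}$ in $\repcplx[k-1]{\stdcomplex}$, expand both triangle products in terms of the $h^c$-values, and observe that the two resulting length-six words are cyclic shifts (hence conjugates) of one another. The paper's proof is simply a terser version of yours, carrying out the substitution and conjugation in a single chain of equalities without naming the bijection $\Phi$ explicitly.
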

\begin{proof}
    The proof follows directly from the definition of the attachment map.
    Let $\set{h_c}_{c \in \stdcomplex(k-1)}$ be the functions from the definition of the attachment map:
    \begin{align*}
        \trianglesnorm[k-1]{\check{\stdcocycle}} & = \norm{\set{(u,v,w) \in \repcplx[k-1]{\stdcomplex}(2) \suchthat \check{\stdcocycle}(w \cap u, u \cap v)\check{\stdcocycle}(u \cap v, v \cap w)\check{\stdcocycle}(v \cap w, w \cap u) \ne 1}} = \\
        & = \norm{\set{(u,v,w) \in \repcplx[k-1]{\stdcomplex}(2) \suchthat {\scriptstyle \parens{h^{w \cap u}(u)}^{-1}h^{u \cap v}(u)\parens{h^{u \cap v}(v)}^{-1}h^{v \cap w}(v)\parens{h^{v \cap w}(w)}^{-1}h^{w \cap u}(w) \ne 1}}} = \\
        & = \norm{\set{(u,v,w) \in \repcplx[k-1]{\stdcomplex}(2) \suchthat {\scriptstyle h^{u \cap v}(u)\parens{h^{u \cap v}(v)}^{-1}h^{v \cap w}(v)\parens{h^{v \cap w}(w)}^{-1}h^{w \cap u}(w)\parens{h^{w \cap u}(u)}^{-1} \ne 1}}} = \\
        & = \norm{\set{(u,v,w) \in \repcplx[k-1]{\stdcomplex}(2) \suchthat \stdcocycle(u,v)\stdcocycle(v,w)\stdcocycle(w,u) \ne 1}} = \\
        & = \sum_{\substack{\set{u,v,w} \in \repcplx{\stdcomplex}(\triangle) \\ \stdcocycle(u,v)\stdcocycle(v,w)\stdcocycle(w,u) \ne 1}}{\weightcplx{\repcplx[k-1]{\stdcomplex}}{(u,v,w)}} = \emptytrianglesnorm{f}
    \end{align*}
\end{proof}
All we have left to do is show how to combine Lemma~\ref{cost-of-fixing-to-co-cycle} and Lemma~\ref{choice-of-cores} in order to prove that algorithm~\ref{alg:empty-triangle-test} is indeed a test for the coboundaries in $\repcplx{\stdcomplex}$:
\begin{lemma}\label{e_k-bounds-the-distance}
    For every $k$ there exists a linear function $e_k$ such that for every cochain $\cochain{f}$:
    \[
        \dist\parens{\cochain{f}, \coboundaryset{1}{\repcplx{\stdcomplex}}} \le e_k\parens{\trianglesnorm{f}, \emptytrianglesnorm{f}}
    \]
\end{lemma}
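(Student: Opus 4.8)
The plan is to prove the statement by induction on $k$, following the three-step scheme laid out just above. For the base case $k=0$, note that $\repcplx[0]{\stdcomplex}$ is isomorphic to $\stdcomplex$ itself: its vertices are the $0$-faces of $\stdcomplex$, its only core is $\emptyset$, and $\repcplxcore[0]{\emptyset}{\stdcomplex}=\repcplx[0]{\stdcomplex}\cong\stdcomplex_\emptyset=\stdcomplex$. Hence the $\gamma$-coboundary expansion of $\stdcomplex$ gives directly $\dist\parens{\cochain{f},\coboundaryset{1}{\repcplx[0]{\stdcomplex}}}\le\tfrac{1}{\gamma}\norm{d_1\cochain{f}}=\tfrac{1}{\gamma}\trianglesnorm[0]{f}$, and since $\repcplx[0]{\stdcomplex}$ has no $(-1)$-empty-triangles we have $\emptytrianglesnorm[0]{f}=0$; so $e_0(x,y)=x/\gamma$ works, a nonnegative homogeneous linear function.

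For the inductive step, fix $k\ge 1$ and assume $e_{k-1}$ exists. First I would apply Lemma~\ref{representation-complex-expands-in-first-dimension} to obtain a cocycle $\stdcocycle\in\cocycleset{1}{\repcplx{\stdcomplex};S_l}$ with $\dist\parens{\cochain{f},\stdcocycle}\le\tfrac{1}{\gamma}\norm{d_1\cochain{f}}=\tfrac{1}{\gamma}\trianglesnorm{f}$. Since $\stdcocycle$ is a cocycle its restriction to each core is a coboundary (the links of $\stdcomplex$ are coboundary expanders, so their $1$-cohomology vanishes), and hence $\stdcocycle$ has an attachment map $\check{\stdcocycle}\in\cochainset{1}{\repcplx[k-1]{\stdcomplex};S_l}$; by the invariance lemma the quantities below do not depend on the chosen local functions. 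The crucial point is that $\repcplx[k-1]{\stdcomplex}$ is again the $(k-1)$-dimensional representation complex of the same $\stdcomplex$, so the inductive hypothesis applies to $\check{\stdcocycle}$. Feeding it Lemma~\ref{empty-triangles-of-attachment-map} ($\emptytrianglesnorm[k-1]{\check{\stdcocycle}}=0$), Lemma~\ref{full-triangles-of-attachment-map} ($\trianglesnorm[k-1]{\check{\stdcocycle}}=\emptytrianglesnorm{\stdcocycle}$) and Lemma~\ref{cost-of-fixing-to-co-cycle} with $\epsilon=\tfrac{1}{\gamma}\trianglesnorm{f}$ (giving $\emptytrianglesnorm{\stdcocycle}\le\emptytrianglesnorm{f}+\tfrac{3}{\gamma}\trianglesnorm{f}$), I get
\[
  \dist\parens{\check{\stdcocycle},\coboundaryset{1}{\repcplx[k-1]{\stdcomplex}}}\le e_{k-1}\parens{\emptytrianglesnorm{f}+\tfrac{3}{\gamma}\trianglesnorm{f},\,0},
\]
so the error cochain $\check{\psi}$ realizing this distance has $\norm{\check{\psi}}$ bounded by the same right-hand side.

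Finally I would plug $\check{\psi}$ into the concluding lemma of Section~\ref{subsec:on-the-connection-between-a-co-cycle-and-its-attachment-map} (which builds on Lemma~\ref{first-connection-to-lower-representation} and Lemma~\ref{choice-of-cores}), producing a genuine coboundary $\widetilde{\stdcocycle}\in\coboundaryset{1}{\repcplx{\stdcomplex}}$ with $\dist_{\repcplx{\stdcomplex}}\parens{\stdcocycle,\widetilde{\stdcocycle}}\le c_k\norm{\check{\psi}}$, $c_k$ a constant depending only on $k$. Combining with the first step through the triangle inequality for $\dist$ yields
\[
  \dist\parens{\cochain{f},\coboundaryset{1}{\repcplx{\stdcomplex}}}\le\dist\parens{\cochain{f},\stdcocycle}+\dist\parens{\stdcocycle,\widetilde{\stdcocycle}}\le\tfrac{1}{\gamma}\trianglesnorm{f}+c_k\,e_{k-1}\parens{\emptytrianglesnorm{f}+\tfrac{3}{\gamma}\trianglesnorm{f},\,0}.
\]
Since $e_{k-1}$ is a nonnegative homogeneous linear function, so is the right-hand side as a function of $\parens{\trianglesnorm{f},\emptytrianglesnorm{f}}$; taking $e_k(x,y)=\tfrac{x}{\gamma}+c_k\,e_{k-1}\parens{y+\tfrac{3x}{\gamma},0}$ closes the induction, and this form also makes clear that $\dist\parens{\cochain{f},\coboundaryset{1}{\repcplx{\stdcomplex}}}$ is bounded by a constant times $\trianglesnorm{f}+\emptytrianglesnorm{f}$, as needed downstream.

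The lemma invocations, the linearity bookkeeping, and the triangle inequality for $\dist$ (valid because the support of $\cochain{a}\cochain{b}^{-1}$ lies in the union of the supports of $\cochain{a}\cochain{c}^{-1}$ and $\cochain{c}\cochain{b}^{-1}$) are all routine. The step I expect to be the real obstacle is making the recursion actually close: one must be careful to hand $e_{k-1}$ the attachment map of the \emph{repaired} cocycle $\stdcocycle$ and not of $\cochain{f}$, verify that its empty-triangle norm genuinely vanishes, and check that the passage from $\cochain{f}$ to $\stdcocycle$ to $\check{\stdcocycle}$ and back up to $\widetilde{\stdcocycle}$ costs no error beyond what Lemma~\ref{cost-of-fixing-to-co-cycle} and the first step already pay for. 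The heavy combinatorial work supporting all of this is carried out in Sections~\ref{subsec:on-k-1-empty-triangles} and~\ref{subsec:on-the-connection-between-a-co-cycle-and-its-attachment-map}, so the present argument is essentially their assembly.
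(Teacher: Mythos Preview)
Your proposal is correct and follows essentially the same argument as the paper's own proof: induction on $k$ with base case $\repcplx[0]{\stdcomplex}\cong\stdcomplex$, then in the inductive step pass from $\cochain{f}$ to a nearby cocycle via Lemma~\ref{representation-complex-expands-in-first-dimension}, push down to the attachment map, invoke $e_{k-1}$ using Lemmas~\ref{empty-triangles-of-attachment-map}, \ref{full-triangles-of-attachment-map}, \ref{cost-of-fixing-to-co-cycle}, lift back up via the lemma at the end of Section~\ref{subsec:on-the-connection-between-a-co-cycle-and-its-attachment-map}, and finish with the triangle inequality. The only cosmetic difference is that the paper names the lifting constant explicitly as $2\frac{k}{k+1}$ whereas you write $c_k$.
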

\begin{proof}
    We will prove this Lemma by induction on the dimension of the representation.
    Note that $\repcplx[0]{\stdcomplex} = \stdcomplex$.
    Let $\cochain{f} \in \cochainset{1}{\repcplx[0]{\stdcomplex}; S_l} = \cochainset{1}{\stdcomplex; S_l}$.
    Due to the assumption that $\stdcomplex$ is a $\gamma$-coboundary expander there exists a coboundary $\cochain{\widetilde{F}}$ such that $\dist\parens{\cochain{f}, \cochain{\widetilde{f}}} \le \frac{1}{\gamma} \trianglesnorm[0]{\cochain{f}}$.

    Assuming that the claim holds for the representation complex of dimension $k-1$, let $\cochain{f} \in \cochainset{1}{\repcplx{\stdcomplex}; S_l}$ be a cochain.

    We start by showing that $\cochain{f}$ is close to a cocycle:
    Due to Lemma~\ref{representation-complex-expands-in-first-dimension} the representation complex is expanding (with the same expansion parameter as the original complex) and therefore there exists a cocycle $\stdcocycle$ such that $\dist\parens{\cochain{f}, \stdcocycle} \le \frac{1}{\gamma} \trianglesnorm{f}$.
    In addition, due to Lemma~\ref{triangles-affected-by-change} it holds that $\emptytrianglesnorm{\stdcocycle} \le \emptytrianglesnorm{f} + \frac{3}{\gamma} \trianglesnorm{f}$.

    Then we show that this cocycle is close to a coboundary:
    Consider an attachment map $\stdcocycle'$ of $\stdcocycle$.
    Due to Lemma~\ref{empty-triangles-of-attachment-map} it holds that $\emptytrianglesnorm[k-1]{\check{\stdcocycle}} = 0$ and due to Lemma~\ref{full-triangles-of-attachment-map} $\trianglesnorm[k-1]{\check{\stdcocycle}} \le \emptytrianglesnorm{\stdcocycle} \le \emptytrianglesnorm{f} + \frac{3}{\gamma} \trianglesnorm{f}$.
    Due to our assumption it holds that $\dist\parens{\stdcocycle, \coboundaryset{1}{\repcplx[k-1]{\stdcomplex;S_l}}} \le e(\trianglesnorm[k-1]{\check{\stdcocycle}}, \emptytrianglesnorm[k-1]{\check{\stdcocycle}})$.
    We finish this part of the proof by using Lemma~\ref{choice-of-cores} in order to prove that there exists a coboundary $\widetilde{\stdcocycle}$ such that:
    \[
        \dist\parens{\stdcocycle, \widetilde{\stdcocycle}} \le 2 \frac{k}{k-1} e(\trianglesnorm[k-1]{\check{\stdcocycle}}, \emptytrianglesnorm[k-1]{\check{\stdcocycle}}) = 2 \frac{k}{k+1} \cdot e\parens{\emptytrianglesnorm{f} + \frac{3}{\gamma} \trianglesnorm{f}, 0}
    \]

    All we have left is to use the triangle inequality in order to bound the distance of $\cochain{f}$ from $\widetilde{\stdcocycle}$:
    \[
        \dist\parens{\cochain{f}, \widetilde{\stdcocycle}} \le \dist\parens{\cochain{f}, \stdcocycle} + \dist\parens{\stdcocycle, \widetilde{\stdcocycle}} \le \frac{1}{\gamma} \trianglesnorm{f} + 2 \frac{k}{k+1} \cdot e\parens{\emptytrianglesnorm{f} + \frac{3}{\gamma} \trianglesnorm{f}, 0}
    \]
    Note that if $e_{k-1}$ is linear then so is $e_k$
\end{proof}
We move on to finding $e_k$ explicitly
\begin{lemma}\label{explicit-e_k}
    For every $k$ it holds that:
    \[
        e_k\parens{\varepsilon_\blacktriangle, \varepsilon_\triangle} = \frac{1}{\gamma}\parens{\sum_{i=1}^{k}{\frac{k+2-i}{k+1}\parens{\frac{6}{\gamma}}^{i-1}}}\varepsilon_\blacktriangle + \frac{2}{\gamma}\parens{\sum_{i=1}^{k-1}{\frac{k+1-i}{k+1}\parens{\frac{6}{\gamma}}^{i-1}}} \epsilon_{\triangle}
    \]
\end{lemma}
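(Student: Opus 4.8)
The plan is to prove the identity by a short induction on $k$, unrolling the recursion implicit in the proof of Lemma~\ref{e_k-bounds-the-distance}. Reading that proof, the function $e_k$ it produces satisfies
\[
    e_k\parens{\varepsilon_\blacktriangle, \varepsilon_\triangle} = \frac{1}{\gamma}\varepsilon_\blacktriangle + \frac{2k}{k+1}\, e_{k-1}\parens{\varepsilon_\triangle + \frac{3}{\gamma}\varepsilon_\blacktriangle,\; 0},
\]
where the second coordinate of the recursive call is $0$ because the attachment map of a cocycle has $\emptytrianglesnorm[k-1]{\check{\stdcocycle}} = 0$ (Lemma~\ref{empty-triangles-of-attachment-map}), and its first coordinate $\trianglesnorm[k-1]{\check{\stdcocycle}}$ may be replaced by the larger quantity $\varepsilon_\triangle + \frac{3}{\gamma}\varepsilon_\blacktriangle$ (via Lemma~\ref{cost-of-fixing-to-co-cycle} and Lemma~\ref{full-triangles-of-attachment-map}), which is legitimate since $e_{k-1}$ is linear with nonnegative coefficients and hence monotone; the factor $\frac{2k}{k+1}$ is the attachment-map distance constant obtained from Lemma~\ref{choice-of-cores} and the estimate following it. The base of the induction is $k=1$, where the claimed expression collapses to $e_1\parens{\varepsilon_\blacktriangle, \varepsilon_\triangle} = \frac{1}{\gamma}\varepsilon_\blacktriangle$, i.e.\ the bound the argument of Lemma~\ref{e_k-bounds-the-distance} yields in its bottom case from the first-dimensional coboundary expansion of $\repcplx{\stdcomplex}$ (Lemma~\ref{representation-complex-expands-in-first-dimension}).

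For the inductive step I would substitute the formula for $e_{k-1}$ into the recursion. Since the recursive call has second argument $0$, only the $\varepsilon_\blacktriangle$-coefficient of $e_{k-1}$ is used, namely $\frac{1}{\gamma}\sum_{i=1}^{k-1}\frac{k+1-i}{k}\parens{\frac{6}{\gamma}}^{i-1}$; multiplying by $\frac{2k}{k+1}$ turns the prefactor $\frac{1}{k}$ into $\frac{1}{k+1}$, which is exactly what makes every denominator of the closed form equal $k+1$. Evaluating at $\varepsilon_\triangle + \frac{3}{\gamma}\varepsilon_\blacktriangle$ and collecting terms, the $\varepsilon_\triangle$-part is immediately $\frac{2}{\gamma}\sum_{i=1}^{k-1}\frac{k+1-i}{k+1}\parens{\frac{6}{\gamma}}^{i-1}\varepsilon_\triangle$, the claimed $\varepsilon_\triangle$-coefficient. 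The $\varepsilon_\blacktriangle$-part consists of the leading $\frac{1}{\gamma}\varepsilon_\blacktriangle$ plus $\frac{6}{\gamma^2}\sum_{i=1}^{k-1}\frac{k+1-i}{k+1}\parens{\frac{6}{\gamma}}^{i-1}\varepsilon_\blacktriangle$; absorbing $\frac{6}{\gamma}$ into the geometric factor, using $\frac{6}{\gamma^2}\parens{\frac{6}{\gamma}}^{i-1} = \frac{1}{\gamma}\parens{\frac{6}{\gamma}}^{i}$, and reindexing $i \mapsto i+1$ rewrites this sum as $\frac{1}{\gamma}\sum_{i=2}^{k}\frac{k+2-i}{k+1}\parens{\frac{6}{\gamma}}^{i-1}\varepsilon_\blacktriangle$, and the leading $\frac{1}{\gamma}\varepsilon_\blacktriangle$ is precisely its $i=1$ term; the two merge into $\frac{1}{\gamma}\sum_{i=1}^{k}\frac{k+2-i}{k+1}\parens{\frac{6}{\gamma}}^{i-1}\varepsilon_\blacktriangle$, the claimed $\varepsilon_\blacktriangle$-coefficient, closing the induction.

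None of these steps is deep; the only genuine risk is the index bookkeeping in the inductive step — the reindexing $i\mapsto i+1$ together with folding the leading term back into the $i=1$ slot — where an off-by-one is easy. I would hedge against this by checking the identity directly at $k=2$ and $k=3$ (it should read $e_2 = \parens{\frac{1}{\gamma}+\frac{4}{\gamma^2}}\varepsilon_\blacktriangle + \frac{4}{3\gamma}\varepsilon_\triangle$ and $e_3 = \parens{\frac{1}{\gamma}+\frac{9}{2\gamma^2}+\frac{18}{\gamma^3}}\varepsilon_\blacktriangle + \parens{\frac{3}{2\gamma}+\frac{6}{\gamma^2}}\varepsilon_\triangle$), by pinning down the base value $e_1\parens{\varepsilon_\blacktriangle,\varepsilon_\triangle}=\frac{1}{\gamma}\varepsilon_\blacktriangle$ carefully, and by making sure to use the $\frac{2k}{k+1}$ form of the attachment-map constant (as derived, rather than the $\frac{2k}{k-1}$ written in the statement of that lemma), since it is this choice that is consistent with the $k+1$ in the denominators above.
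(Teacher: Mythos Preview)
Your proposal is correct and follows essentially the same route as the paper: both extract the recursion $e_k(\varepsilon_\blacktriangle, \varepsilon_\triangle) = \frac{\varepsilon_\blacktriangle}{\gamma} + \frac{2k}{k+1}\,e_{k-1}\!\parens{\varepsilon_\triangle + \frac{3}{\gamma}\varepsilon_\blacktriangle,\,0}$ from the argument of Lemma~\ref{e_k-bounds-the-distance} and verify the closed form by induction via the same reindexing $i \mapsto i+1$ and absorption of the leading $\frac{\varepsilon_\blacktriangle}{\gamma}$ as the $i=1$ term. The paper merely organizes the induction in two passes --- first proving the identity for $e_k(\varepsilon_\blacktriangle, 0)$ and then obtaining the general case by one more application of the recursion --- and anchors the induction at $k=0$ rather than your $k=1$.
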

\begin{proof}
    In order to present $e_k$ explicitly we consider the following properties:
    \begin{enumerate}
        \item $\forall k: e_k(\varepsilon_\blacktriangle, \varepsilon_\triangle) = \frac{\varepsilon_\blacktriangle}{\gamma} + 2 \frac{k}{k+1} \cdot e_{k-1}\parens{\varepsilon_\triangle + \frac{3 \varepsilon_\blacktriangle}{\gamma}, 0}$ \label{e_k-rec-step}
        \item $e_0(\varepsilon_\blacktriangle, \varepsilon_\triangle) = \frac{\varepsilon_\blacktriangle}{\gamma}$ \label{e_k-termintion-condition}
        \item $\forall k: e_k(\varepsilon_\blacktriangle, \varepsilon_\triangle) = \frac{\varepsilon_\blacktriangle}{\gamma} + e_{k}\parens{0, \varepsilon_\triangle + \frac{3 \varepsilon_\blacktriangle}{\gamma}}$ \label{e_k-ignore-right-side}
    \end{enumerate}
    Note that the first two properties were proven in Lemma~\ref{e_k-bounds-the-distance} and property~\ref{e_k-ignore-right-side} is, in essence, a restatement of Lemma~\ref{cost-of-fixing-to-co-cycle}\footnote{The inequality Lemma~\ref{cost-of-fixing-to-co-cycle} bounds the distance from above and since we are looking for a bound from above we can consider the case where $e_k$ equals its upper bound.}.
    We will begin with analyzing how $e_k$ behaves under the assumption that $\epsilon_\triangle=0$.
    We will then use property~\ref{e_k-ignore-right-side} to find $e_k$ explicitly.

    Let us begin by showing that:
    \begin{equation}\label{eq:e_k-explicit-left-side}
        e_k(\varepsilon_\blacktriangle, 0) = \frac{1}{\gamma}\parens{\sum_{i=1}^{k}{\frac{k+2-i}{k+1}\parens{\frac{6}{\gamma}}^{i-1}}}\varepsilon_\blacktriangle
    \end{equation}
    We will prove this by induction on $k$.
    Note that when $k=0$ it is easy to see that:
    \[
        e_0(\varepsilon_\blacktriangle, 0) =  \frac{\varepsilon_\blacktriangle}{\gamma}
    \]
    Assume that equation~\ref{eq:e_k-explicit-left-side} holds for $k-1$ and consider the following:
    \begin{align*}
        e_k(\varepsilon_\blacktriangle, 0) & = \frac{\varepsilon_\blacktriangle}{\gamma} + 2 \frac{k}{k+1} \cdot e_{k-1}\parens{\frac{3 \varepsilon_\blacktriangle}{\gamma}, 0} = \frac{\varepsilon_\blacktriangle}{\gamma} + 2 \frac{k}{k+1} \cdot \frac{1}{\gamma}\parens{\sum_{i=1}^{k-1}{\frac{k+1-i}{k}\parens{\frac{6}{\gamma}}^{i-1}}} \frac{3 \varepsilon_\blacktriangle}{\gamma} = \\
        & = \frac{\varepsilon_\blacktriangle}{\gamma} + \frac{6}{\gamma}\parens{\sum_{i=1}^{k-1}{\frac{k+1-i}{k+1}\parens{\frac{6}{\gamma}}^{i-1}}} \frac{\varepsilon_\blacktriangle}{\gamma} = \frac{\varepsilon_\blacktriangle}{\gamma} + \parens{\sum_{i=1}^{k-1}{\frac{k+2-(i+1)}{k+1}\parens{\frac{6}{\gamma}}^{i}}} \frac{\varepsilon_\blacktriangle}{\gamma} = \\
        & = \frac{\varepsilon_\blacktriangle}{\gamma} + \parens{\sum_{i=2}^{k}{\frac{k+2-i}{k+1}\parens{\frac{6}{\gamma}}^{i-1}}} \frac{\varepsilon_\blacktriangle}{\gamma} =\parens{\sum_{i=2}^{k}{\frac{k+2-i}{k+1}\parens{\frac{6}{\gamma}}^{i-1}}+1}\frac{\varepsilon_\blacktriangle}{\gamma} = \\
        & = \parens{\sum_{i=1}^{k}{\frac{k+2-i}{k+1}\parens{\frac{6}{\gamma}}^{i-1}}}\frac{\varepsilon_\blacktriangle}{\gamma}
    \end{align*}
    Now all we have left is to use property~\ref{e_k-rec-step} and property~\ref{e_k-ignore-right-side} to find $e_k$ explicitly:
    \begin{align*}
        e_k(\varepsilon_\blacktriangle, \varepsilon_\triangle) & = \frac{\varepsilon_\blacktriangle}{\gamma} + 2 \frac{k}{k+1} \cdot e_{k-1}\parens{\varepsilon_\triangle + \frac{3 \varepsilon_\blacktriangle}{\gamma}, 0} = \\
        & = \frac{\varepsilon_\blacktriangle}{\gamma} + 2 \frac{k}{k+1} \frac{1}{\gamma}\parens{\sum_{i=1}^{k-1}{\frac{k+1-i}{k}\parens{\frac{6}{\gamma}}^{i-1}}}\parens{\varepsilon_\triangle + \frac{3 \varepsilon_\blacktriangle}{\gamma}} =\\
        & = \frac{\varepsilon_\blacktriangle}{\gamma} + \frac{2}{\gamma}\parens{\sum_{i=1}^{k-1}{\frac{k+1-i}{k+1}\parens{\frac{6}{\gamma}}^{i-1}}}\parens{\varepsilon_\triangle + \frac{3 \varepsilon_\blacktriangle}{\gamma}} = \\
        & = \frac{1}{\gamma}\parens{\sum_{i=1}^{k}{\frac{k+2-i}{k+1}\parens{\frac{6}{\gamma}}^{i-1}}}\varepsilon_\blacktriangle + \frac{2}{\gamma}\parens{\sum_{i=1}^{k-1}{\frac{k+1-i}{k+1}\parens{\frac{6}{\gamma}}^{i-1}}} \epsilon_{\triangle}
    \end{align*}
\end{proof}
We can now finally prove that the empty triangle test is indeed a test for the property that a cochain $\cochain{f}$ is a coboundary:
\begin{lemma}\label{empty-triangle-test-tests-co-boundaries}
    Algorithm~\ref{alg:empty-triangle-test} is a test for whether a cochain $\cochain{f}$ is a coboundary that performs exactly three queries and, if it rejects $\cochain{f}$ with probability of $\varepsilon(f)$ then:
\[
    \dist\parens{\cochain{f}, \coboundaryset{1}{\repcplx{\stdcomplex}}} \le 2\parens{\frac{1}{\gamma}\parens{\sum_{i=1}^{k}{\frac{k+2-i}{k+1}\parens{\frac{6}{\gamma}}^{i-1}}} + \frac{2}{\gamma}\parens{\sum_{i=1}^{k-1}{\frac{k+1-i}{k+1}\parens{\frac{6}{\gamma}}^{i-1}}} } \varepsilon(\cochain{f})
\]
\end{lemma}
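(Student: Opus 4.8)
The plan is to read this lemma off the two main results already established in this appendix, so the argument is short. First I would dispatch the two easy clauses. The query bound is immediate from Algorithm~\ref{alg:empty-triangle-test}: in either branch it samples a single triple $(u,v,w)$ --- a triangle of $\repcplx{\stdcomplex}(2)$ or a $(k-1)$-empty-triangle of $\repcplx{\stdcomplex}(\triangle)$ --- and evaluates $\cochain{f}(u,v)\cochain{f}(v,w)\cochain{f}(w,u)$, which is exactly three edge queries. For completeness, if $\cochain{f}=d_0 g \in \coboundaryset{1}{\repcplx{\stdcomplex}}$ then for every triple the product telescopes,
\[
    \cochain{f}(u,v)\cochain{f}(v,w)\cochain{f}(w,u) = g(u)g(v)^{-1}\,g(v)g(w)^{-1}\,g(w)g(u)^{-1} = 1,
\]
independently of whether $(u,v,w)$ is a genuine or an empty triangle, so the test accepts with probability $1$.

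For soundness I would invoke Lemma~\ref{e_k-bounds-the-distance}, which gives $\dist\parens{\cochain{f}, \coboundaryset{1}{\repcplx{\stdcomplex}}} \le e_k\parens{\trianglesnorm{f}, \emptytrianglesnorm{f}}$, together with its explicit form in Lemma~\ref{explicit-e_k}:
\[
    \dist\parens{\cochain{f}, \coboundaryset{1}{\repcplx{\stdcomplex}}}
    \le \frac{1}{\gamma}\parens{\sum_{i=1}^{k}{\frac{k+2-i}{k+1}\parens{\frac{6}{\gamma}}^{i-1}}}\trianglesnorm{f}
    + \frac{2}{\gamma}\parens{\sum_{i=1}^{k-1}{\frac{k+1-i}{k+1}\parens{\frac{6}{\gamma}}^{i-1}}}\emptytrianglesnorm{f}.
\]
Both bracketed coefficients are non-negative, and both $\trianglesnorm{f}$ and $\emptytrianglesnorm{f}$ are non-negative, so each of them is at most $\trianglesnorm{f}+\emptytrianglesnorm{f}$. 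By the note preceding the statement, $\pr{\text{Algorithm~\ref{alg:empty-triangle-test} rejects}} = 0.5\parens{\trianglesnorm{f}+\emptytrianglesnorm{f}}$, i.e.\ $\trianglesnorm{f}+\emptytrianglesnorm{f} = 2\varepsilon(\cochain{f})$. Substituting $\trianglesnorm{f} \le 2\varepsilon(\cochain{f})$ and $\emptytrianglesnorm{f} \le 2\varepsilon(\cochain{f})$ into the displayed inequality produces exactly the stated bound, with the common prefactor $2$ in front of the sum of the two coefficients.

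I do not expect a real obstacle at this step: all of the difficulty was already absorbed into Lemma~\ref{e_k-bounds-the-distance} (the induction over the representation dimension, the passage through cocycles via Lemma~\ref{representation-complex-expands-in-first-dimension}, and the attachment-map analysis culminating in Lemma~\ref{choice-of-cores}) and into the explicit recursion solved in Lemma~\ref{explicit-e_k}. The only minor point to flag is that bounding $\trianglesnorm{f}$ and $\emptytrianglesnorm{f}$ separately by $2\varepsilon(\cochain{f})$ is slightly lossy; one could instead write $e_k\parens{\trianglesnorm{f},\emptytrianglesnorm{f}} \le 2\max\{A,B\}\,\varepsilon(\cochain{f})$ where $A,B$ denote the two bracketed coefficients, but since the lemma is phrased with the single prefactor $2$ multiplying $A+B$, the cruder estimate is precisely what is required.
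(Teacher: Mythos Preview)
Your proposal is correct and follows essentially the same approach as the paper's own proof: both invoke Lemma~\ref{e_k-bounds-the-distance} together with the explicit form from Lemma~\ref{explicit-e_k}, then use $\varepsilon(\cochain{f}) = 0.5\parens{\trianglesnorm{f}+\emptytrianglesnorm{f}}$ to bound each of $\trianglesnorm{f}$ and $\emptytrianglesnorm{f}$ by $2\varepsilon(\cochain{f})$ and substitute. You additionally spell out the query count and the completeness case, which the paper leaves implicit, but the soundness argument is identical.
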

\begin{proof}
    Denote the probability that Algorithm~\ref{alg:empty-triangle-test} rejects $\cochain{F}$ as $\varepsilon(\cochain{F})$ and consider the rejection probability of Algorithm~\ref{alg:empty-triangle-test} and note that:
    \[
        \varepsilon(\cochain{F}) = 0.5\trianglesnorm{f} + 0.5\emptytrianglesnorm{f}
    \]
    Therefore $\trianglesnorm{f} \le 2\varepsilon(\cochain{F})$ and $\emptytrianglesnorm{f} \le 2\varepsilon(\cochain{F})$.
    Using Lemma~\ref{explicit-e_k} and Lemma~\ref{e_k-bounds-the-distance} we conclude that:
    \[
        \dist\parens{\cochain{f}, \coboundaryset{1}{\repcplx{\stdcomplex}}} \le 2\parens{\frac{1}{\gamma}\parens{\sum_{i=1}^{k}{\frac{k+2-i}{k+1}\parens{\frac{6}{\gamma}}^{i-1}}} + \frac{2}{\gamma}\parens{\sum_{i=1}^{k-1}{\frac{k+1-i}{k+1}\parens{\frac{6}{\gamma}}^{i-1}}} } \varepsilon(\cochain{f})
    \]
\end{proof}
Note that Lemma~\ref{co-boundaries-in-representation-complex-are-testable} and Lemma~\ref{empty-triangle-test-tests-co-boundaries} are essentially the same.
\section{Lower Bound on the Number of Queries}\label{sec:lower-bound-on-the-number-of-queries}

In this section we will show that every test that tests list agreement has to perform at least $l-1$ queries.
We will do so by finding a non-agreeing $l$-assignment $\lassignment{r}$ such that for any $l-1$ queries there exists an agreeing $l$-assignment $\lassignment{a}$ such that the values returned by the queries are the same when querying $\lassignment{a}$ and $\lassignment{r}$.
Note that this is enough due to Lemma~\ref{tests-fail}.
\begin{definition}
    Let $\cochain{f}_0, \dots, \cochain{f}_{l-1}$ be cochains and let $\genface \in \stdcomplex(k)$ be a face.
    Also let $\cochain{f}_{l}$ be a cochain such that:
    \[
        \exists \genface' \in \binom{\genface}{k} \forall i \in [l] \exists \stdvertex \in \genface': \cochain{f}_i(\stdvertex) \ne \cochain{f}_l(\stdvertex)
    \]
    Define the $l$-assignment $\lassignment{r}$ to be the $l$-assignment that if $i \ne l$ or $\stdface \ne \genface$ then $\lassignment{r}^\stdface_i = \cochain{f}_i|_\stdface$ and $\lassignment{r}^\genface_l = \cochain{f}_l|_\genface$
\end{definition}
All that is left is to show that $\lassignment{r}$ satisfies the requirements of Lemma~\ref{tests-fail}.
\begin{lemma}
    $\lassignment{r}$ is not an agreeing $l$-assignment.
\end{lemma}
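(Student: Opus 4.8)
The plan is to argue by contradiction. Suppose $\lassignment{r}$ were an agreeing $l$-assignment. By the definition of an agreeing $l$-assignment there would then exist $0$-cochains $\cochain{G}_1,\dots,\cochain{G}_l\in\cochainset{0}{\stdcomplex}$ and, for every $k$-face $\stdface$, a permutation $\pi_\stdface$ with $\lassignment{r}^{\stdface}_i=\cochain{G}_{\pi_\stdface(i)}|_{\stdface}$ for all $i$; equivalently, for every $\stdface\in\stdcomplex(k)$ the multiset of local functions $\set{\lassignment{r}^{\stdface}_i}_i$ equals the multiset $\set{\cochain{G}_m|_{\stdface}}_m$. First I would record the structure of $\lassignment{r}$: it coincides with the agreeing $l$-assignment induced (under the identity permutations) by $\cochain{f}_0,\dots,\cochain{f}_{l-1}$ on every $k$-face other than $\genface$, and on $\genface$ it differs in exactly one list coordinate --- the coordinate that would otherwise read $\cochain{f}_j|_{\genface}$ for the fixed index $j\in\set{0,\dots,l-1}$ --- where the value has been replaced by $\cochain{f}_l|_{\genface}$.

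Next I would push both descriptions down to the $(k-1)$-face $\genface'\in\binom{\genface}{k}$ supplied by the construction. Besides $\genface$, the face $\genface'$ is contained in at least one further $k$-face $\stdsecondface\ne\genface$ in the complexes we consider (every $(k-1)$-face lies in at least two $k$-faces there). From $\stdsecondface$, where $\lassignment{r}^{\stdsecondface}_i=\cochain{f}_i|_{\stdsecondface}$, restriction to $\genface'\subseteq\stdsecondface$ gives $\set{\cochain{G}_m|_{\genface'}}_m=\set{\cochain{f}_i|_{\genface'}}_{i}$ as multisets; from $\genface$, together with the explicit form of $\lassignment{r}^{\genface}$, restriction to $\genface'$ gives $\set{\cochain{G}_m|_{\genface'}}_m=\set{\cochain{f}_i|_{\genface'}}_{i\ne j}\uplus\set{\cochain{f}_l|_{\genface'}}$. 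Equating these two multisets and cancelling the common part $\set{\cochain{f}_i|_{\genface'}}_{i\ne j}$ forces $\cochain{f}_j|_{\genface'}=\cochain{f}_l|_{\genface'}$. This contradicts the defining property of $\cochain{f}_l$: for every index $i$ there is a vertex $\stdvertex\in\genface'$ with $\cochain{f}_i(\stdvertex)\ne\cochain{f}_l(\stdvertex)$, and in particular $\cochain{f}_j$ and $\cochain{f}_l$ disagree somewhere on $\genface'$. Hence no such $\cochain{G}_1,\dots,\cochain{G}_l$ can exist and $\lassignment{r}$ is not agreeing.

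The one point that needs care --- and is really the whole content of the lemma --- is the appeal to an auxiliary $k$-face $\stdsecondface\supseteq\genface'$ with $\stdsecondface\ne\genface$: this is exactly what lets the single local modification made at $\genface$ be ``seen'' by the global cochains, since restricting $\cochain{G}_m$ to $\genface'$ through $\stdsecondface$ must reproduce the \emph{unmodified} multiset while restricting through $\genface$ reproduces the modified one. If $\genface'$ lay in $\genface$ alone the argument --- and the lemma --- would fail, because one could then choose the $\cochain{G}_m$ freely on $\genface$; this is why the construction is carried out in complexes where every $(k-1)$-face has at least two $k$-faces above it (e.g.\ the $1$-up agreement expanders and spherical buildings of interest). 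Apart from this, the proof is just the definition of agreement together with multiset cancellation, and it touches only the two faces $\genface$ and $\stdsecondface$ together with their common subface $\genface'$, so no global connectivity hypothesis beyond this local one is used.
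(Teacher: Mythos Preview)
Your argument is correct and follows essentially the same approach as the paper's proof: pick a neighbouring $k$-face $\stdsecondface\ne\genface$ containing $\genface'$, compare the local lists of $\lassignment{r}$ at $\genface$ and at $\stdsecondface$ after restriction to $\genface'$, and derive that $\cochain{f}_l|_{\genface'}$ must coincide with some $\cochain{f}_i|_{\genface'}$, contradicting the defining property of $\cochain{f}_l$. The only cosmetic difference is that the paper traces a single entry explicitly through the permutations $\pi_\genface,\pi_{\tilde\genface}$ to reach $\cochain{f}_l|_{\genface'}=\cochain{f}_m|_{\genface'}$ for some index $m$, whereas you phrase the same step as an equality of multisets followed by cancellation, which yields the slightly sharper conclusion $\cochain{f}_l|_{\genface'}=\cochain{f}_j|_{\genface'}$ for the specific replaced index $j$; either conclusion suffices. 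Your explicit remark that the argument needs a second $k$-face above $\genface'$ is exactly the hypothesis the paper uses tacitly when it writes ``Let $\tilde{\genface}\in\stdcomplex(k)$ such that $\tilde{\genface}\cap\genface=\genface'$''.
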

\begin{proof}
    Let $\tilde{\genface} \in \stdcomplex(k)$ such that $\tilde{\genface} \cap \genface = \genface'$.
    Note that for every $i$ it holds that:
    \begin{equation}
        \cochain{f}_l|_{\genface \cap \tilde{\genface}} \ne \cochain{f}_i|_{\genface \cap \tilde{\genface}}\label{eq:local-differance}
    \end{equation}
    If $\lassignment{r}$ is an agreeing $l$-assignment then there are cochains $\cochain{f}'_0,\dots, \cochain{f}'_{l-1}$ and $\pi_{\tilde{\genface}}, \pi_\genface$ such that:
    \[
        \forall i: \cochain{f}'_i|_{\genface} = \lassignment{r}^{\genface}_{\pi_\genface(i)} \text{ and } \cochain{f}'_i|_{\tilde{\genface}} = \lassignment{r}^{\tilde{\genface}}_{\pi_{\tilde{\genface}}(i)}
    \]
    Therefore:
    \[
        \cochain{f}_l|_{\genface \cap \tilde{\genface}} = \lassignment{r}^{\genface}_{l-1}|_{\genface \cap \tilde{\genface}} = \cochain{f'}_{\pi_\genface^{-1}(l-1)}|_{\genface \cap \tilde{\genface}} = \lassignment{r}^{\tilde{\genface}}_{\pi_{\tilde{\genface}(\pi_\genface^{-1}(l-1))}}|_{\genface \cap \tilde{\genface}} = \cochain{f}_{\pi_{\tilde{\genface}}(\pi_\genface^{-1}(l-1))}|_{\genface \cap \tilde{\genface}}
    \]
    Which contradicts~\ref{eq:local-differance}.
\end{proof}
\begin{lemma}
    For every set of queries $q_{1},\dots,q_{l-1}$ to $\lassignment{r}$ there exists an agreeing $l$-assignment $\lassignment{a}_{\parens{q_{1},\dots,q_{l-1}}}$ such that when querying $\lassignment{a}_{\parens{q_{1},\dots,q_{l-1}}}$ with $q_{1},\dots,q_{l-1}$ the results are the same as the results on the original assignment.
\end{lemma}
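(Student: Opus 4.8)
Proof proposal.

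The plan is to produce, for every tuple of $l-1$ queries $q_1,\dots,q_{l-1}$, an explicit agreeing $l$-assignment $\lassignment{a}_{\parens{q_1,\dots,q_{l-1}}}$ that answers each of them exactly as $\lassignment{r}$ does. Recall that a query is a pair $(\stdface,i)\in\stdcomplex(k)\times[l]$ whose answer is the local function $\lassignment{r}^\stdface_i$, and that $\lassignment{r}$ was built to coincide with the honest agreeing $l$-assignment $\set{\cochain{f}_i|_\stdface}_{\stdface\in\stdcomplex(k),\,i\in[l]}$ on every coordinate except $(\genface,l-1)$, where $\lassignment{r}^\genface_{l-1}=\cochain{f}_l|_\genface$. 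Thus everything hinges on whether the single perturbed coordinate $(\genface,l-1)$ is one of the queries, and I would split into those two cases.

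If $(\genface,l-1)$ is not among $q_1,\dots,q_{l-1}$, then every queried value of $\lassignment{r}$ is already an honest value, so I take $\lassignment{a}=\set{\cochain{f}_i|_\stdface}$ itself: this is agreeing, witnessed by the global cochains $\cochain{f}_0,\dots,\cochain{f}_{l-1}$ and the identity permutation on every face, and it coincides with $\lassignment{r}$ at every coordinate other than $(\genface,l-1)$, hence on all queries. If $(\genface,l-1)$ is queried, then the list-indices occurring in the remaining (at most $l-2$) queries form a set of size at most $l-2$, so there is an index $m\ne l-1$ that occurs in no query at all. I then put $\cochain{g}_i=\cochain{f}_i$ for $i\ne m$ and $\cochain{g}_m=\cochain{f}_l$, and let $\lassignment{a}$ be the agreeing $l$-assignment with these global cochains whose permutation is the transposition $(l-1\;m)$ at $\genface$ and the identity elsewhere, i.e. $\lassignment{a}^\stdface_j=\cochain{g}_{\pi_\stdface(j)}|_\stdface$. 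A direct check then shows: at $\genface$ the coordinate $l-1$ now reads $\cochain{g}_m|_\genface=\cochain{f}_l|_\genface$, matching $\lassignment{r}$; the only coordinates at which $\lassignment{a}$ can differ from the honest assignment $\set{\cochain{f}_i|_\stdface}$ — hence from $\lassignment{r}$ away from $(\genface,l-1)$ — are the coordinate $m$ at $\genface$ and the coordinate $m$ at every other face, none of which is queried; and every remaining coordinate of $\lassignment{a}$ equals the corresponding $\cochain{f}_i|_\stdface$ and so agrees with $\lassignment{r}$. Hence $\lassignment{a}$ answers all $l-1$ queries exactly as $\lassignment{r}$.

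The main obstacle is the global nature of a cochain: one cannot simply overwrite one global function by $\cochain{f}_l$ in a neighborhood of $\genface$, because a $0$-cochain that equals $\cochain{f}_l$ on all of $\genface$ is altered on every face sharing a vertex with $\genface$, which would corrupt answers far from $\genface$. Spending the query budget is exactly what resolves this: with only $l-1$ queries there is always a list-index $m$ left entirely unqueried, so the global function carried on slot $m$ is free to be repurposed as $\cochain{f}_l$, and the transposition at $\genface$ then routes $\cochain{f}_l$ into the one queried slot while pushing all the induced discrepancies onto the unqueried index $m$. What remains — the counting bound on distinct indices in the second case, and the verification that the displayed $\lassignment{a}$ literally satisfies the definition of an agreeing $l$-assignment (immediate, a transposition being its own inverse) — is routine, and together with the previous lemma (that $\lassignment{r}$ is not agreeing) this is exactly what Lemma~\ref{tests-fail} needs.
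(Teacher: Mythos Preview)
Your proposal is correct and follows essentially the same approach as the paper: exploit that with only $l-1$ queries some list-index is globally unqueried, replace the global function in that slot by $\cochain{f}_l$, and use a transposition at $\genface$ to route $\cochain{f}_l$ into slot $l-1$. The only cosmetic difference is the case split --- the paper directly picks any unqueried index $a'$ and branches on whether $a'=l-1$, whereas you first branch on whether the perturbed coordinate $(\genface,l-1)$ is queried and only then find an unqueried $m\ne l-1$ --- but the resulting construction and verification are identical.
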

\begin{proof}
    Note that the queries are of the form $\parens{\stdface_i,a_i}$ where $\stdface_i \in \stdcomplex(k)$ and $a_i \in \sparens{l}$.
    Let $a' \in \sparens{l}$ such that $a' \ne a_i$ for any $i$.
    If $a' = l-1$ then ${\lassignment{a}_{\parens{q_{1},\dots,q_{l-1}}}}_i^\stdface = \cochain{f}_i|_\stdface$ (which is an agreeing assignment).
    Otherwise define $\lassignment{a}_{\parens{q_{1},\dots,q_{l-1}}}$ in the following way:
    \begin{itemize}
        \item If $i \ne a'$ set ${\lassignment{a}_{\parens{q_{1},\dots,q_{l-1}}}}_{i}^\genface = \lassignment{f}^\genface_{i}$
        \item Set ${\lassignment{a}_{\parens{q_{1},\dots,q_{l-1}}}}_{a'}^\genface = \lassignment{f}^\genface_{l-1}$
        \item If $\stdface \ne \genface$ set ${\lassignment{a}_{\parens{q_{1},\dots,q_{l-1}}}}_{a'}^\stdface = {\cochain{f}_l}|_\stdface$
    \end{itemize}
    And note that $\lassignment{a}_{\parens{q_{1},\dots,q_{l-1}}}$ agrees with $\cochain{f}_1,\dots, \cochain{f}_{a'-1}, \cochain{f}_{a'+1},\dots,\cochain{f}_{l-1}, \cochain{f}_l$ which completes the proof.
    Finally note that for every query $\parens{\stdface_i,a_i}$ it holds that $\lassignment{a}_{\parens{q_{1},\dots,q_{l-1}}}[\parens{\stdface_i,a_i}] = \lassignment{r}[\parens{\stdface_i,a_i}]$ since $\lassignment{r}$ and $\lassignment{a}_{\parens{q_{1},\dots,q_{l-1}}}$ differ only when $a_i = a'$ which is never queried.
\end{proof}
\section{Combinatorial Bounds}\label{sec:combinatorial-bounds}

\begin{lemma}
    \label{combinatorial-bounds-1}
    $\binom{d-k+1}{i}\binom{d+1}{k} = \binom{k+i+1}{k}\binom{d+1}{k+i+1}$
\end{lemma}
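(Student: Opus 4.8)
The statement is a pure binomial-coefficient identity, so the plan is to prove it by the classical ``subset-of-a-subset'' (trinomial revision) identity
\[
    \binom{n}{a}\binom{a}{b} = \binom{n}{b}\binom{n-b}{a-b}, \qquad 0 \le b \le a \le n .
\]
First I would apply this with $n = d+1$, $a = k+i+1$, and $b = k$, which rewrites the right-hand side $\binom{k+i+1}{k}\binom{d+1}{k+i+1}$ as $\binom{d+1}{k}\binom{(d+1)-k}{(k+i+1)-k}$. It then remains to recognise the first factor as $\binom{d+1}{k}$, which matches the corresponding factor on the left verbatim, and to reconcile $\binom{(d+1)-k}{(k+i+1)-k}$ with the factor $\binom{d-k+1}{i}$ appearing on the left, using $(d+1)-k = d-k+1$.

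The version I would actually write out is the equivalent double-counting argument, which avoids quoting the named identity: both sides should count the number of ``flags'' $(A,B)$ with $A \subseteq B \subseteq \set{1,\dots,d+1}$, $\abs{A}=k$, $\abs{B}=k+i+1$. Counting $B$ first and then the $k$-subset $A$ inside it gives $\binom{d+1}{k+i+1}\binom{k+i+1}{k}$; counting $A$ first and then choosing $B\setminus A$ among the remaining $d+1-k$ elements gives $\binom{d+1}{k}\binom{d-k+1}{\,\abs{B}-\abs{A}\,}$. Equating the two counts yields the identity once the exponent $\abs{B}-\abs{A}$ is identified correctly. A third, purely mechanical route is to expand every binomial coefficient into factorials and cancel the common factors; this is routine but has to be carried out carefully to keep track of which factorial cancels against which.

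The main (and essentially only) obstacle is exactly the index bookkeeping in the first binomial coefficient on the left-hand side: in the flag count one has $\abs{B}-\abs{A} = (k+i+1)-k = i+1$, so the natural double count produces $\binom{d-k+1}{i+1}$, and matching this against the $\binom{d-k+1}{i}$ written in the displayed statement is the one point that genuinely needs checking — if the two do not coincide it signals that the exponent in the statement should read $i+1$, which is in any case the form consumed in the proof of Lemma~\ref{weight-relation-between-a-complex-and-its-representation}. Apart from that, I would note that in every place the lemma is invoked the parameters are honest face dimensions, so the inequalities $0 \le k \le k+i+1 \le d+1$ needed for the subset-of-a-subset identity (equivalently, for the flag count to be non-degenerate) hold automatically and no side conditions need to be imposed.
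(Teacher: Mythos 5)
Your proof is correct, and it is more careful than the paper's own argument on the one point that matters. The route differs: the paper simply expands everything into factorials and cancels, whereas you reduce to the subset-of-a-subset identity $\binom{n}{a}\binom{a}{b}=\binom{n}{b}\binom{n-b}{a-b}$ (equivalently, the flag double count), which makes the combinatorial content visible and makes the off-by-one impossible to miss. Indeed, the identity as displayed in the lemma is false as written: for $d=4$, $k=1$, $i=1$ the left side is $\binom{4}{1}\binom{5}{1}=20$ while the right side is $\binom{3}{1}\binom{5}{3}=30$. The correct statement has $\binom{d-k+1}{i+1}$ in place of $\binom{d-k+1}{i}$, exactly as your flag count ($\abs{B}-\abs{A}=i+1$) predicts. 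The paper's proof in fact proves this corrected version: in its first line it expands "$\binom{d-k+1}{i}$" as $\frac{(d-k+1)!}{(i+1)!\,(d-k-i)!}$, which is the factorial form of $\binom{d-k+1}{i+1}$, so the slip is only in the lemma's statement, not in the computation. Your observation that the $i+1$ form is the one actually consumed in the proof of Lemma~\ref{weight-relation-between-a-complex-and-its-representation} (where the denominator is $\binom{d-k+1}{i+1}$ for an $i$-dimensional face) is exactly right, and your remark that the side conditions $0\le k\le k+i+1\le d+1$ hold automatically in every application is a sensible addition that the paper omits. In short: your argument is sound, it proves the statement the paper needs, and it correctly identifies the typographical error in the statement as printed.
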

\begin{proof}
    \begin{align*}
        \binom{d-k+1}{i}\binom{d+1}{k}
        & = \frac{(d-k+1)!}{(i+1)!(d-k-i)!}\frac{(d+1)!}{k!(d-k+1)!} \\
        & = \frac{(k+i+1)!}{(i+1)!(d-k-i)!}\frac{(d+1)!}{k!(k+i+1)!} \\
        & = \frac{(k+i+1)!}{(i+1)!k!}\frac{(d+1)!}{(k+i+1)!(d-k-i)!} \\
        & = \binom{k+i+1}{k}\binom{d+1}{k+i+1}
    \end{align*}
\end{proof}
    \end{appendices}
\end{document}